\documentclass[12pt]{article}
\usepackage[dvipsnames,svgnames,x11names,hyperref]{xcolor}
\usepackage{amsmath}
\usepackage{amsthm}
\usepackage[T1]{fontenc}

\usepackage{fullpage}
\usepackage{thmtools}
\usepackage[colorlinks]{hyperref}
\hypersetup{
	linkcolor=[rgb]{0,0,0.4},
	citecolor=[rgb]{0, 0.4, 0},
	urlcolor=[rgb]{0.6, 0, 0}
}
\usepackage{amssymb,amsfonts,amsmath,amsthm,amscd,dsfont,mathrsfs}
\usepackage{complexity}
\usepackage{tikz}
\usetikzlibrary{decorations.pathreplacing,backgrounds}
\usepackage{multirow}
\usepackage{mathpazo}
\usepackage{braket}

\usepackage{algorithmicx}
\usepackage{algorithm} 
\usepackage[noend]{algpseudocode}

\algrenewcommand\algorithmicindent{1.0em}%

\usepackage{amsthm}
\usepackage{thmtools,thm-restate}

\numberwithin{equation}{section}
\declaretheoremstyle[bodyfont=\it,qed=\qedsymbol]{noproofstyle}

\declaretheorem[numberlike=equation]{observation}

\declaretheorem[name=Observation,numbered=no]{observation*}

\declaretheorem[numberlike=equation]{theorem}

\declaretheorem[name=Theorem,numbered=no]{theorem*}

\declaretheorem[numberlike=equation]{lemma}
\declaretheorem[name=Lemma,numbered=no]{lemma*}

\declaretheorem[numberlike=equation]{corollary}
\declaretheorem[name=Corollary,numbered=no]{corollary*}

\declaretheorem[name=Proposition,numbered=no]{proposition*}

\declaretheorem[numberlike=equation]{claim}
\declaretheorem[name=Claim,numbered=no]{claim*}

\declaretheorem[name=Conjecture,numbered=no]{conjecture*}

\declaretheorem[name=Question,numbered=no]{question*}

\declaretheoremstyle[bodyfont=\it,qed=$\lozenge$]{defstyle} 

\declaretheorem[numberlike=equation,style=defstyle]{definition}
\declaretheorem[unnumbered,name=Definition,style=defstyle]{definition*}

\declaretheorem[unnumbered,name=Example,style=defstyle]{example*}

\declaretheorem[unnumbered,name=Notation=defstyle]{notation*}

\declaretheorem[unnumbered,name=Construction,style=defstyle]{construction*}

\declaretheorem[numberlike=equation,style=defstyle]{remark}
\declaretheorem[unnumbered,name=Remark,style=defstyle]{remark*}


\usepackage{nth}
\usepackage{intcalc}
\usepackage{etoolbox}
\usepackage{xstring}

\usepackage{ifpdf}
\ifpdf
\else
\usepackage[quadpoints=false]{hypdvips}
\fi

\newcommand{\shortECCC}[2]{\texttt{\href{http://eccc.hpi-web.de/report/\ifnumcomp{#1}{>}{93}{19}{20}#1/#2/}{eccc:TR#1-#2}}}

\newcommand{\parseECCC}[1]{
\StrSubstitute{#1}{TR}{}[\tmpstring]%
\IfSubStr{\tmpstring}{/}{ 
\StrBefore{\tmpstring}{/}[\ecccyear]%
\StrBehind{\tmpstring}{/}[\ecccreport]%
}{
\StrBefore{\tmpstring}{-}[\ecccyear]%
\StrBehind{\tmpstring}{-}[\ecccreport]%
}%
\shortECCC{\ecccyear}{\ecccreport}}

\renewcommand{\set}[1]{\left\{ #1 \right\}}

\newcommand{\card}[1]{\left| #1 \right|}

\DeclareMathOperator{\Char}{\mathrm{char}}
\DeclareMathOperator{\Sys}{\mathrm{Sys}}

\renewcommand{\C}{\mathbb{C}}
\renewcommand{\R}{\mathbb{R}}

\newcommand{\N}{\mathbb{N}}
\newcommand{\F}{\mathbb{F}}
\newcommand{\Fn}{\F[x_1,\ldots,x_n]}


\newcommand{\SumPowSumk}{\Sigma^k \!\wedge\! \Sigma}

\newcommand{\SumProdSum}{\Sigma \Pi \Sigma}
\newcommand{\SumProdSumk}{\Sigma^k \Pi \Sigma}
\newcommand{\spsk}{\SumProdSumk}

\newcommand{\calC}{\mathcal{C}}
\newcommand{\calH}{\mathcal{H}}

\renewcommand{\vec}[1]{{\mathbf{#1}}}

\makeatletter
\newcommand{\va}{{\vec{a}}\@ifnextchar{^}{\!\:}{}}
\newcommand{\vb}{{\vec{b}}\@ifnextchar{^}{\!\:}{}}
\newcommand{\vc}{{\vec{c}}\@ifnextchar{^}{\!\:}{}}
\newcommand{\vd}{{\vec{d}}\@ifnextchar{^}{\!\:}{}}
\newcommand{\ve}{{\vec{e}}\@ifnextchar{^}{\!\:}{}}
\newcommand{\vy}{{\vec{y}}\@ifnextchar{^}{\!\:}{}}
\newcommand{\vs}{{\vec{s}}\@ifnextchar{^}{\!\:}{}}
\newcommand{\vt}{{\vec{t}}\@ifnextchar{^}{\!\:}{}}
\newcommand{\vx}{{\vec{x}}\@ifnextchar{^}{}{}}		
\newcommand{\vz}{{\vec{z}}\@ifnextchar{^}{\!\:}{}}
\newcommand{\vu}{{\vec{u}}\@ifnextchar{^}{\!\:}{}}	
\makeatother

\newcommand{\Sps}[1]{\Sigma^{#1} \Pi\Sigma}

\newcommand{\gb}[3]{{#1|_{#2,#3}}}

\newcommand{\fb}[2]{{\gb{f}{#1}{#2}}}

\newcommand{\rank}{\text{rank}}
\newcommand{\ranksyn}{\Delta_{\text{syn}}}
\newcommand{\dist}{\text{dist}}
\newcommand{\Lin}{\text{Lin}}

\DeclareMathOperator{\Span}{\text{span}}
\DeclareMathOperator{\simp}{\text{sim}}
\DeclareMathOperator{\codim}{\text{codim}}

\newcommand{\ranksem}{\Delta_{\text{sem}}}

\newcommand{\eqdef}{:=}

\newcommand{\cI}{\mathcal{I}}
\newcommand{\cJ}{\mathcal{J}}

\newcommand{\Res}{\mathrm{Res}}

\newcommand{\ignore}[1]{}

\title{Tensor Reconstruction Beyond Constant Rank\footnote{An extended abstract of this paper appeared in \cite{PelegSV24}.}}
\author{Shir Peleg\thanks{Blavatnik School of Computer Science, Tel Aviv University, Tel Aviv, Israel. Email: \texttt{shirpele@tauex.tau.ac.il}. The research leading to these results has received funding from the  Israel Science Foundation (grant number 514/20) and from the Len Blavatnik and the Blavatnik Family foundation. }
	\and
	Amir Shpilka\thanks{Blavatnik School of Computer Science, Tel Aviv University, Tel Aviv, Israel. Email: \texttt{shpilka@tauex.tau.ac.il}. The research leading to these results has received funding from the  Israel Science Foundation (grant number 514/20), the Len Blavatnik and the Blavatnik Family foundation and from the European Union (ERC, EACTP, 101142020). Views and opinions expressed are however those of the author(s) only and do not necessarily reflect those of the European Union or the European Research Council Executive Agency. Neither the European Union nor the granting authority can be held responsible for them.}
	\and 
	Ben Lee Volk\thanks{Efi Arazi School of Computer Science, Reichman University, Israel. Email: \texttt{benleevolk@gmail.com}. The research leading to these results has received funding from the  Israel Science Foundation (grant number 843/23). }}

\date{}

\begin{document}

	\maketitle
	
	\abstract{
		We give reconstruction algorithms for subclasses of depth-$3$ arithmetic
		circuits. In particular, we obtain the first efficient algorithm for 
		finding tensor rank, and an optimal tensor decomposition as a sum of rank-one tensors, when given black-box access to a tensor of super-constant rank.  Specifically, we obtain the following results:
		\begin{enumerate}
			\item A randomized algorithm that reconstructs polynomials computed by multilinear $\Sigma^{[k]}\prod^{[d]}\Sigma$ circuits in time $\poly(n,d,c) \cdot k^{k^{k^{k^{O(k)}}}}$, 
			\item \sloppy A randomized algorithm that reconstructs polynomials computed by  set-multilinear $\Sigma^{[k]}\prod^{[d]}\Sigma$ circuits in time $\poly(n,d,c) \cdot k^{k^{k^{k^{O(k)}}}}$,
		\end{enumerate}
		where $c=\log q$ if $\F=\F_q$ is a finite field, and $c$ equals the maximum bit complexity of any coefficient of $f$ if $\F$ is infinite.
		
		Prior to our work, polynomial time algorithms for the case when the rank, $k$, is constant, were given by Bhargava, Saraf and Volkovich \cite{BSV21}. 
		
		Another contribution of this work is correcting an error from a paper of Karnin and Shpilka \cite{KarninShpilka09} (with some loss in parameters) that also affected Theorem 1.6 of \cite{BSV21}. Consequently, the  results of \cite{KarninShpilka09,BSV21} continue to hold, with a slightly worse setting of parameters. For fixing the error we systematically study the relation between syntactic  and semantic notions of rank of $\SumProdSum$ circuits, and the corresponding partitions of such circuits. 
		
		We obtain our improved running time by introducing a technique for learning rank preserving coordinate-subspaces. Both \cite{KarninShpilka09} and \cite{BSV21} tried all choices of finding the ``correct'' coordinates, which, due to the size of the set, led to having a fast growing function of $k$ at the exponent of $n$. We manage to find these spaces in time that is still growing fast with $k$, yet it is only a fixed polynomial in $n$. 
		
	}
	
	\newpage
	
	\tableofcontents
	\newpage
	
	\section{Introduction}
	
	Reconstruction of algebraic circuits is a natural algorithmic problem that asks, given a black box access to a polynomial $f$ from some circuit class $\mathcal{C}$, to efficiently output an algebraic circuit computing $f$. Algebraic circuits are computational devices that compute multivariate polynomials using basic arithmetic operations, much like boolean circuits compute boolean functions using boolean bit operations. Thus, the reconstruction problem is a natural algebraic analog for well studied boolean learning problems \cite{Bshouty13}.
	
	It is often desired that the output of the algorithm will also be a circuit from the class $\mathcal{C}$ (which is called \emph{proper learning}). Requiring the learning algorithm to be efficient imposes an obvious upper bound on the size of the output, but it is also desirable to output a circuit as small as possible, ideally the smallest possible circuit from the class $\mathcal{C}$ that computes $f$.
	
	Reconstruction, however, is also a hard algorithmic problem. Results such as the $\NP$-hardness of computing or even approximating tensor rank \cite{Hastad90, Shitov16, BIJL18, Swernofsky18} force us to carefully manage our expectations regarding what's possible to compute efficiently, since it turns out that even for weak classes $\mathcal{C}$ (such as depth-$3$ set-multilinear circuits) it's unlikely to find an efficient algorithm that outputs the smallest possible circuit. Furthermore, reconstruction appears to be an even harder problem than black box Polynomial Identity Testing (PIT), the problem of determining whether the black box $f$ computes the identically zero polynomial. While PIT can be efficiently solved using randomness, efficient deterministic algorithms are known only for a handful of restricted circuit classes (we note, though, that in the reconstruction problem even giving a randomized algorithm is a non-trivial task). For a survey on algebraic circuits, PIT and reconstruction, see \cite{SY10}.
	
	Nevertheless, for some restricted classes, or when the constraints are sufficiently relaxed, it is possible to give many non-trivial efficient reconstruction algorithms. For example, many works have dealt with \emph{random} algebraic circuits (see, e.g., \cite{GKL11, KNS19, GKS20}, among others). In this setting, we think of the black box as being chosen randomly from the class $\mathcal{C}$ under some natural distribution on circuits from $\mathcal{C}$, and we require the algorithm to reconstruct $f$ with high probability over the chosen circuit (and perhaps over the random coins of the algorithm as well). Random circuits often avoid the degeneracies and pathologies that are associated with the clever cancellations that facilitate sophisticated algebraic algorithms, and are thus easier to handle and argue about.
	
	Another line of study, more relevant to our work, has to do with reconstruction of small depth algebraic circuits. The simplest non-trivial case is depth-$2$ circuits, for which the reconstruction problem is pretty well understood and can be done efficiently \cite{BOT88, KS01}. However, even slightly larger depths, like depth-$3$ and depth-$4$ circuits, already pose a much greater challenge. This is perhaps explained by a sequence of depth reduction results \cite{AV08, K12b, T15, GKKS16} that show that low depth circuits are expressive enough to  non-trivially simulate any algebraic circuit of polynomial size (and arbitrary depth). Thus, most attention has focused on restricted classes of depth-$3$ and depth-$4$ circuits \cite{KarninShpilka09, GKL12, Sinha16, Sinha22, BSV20, BSV21}.
	
	\subsection{Circuit Classes}
	
	A depth-$3$ circuit with top fan-in (that is, the in-degree of the top sum gate) $k$ computes a polynomial of the form $\sum_{i=1}^k \prod_{j=1}^{d_i} \ell_{i,j} (\vx)$, where each $\ell_{i,j}$ is a linear function in the input variables $\vx$. We denote this class $\SumProdSumk$. When $k$ is constant, this is a subclass of general depth-$3$ circuits that has been extensively studied (see Section 4.6 of \cite{SY10}).
	
	The circuit is called \emph{multilinear} if every gate in the circuit computes a multilinear polynomial. An even stronger restriction is \emph{set-multilinearity}. A polynomial $f$ is set-multilinear if the set of variables $\vx$ can be partitioned into disjoint sets $\vx_1, \ldots, \vx_d$ such that every monomial appearing in $f$ is a product of variables $x_{1,i_1} x_{2,i_2} \cdots x_{d,i_d}$ such that $x_{j,i_j}$ is in $\vx_j$.
	That is, a degree-$d$ set-multilinear polynomial is simply a $d$-dimensional tensor. Depth-$3$ set-multilinear circuits, which are circuits in which every gate computes a multilinear polynomial, are a natural model for computing tensors. Each product of linear functions $\prod_{j=1}^d \ell_j (\vx_j)$ corresponds to a rank one tensor, and thus we see that $f$ can be computed by a set-multilinear circuit of top fan-in $k$ if and only if its rank is at most $k$.
	
	An even more restricted model is \emph{depth-$3$ powering circuits}. In this model,  multiplication gates are replaced by powering gates. Such gates get as input a single linear function and their output is that function raised to some power. We denote the class of depth-$3$ powering circuit by $\SumPowSumk$. This is a natural computational model for computing \emph{symmetric tensors}, where again the top fan-in corresponds to the rank.
	
	Karnin and Shpilka \cite{KarninShpilka09} presented polynomial time reconstruction algorithms for $\SumPowSumk$ multilinear circuits for $k=O(1)$ over fields of size at most polynomial in $n$.
	Over fields of large characteristic or characteristic 0, Kayal \cite{Kayal12} has designed a randomized, proper, polynomial time reconstruction algorithm for $\SumPowSumk$ circuits assuming $2k< d$.\footnote{An earlier version of this paper contained a reconstruction algorithm for $\SumPowSumk$ circuits whose running time is polynomial in $n$ as long as $k$ is very slightly super-constant in $n$. However, after the initial version of this paper was posted, we became aware of Kayal's results which supersede ours, and thus we chose to remove this algorithm from this paper.}
	
	More recently, Bhargava, Saraf, and Volkovich \cite{BSV21} presented proper reconstruction algorithms for the circuit models discussed above. The running times of their algorithms are polynomial in $n$, the number of variables, and the degree $d$, assuming $k$ is constant, but not when $k$ is any growing function of $n$ or $d$. Unlike \cite{KarninShpilka09}, their algorithms work even over infinite fields. The exact running time is a polynomial whose exponent is a somewhat complicated expression that involves some quickly growing function of $k$. We describe their results more precisely vis-\`{a}-vis our results in \autoref{sec:results}.
	
	In particular, given a constant upper bound on the rank, they obtain efficient algorithms that given a tensor (or a symmetric tensor) can exactly compute its rank, and also obtain a decomposition as a sum of rank-one tensors. Since for large enough ranks the problem of computing the tensor rank becomes $\NP$-hard, it's natural to wonder at which point the intractability kicks in. That is, is there an efficient polynomial time algorithm that can compute the rank and obtain a decomposition even when the upper bound $k$ is super-constant?
	
	In this paper we obtain faster algorithms that remain polynomial time algorithms (in $n$ and $d$) even when $k$ is slightly super-constant. Our running times are of the form $\poly(n,d,T(k))$ where $T$ is some quickly growing function of $k$. Like the algorithms of Karnin and Shpilka \cite{KarninShpilka09} and Bhargava, Saraf, and Volkovich \cite{BSV21}, our learning algorithms are proper and return the smallest possible representation of $f$ in the relevant circuit model. In particular, they imply efficient randomized  algorithms for computing tensor rank even when the rank is slightly super-constant.
	
	Another contribution of this work is correcting an error that appeared in previous work. This error originated in \cite{KarninShpilka09} and affected Theorem 1.6 of \cite{BSV21} as well. Explaining the nature of the error requires some technical details that we present in \autoref{sec:errors}. Our correction recovers the affected results of \cite{KarninShpilka09,BSV21}, albeit with a slight change in the parameters that implies a somewhat worse dependence on the parameter $k$.

	Our algorithms require the field $\F$ to be large enough. The precise meaning of what ``large enough'' means depends on each case. The largeness assumption can always be guaranteed without loss of generality by considering field extensions, if necessary (in which case the output will also be a circuit over the extension field). In certain cases, we also assume that the characteristic of the field is large enough.
	
	\subsection{Our Results}
	\label{sec:results}
	
%
%
	
	We  provide reconstruction algorithms for multilinear depth-$3$ circuits with top fan-in $k$.
	
	\begin{theorem}
		\label{thm:intro:sumprodsumk}
		Let $\F$ be a field of characteristic zero, or a finite field of size at least $n^{k^{k^{O(k)}}}$ and characteristic greater than $d$.
		There exists a randomized algorithm that, given a black box access to a polynomial $f$ with $n$ variables and degree $d$, which is computed by a $\SumProdSumk$ multilinear circuit over a field $\F$ reconstructs $f$ in time $\poly(n,d,c) \cdot k^{k^{k^{k^{O(k)}}}}$, where $c=\log q$ if $\F=\F_q$ is a finite field, and $c$ equals the maximum bit complexity of any coefficient of $f$ if $\F$ is infinite.
	\end{theorem}
	
	Note that the algorithm in \autoref{thm:intro:sumprodsumk} runs in polynomial time for small enough (but super-constant) $k$, whereas the corresponding algorithm of \cite{BSV21} had running time of roughly $n^{T(k)}$ for some quickly growing function $T(k)$.

	We also present a reconstruction algorithm for set-multilinear depth-$3$ circuits. 
	Note that even though this class is a subclass of the previous model of multilinear circuits, as long as we insist on proper learning, reconstruction algorithms for a more general class don't imply reconstruction algorithms for its subclasses.
	
	\begin{theorem}
		\label{thm:intro:set-multilinear}
		Let $\F$ be a field of characteristic zero, or a finite field of size  at least $n^{k^{k^{O(k)}}}$ and characteristic greater than $d$.
		There exists a randomized algorithm that, given a black box access to a polynomial $f(\vx_1, \ldots \vx_d)$ such that $|\vx_i| \le n$  for every $i \in [d]$, such that $f$ is computed by a depth-$3$ set-multilinear circuit with top fan-in $k$ over $\F$, reconstructs $f$ in time $\poly(n,d,c) \cdot k^{k^{k^{k^{O(k)}}}}$, where $c=\log q$ if $\F=\F_q$ is a finite field, and $c$ equals the maximum bit complexity of any coefficient of $f$ if $\F$ is infinite.
	\end{theorem}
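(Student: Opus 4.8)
The plan is to follow the template of the proof of \autoref{thm:intro:sumprodsumk} for multilinear $\SumProdSumk$ circuits, while preserving set-multilinearity at every step so that the reconstructed circuit is again set-multilinear; since proper reconstruction of a superclass does not entail proper reconstruction of a subclass, this requires a separate (though parallel) argument. Write $f = \sum_{i=1}^{k'}\prod_{j=1}^{d}\ell_{i,j}(\vx_j)$ for a minimal set-multilinear representation, where $k' \le k$ equals the tensor rank of $f$; this is what we must recover.

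First, I would reduce the number of variables in each color class by applying the rank-preserving coordinate-subspace technique underlying our other algorithms: find, for each $i$, a coordinate subspace $V_i \subseteq \vx_i$ with $\dim V_i = \poly(k)$ such that $f|_{\bigoplus_i V_i}$ is faithful — it has the same top fan-in, the same proportionality pattern of linear forms inside each color, and the same (semantic) ranks for the sub-circuits arising below. The gain over \cite{KarninShpilka09, BSV21} is that the $V_i$ are located by an adaptive procedure running in time $\poly(n)$ times a function of $k$, instead of by enumerating roughly $n^{T(k)}$ candidate tuples of subspaces for a quickly growing $T$. Henceforth we may assume $|\vx_i| \le \poly(k)$, so $n$ has effectively been replaced by $\poly(k)$; the degree $d$, however, cannot be reduced, and $f$ still has up to $\poly(k)^{d}$ monomials, so we cannot simply interpolate it.

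To cope with $d$ we use the structural study of syntactic versus semantic rank of $\SumProdSum$ circuits, and the induced partitions of such circuits, from \autoref{sec:errors}. A color $j$ in which every gate uses a linear form proportional to a common $\ell_j(\vx_j)$ contributes a global factor $\ell_j(\vx_j)$; we detect such colors and recover each $\ell_j$ from $f$ by evaluating all other colors at a generic point, and divide them all out to reach a simple circuit. The structural results then furnish a partition of the $k'$ gates into blocks with the properties that (a) any two gates in one block differ, up to scalars, in only $O_k(1)$ colors — so that each block, after removing its own gcd (again a product of at most $d$ linear forms, found as above), is a \emph{simple} set-multilinear circuit in only $O_k(1)$ colors, and hence, after the first step, in only $O_k(1)$ variables — and (b) gates in distinct blocks are ``far'' enough that a block can be isolated from $f$ by restricting a bounded number of colors. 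We do not know the correct partition, so we enumerate all $k^{O(k)}$ partitions of $[k]$. Given a guessed partition with at least two blocks, we isolate each block from $f$ as in (b) and reconstruct it by recursing with the smaller top fan-in; given the trivial (one-block) guess — the base case — $f$ minus its global factor is, by (a), a simple set-multilinear circuit in $O_k(1)$ variables, described by $O_k(1)$ unknown scalars, and we recover a minimal representation by interpolating this small polynomial and solving a polynomial system in $O_k(1)$ variables of $O_k(1)$ degree over $\overline{\F}$ (and then passing to $\F$, as in our other theorems). We output the first candidate that passes a randomized black-box identity test against $f$. Compounding the enumeration and the base-case solve through the at most $k$ levels of recursion on the top fan-in yields the $k^{k^{k^{k^{O(k)}}}}$ factor; every other operation costs $\poly(n,d,c)$.

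The heart of the argument — and the main obstacle — is getting (a) and (b) simultaneously right: one must define the ``distance'' between the simple parts of two gates in terms of the \emph{semantic} rank (so that it survives the restrictions of the first step) and then show that a single choice of threshold both bounds the within-block rank by a function of $k$ alone and guarantees enough between-block separation for the isolation in (b). This is exactly where the error in \cite{KarninShpilka09} entered — it also propagated to Theorem 1.6 of \cite{BSV21} — and supplying the corrected quantitative comparison of the syntactic and semantic notions, with the unavoidable deterioration in the dependence on $k$, is the technical core.
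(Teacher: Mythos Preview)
Your proposal diverges substantially from the paper's proof and contains a genuine gap.

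The paper's argument is much shorter than your sketch: it reuses \emph{verbatim} the entire multilinear machinery already built for \autoref{thm:intro:sumprodsumk} --- the adaptive construction of the cluster-preserving set $B$ (\autoref{algo:finding-B}), the semantic-rank analysis of \autoref{sec:cluster-preserving}, the cluster-evaluation procedure of \autoref{sec:cluster-evaluation}, and \autoref{alg:reconstruct-sumprodsumk} --- and makes a single change: every call to the low-rank learner \autoref{lem:low semantic rank exact} is replaced by its set-multilinear analogue \autoref{lem:low rank set-ml exact} (built from \autoref{cor:set-ml-low-deg}). Since restricting a set-multilinear circuit to a subset of variables, or to a random assignment of some variables, preserves set-multilinearity, this single swap guarantees that every intermediate circuit produced by the algorithm is set-multilinear, and the running-time analysis carries over unchanged.

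The gap in your outline is step (b): ``a block can be isolated from $f$ by restricting a bounded number of colors.'' Restricting all variables in a color class $\vx_j$ to a point multiplies each multiplication gate by the scalar $\ell_{i,j}(\vx_j)$, but annihilates none of them (unless you hit a root of some $\ell_{i,j}$, which you cannot arrange without already knowing the $\ell_{i,j}$). So color-restrictions do not give black-box access to an individual block. Obtaining such access is precisely the delicate part: the paper needs the cluster-preserving set $B$, learns the clusters of $f|_{B,\va}$ for a random $\va$, and then extends to arbitrary $\vb$ via a chain of Hamming-distance-one moves tracked by PIT plus Berlekamp--Welch decoding (\autoref{lem:hamming-distance-one}--\autoref{lem:clusters-arbitrary}). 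Your first step is also underspecified: the per-color subspaces $V_i$ you posit are not what the paper constructs (the set $B$ is an unstructured subset of variables, not a direct sum of color-wise pieces), and ``find $V_i$ adaptively'' is exactly the content of \autoref{algo:finding-B} and its lengthy semantic-rank analysis --- asserting its existence is not a proof. Finally, the recursion on top fan-in and the enumeration over all partitions of $[k]$ are both absent from the paper; the semantic clustering algorithm (\autoref{algo:semantic-clustering-algorithm}) together with the uniqueness guarantee of \autoref{cor:unique-semantic} pins down the partition without guessing.
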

	
	\paragraph*{Follow-up Work}
	
	Unlike the algorithms from \cite{BSV21}, the algorithms we give in \autoref{thm:intro:sumprodsumk} and \autoref{thm:intro:set-multilinear} are randomized, even over $\R$ or $\C$. Following our work, Bhargava and Shringi \cite{BS24} obtained a deterministic analogue of \autoref{thm:intro:set-multilinear}, that also greatly improves the dependence on $k$ to singly-exponential. Derandomizing the algorithm from \autoref{thm:intro:sumprodsumk} remains an interesting open problem.

	\subsection{Proof Technique}
	\label{sec:technique}
	
	While our proof follows the general outline of the proofs in \cite{KarninShpilka09,BSV21}, improving the running time and correcting the errors (as explained in \autoref{sec:errors}) requires significant changes in parts of the argument.
	
	There are two main factors contributing to the doubly or triply exponential dependence on $k$ in the time complexity of the algorithms in \cite{BSV21}.
	
	The first is the fact that their algorithms solve systems of polynomial equations. This is required in order to find brute force solutions for the reconstruction problem over various projections of $f$ to a few variables, making the number of variables in the polynomial system of equations rather small (that is, only a function of $k$, and not of $n$). They then calculate the running time using the best known algorithms for solving such systems of polynomial equations. The exact running time depends on the field, and it is typically singly or doubly exponential time in the number of variables.
	
	Our main observation is that in all of these cases, it is also possible to modify the algorithms so that the \emph{degree} of the polynomial system of equations and the \emph{number of equations} are also only functions of $k$ (and not of $n$ or $d$, the number of variables and degree of the original polynomial $f$).
	
	The second reason their algorithms run in time $n^{T(k)}$ is a construction of an object called ``rank preserving subspace'', introduced in \cite{KarninShpilka09}, which is a subset of the coordinates that preserves certain properties of the polynomial, as we explain in \autoref{sec:technique-ml}. The dimension of this subspace depends on $k$, but finding it involves enumerating over all possible subsets of coordinates of the relevant size. As we soon explain, overcoming this difficulty requires a substantial amount of work.

%
	
	\subsubsection{Multilinear and Set-Multilinear $\SumProdSumk$ Circuits}\label{sec:technique-ml}
	
	The proofs of \autoref{thm:intro:sumprodsumk} and \autoref{thm:intro:set-multilinear} can be broken down to two parts, the first handles low degree polynomials and the second high degree polynomials. The analysis of both parts in \cite{BSV21} incurs factors of the form $n^{T(k)}$, which we would like to eliminate. While the proof of the low degree case follows the general outline of \cite{BSV21}, the proof of the high degree case is significantly more challenging and requires new ideas. As we describe later, the proof of the high degree case in \cite{BSV21}, for multilinear $\Sps{k}$ circuits, contains an error originating in \cite{KarninShpilka09}. We are able to correct this error (see \autoref{thm:unique-syn-part}), but even this correction doesn't suffice for improving the running time and a new approach is needed. Their result for set-multilinear circuits was not affected by this error as they use a different proof technique in the high degree case.

	\paragraph*{The low degree case:} when the degree $d$ is small, the number of linear functions in the circuit, which  is bounded by $kd$, is also small so that we can allow ourselves to try and learn the circuit for $f$ in an almost brute-force manner by solving a system of polynomial equations. Following \cite{BSV21} we first find, in polynomial time, an invertible linear transformation $A$ so that $g:=f(A\vx)$ depends  on a few \emph{variables} (and not merely linear functions). We then obtain a low-degree polynomial in a small number of variables, so that we can allow ourselves to learn the new circuit by solving a set of polynomial equations whose variables are the coefficients of the purported small circuit.
	
	We then wish to output $g(A^{-1}\vx)$. The problem is that this circuit may not be multilinear. To solve this, \cite{BSV21} introduce an additional set of $\approx \poly(n)$ low degree polynomial equations to  guarantee that the output circuit is multilinear. This results in a running time of about $n^{T(k)}$ for this part alone. We observe however that this set is highly redundant in the sense that, by dimension arguments, many of these equations are linearly dependent. By finding a basis to the polynomial system of equations and solving that basis alone, we're able to reduce the running time to $n \cdot T'(k)$ for some different function $T'(k)$.
	
	Our algorithm for low-degree set-multilinear circuits is very similar but a bit simpler. By slightly tweaking the polynomial system of equations that describes the circuit, we can learn $f(A \vx)$ as a set-multilinear circuit. Further, in this case it's possible to find $A$ such that $g(A^{-1} \vx)$ will automatically be set-multilinear, so that the challenge described in the previous paragraph doesn't exist in this setting.
	
	\paragraph*{The high degree case:} this is the more complicated and tedious part of the argument. We start by explaining the high level approach of \cite{KarninShpilka09} and \cite{BSV21}.
	
	The \emph{(syntactic) rank} of a $\Sps{k}$ circuit is defined to be the dimension of the span of the linear functions appearing in its multiplication gates, after factoring out the greatest common divisors of these gates (that is, the linear functions appearing in all of them). For more details see \autoref{def:syn-rank-dist}. This is a well studied notion originating in the work of Dvir and Shpilka \cite{DS07} and used in many later works \cite{KarninShpilka08, KarninShpilka09, SaxenaS11, SaxenaS13, KS09a}. The rank function allows one to define the \emph{distance} between two circuits $C_1$ and $C_2$ as the rank of their sum.
	
	The algorithm of \cite{BSV21} for learning multilinear $\Sps{k}$ circuits relies on a structural property of such circuits claimed by Karnin and Shpilka \cite{KarninShpilka09}. Karnin and Shpilka \cite{KarninShpilka09} partition the $k$ multiplication gates in the circuit to \emph{clusters}, so that each cluster has a low rank and each two distinct clusters have a large distance. In \cite{KarninShpilka09}, it is claimed that for some choice of parameters, this partition is unique and depends only on the polynomial computed by the circuit and not on the circuit itself (this is where the error is, and this is what is being corrected in \autoref{thm:unique-syn-part}). Thus, the authors of \cite{BSV21} try to obtain black box access to each of the clusters. Then, factoring out their greatest common divisors they can reconstruct them as, by multilinearity, the remaining part is a low degree polynomial.
	
	Obtaining black box access to the clusters is most of the technical work in the proof of Theorem 1.6 of \cite{BSV21}. At a high level, using their uniqueness result, Karnin and Shpilka \cite{KarninShpilka09} claimed to prove the existence of a small ``rank preserving subset'' of the variables $B$, such that after randomly fixing the variables outside of $B$, the remaining circuit $C|_B$ has the property that its clusters are in one-to-one correspondence with the original clusters restricted to $B$. The circuit $C|_B$ can again be reconstructed using the low-degree case, as it only involves a small number of variables, and thus we can get direct access to its clusters. Using a clever algorithm, Bhargava, Saraf and Volkovich \cite{BSV21} are able to obtain evaluations of the original clusters using evaluations of the restricted clusters.
	
	Regardless of the correctness issue that we discuss soon, a big bottleneck of this argument is that one needs to iterate over all subsets $B$ of $[n]$ up to a certain size bound (that depends only on $k$). Clearly such a procedure requires running time of the form $n^{T(k)}$. 
		
	Thus, we would like to obtain an algorithm that explicitly constructs a set $B$. One natural approach is to start with the empty set and add one variable at a time. This can be done by reconstructing the polynomial $f$ restricted to the current set $B$, and its clusters, and checking whether adding a variable to $B$ changes one of the parameters. If so then we add the variable and repeat the process. We continue as long as either the number of clusters or the rank of a cluster increases. The challenge with this approach is that the uniqueness guarantee of \autoref{thm:unique-syn-part} does not suffice. Note that if $f$ has a $\Sps{k}$ circuit $C$, $f$ restricted to $B$ as a natural $\Sps{k}$ circuit obtained by restricting the circuit $C$ to $B$. However, our low-degree algorithm learns \emph{some}, and potentially different, $\Sps{k}$ circuit that computes the restriction of $f$ to $B$. While \autoref{thm:unique-syn-part} guarantees both circuits would have the same number of clusters, computing the same polynomials, we don't have the guarantee that the \emph{rank} of each cluster is  the same in the different circuits, as the rank may depend on the circuit. Thus, as we gradually increase $B$, it seems hard to compare rank of clusters between different representations.
	
	To circumvent that we introduce \emph{semantic} versions of ranks and distances, which are properties of a \emph{polynomial} and not of a circuit computing it. We then develop a theory that studies the semantic and syntactic notions of rank, and the relations between them. In fact, our version of the semantic rank was already introduced by Karnin and Shpilka in \cite{KarninShpilka08, KarninShpilka09}, in the context of learning so-called $\SumProdSum(k,d,\rho)$ circuits. These are a generalization of $\SumProdSumk$ circuits of degree $d$ that allow every multiplication gate to also multiply an arbitrary polynomial that depends on at most $\rho$ linear functions (see \autoref{def:k-d-rho}).  Their notion of ``rank'' for such circuits is a certain hybrid between syntactic and semantic rank. Since in our case the distinction is important, we try to mention explicitly whether we mean syntactic or semantic rank.
	
	\subsubsection{Semantic Notions of Rank}
	\label{sec:intro-semantic}
	
	The semantic rank of a polynomial $f$ is defined as follows: first write $f = \prod_i \ell_i \cdot h$, where the $\ell_i$'s are linear functions and $h$ has no linear factors. Then define the semantic rank of $f$ to be the minimal number $r$ such that $h$ depends on $r$ linear functions.
	
	This number is well defined and doesn't depend on any representation of $f$ as a $\Sps{k}$ circuit. Working with semantic rank has advantages and disadvantages. On the one hand, it is now possible to prove stronger uniqueness properties regarding the clusters, since if two clusters compute the same polynomial then they also have the same rank. Indeed we prove such a uniqueness statement for some parameters. On the other hand, analyzing the semantic rank and its behavior under various operations (such as restricting the circuit to a subset of the variables, or increasing the set $B$ using the approach mentioned above) is significantly more difficult. Thus, we also prove various connections between semantic and syntactic ranks and we are able to show that if $f$ is computed by an $\Sps{k}$ circuit $C$, then the semantic and syntactic ranks of $C$ are not too far apart. 
	
	Recall that our main challenge is to explicitly construct a cluster-preserving subset $B$ of the variables, whose existence for syntactic ranks was proved by \cite{KarninShpilka09} (see \autoref{sec:errors} for a discussion of this result). In the context of semantic rank, proving such an analogous statement is significantly more challenging. In fact, while the proof of \cite{KarninShpilka09} is existential (and then the algorithm of \cite{BSV21} essentially enumerates over all possible subspaces), our proof is algorithmic (see \autoref{algo:finding-B}).
	In essence, our algorithm follows the outline described above: it starts with the empty set and on each iteration adds a few variables to $B$ until the cluster structure ``stabilizes'', i.e., their number and their ranks stay the same. Proving that this algorithm works requires a significant amount of technical work.
	
	\subsubsection{The Errors in Previous Work and Our Corrections}
	\label{sec:errors}
	
	Explaining the nature of the erroneous statements appearing in \cite{BSV21, KarninShpilka09} requires giving some more technical details.
	
	As mentioned earlier, one of the main components in the reconstruction algorithm for multilinear $\Sps{k}$ circuits given in \cite{BSV21} is the uniqueness of clusters property for such circuits, which is claimed by Karnin and Shpilka \cite{KarninShpilka09}. Note that the rank and distance measures for $\Sps{k}$ circuits are \emph{syntactic} and inherently tied to a circuit. Karnin and Shpilka \cite{KarninShpilka09} define a \emph{clustering} algorithm that, given a circuit, partitions the $k$ multiplication gates into several sets such that the rank of the subcircuit corresponding to each set is small, and the distance between every pair of subcircuits is large.
	
	In Corollary 6.8 of \cite{BSV21} it is claimed, based on \cite{KarninShpilka09}, that these clusters are unique, even among different circuits that compute the same polynomial. That is, if $C$ and $C'$ are two circuits computing the same polynomial $f$, the clustering algorithm of \cite{KarninShpilka09} would return the same clusters (perhaps up to a permutation). Such a claim can indeed be read from Theorem 5.3 of \cite{KarninShpilka09}. However, in our judgment, the paper \cite{KarninShpilka09} does not contain a valid mathematical proof for such a statement.
	
	Karnin and Shpilka associate with each partition into clusters two parameters, $\kappa$ and $r$. The parameter $r$ upper bounds the rank of each cluster, and the parameter $\kappa$ controls the distance: their clustering algorithm guarantees that each pair of clusters has distance at least $\kappa r$. Consequently, their clustering algorithm receives $\kappa$ as an additional input, and outputs a clustering with parameters $\kappa$ and $r$ for some value of $r$ that can be upper bounded as a function of $\kappa$ and $k$.
	
	The proof of Theorem 5.3 of \cite{KarninShpilka09} assumes without justification that, given two different circuits $C$ and $C'$ computing the same polynomial, the clustering algorithm with parameters $\kappa$ would return partitions with the same value of the parameter $r$, which is crucially used in their proof.

	In this work we provide a corrected proof of Theorem 5.3 of \cite{KarninShpilka09} (\autoref{thm:unique-syn-part}). While the corrected version is not identical to the original statement word-for-word (as our parameter $\kappa$ is much larger than originally stated as a function of $k$), it suffices for fixing the arguments in \cite{KarninShpilka09} and \cite{BSV21}, with the straightforward corresponding changes in parameters throughout.
	
	We wish to stress again that Theorems 1.1 and 1.4 of \cite{BSV21}, that give algorithms for learning depth-$3$ set-multilinear and depth-$3$ powering circuits, respectively, are not affected by the error in \cite{KarninShpilka09}.
	
	\subsection{Open Problems}
	\label{sec:open}
	
	One natural problem our work raises is the question of how large the top fan-in $k$ needs to be before the reconstruction problem becomes intractable. The $\NP$-hardness results for tensor rank imply that clearly when $k=\poly(n)$ we shouldn't expect to find exact proper learning algorithm, whereas we show that the intractability barrier is not at the regime when $k$ is constant. It remains an interesting problem to bridge the gap.

	Another interesting problem is derandomizing our algorithm from \autoref{thm:intro:sumprodsumk}. In \autoref{rem:no-derand} we explain why we cannot derandomize our algorithm with the same improved running time.

	\section{Preliminaries}
	
	The following notation will be very useful throughout our paper.
	
	\begin{definition}\label{def:restriction}
		For $\va\in \F^n$, $B\subseteq [n]$, and a polynomial $f\in \F[x_1,\ldots,x_n]$ we define $\fb{B}{\va}$ the polynomial obtained by fixing $x_j =  \va_j$ for every $j\notin B$.
	\end{definition}
	
	\subsection{Black Box Access to Directional Derivatives}
	\label{sec:derivatives}
	
	\begin{lemma}
		\label{lem:black-box-derivative-variable}
		Let $\F$ be a field of size at least $d+1$ and let $f(x_1, \ldots, x_n) \in \F[x_1, \ldots, x_n]$ be a polynomial of degree $d$. Given a black box access to $f$, for every $e \le d$ and for every variable $x \in \set{x_1,\ldots,x_n}$, we can simulate a black box access to $g := \partial^e f / \partial x^e$ using at most $d+1$ queries to $f$.
	\end{lemma}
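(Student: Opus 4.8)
The plan is to reduce the problem to univariate interpolation along the $x$-direction. Fix the variable $x = x_i$. To answer a black-box query for $g := \partial^e f / \partial x_i^e$ at a point $\va = (a_1,\dots,a_n) \in \F^n$, I would consider the univariate polynomial $h(t) := f(a_1,\dots,a_{i-1},t,a_{i+1},\dots,a_n)$, which has degree at most $d$. Since $|\F| \ge d+1$, I can choose $d+1$ distinct scalars $c_0,\dots,c_d \in \F$ and query $f$ at the $d+1$ points that agree with $\va$ in every coordinate except the $i$-th, where they take the values $c_0,\dots,c_d$ respectively; this yields the evaluations $h(c_0),\dots,h(c_d)$.

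From these $d+1$ evaluations, Lagrange interpolation recovers the coefficient vector of $h(t)$ explicitly as a polynomial of degree $\le d$. I would then differentiate $h$ formally $e$ times to obtain the univariate polynomial $h^{(e)}(t)$ and output $h^{(e)}(a_i)$. For correctness, write $f = \sum_{m} x_i^m f_m$ where each $f_m$ does not involve $x_i$; then $\partial^e f / \partial x_i^e = \sum_m (m)_e\, x_i^{m-e} f_m$, where $(m)_e = m(m-1)\cdots(m-e+1)$ (and the term vanishes when $m < e$). Substituting $x_j = a_j$ for all $j \ne i$ commutes with this operation and produces exactly $h^{(e)}(x_i)$, so $h^{(e)}(a_i) = g(\va)$. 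Each query to $g$ therefore costs exactly $d+1$ queries to $f$, plus $\poly(d)$ field arithmetic for the interpolation and differentiation, which matches the claimed bound.

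There is essentially no substantive obstacle here; the only subtlety worth flagging is that the entire argument is carried out at the level of formal polynomials, so no hypothesis on $\Char(\F)$ is required — the identity "formal differentiation with respect to $x_i$ commutes with substituting the remaining variables" is purely formal, and the assumption $|\F| \ge d+1$ is used solely to guarantee enough distinct interpolation nodes. (In general, different queries to $g$ freeze different values $a_j$ for $j \ne i$, so each query needs its own interpolation; that is why the query count is stated per evaluation of $g$ rather than amortized.)
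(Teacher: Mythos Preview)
Your proposal is correct and follows essentially the same approach as the paper: both fix all variables except $x$, use $d+1$ evaluations along the $x$-direction to interpolate the resulting univariate polynomial, then read off the $e$-th formal derivative and evaluate it at the query point. The paper phrases it as recovering the coefficients $f_i(c_2,\ldots,c_n)$ via interpolation and then plugging into the derivative formula, which is exactly your interpolate-then-differentiate step.
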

	
	\begin{proof}
		Without loss of generality assume $x=x_1$. 
		Write $f(x) = \sum_{i=0}^d f_i (x_2, \ldots, x_n) \cdot x_1^i$, so that 
		\begin{equation}
			\label{eq:derivative-formula}
			g = \frac{\partial^e f}{\partial x_1^e} = \sum_{i=e}^{d} \left( \prod_{t=0}^{e-1} (i-t) \right) f_i (x_2, \ldots, x_n) x_1^{i-e}.
		\end{equation}
		Pick arbitrary distinct $\alpha_1, \ldots, \alpha_{d+1} \in \F$. Since $P(X) := \sum_{i=0}^{d} f_i (x_2, \ldots, x_n) X^i$ is a univariate polynomial in $X$ of degree $d$, by standard polynomial interpolation there are (efficiently computable) elements $\beta_{i,j} \in \F$, where $i,j \in \set{0,\ldots, d}$, such that for every $i \in \set{0,\ldots, d}$,
		\begin{equation}
			\label{eq:derivative-interpolation}
			f_i (x_2, \ldots, x_n) = \sum_{j=0}^d \beta_{i,j} P(\alpha_j).
		\end{equation}
		
		Suppose now we would like to evaluate $g$ at the point $\vc=(c_1, \ldots, c_n)$. We first compute $f_i (c_2, \ldots, c_n)$ for every $i \in \set{0,\ldots,d}$ using the relation \eqref{eq:derivative-interpolation}. This requires a total of $d+1$ black box evaluations of $f$, for evaluating $f(\alpha_j, c_2, \ldots, c_n) = P(\alpha_j)$ for every $j \in \set{0,\ldots, d}$. Given $f_0, \ldots, f_d$, we can evaluate $g$ at $\vc$ directly using the relation \eqref{eq:derivative-formula} by plugging in $x_1 = c_1$.
	\end{proof}
	
	\autoref{lem:black-box-derivative-variable} can be generalized to directional derivatives.
	
	\begin{lemma}
		\label{lem:black-box-derivative-directional}
		Let $\F$ be a field of size at least $d+1$ and let $f(x_1, \ldots, x_n) \in \F[x_1, \ldots, x_n]$ be a polynomial of degree $d$. Given a black box access to $f$, for every $e \le d$ and for every $0 \neq \vu \in \F^n$, we can simulate a black box access to $g := \frac{\partial^e f}{\partial \vu^e}$ using at most $d+1$ queries to $f$.
	\end{lemma}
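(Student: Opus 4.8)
The plan is to reduce to \autoref{lem:black-box-derivative-variable} via a linear change of coordinates that rotates the direction $\vu$ onto the first coordinate axis. Since $\vu\neq 0$, I would fix an invertible matrix $A\in\F^{n\times n}$ whose first column equals $\vu$ (obtained by extending $\vu$ to a basis of $\F^n$), precompute $A^{-1}$, and set $\tilde f(\vy):=f(A\vy)$. Then $\tilde f$ has degree $d$, and black-box access to $\tilde f$ is immediate from black-box access to $f$: a query to $\tilde f$ at a point $\vb$ is answered by a single query to $f$ at $A\vb$.

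The identity that makes this work is the formal chain rule: $\partial \tilde f/\partial y_1 = \sum_{i=1}^n A_{i,1}\cdot(\partial f/\partial x_i)(A\vy) = (\partial f/\partial\vu)(A\vy)$, and iterating it $e$ times gives $\partial^e\tilde f/\partial y_1^e\,(\vy) = (\partial^e f/\partial\vu^e)(A\vy)$. Hence $g(\vc)=\partial^e\tilde f/\partial y_1^e\,(A^{-1}\vc)$ for every $\vc\in\F^n$. So to simulate a black-box query to $g$ at $\vc$, I apply \autoref{lem:black-box-derivative-variable} to the polynomial $\tilde f$, the variable $y_1$, and the point $A^{-1}\vc$; this uses at most $d+1$ queries to $\tilde f$, i.e., at most $d+1$ queries to $f$, as required. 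Alternatively, and without invoking the previous lemma at all, one can work on the line $t\mapsto\vc+t\vu$: the univariate polynomial $F_\vc(t):=f(\vc+t\vu)$ has degree at most $d$ and can be interpolated from the $d+1$ values $f(\vc+\alpha_j\vu)$ at distinct $\alpha_0,\ldots,\alpha_d\in\F$; differentiating $F_\vc$ formally $e$ times and evaluating at $t=0$ yields $(\partial^e f/\partial\vu^e)(\vc)$, again by the chain rule.

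The only point that needs care — and the one I would flag as the main obstacle, though it is minor — is that all of this must be valid over an arbitrary field of size at least $d+1$, including finite fields of small characteristic. The manipulations use only the purely formal chain rule and the formal rule $\partial^e(t^m)/\partial t^e = m(m-1)\cdots(m-e+1)\,t^{m-e}$, both of which are identities over any commutative ring; thus no assumption on the characteristic is needed beyond the size bound $|\F|\ge d+1$, which is exactly what is required to have $d+1$ distinct interpolation points (and which is already assumed in the statement).
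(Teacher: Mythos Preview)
Your proposal is correct and takes essentially the same approach as the paper: choose an invertible $A$ with first column $\vu$, set $\tilde f(\vy)=f(A\vy)$, use the chain rule to identify $\partial^e\tilde f/\partial y_1^e$ with $(\partial^e f/\partial\vu^e)\circ A$, and then invoke \autoref{lem:black-box-derivative-variable} at $A^{-1}\vc$. Your alternative line-interpolation argument is also valid and is in fact just the unwound form of the same computation; your remark that only the formal chain rule is used (so no characteristic assumption beyond $|\F|\ge d+1$ is needed) is a helpful clarification the paper leaves implicit.
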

	
	\begin{proof}
		Let $A$ be an invertible matrix such that $A \ve_1 = \vu$ (that is, the first column of $A$ is $\vu$). Further, define $f_A (\vx) = f(A\vx)$. Clearly, we can simulate black box access to $f_A$ using black box access to $f$. Further, note that by the chain rule, for every $\vc \in \F^n$
		\[
		\frac{\partial f_A}{\partial x_1} (\vc) = \sum_{i=1}^n \frac{\partial f}{\partial x_i} (A\vc) \cdot A_{i,1} = \sum_{i=1}^n u_i \frac{\partial f}{\partial x_i} (A\vc).
		\]
		Thus, in order to evaluate
		\[
		\frac{\partial f}{\partial \vu} (\vx) = \sum_{i=1}^n u_i \frac{\partial f}{\partial x_i} (\vx)
		\]
		at any point $\vc'$, we can evaluate $\frac{\partial f_A}{\partial x_1}$ at the point $A^{-1} \vc'$ using \autoref{lem:black-box-derivative-variable}.
		
		For higher order derivative, we use the same method, that is, we simulate $\partial^e f / \partial \vu^e$ using black box access to $\partial^e f_A / \partial x_1^e$ (as guaranteed by \autoref{lem:black-box-derivative-variable}), which is in turn simulated by black box access to $f$ itself.
	\end{proof}

	\subsection{Essential Variables}
	\label{sec:essential}
	
	Let $f$ be an $n$-variate polynomial. We say that $f$ depends on $m$ essential variables if there exists an invertible linear transformation $A$ such that $f(A\vx)$ depends on $m$ variables. An interesting fact is that it's possible, given a black box access to $f$, to compute a linear transformation $A$ such that $g := f(A\vx)$ depends only on $x_1, \ldots, x_m$.
	
	\begin{lemma}[\cite{Kayal11, Carlini06}]
		\label{lem:essential-vars}
		Let $f \in \F[\vx]$ be an $n$-variate polynomial of degree $d$ with $m$ essential variables, where $\Char(\F)=0$ or $\Char(\F) > d$. Suppose $f$ is computed by a circuit of size $s$. Then, there's an efficient randomized algorithm that, given black box access to $f$, runs in time $\poly(n,d,s)$ and computes an invertible linear transformation $A$ such that $f(A\vx)$ depends on the first $m$ variables $x_1, \ldots, x_m$.
	\end{lemma}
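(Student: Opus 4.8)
The plan is to reduce the task to linear algebra over $\F$. The key structural fact, valid because $\Char(\F)=0$ or $\Char(\F)>d$, is the following: let
\[
V \;=\; \set{\, v\in\F^n \;:\; \partial f/\partial v \equiv 0 \,},
\]
the space of directions along which $f$ is constant, where $\partial f/\partial v = \sum_{i=1}^n v_i\,\partial f/\partial x_i$. Then $f$ has exactly $m := n-\dim V$ essential variables, and any invertible matrix $A$ whose last $n-m$ columns form a basis of $V$ satisfies the conclusion of the lemma. Indeed, for $g(\vx):=f(A\vx)$ the chain rule gives $\partial g/\partial x_j = (\partial f/\partial w)(A\vx)$ where $w$ is the $j$-th column of $A$; for $j>m$ we have $w\in V$, so $\partial g/\partial x_j\equiv 0$, and here the characteristic hypothesis lets us conclude that a polynomial whose $x_j$-derivative vanishes does not depend on $x_j$ (and that substitution by an invertible linear map, being a ring automorphism, does not turn a nonzero polynomial into zero). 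Hence $g$ depends only on $x_1,\dots,x_m$. A symmetric computation shows the bound is tight: if $h=f(B\vx)$ with $B$ invertible depends on fewer than $m$ variables, then more than $\dim V$ columns of $B$ lie in $V$, a contradiction; so $m$ is exactly the essential-variable count and $g$ in fact depends on all of $x_1,\dots,x_m$. The characteristic assumption is genuinely needed here --- it fails for $f=x_1^p$ in characteristic $p$.

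It therefore suffices to compute a basis of $V$ from black-box access to $f$ and then extend it to a basis of $\F^n$; the extension step is trivial linear algebra, so the content is finding $V$. The difficulty is that $v\in V$ is a polynomial identity, a priori involving exponentially many monomial coefficients; I would sidestep this with the usual evaluation trick. Pick points $p_1,\dots,p_n\in\F^n$ independently and uniformly from a finite subset $S\subseteq\F$ with $|S|$ sufficiently large (passing to a field extension if $\F$ itself is too small, which only means the output $A$ has entries in the extension), and for each $t\in[n]$ and $i\in[n]$ compute the scalar $\partial f/\partial x_i(p_t)$ via \autoref{lem:black-box-derivative-variable} with $e=1$, using $O(d)$ black-box queries each. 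Form the $n\times n$ matrix $D$ with $D_{t,i}=\partial f/\partial x_i(p_t)$ and output a basis of $\ker D$, computed by Gaussian elimination. The running time is $\poly(n,d,s)$: about $n^2$ directional-derivative evaluations, each costing $O(d)$ queries of cost at most $s$, plus $\poly(n)$ arithmetic operations.

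The remaining point --- and the main obstacle --- is to show $\ker D = V$ with high probability. The inclusion $V\subseteq\ker D$ is automatic, since $v\in V$ makes $\partial f/\partial v$ the zero polynomial, which vanishes at every $p_t$. For the converse, consider the linear map $\Phi\colon\F^n\to\F[\vx]$ given by $\Phi(v)=\partial f/\partial v$; by definition $\ker\Phi=V$, so $U:=\mathrm{Im}\,\Phi$ has dimension $m$ and consists of polynomials of degree at most $d$, and $D$ is exactly the map ``evaluate at $p_1,\dots,p_n$'' composed with $\Phi$. Thus $\ker D=V$ precisely when evaluation at the $p_t$ is injective on $U$; fixing a basis $h_1,\dots,h_m$ of $U$, this is the statement that the $n\times m$ matrix $(h_j(p_t))$ has full column rank, i.e.\ that some $m\times m$ minor is nonzero. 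A standard induction --- adding evaluation points one at a time and expanding the growing determinant along the new row, which one checks is a nonzero linear combination of $h_1,\dots,h_{r+1}$ --- shows that $\det\big(h_j(q_i)\big)_{i,j\in[m]}$, viewed as a polynomial in the $m$ point-variables $q_1,\dots,q_m$, is not identically zero; it has total degree at most $md\le nd$, so by Schwartz--Zippel a random choice from $S$ with $|S|>2nd$ makes the relevant minor nonzero with probability at least $1/2$, amplified to $1-2^{-\Omega(n)}$ by enlarging $|S|$ or by repetition. Apart from this step and the routine bookkeeping for field extensions, everything is elementary linear algebra together with \autoref{lem:black-box-derivative-variable}.
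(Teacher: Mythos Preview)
The proposal is correct and follows exactly the standard approach of Kayal and Carlini that the paper cites without proof: compute the space $V=\{\va:\sum_i a_i\,\partial f/\partial x_i=0\}$ by evaluating the partial derivatives at random points and taking the kernel of the resulting matrix, then let the last $n-m$ columns of $A$ be a basis of $V$. The paper itself does not prove this lemma, but its proof of the derandomized version (\autoref{lem:det-essential-vars}) makes explicit that ``the only place in which randomness is used in the proof of \autoref{lem:essential-vars} is in finding a basis for the vector space $V$,'' which is precisely what you do.
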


	Bhargava, Saraf and Volkovich \cite{BSV21} derandomize this lemma when $f$ is computed by a $\SumPowSumk$ circuit, a depth-$3$ set-multilinear circuit of top fan-in $k$ or a depth-$3$ multilinear circuit of top fan-in $k$. However the time required for their derandomization involves factors of $n^{O(k)}$ and thus we want to obtain an improved running time.
	
	For a class of polynomials $\calC$ defined over a field $\F$, we denote by $\Sigma^t \calC$ the class of polynomials of the form $\alpha_1 f_1 + \alpha_2 f_2 + \cdots + \alpha_t f_t$ with $\alpha_i \in \F$ and $f_i \in \calC$ for every $i$.
	
	\begin{lemma}
		\label{lem:hitting-set-dependency}
		Let $\calC$ be a class of polynomials and let $f_1, \ldots, f_t \in \calC$. Let $\calH$ be a hitting set for $\Sigma^t \calC$. Denote by $f_i|_{\calH}$ the vector (of length $|\calH|$) $(f_i(\beta))_{\beta \in \calH}$. Then for any $\alpha_1, \ldots, \alpha_t \in \F$,
		\[
		\sum_{i=1}^t \alpha_i f_i = 0 \iff \sum_{i=1}^t \alpha_i f_i |_{\calH} = 0.
		\]
		In particular, the polynomials $f_1, \ldots, f_t$ are linearly independent if and only if the vectors $f_1|_{\calH}, \ldots, f_t|_{\calH}$ are linearly independent.
	\end{lemma}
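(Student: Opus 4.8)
The plan is to derive the stated equivalence directly from the definition of a hitting set, and then read off the ``in particular'' clause as an immediate corollary.

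First I would dispatch the easy implication. If $\sum_{i=1}^t \alpha_i f_i$ is the zero polynomial, then it evaluates to $0$ at every point of $\F^n$, in particular at every $\beta \in \calH$, so the coordinate vector $\sum_{i=1}^t \alpha_i f_i|_{\calH}$ is the zero vector. This direction uses nothing about $\calH$ being a hitting set.

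For the converse I would set $g := \sum_{i=1}^t \alpha_i f_i$ and observe that $g \in \Sigma^t \calC$ by the very definition of that class, being an $\F$-linear combination of the $t$ elements $f_1, \ldots, f_t \in \calC$. Now assume $\sum_{i=1}^t \alpha_i f_i|_{\calH} = 0$, i.e.\ $g(\beta) = 0$ for every $\beta \in \calH$. Since $\calH$ is a hitting set for $\Sigma^t \calC$, every \emph{nonzero} polynomial in $\Sigma^t\calC$ is nonzero at some point of $\calH$; contrapositively, a member of $\Sigma^t\calC$ that vanishes on all of $\calH$ must be identically zero. Hence $g = 0$, which is exactly $\sum_{i=1}^t \alpha_i f_i = 0$.

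Finally, for the ``in particular'' statement: by definition the $f_i$ are $\F$-linearly dependent iff there exist scalars $\alpha_1,\ldots,\alpha_t$, not all zero, with $\sum_i \alpha_i f_i = 0$; by the equivalence just established this holds iff there exist such scalars with $\sum_i \alpha_i f_i|_{\calH} = 0$, i.e.\ iff the vectors $f_1|_{\calH}, \ldots, f_t|_{\calH}$ are linearly dependent, and negating yields the independence statement. I do not anticipate any genuine obstacle here; the only points that require a moment's care are invoking the correct notion of ``hitting set for $\Sigma^t \calC$'' (namely, one that certifies nonzeroness of \emph{every} nonzero member of that class) and noting that the linear combination $g$ indeed lands in $\Sigma^t\calC$ rather than merely in $\calC$, which is why the hitting set must be taken for the larger class.
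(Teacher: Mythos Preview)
Your proposal is correct and follows essentially the same approach as the paper: both directions are handled identically, with the key observation that $g=\sum_i\alpha_i f_i\in\Sigma^t\calC$ so the hitting-set property forces $g=0$ whenever $g$ vanishes on $\calH$. Your write-up is slightly more detailed (spelling out the ``in particular'' clause), but there is no substantive difference.
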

	
	\begin{proof}
		The implication from left to right is clear. In the other direction, $\sum_{i=1}^t \alpha_i f_i |_{\calH} = (\sum_{i=1}^t \alpha_i f_i) |_{\calH}$. Since $\sum_{i=1}^t \alpha_i f_i \in \Sigma^t \calC$ and $\calH$ is a hitting set, it follows that $\sum_{i=1}^t \alpha_i f_i  = 0$.
	\end{proof}
	
	\autoref{lem:hitting-set-dependency} gives an efficient way to test for dependency of polynomials assuming the existence of a small and efficiently constructible hitting set for $\Sigma^t \calC$.
	
	A derandomized version of \autoref{lem:essential-vars} is given below.
	
	\begin{lemma}
		\label{lem:det-essential-vars}
		Let $\calC$ be a class of polynomials closed under taking first order partial derivatives. Denote by $\calH$ a hitting set for $\Sigma^{t+1} \calC$. Then, there's a deterministic algorithm that, given a black box access to a degree-$d$ polynomial $f(\vx) \in \F[\vx]$ that has $t$ essential variables such that $f \in \calC$, runs in time $\poly(n,d,|\calH|)$ and outputs an invertible matrix $A \in \F^{n \times n}$ such that $f(A\vx)$ depends only on the first $t$ variables.
	\end{lemma}
	
	\begin{proof}
		As noted by \cite{BSV21} (in their Lemma 3.24), the only place in which randomness is used in the proof of \autoref{lem:essential-vars} is in finding a basis for the vector space
		\[
		V = \set{ \va \in \F^n : \sum_{i=1}^n a_i \frac{\partial f}{\partial x_i} = 0}.
		\]
		It turns out that $\codim V = t$ where $t$ is the number of essential variables of $f$, which equals $\dim \Span \left\{ \frac{\partial f}{\partial x_i} \right\}$. Let $f_i = \frac{\partial f}{\partial x_i}$. By assumption $f_i \in \calC$, and further using \autoref{lem:black-box-derivative-variable} we can obtain black box access to each $f_i$. Using \autoref{lem:hitting-set-dependency}, we can greedily pick $t$ linearly independent polynomials among $f_1, \ldots, f_n$, as well as, for each element $f_j$, compute the coefficients that express it as a linear combination of basis vectors (note that this requires applying \autoref{lem:hitting-set-dependency} on at most $t+1$ polynomials). That is, as in \cite{BSV21}, we compute a matrix $M \in \F^{n \times t}$ and indices $i_1, \ldots, i_t$ such that
		\[
		M \begin{pmatrix}
			f_{i_1} \\
			\vdots \\
			f_{i_t}
		\end{pmatrix}
		= 
		\begin{pmatrix}
			f_{1} \\
			f_{2} \\
			\vdots \\
			f_{n-1} \\
			f_{n}
		\end{pmatrix}
		\]
		The left kernel of $M$ is $V$.
	\end{proof}
	
	We note that the models we consider in this work are all closed under first order partial derivatives.
	
	\subsection{Hitting Sets for Depth-$3$ Circuits}
	\label{sec:hitting-sets}
	
	While there exist quasi-polynomial size hitting sets for depth-$3$ set-multilinear circuits \cite{FS13, FSS14, AGKS15}, we insist on obtaining polynomial size hitting sets for these models when $k$ is slightly super-constant (when $k$ is constant there are hitting sets of size $n^{\poly(k)}$ for general depth-$3$ circuits of top fan-in $k$, see, e.g., Section 4.6.2 of \cite{SY10} and \cite{SS12}).	
	Guo and Gurjar constructed such explicit polynomial size hitting sets for read-once algebraic branching programs (roABPs) of super-constant width. 
	
	\begin{theorem}[\cite{GG20}]
		\label{thm:guo-gurjar-hitting-set}
		There's an explicit hitting set of size $\poly(n,d)$ for the class of $n$-variate, individual degree $d$ polynomials computed by any-order roABPs of width $w$, assuming there's a constant $\varepsilon > 0$ such that $w = 2^{O(\log^{1-\varepsilon}(nd))}$.
	\end{theorem}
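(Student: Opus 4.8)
This theorem is quoted from \cite{GG20}; since it is a black-box import, the sketch below reconstructs the strategy one would use to establish it. The plan is to exploit the low-rank structure of the coefficient matrices of roABP-computable polynomials, convert it into a ``low-support rank concentration'' statement via an explicit shift, reduce to hitting polynomials that have a sparse low-support monomial, and finally bootstrap the resulting quasipolynomial bound down to a polynomial one using the quantitative width hypothesis.

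First I would recall the characterization underlying all roABP hitting sets (in the commutative / any-order setting used here, as in \cite{FS13, FSS14, AGKS15}): if $f(x_1,\dots,x_n)$ of individual degree at most $d$ is computed by a width-$w$ roABP in some variable order, then for \emph{every} partition $[n]=S\sqcup\bar S$ the coefficient matrix $M_{f,S}$ — rows indexed by monomials in $\{x_i:i\in S\}$, columns by monomials in $\{x_i:i\in\bar S\}$, and $(\mathbf m,\mathbf m')$-entry equal to the coefficient of $\mathbf m\cdot\mathbf m'$ in $f$ — has rank at most $w$. Hitting the class therefore reduces to hitting the (larger) class of polynomials all of whose ``evaluation dimensions'' are bounded by $w$.

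Next comes the core step: construct an explicit low-degree polynomial map on few new variables after which a nonzero $f$ in the class acquires a nonzero monomial of support only $O(\log w)$. This is achieved by first applying an explicit \emph{shift} $x_i\mapsto x_i+a_i(y)$ along a curve, chosen so as to be a basis-isolating weight assignment for the coefficient spaces of all the $M_{f,S}$ simultaneously (as in \cite{AGKS15}); a small explicit family of such shifts can be written down from sparse-PIT generators. Once $f$ has a nonzero low-support monomial, a Shpilka--Volkovich-type generator with $O(\log w)$ coordinate selectors hits it, using $O(\log w\cdot\log(nd))$ seed symbols and hence $(nd)^{O(\log w)}$ points — quasipolynomial in $nd$, which already recovers the \cite{AGKS15}-style bound but is not yet polynomial.

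Finally, to upgrade $(nd)^{O(\log w)}$ to $\poly(n,d)$ under $w=2^{O(\log^{1-\varepsilon}(nd))}$, I would bootstrap: reduce the $n$-variate instance to one on $n'\ll n$ variables by partitioning $[n]$ into $n'$ blocks and substituting into each block a fresh generator of the above type, verify that the restricted polynomial is still computed by an roABP of width $w^{O(1)}$ (still inside the allowed range), and recurse; with recursion depth $O(\log\log(nd))$ the composed seed length telescopes to $O(\log(nd))$. Each ingredient (the shift family, the low-support generator, the block partition) is explicit and efficiently computable, so the final hitting set is explicit of size $\poly(n,d)$. I expect the main obstacle to be precisely this last step: one must carry out the variable reduction so that the width — and hence the ``$\log w$'' in the exponent — does not blow up super-polynomially across the $\Theta(\log\log(nd))$ levels, while still guaranteeing that each intermediate map preserves non-vanishing for \emph{every} roABP in the class and not merely for a fixed target polynomial. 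Making the parameters of this recursion compose so that the output is genuinely polynomial rather than quasipolynomial is the delicate heart of the argument, and is exactly where the specific form $w=2^{O(\log^{1-\varepsilon}(nd))}$ of the hypothesis is consumed.
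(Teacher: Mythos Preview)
The paper does not prove this theorem at all: it is stated as a black-box citation of \cite{GG20} and used without proof (the only ``proofs'' that follow it are the one-line reductions in \autoref{cor:poly-size-hitting-set-set-ml} and \autoref{cor:poly-size-hitting-set-sumpowsum}). You correctly identified this at the outset of your proposal. There is therefore nothing in the paper to compare your sketch against; for the purposes of this paper, no proof is required and a bare citation suffices.

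As a side remark on your reconstruction: the high-level shape you describe---low-rank coefficient matrices, rank/support concentration after an explicit shift, and then a bootstrapping recursion that trades variables for a controlled width blow-up so that the $(nd)^{O(\log w)}$ exponent telescopes to $O(1)$ under the hypothesis $w=2^{O(\log^{1-\varepsilon}(nd))}$---is a reasonable summary of the kind of argument one expects in this line of work, and your identification of the recursion (and the need to keep the width polynomial across levels) as the crux is apt. But since the present paper neither proves nor sketches \cite{GG20}, none of this is needed here.
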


	\begin{corollary}
		\label{cor:poly-size-hitting-set-set-ml}
		There's an explicit hitting set of size $\poly(n,d)$ for the class of set-multilinear polynomials computed by  depth-$3$ set-multilinear circuits of degree $d$ and top fan-in $k$, assuming $k=2^{O(\log^{1-\varepsilon}(nd))}$ for some $\varepsilon >0$.
	\end{corollary}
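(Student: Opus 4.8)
The plan is to reduce to the Guo--Gurjar hitting set of \autoref{thm:guo-gurjar-hitting-set} by showing that every polynomial in the class in question is computed by a read-once oblivious ABP (roABP) of small width, in a variable order respecting the block structure.

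\emph{Step 1 (structural reduction).} I would first record that a degree-$d$ set-multilinear $\SumProdSum$ circuit of top fan-in $k$ over variable blocks $\vx_1,\dots,\vx_d$ computes a polynomial $f=\sum_{i=1}^{k}\prod_{j=1}^{d}\ell_{i,j}(\vx_j)$, where each $\ell_{i,j}$ is a linear form in $\vx_j$, and that any such $f$ is computed by an roABP of width $O(k)$ that reads the variables in the block order $\vx_1,\vx_2,\dots,\vx_d$ (the variables inside a block in an arbitrary order). Each rank-one summand $\prod_j\ell_{i,j}(\vx_j)$ is computed in block order by a width-$2$ roABP via the standard accumulator trick (one register for the running product of completed blocks times the partial sum of the in-progress linear form, one register for the running product of completed blocks alone); running these $k$ roABPs in parallel with block-diagonal transition matrices and summing through the start/end vectors gives width $\le 2k$. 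Any width $\poly(k)$ would suffice for us, so I would not optimize the constant.

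\emph{Step 2 (parameter check).} The number of variables of $f$ is $N:=\sum_i|\vx_i|\le nd$; padding with dummy variables, which only add trivial layers to the roABP, I may regard $f$ as an $nd$-variate polynomial. Since $f$ is set-multilinear, every variable occurs to degree at most $1$, so the individual degree of $f$ is $1$. Finally, the roABP has width $O(k)$, and from $k=2^{O(\log^{1-\varepsilon}(nd))}$ we get $O(k)=2^{O(\log^{1-\varepsilon}(nd))}$ as well. Hence \autoref{thm:guo-gurjar-hitting-set} applies with ``$n$'' $=nd$, individual degree $1$, and width $O(k)$, and yields an explicit hitting set of size $\poly(nd,1)=\poly(n,d)$ for all roABPs of this width (in any order), and in particular one that hits $f$. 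The hitting set depends only on $n,d,k$ (which we know), not on the unknown circuit, so it is explicit; this is the required hitting set.

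The only real content is Step~1, and although it is routine, two points deserve care. First, the roABP must read the variables in an order compatible with the block structure: for a sum of rank-one tensors the roABP width can blow up in a badly interleaved order, so the block order is essential. Second, one must use the fact --- built into \autoref{thm:guo-gurjar-hitting-set} --- that the Guo--Gurjar hitting set hits roABPs in an \emph{arbitrary} (unknown) variable order, since we never construct the block-order roABP for $f$ explicitly but only argue that it exists. Everything else is bookkeeping of parameters.
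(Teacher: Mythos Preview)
Your proof is correct, but it takes a different route from the paper's. The paper applies a Kronecker-type substitution $x_{i,j}\mapsto y_i^{\,j}$, which collapses each block $\vx_i$ to a single variable $y_i$; the resulting polynomial $\tilde f(y_1,\dots,y_d)$ has individual degree $n$, and since each summand becomes a product of $d$ \emph{univariate} polynomials, $\tilde f$ is computed by a width-$k$ roABP in \emph{every} variable order. One then applies \autoref{thm:guo-gurjar-hitting-set} with $d$ variables and individual degree $n$, and pulls the resulting points back through the substitution. Your approach instead works in the original $nd$ variables and builds a width-$2k$ roABP in a fixed block order; this is equally valid but relies more heavily on the ``unknown order'' aspect of \autoref{thm:guo-gurjar-hitting-set}, since your roABP exists only in block-compatible orders. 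The paper's substitution buys a slightly cleaner parameter picture (width exactly $k$, and an any-order roABP in the strong sense), while your argument avoids the lift/unlift step and is arguably more direct. Either way the final bound is the same.
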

	
	\begin{proof}
		Write $\vx = \vx_1 \cup \cdots \cup \vx_d$ for the set of variables of $f$. Note that by substituting $x_{i,j}$ by $y_i^j$ we obtain a polynomial $\tilde{f}$ in $\vy=(y_1, \ldots, y_d)$, of individual degree $n$, which is non-zero if and only if $f$ is non-zero. Further, $\tilde{f}$ is naturally computed by an any-order roABP of width $k$: we convert every multiplication gate to a path of width 1, and connect them in parallel. Thus we get a hitting set for $f$ of the required size.
	\end{proof}
	
%
%
%
%
	For general depth-$3$ circuits with top fan-in $k$, the known results are slightly weaker.
	
	\begin{lemma}[\cite{SS12}]
		\label{lem:PIT}
		There exists an explicit hitting set for the class of $n$-variate polynomials computed by multilinear $\SumProdSumk$ circuits of degree $d$ of size $n^{O(k^2 \log k)}$.
	\end{lemma}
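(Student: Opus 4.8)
The plan is to derive this from two ingredients that are standard in black-box identity testing for bounded top fan-in depth-$3$ circuits: a \emph{rank bound} for multilinear identities, together with a generic reduction that turns any such rank bound into a hitting set. The exponent $O(k^2\log k)$ in the statement is, up to the constant, exactly the rank bound, so the actual content of the argument is to verify that neither the number of variables $n$ nor the degree $d$ leaks into the exponent; multilinearity is used for this in two different ways.

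First I would apply a normal-form reduction and then invoke the rank bound. Given an arbitrary nonzero multilinear $\SumProdSumk$ circuit $C=\sum_{i=1}^{k}\prod_{j}\ell_{i,j}$, one divides out $\gcd(\prod_j\ell_{1,j},\dots,\prod_j\ell_{k,j})$, which is a product of at most $d$ linear forms (and, by multilinearity, in fact a product of linear forms on pairwise disjoint variable sets), and for which a hitting set of size $\operatorname{poly}(n,d)$ --- e.g.\ a long enough segment of the moment curve $\alpha\mapsto(\alpha,\alpha^2,\dots,\alpha^n)$ --- is immediate; and if some proper sub-sum of the resulting \emph{simple} circuit vanishes identically one passes to the complementary sub-sum, which is still nonzero and has strictly smaller top fan-in. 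Hence it suffices to hit \emph{simple, minimal} multilinear $\Sigma^{k'}\Pi\Sigma$ circuits for every $k'\le k$ and union the result with the moment-curve set. The key structural input, which is the heart of \cite{SS12} and rests on the quantitative Sylvester--Gallai / incidence arguments of that line of work, is that any simple, minimal multilinear $\SumProdSumk$ circuit computing the zero polynomial has $\rank(C):=\dim\Span\{\ell_{i,j}\}_{i,j}\le R:=O(k^2\log k)$ --- crucially, a bound with no dependence on $d$. Multilinearity is precisely what yields this $d$-independence (and the fact that the underlying field is irrelevant): inside a single gate the linear forms $\ell_{i,j}$ live on pairwise disjoint sets of variables, which is the regularity needed to avoid a $\log d$ loss.

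Second I would invoke the reduction from the rank bound to a hitting set, following the black-box machinery of Karnin and Shpilka \cite{KarninShpilka08}. From a rank bound $R$ for identities one shows that for every $\SumProdSumk$ circuit $C$ on $n$ variables there is an affine subspace $V=v_0+\Span\{v_1,\dots,v_{d_0}\}$ with $d_0=O(R)$, drawn from an explicit family of size $n^{O(d_0)}$ (choose $d_0$ coordinate directions and an offset from a small explicit set), such that $C\equiv 0$ if and only if $C|_{V}\equiv 0$; the proof is an induction on the top fan-in that peels off one gate at a time and uses the rank bound to certify that any obstruction to non-vanishing already manifests on an $O(R)$-dimensional space. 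Restricted to such a $V$, the circuit becomes a polynomial in the $d_0+1$ parameters of $V$ of degree at most $d\le n$ (here multilinearity is used again, in the trivial form $d\le n$), and is therefore hit by the brute-force grid of size $(d+1)^{d_0+1}=n^{O(R)}$. Ranging over all subspaces $V$ in the explicit family and all grid points inside each of them produces an explicit hitting set of size $n^{O(d_0)}\cdot n^{O(R)}=n^{O(k^2\log k)}$.

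The main obstacle is the rank bound invoked in the first step: this is a genuinely difficult theorem --- it is where essentially all of the difficulty of \cite{SS12} resides, resting on quantitative Sylvester--Gallai-type incidence geometry --- whereas the normal-form step is routine and the reduction in the second step is the standard Karnin--Shpilka black-box argument. The one point that requires a little care in that reduction is keeping the degree $d$ out of the exponent, and this is exactly what multilinearity provides, since a multilinear polynomial in $n$ variables has degree at most $n$.
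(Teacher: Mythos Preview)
Your proposal is correct and follows essentially the same route the paper indicates: the paper does not give a full proof but remarks that the result is a corollary of the rank bound $R_M(k)=O(k^2\log k)$ for simple minimal multilinear $\Sigma\Pi\Sigma$ identities (attributed to \cite{SaxenaS13}) combined with the black-box rank-to-hitting-set reduction (the paper cites Corollary 6.9 of \cite{DS07}, whereas you cite the closely related \cite{KarninShpilka08}). Your more detailed sketch---normal-form reduction, rank bound, explicit family of $O(R)$-dimensional rank-preserving subspaces plus a brute-force grid on each, with multilinearity invoked both to make $R$ independent of $d$ and to bound $d\le n$---is exactly the mechanism underlying that remark.
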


	We remark that the hitting set presented in \cite{SY10} is of size $n^{O(R(k,r))}$ where $R(k,r)$ is the so-called rank bound for $\SumProdSumk$ circuits, which (for some fields, as explained in \cite{SY10}) depends on $d$. However for \emph{multilinear} circuits the above result is a corollary of Corollary 6.9 of \cite{DS07} and the rank bounds of \cite{SaxenaS13}.
	
	We further note that had we used \autoref{lem:PIT}, our algorithm wouldn't run in polynomial time for super-constant $k$, which is one of the reasons we use a randomized PIT algorithm for this class in our reconstruction algorithm. However, this is not the major obstacle for derandomization: derandomizing our algorithm in polynomial time would require a deterministic PIT for much larger classes than multilinear $\Sps{k}$ circuits. It's an interesting open problem to obtain a derandomization for our algorithm even modulo \autoref{lem:PIT}.

	\subsection{Solving a System of Polynomial Equations}
	\label{sec:sys}
	
	Let $\F$ be a field and let $\Sys_{\F}(n,m,d)$ denote the randomized time complexity of finding a solution to a polynomial system of $m$ equations in $n$ variables of degree $d$. A detailed analysis of this function for various fields $\F$ appears in Section 3.8 of the arXiv version of \cite{BSV21}. For our purposes, it is enough to note that for every field $\F$, $\Sys_{\F}(n,m,d) = \poly(nmd)^{n^{n}}$, if we allow solutions from an algebraic extension of $\F$. Further, for $\F=\R$, $\C$ or $\F_q$, extensions are not needed, and if $\F = \R$ or $\C$ then the algorithm is in fact deterministic.

	\subsection{Resultants}\label{sec:resultant}

	Let $f(x), g(x)$ be two polynomials of degrees $m$ and $\ell$ in the variable $x$, respectively. Suppose $m,\ell > 0$, and write
	
	\begin{align*}
		f(x) &= c_m x^m + c_{m-1}x^{m-1} + \cdots + c_0 \\
		g(x) &= d_\ell x^\ell + d_{\ell-1} x^{\ell-1} + \cdots + d_0.
	\end{align*}
	
	The \emph{Sylvester matrix} of the polynomials $f$ and $g$ with respect to the variable $x$ is the following $(m + \ell) \times (m+ \ell)$ matrix:
	
	\[
	\begin{pmatrix}
		c_m & & & &  d_\ell & & & \\
		c_{m-1} & c_m & &  & d_{\ell-1} & & \\
		c_{m-2} & c_{m-1} & \ddots & & d_{\ell-2} & d_{\ell-1} & \ddots & \\
		\vdots & & \ddots & c_{m} & \vdots & & \ddots & d_\ell \\
		& \vdots & & c_{m-1} & & \vdots & & d_{\ell-1} \\
		c_0 & & & & d_0 & & & \\
		& c_0 & & \vdots & & d_0 & & \vdots \\
		& & \ddots & & & & \ddots & \\
		& & & c_0 & & & & d_0
	\end{pmatrix}
	\]
	
	The determinant of this matrix is called the \emph{resultant} of $f$ and $g$ with respect to the variable $x$ and is denoted $\Res_x (f,g)$. 
	
	In our case we often think of $f, g \in \F[x_1, \ldots, x_n]$ interchangeably as $n$-variate polynomials or as univariate polynomials in some variable, say $x_1$, over the ring $\F[x_2, \ldots, x_n]$, in which case the resultant is a polynomial in $x_2, \ldots, x_n$. The main property of resultant we use is that, assuming the degree in $x_1$ of both $f$ and $g$ is positive, $f$ and $g$ have a common factor in $\F[x_2, \ldots, x_n]$ if and only if $\Res_{x_1} (f,g) = 0$ (see, e.g., Proposition 3 in Chapter 3, Section 6 of \cite{CLO07}).

	\section{Syntactic Rank of Depth-$3$ Circuits}\label{sec:syn}
	
	In the following two sections, we define syntactic and semantic notions of ranks of polynomials computed by $\Sps{k}$ circuits. Note that \emph{syntactic} ranks are inherently tied to \emph{circuits} computing the polynomials, whereas semantic ranks are independent of the representation or computation of the polynomials.

	For a circuit $C$ we denote by $[C]$ the polynomial computed by $C$.
	For two $\SumProdSumk$ circuits $C, C'$, we define their \emph{syntactic sum}, $C + C'$, to be the depth-$3$ circuit whose top gate sums all multiplication gates in $C$ and $C'$. Observe that $C + C'$ is a $\Sps{2k}$ circuit.
	
	We start by defining \emph{syntactic} notions of rank and distance for $\Sps{k}$ circuits.
	
	\begin{definition}[Syntactic Rank and Distance]
		\label{def:syn-rank-dist}
		Let $C = \sum_{i= 1}^k M_i = \sum_{i=1}^{k}\prod_{j=1}^{d_i}\ell_{i,j}$ be a $\Sps{k}$ circuit. Define the following notions:
		\begin{enumerate}
			\item $\deg(C) = \max \{ \deg [M_i] : 1 \le i \le k \}$.
			\item $\gcd(C)$ is the set of linear functions appearing in all of $M_1, \ldots, M_k$ (up to multiplication by a constant). I.e., $\gcd(C)=\gcd(M_1,\ldots,M_k)$.
			\item $\simp(C) \eqdef \frac{C}{\gcd(C)} = \sum_{i=1}^k \frac{M_i}{\gcd(C)}\in \Sps{k}$ is called the \emph{simplification} of $C$. $C$ is called \emph{simple} if $\gcd(C)=1$.
			\item We say that $C$ is \emph{minimal} if for every $\emptyset\neq S \subsetneq [k]$, $\sum_{i\in S}M_i\neq 0$.
			\item Let $\mathcal{L}_i$ be the collection of linear polynomials appearing in $\frac{M_i}{\gcd(C)}$, we define $\ranksyn(C) \eqdef \dim(\Span\set{\mathcal{L}_1,\ldots,\mathcal{L}_k})$. 
			\item Let $C'$ be a $\SumProdSumk$ circuit.  We define $\dist(C,C') = \ranksyn(C + C')$. $\hfill \qedhere$
		\end{enumerate}
	\end{definition}
	
	The usefulness of syntactic rank is expressed in the following well known \emph{rank bound} for multilinear depth-$3$ circuits.
	
	\begin{theorem}[\cite{DS07, KS09a, SaxenaS11, SaxenaS13}]
		\label{thm:rank bound ml}
		There's a monotone function $R(k,d)$ such that 
		any simple and minimal $\Sps{k}$ circuit $C$ that computes the zero polynomial and such that $\deg(C)\leq d$,  satisfies $\ranksyn(C) \le R(k,d)$. Further, $R(k,d) \le  4k^2 \log (2d)$.
		
		If $C$ is multilinear there's a similar function $R_M(k)$ depending only on $k$: any simple and minimal, multilinear $\Sps{k}$ circuit $C$, computing the zero polynomial satisfies $\ranksyn(C) \leq R_M(k)$. One can take $R_M(k) \le 10 k^2 \log k$.
	\end{theorem}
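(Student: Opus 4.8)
The plan is to prove the general bound $R(k,d)\le 4k^2\log(2d)$ first and then derive the multilinear bound $R_M(k)\le 10k^2\log k$ from it; throughout, take $R(k,d)$ to be monotone by definition. For the general bound I would induct on the top fan-in $k$. The cases $k\le 2$ are degenerate: the polynomial ring is an integral domain, so a nonzero product of linear functions is a nonzero polynomial, whence a simple $\Sps{1}$ circuit computing $0$ is trivial and $\ranksyn=0$; and if $M_1+M_2=0$ with the circuit minimal, then $M_1=-M_2$, so $M_1$ and $M_2$ share all their linear factors up to scalars, which $\gcd(C)$ then absorbs, and simplicity forces $\ranksyn(C)=0$. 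For the inductive step the standard device is a hyperplane restriction: if every gate is a constant we are done; otherwise pick a gate $M_i$ and a linear form $\ell\mid M_i$, and restrict $C$ to $\{\ell=0\}$ (change coordinates so $\ell$ becomes a variable and set it to $0$). Since $M_i|_{\ell=0}=0$, the circuit $C|_{\ell=0}$ computes $0$ and has top fan-in $\le k-1$; partition its gates into minimal sub-circuits (each computing $0$), simplify each, apply the induction hypothesis to bound the rank of each simplified piece by $R(k-1,d)$, and then translate these back into a bound on $\ranksyn(C)$, paying for the direction $\ell$, for the $\gcd$ of each minimal piece (degree $\le d$), and for linear forms that become dependent only after the restriction.

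The step I expect to be the main obstacle is the quality of that last translation. Executed crudely it yields a recursion of the rough shape $R(k,d)\le (k-1)R(k-1,d)+O(kd)$, and even the more careful restriction of Dvir--Shpilka \cite{DS07} gives only something like $2^{O(k^2)}(\log d)^{k-2}$; both produce a finite monotone $R(k,d)$ but are far from $O(k^2\log d)$. To reach the stated bound one follows the Sylvester--Gallai route: as observed by Kayal--Saraf over $\R$ and extended to all fields by Saxena--Seshadhri, a simple minimal zero circuit forces Sylvester--Gallai--type incidences among its linear forms (through the linear forms of one gate pass those of another), and quantitative versions of the appropriate ideal/colored Sylvester--Gallai theorems bound the rank by $O(k^2\log d)$ --- a single $\log d$ rather than a tower of them, because restricting modulo a $t$-dimensional ideal collapses the effective degree geometrically, so the recursion closes additively. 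I would not reprove this core but cite \cite{SaxenaS13} (building on \cite{DS07,KS09a,SaxenaS11}); only the existence of a monotone $R(k,d)\le 4k^2\log(2d)$ is needed downstream.

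Finally, the multilinear bound follows as a short corollary of the general one. Let $C$ be simple, minimal, multilinear with $[C]=0$, and set $r:=\ranksyn(C)$; since $C$ is simple, every linear form of $C$ lies in a fixed $r$-dimensional space $V$. Within a single gate $M_i=\prod_j\ell_{i,j}$, multilinearity forces the $\ell_{i,j}$ to have pairwise disjoint variable supports (otherwise some variable occurs to degree $\ge 2$), and nonconstant linear forms with pairwise disjoint supports are automatically linearly independent; hence the number of nonconstant factors of $M_i$ is at most $\dim V=r$, and since multilinearity also forbids repeated factors, $\deg(M_i)\le r$, so $\deg(C)\le r$. Plugging this into the general bound gives $r=\ranksyn(C)\le R(k,\deg(C))\le 4k^2\log(2r)$, and this self-referential inequality forces $r=O(k^2\log k)$; a routine numerical check shows that $r\le 10k^2\log k$ is a valid explicit bound, so $R_M(k)\le 10k^2\log k$. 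The only genuinely computational point here is solving that last inequality.
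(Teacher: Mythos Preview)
The paper does not prove this theorem; it is stated as a cited result from \cite{DS07,KS09a,SaxenaS11,SaxenaS13} with no accompanying argument. So there is no ``paper's own proof'' to compare against. That said, your sketch is the standard route and is correct in outline: the $O(k^2\log d)$ bound is genuinely the Saxena--Seshadhri Sylvester--Gallai argument (and you are right that the naive Dvir--Shpilka restriction recursion gives much worse dependence), and your derivation of the multilinear bound via the self-referential inequality $r\le 4k^2\log(2r)$ is exactly how it is done. In fact, the paper itself reuses precisely this ``multilinearity $\Rightarrow$ variable-disjoint $\Rightarrow$ linearly independent $\Rightarrow$ $d\le\ranksyn(C)$, now bootstrap'' trick in the proof of \autoref{cor:rank-bound-multilinear-k-d-pho}, so your approach is fully aligned with how the authors reason elsewhere. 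One small point: your claim that ``multilinearity also forbids repeated factors'' is unnecessary (and slightly off --- a repeated \emph{constant} factor is fine); the degree bound already follows from the linear-independence count alone.
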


	The next lemma will be useful when studying different representations of the same polynomial.
	
	\begin{lemma}
		\label{lem:small-syn-rank-t}
		For $j \in [t]$, let
		\begin{align*}
			M_j = \sum_i M_{j,i}, & \quad T_j = \sum_i T_{j,i}
		\end{align*}
		be $\SumProdSumk$ circuits (with $M_{j,i}, T_{j,i}$ denoting multiplication gates).
		
		Suppose that for every $j \in [t]$, $M_j - T_j$ are minimal circuits with
		$\ranksyn(M_j - T_j) \le s$. Further, assume that
		$\ranksyn(\sum_j T_j) \le r$.
		Then, $\ranksyn(\sum_j M_j) \le t(r+2s)$.
	\end{lemma}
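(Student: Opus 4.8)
The plan is to track greatest common divisors gate-by-gate. Throughout, for a linear form $p$ and a nonzero product of linear forms $h$ write $v_p(h)$ for the exponent of $p$ in $h$; recall that a linear form $p$ divides the gate $M_i/\gcd(C)$ of $\simp(C)$ if and only if $v_p(M_i)>v_p(\gcd(C))$, and that $\ranksyn(C)=\dim\Lin(\simp(C))$, where $\Lin$ denotes the span of all linear forms occurring in the gates.

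First I would fix notation. For $j\in[t]$ let $D_j:=\gcd(M_j-T_j)$, the g.c.d.\ of all $2k$ gates $M_{j,1},\dots,M_{j,k},T_{j,1},\dots,T_{j,k}$, and write $M_{j,i}=D_j\mu_{j,i}$ and $T_{j,i}=D_j\tau_{j,i}$. By the definition of $\ranksyn$, the hypothesis $\ranksyn(M_j-T_j)\le s$ says exactly that all linear forms occurring in the $\mu_{j,i}$'s and $\tau_{j,i}$'s span a space $W_j$ with $\dim W_j\le s$ (the minimality of $M_j-T_j$ will not be needed). Put $g_M:=\gcd(\sum_j M_j)$, $g_T:=\gcd(\sum_j T_j)$, and $c_j:=\gcd_i\tau_{j,i}$; the hypothesis $\ranksyn(\sum_j T_j)\le r$ says all linear forms dividing gates of $\simp(\sum_j T_j)$ lie in a space $U$ with $\dim U\le r$. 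Note $v_p(g_M)=\min_{j'}\min_i\bigl(v_p(D_{j'})+v_p(\mu_{j',i})\bigr)\ge\min_{j'}v_p(D_{j'})$ and $v_p(g_T)=\min_{j'}\bigl(v_p(D_{j'})+v_p(c_{j'})\bigr)$.

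The heart of the argument is the claim that every linear form $p$ dividing a gate of $\simp(\sum_j M_j)$ lies in $U+\sum_{j'}W_{j'}$; granting this, $\ranksyn(\sum_j M_j)=\dim\Lin(\simp(\sum_j M_j))\le\dim U+\sum_{j'}\dim W_{j'}\le r+ts\le t(r+2s)$. To prove the claim, suppose $p$ divides $M_{j,i}/g_M$, i.e.\ $v_p(D_j)+v_p(\mu_{j,i})>v_p(g_M)$. If $v_p(\mu_{j,i})>0$ then $p$ occurs in $\mu_{j,i}$, so $p\in W_j$. Otherwise $v_p(D_j)>v_p(g_M)\ge\min_{j'}v_p(D_{j'})$. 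If $v_p(c_{j'})>0$ for some $j'$ then $p$ occurs in $\tau_{j',i'}$ for all $i'$, so $p\in W_{j'}$. If instead $v_p(c_{j'})=0$ for every $j'$, then $v_p(g_T)=\min_{j'}v_p(D_{j'})<v_p(D_j)\le v_p(D_j)+v_p(\tau_{j,i})=v_p(T_{j,i})$ for every $i$, so $p$ divides the gate $T_{j,i}/g_T$ of $\simp(\sum_j T_j)$, whence $p\in U$. This exhausts the cases.

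The step I expect to be the real obstacle is exactly this case analysis. The tempting shortcut --- bound $\ranksyn(M_j)\le r+2s$ for each $j$ (which does hold: $\ranksyn(T_j)\le\ranksyn(\sum_{j'}T_{j'})\le r$ since $\ranksyn$ can only drop when gates are removed, and then a gcd-corrected subadditivity applies to $M_j=(M_j-T_j)+T_j$ because $\gcd(T_j)$ is a multiple of $D_j$) and then sum over $j$ --- fails, since $\ranksyn$ is \emph{not} subadditive under syntactic sums: two circuits with disjoint, high-dimensional g.c.d.'s already violate it. What rescues the proof is that the g.c.d.\ of the whole sum $\sum_j M_j$ absorbs the individual $D_j$'s up to forms that are controlled either by the $W_{j'}$'s or, via $g_T$, by the rank of $\sum_j T_j$; tracking this absorption is the content of the claim above, and it is where the hypothesis on $\ranksyn(\sum_j T_j)$ enters. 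The remaining steps --- the dictionary between valuation inequalities and divisibility, and the fact that simplifying a gate only removes forms --- are routine.
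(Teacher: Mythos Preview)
Your proof is correct and takes a genuinely different route from the paper. The paper's argument relies on multilinearity: from $\ranksyn(T'_j)\le s$ and $\ranksyn(\tilde T_j)\le r$ it deduces $\deg(T'_j)\le s$ and $\deg(\tilde T_j)\le r$ (since in a multilinear gate the linear factors are variable-disjoint and hence linearly independent), and then compares the multisets $\{a^j_i\}$ and $\{\ell_i\}$ of gcd-factors to show that their common part $\{\ell'_i\}$ is large enough to pull out of $\sum_j M_j$, leaving a residue of rank at most $tr+ts+u\le t(r+2s)$. Your valuation-based case analysis sidesteps multilinearity entirely: by tracking $v_p$ across $D_j$, $g_M$, $g_T$ and $c_{j'}$ you place every linear form of $\simp(\sum_j M_j)$ directly into $U\cup\bigcup_{j'}W_{j'}$, which yields the sharper bound $r+ts$ and, as you note, never uses the minimality hypothesis. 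So your argument is both more general (it applies to arbitrary, not just multilinear, $\Sps{k}$ circuits) and quantitatively tighter; the paper's version has the minor advantage of making the common-factor structure $\{\ell'_i\}=\{\ell_i\}\cap\bigcap_j\{a^j_i\}$ explicit, which mirrors how the lemma is later invoked.
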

	
	\begin{proof}
		By factoring out the gcd of the circuits we obtain
		\[
		M_j - T_j = \left( \prod_i a^j_i \right) (M'_j + T'_j).
		\]
		where $\ranksyn(M'_j + T'_j) \le s$.	
		Further, by the assumption that $\ranksyn(\sum_j T_j) \le r$, we have
		\[
		\sum_j T_j = \left( \prod_{i} \ell_i\right) (\sum_j \tilde{T}_j).
		\]
		where $\ranksyn(\sum_j \tilde{T}_j) \le r$.
		Thus, for every $j \in [t]$,
		\begin{align*}
			T_j &= \left( \prod a^j_i \right) T'_j = \left(\prod \ell_i \right) \tilde{T}_j
		\end{align*}
		with $\ranksyn(T'_j) \le s$ and $\ranksyn(\tilde{T}_j) \le r$. Multilinearity implies $\deg(T'_j)\le s$ and $\deg(\tilde{T}_j)\le r$. 
		Hence, $|\set{\ell_i} \setminus \set{a^j_i}| \le s$.
		Thus, $\set{\ell_i} \cap \bigcap_{j=1}^t \set{a^j_i} = \set{\ell'_1, \ldots, \ell'_q}$, where $q \geq |\{\ell_i\}| - ts$. Denote $\{\tilde{\ell}_1,\ldots,\tilde{\ell}_u\} = \set{\ell_i}\setminus\set{\ell'_1, \ldots, \ell'_q}$, for $u= |\{\ell_i\}| -q \leq ts$.
		Hence,
		\[
		\sum_{j=1}^t M_j = \left( \prod_i \ell'_i \right) \cdot
		\sum_{j=1}^t \left( \frac{\prod a^j_i}{\prod \ell'_i} \cdot \left(M'_j \right) \right).
		\]
		To bound the syntactic rank of $\sum_{j=1}^t \left( \frac{\prod a^j_i}{\prod \ell'_i} \cdot \left(M'_j \right) \right)$ we observe that for every $j$, it holds that $\set{a^j_i} \setminus \set{\ell'_i} \subseteq \left(\set{a^j_i} \setminus \set{\ell_i}\right) \cup \{\tilde{\ell}_1,\ldots,\tilde{\ell}_u\}$. 
		Further, since $\ranksyn(M'_j)$ is at most $s$, and  $|\set{a^j_i} \setminus \set{\ell_i}| \le r$, we get that
		\[
		\ranksyn \left( \sum_{j=1}^t M_j \right) =\ranksyn\left(\sum_{j=1}^t \left( \frac{\prod a^j_i}{\prod \ell'_i} \cdot \left(M'_j \right) \right)\right)\le t r + ts + u \leq t(r+2s). \qedhere
		\]
	\end{proof}

	\subsection{Syntactic Partitions of $\Sps{k}$ Circuits}\label{sec:syn-part}

	In this section we study syntactic partitions of $\Sps{k}$ circuits. In \autoref{sec:sem-part} we shall discuss \emph{semantic} partitions and compare the two notions.
	
	\begin{definition}[Syntactic Partition, Definition 3.3 of \cite{KarninShpilka09}]
		\label{def:tau-r-syntactic}
		Let $C = \sum_{i=1}^k \prod_{j=1}^{d_i} \ell_{i,j} = \sum_{i=1}^k M_i$ be a $\SumProdSumk$ circuit.
		Let $I = \set{A_1, \ldots, A_s}$ be a partition of $[k]$. For each $i \in [s]$ let $C_i = \sum_{j \in A_i} M_j$. We say that $\set{C_i}_{i \in [s]}$ is a $(\tau, r)$-syntactic partition of $C$ if:
		\begin{itemize}
			\item For every $i \in [s]$, $\ranksyn(C_i) \le r$.
			\item For every $i \neq j \in [s]$, $\dist(C_i, C_j) \ge \tau r$. $\hfill \qedhere$
		\end{itemize}
	\end{definition}

	The following lemma captures an important property of the definition.

	\begin{lemma}
		\label{lem:large-rank-syn}
		Let $C$ be a $\SumProdSumk$ circuit. Let $(C_1, \ldots, C_s)$ be a $(\tau,r)$-syntactic partition of $C$ with $\tau \ge 10$.
		Let $M, T$ be two multiplication gates in $C$ that belong to different clusters.
		Then,
		\[
		\dist(M,T) > \tau r / 10.
		\]
	\end{lemma}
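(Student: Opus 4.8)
The plan is to compare the single-gate distance $\dist(M,T)$ with the cluster-level distance $\dist(C_i,C_j)$, where $M \in C_i$ and $T \in C_j$ with $i \neq j$, and exploit the fact that adding the remaining gates of $C_i$ and $C_j$ cannot inflate the syntactic rank by much. Concretely, write $C_i = M + C_i'$ and $C_j = T + C_j'$, where $C_i'$ is the sum of the other multiplication gates in cluster $i$ (and similarly $C_j'$). The key inequality I want is a subadditivity-type bound: $\dist(C_i, C_j) = \ranksyn(C_i + C_j) \le \ranksyn(M+T) + \ranksyn(C_i') + \ranksyn(C_j') + (\text{small correction})$. Since $C_i'$ and $C_j'$ are subcircuits of $C_i, C_j$ respectively, their syntactic ranks are each at most $r$ (using the cluster bound $\ranksyn(C_i), \ranksyn(C_j) \le r$, together with the fact that the syntactic rank of a subcircuit — after re-simplifying — is bounded by that of the whole circuit; this needs a small argument about how factoring out a gcd of a sub-collection interacts with the gcd of the full collection, but it is routine and of the same flavor as the computations in \autoref{lem:small-syn-rank-t}).

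Given such a bound, the chain of inequalities runs: $\tau r \le \dist(C_i, C_j) \le \dist(M,T) + 2r + O(r)$, and rearranging yields $\dist(M,T) \ge (\tau - O(1)) r$, which is $> \tau r / 10$ once $\tau \ge 10$ with the implicit constants being small enough. The precise way I would organize this is to invoke \autoref{lem:small-syn-rank-t} with $t = 2$: set $M_1 - T_1$ and $M_2 - T_2$ to be (up to relabeling) the "one gate minus the matching gate" pieces, or more directly, take $T_1 = M+T$ and $T_2 = 0$ and the $M_j$'s to be $C_i$ and $C_j$ — then $\ranksyn(\sum_j T_j) = \ranksyn(M+T) = \dist(M,T) =: s_0$, and I need $\ranksyn(C_i - (M+T))$, $\ranksyn(C_j - 0)$ style bounds. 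Actually the cleanest application: view $C_i + C_j$ as $(M+T) + C_i' + C_j'$; since $C_i'$ arises from $C_i$ by deleting the gate $M$, minimality of $C_i$ (which we may assume, or pass to a minimal subcircuit) gives $C_i' \neq 0$, and $\ranksyn(C_i') \le \ranksyn(C_i) \le r$ after simplification; similarly $\ranksyn(C_j') \le r$. A direct gcd-factoring argument then bounds $\ranksyn((M+T) + C_i' + C_j')$ by $\dist(M,T) + 2r$ plus the number of common linear factors that get stripped off at different stages — a bounded additive loss, giving $\dist(C_i,C_j) \le \dist(M,T) + O(r)$.

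The main obstacle I anticipate is bookkeeping the gcd's correctly: when one forms $C_i + C_j$, the gcd of the full collection of multiplication gates may be strictly smaller than the gcd's of the individual clusters, so the "simplification" of $C_i + C_j$ may reintroduce linear factors that were stripped when computing $\ranksyn(C_i)$ and $\ranksyn(C_j)$ separately. This is exactly the subtlety that \autoref{lem:small-syn-rank-t} was built to handle (note its conclusion $t(r+2s)$ rather than $t(r+s)$ — the extra $ts$ accounts for precisely these reappearing factors). So I expect the real work is to phrase the situation as an instance of \autoref{lem:small-syn-rank-t}, or to reprove the needed special case inline, being careful that multilinearity lets us convert syntactic rank bounds into degree bounds on the simplified gates, which is what controls how many linear factors can be in play. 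Once that is pinned down, the arithmetic $\tau r \le \dist(M,T) + cr \implies \dist(M,T) \ge (\tau - c) r > \tau r/10$ for $\tau \ge 10$ and $c$ the absolute constant coming out of the bound (one should check $c \le 9\tau/10$, i.e. essentially $c \le 9$, which holds with the constants above) finishes the proof.
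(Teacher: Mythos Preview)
Your proposal is correct and takes essentially the same approach as the paper: both bound $\ranksyn(C_i + C_j)$ from above in terms of $\dist(M,T)$ and $r$ by tracking how the gcd's of $C_i$, $C_j$, and $M+T$ overlap, which is exactly the ``direct gcd-factoring argument'' you describe (and is, as you suspected, an inline special case of \autoref{lem:small-syn-rank-t}). The paper carries this out by contradiction, assuming $\dist(M,T)\le \tau r/10$ and computing $\ranksyn(C_1+C_2)\le 2(\tau r/10)+4r\le 6\tau r/10<\tau r$; note that the inequality one actually gets is $\tau r\le 2\dist(M,T)+4r$ (a factor of $2$ on $\dist(M,T)$, not $1$), so your constant $c$ should be checked with that in mind, but the conclusion for $\tau\ge 10$ is unchanged.
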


	\begin{proof}
		Assume without loss of generality that $M$ belongs to $C_1$ and $T$ belongs to $C_2$, i.e., $C_1 = M + \tilde{M}$ and $C_2 = T + \tilde{T}$.
		By pulling out the linear factors from each cluster, we write:
		\[
		C_1 = \left( \prod_{i=1}^{m_a} a_i \right) (M' + \tilde{M}'), \quad C_2 = \left( \prod_{i=1}^{m_b} b_i\right)  (T' + \tilde{T}').
		\]
		
		Further, assume towards contradiction that $\dist(M, T) \le \tau r / 10$, and write
		\[
		M + T = \left( \prod_{i=1}^{m_\ell} \ell_i\right) (\widehat{M} + \widehat{T}).
		\]
		Then, we have
		\begin{align*}
			M &= \left( \prod a_i \right) M' = \left(\prod \ell_i \right) \widehat{M}\\
			T &= \left( \prod b_i \right) T' = \left(\prod \ell_i \right) \widehat{T}
		\end{align*}
		
		Since $\ranksyn(C_1) \le r$, $\deg(M') \le r$ and similarly $\deg(T') \le r$. Further, by the assumption that $\ranksyn(M+T) \le \tau r / 10$, we get that $\deg(\widehat{M}) \le \tau r / 10$ and similarly $\deg(\widehat{T}) \le \tau r / 10$.
		
		Recall that $m_a = |\set{a_i}|$, $m_b = |\set{b_i}|$ and $m_\ell = |\set{\ell_i}|$. The above inequalities imply that
		$|\set{\ell_i} \cap \set{a_i}| \ge m_a- \tau r / 10$ and $|\set{\ell_i} \setminus \set{a_i}| \le r$. Similar inequalities holds for $\set{b_i}$.

		Denote $\set{a_i} \cap \set{b_i} = \set{\ell'_1, \ldots, \ell'_q}$. We have that
		\[
		q \ge |\set{a_i} \cap \set{b_i} \cap \set{\ell_i}| \ge |\set{a_i} \cap \set{\ell_i}| - |\set{\ell_i} \setminus \set{b_i}| \ge m_a - \tau r / 10 - r.
		\]
		Similarly, $q \ge m_b - \tau r / 10 - r$.
		Hence,
		\[
		C_1 + C_2 = \left( \prod_i \ell'_i \right) \cdot \left( \frac{\prod a_i}{\prod \ell'_i} \cdot \left(M' + \tilde{M}' \right) +
		\frac{\prod b_i}{\prod \ell'_i} \left(T' + \tilde{T}' \right)
		\right).
		\]
		The number of linearly independent linear forms in $ \frac{\prod a_i}{\prod \ell'_i}$ is at most $\tau r / 10 + r$, and similarly for $\frac{\prod b_i}{\prod \ell'_i}$.
		Further, since $M', \tilde{M}'$ are in the simplification of a cluster, their total rank is at most $r$, and similarly for $T', \tilde{T}'$. All of which goes to show that
		\[
		\tau r\leq \ranksyn(C_1, C_2) \le 2 \tau r / 10 + 4 r \leq 6 \tau r / 10,
		\]
		where in the first inequality we used the assumption that $C_1$ and $C_2$ are a part of a $(\tau, r)$ partition, and in the second inequality we used the assumption that $\tau\geq 10$. This is a contradiction.
	\end{proof}


	We next study different syntactic partitions of a circuit $C$.
	
	\begin{claim}[Lower rank implies finer partition]
		\label{cla:low-rank-finer-part-syn}
		Let $\tau \ge 10$.
		Let $C$ be a minimal multilinear $\Sps{k}$ circuit. Let $(C_1, \ldots, C_s)$ be a $(\tau, r_C)$-syntactic partition of the multiplication gates in $C$. Let $(D_1, \ldots, D_{s'})$ be another $(\tau, r_D)$ partition of the gates of $C$.
		
		Assume $r_C \ge r_D$. Then, for every $i \in [s]$ there is a subset $S_i \subseteq [s']$ such that
		\[
		C_i = \sum_{j \in S_i} D_j
		\]
		and the subsets $S_1, \ldots, S_s$ form a partition of $[s']$.
	\end{claim}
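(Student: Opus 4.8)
The plan is to show that the $(\tau,r_D)$-partition \emph{refines} the $(\tau,r_C)$-partition: every cluster $D_j$ is contained, as a set of multiplication gates, in a single cluster $C_i$. Granting this, set $S_i \eqdef \set{ j \in [s'] : \text{the gates of } D_j \text{ lie in } C_i }$. Since $(D_1,\ldots,D_{s'})$ partitions the multiplication gates of $C$ and each $D_j$ sits inside a \emph{unique} $C_i$, the sets $S_1,\ldots,S_s$ partition $[s']$; and since the gates of $C_i$ are exactly the union of the $D_j$'s that meet $C_i$ (each such $D_j$ being entirely inside $C_i$), we get $C_i = \sum_{j \in S_i} D_j$, as required. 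So the whole claim reduces to the refinement statement.

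To prove refinement, fix $j$ and suppose toward contradiction that $D_j$ is not contained in any single $C_i$. Then $D_j$ contains two distinct multiplication gates $M,T$ of $C$ lying in different $C$-clusters. Applying \autoref{lem:large-rank-syn} to the $(\tau,r_C)$-syntactic partition $(C_1,\ldots,C_s)$ (whose hypothesis $\tau\ge 10$ holds by assumption), we obtain $\dist(M,T) > \tau r_C/10$. Because $\tau \ge 10$ and $r_C \ge r_D$, this gives $\dist(M,T) > r_C \ge r_D$.

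On the other hand, $M$ and $T$ are both multiplication gates of $D_j$, so the two-gate circuit $M+T$ is obtained from $D_j$ by deleting the remaining gates. In the multilinear setting the syntactic rank is monotone under deleting gates: removing gates can only enlarge the gcd (the gcd of fewer gates is divisible by the gcd of more), and any linear form that survives the simplification of the smaller circuit already survived that of the larger one, so the set of linear forms spanning $\ranksyn(M+T)$ is contained in the set spanning $\ranksyn(D_j)$. Hence $\dist(M,T) = \ranksyn(M+T) \le \ranksyn(D_j) \le r_D$, using that $(D_1,\ldots,D_{s'})$ is a $(\tau,r_D)$-partition. Combining with the previous paragraph yields $r_D < \dist(M,T) \le r_D$, a contradiction; therefore every $D_j$ lies inside a single $C_i$, completing the proof.

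The only ingredient requiring a short self-contained argument is the monotonicity of $\ranksyn$ under deleting multiplication gates in a multilinear circuit, which I expect to be the main (though minor) obstacle; the rest is a direct invocation of \autoref{lem:large-rank-syn} together with the elementary inequality $\tau r_C/10 \ge r_C \ge r_D$. I would also flag in the writeup where the hypotheses enter: $\tau \ge 10$ and $r_C \ge r_D$ are both essential (reversing the rank inequality would swap the roles of the two partitions), multilinearity is used precisely for the rank-monotonicity step, and minimality of $C$ is inherited from the standing setup so that $(C_1,\ldots,C_s)$ and $(D_1,\ldots,D_{s'})$ are genuine partitions to which \autoref{lem:large-rank-syn} applies.
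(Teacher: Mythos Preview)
Your proof is correct and follows essentially the same approach as the paper: both reduce to showing that each $D_j$ is contained in a single $C_i$, argue by contradiction using \autoref{lem:large-rank-syn} to lower-bound $\ranksyn(M+T)$, and then use the monotonicity $\ranksyn(M+T)\le \ranksyn(D_j)\le r_D$ together with $\tau r_C/10 \ge r_C \ge r_D$ to derive a contradiction. Your write-up is in fact more explicit than the paper's on why monotonicity of $\ranksyn$ under gate deletion holds; one small remark is that this monotonicity does not actually require multilinearity (the gcd argument works for arbitrary $\Sps{k}$ circuits), so your attribution of that step to the multilinear hypothesis is slightly off, though this does not affect correctness.
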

	
	\begin{proof}
		The claim would follow if we prove that if $D_i$ contains a multiplication gate from $C_j$ then it contains no gate from $C_{j'}$ for $j\neq j'$. This will show that $D_i$ is ``contained'' in $C_j$. Indeed, if this was the case then \autoref{lem:large-rank-syn} would imply that
		\[
		r_C\leq \tau r_C/10 < \ranksyn(M_j+M_{j'})\leq \ranksyn(D_i)\leq r_D
		\]
		in contradiction.
	\end{proof}

	\begin{corollary}\label{cor:max-clus-min-rank-syn}
		Let $\tau \ge 10$. Let $C$ be a minimal multilinear $\Sps{k}$ circuit. Let $(C_1, \ldots, C_s)$ be a $(\tau, r_C)$-syntactic partition of the multiplication gates in $C$,
		that has the largest number of clusters among all $\tau$-syntactic partition of $C$. Then, for every other $\tau$ partition of $C$, $(D_1, \ldots, D_{s'})$, we have that $r_C\leq r_{D}$.
	\end{corollary}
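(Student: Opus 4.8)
The plan is to argue by contradiction, using \autoref{cla:low-rank-finer-part-syn} (``lower rank implies finer partition'') to compare the given maximal-cluster partition $(C_1,\ldots,C_s)$ against an arbitrary competitor, and then to use the maximality of $s$ to force the two partitions to coincide.

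Concretely, I would fix an arbitrary $\tau$-syntactic partition $(D_1,\ldots,D_{s'})$ of $C$ distinct from $(C_1,\ldots,C_s)$, with parameter $r_D$, and suppose toward contradiction that $r_D < r_C$. Then $r_C \ge r_D$, and since $C$ is a minimal multilinear $\Sps{k}$ circuit with $\tau \ge 10$, \autoref{cla:low-rank-finer-part-syn} applies with $(C_1,\ldots,C_s)$ in the role of the higher-rank partition and $(D_1,\ldots,D_{s'})$ in the role of the lower-rank one, yielding subsets $S_1,\ldots,S_s \subseteq [s']$ that partition $[s']$ and satisfy $C_i = \sum_{j\in S_i} D_j$ for all $i \in [s]$. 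Since each $S_i$ is nonempty, $s' = \sum_{i\in[s]}|S_i| \ge s$; but by hypothesis $(C_1,\ldots,C_s)$ has the largest number of clusters among all $\tau$-syntactic partitions of $C$, so $s' \le s$, whence $s' = s$ and every $S_i$ is a singleton. Therefore $\{D_j\}_{j\in[s']}$ equals $\{C_i\}_{i\in[s]}$ up to reordering, contradicting that $(D_1,\ldots,D_{s'})$ is distinct from $(C_1,\ldots,C_s)$. (Under the alternative convention that the parameter of a partition is the maximal syntactic rank of its clusters, the same conclusion gives $r_C = \max_i\ranksyn(C_i) = \max_j\ranksyn(D_j) \le r_D$, again contradicting $r_D < r_C$.)

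I do not anticipate a real obstacle: the substantive input is \autoref{cla:low-rank-finer-part-syn}, which itself rests on \autoref{lem:large-rank-syn} --- the fact that multiplication gates lying in distinct clusters of a $(\tau,r_C)$-partition are at distance greater than $\tau r_C/10 \ge r_C \ge r_D$, so no cluster $D_i$ of rank at most $r_D$ can intersect two of the $C_j$'s. Given that, the corollary is essentially a counting argument; the only points to be careful about are the direction of the refinement (the \emph{lower}-rank partition is the finer one, hence has no fewer clusters) and the elementary observation that a refinement of a partition into equally many blocks is trivial, which is precisely what the maximality of $s$ rules out once we assume the strict inequality $r_D < r_C$.
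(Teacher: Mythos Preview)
Your proposal is correct and follows the same approach as the paper: both argue by contradiction, invoke \autoref{cla:low-rank-finer-part-syn} to conclude that the lower-rank partition $(D_j)$ refines $(C_i)$, and then use the maximality of $s$ to derive the contradiction. The paper's proof is a one-line sketch (``there must be more clusters in $\{D_i\}$''), whereas you spell out the counting step $s' \ge s$, the equality $s'=s$ forced by maximality, and the resulting coincidence of the two partitions; your parenthetical about the convention $r_C = \max_i \ranksyn(C_i)$ is a useful clarification that the paper leaves implicit.
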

	
	\begin{proof}
		If it was the other case then \autoref{cla:low-rank-finer-part-syn} would give that there must be more clusters in $\{D_i\}$. 
	\end{proof}

	\begin{corollary}[Uniqueness of syntactic partitions with the same number of clusters]\label{cor:unique-syntactic}
		Let $\tau \ge 10$. Let $C$ be a minimal multilinear $\Sps{k}$ circuit.
		Let $(C_1, \ldots, C_s)$ and  $(D_1, \ldots, D_s)$ be $(\tau, r_C)$ and $(\tau, r_D)$-syntactic partitions of the multiplication gates in $C$, respectively.
		Then, there is a permutation $\pi$ on $[s]$ such that for every $i\in[s]$, $C_i=D_{\pi(i)}$.
	\end{corollary}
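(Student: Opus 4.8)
The plan is to obtain this uniqueness statement almost immediately from \autoref{cla:low-rank-finer-part-syn}, exploiting the hypothesis that both partitions have exactly the same number $s$ of clusters. The only genuine content has already been done in \autoref{cla:low-rank-finer-part-syn} (and, underneath it, in \autoref{lem:large-rank-syn}); what remains is a short counting argument.

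First I would break the symmetry between the two partitions by assuming, without loss of generality, that $r_C \ge r_D$; if instead $r_D > r_C$ the identical argument applies after interchanging the roles of $(C_1,\ldots,C_s)$ and $(D_1,\ldots,D_s)$, and replacing the resulting permutation by its inverse. With $r_C \ge r_D$ and $\tau \ge 10$, the hypotheses of \autoref{cla:low-rank-finer-part-syn} are met, so it yields sets $S_1,\ldots,S_s \subseteq [s]$ that form a partition of $[s]$ and satisfy $C_i = \sum_{j \in S_i} D_j$ for every $i \in [s]$. In other words, the partition $(D_1,\ldots,D_s)$ refines $(C_1,\ldots,C_s)$.

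The second and final step is a counting observation: since $S_1,\ldots,S_s$ are the blocks of a partition of $[s]$ they are nonempty, and $\sum_{i=1}^s |S_i| = s$ forces each $|S_i| = 1$. Writing $S_i = \{\pi(i)\}$, the map $i \mapsto \pi(i)$ is a bijection of $[s]$ (again because the $S_i$ partition $[s]$), and the relation $C_i = \sum_{j \in S_i} D_j$ collapses to $C_i = D_{\pi(i)}$, which is exactly what we want.

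I do not anticipate a real obstacle. The only points requiring a moment's care are: (i) checking that the reduction to the case $r_C \ge r_D$ is legitimate, which it is since the conclusion is symmetric in the two partitions; and (ii) confirming that each $S_i$ is nonempty — this is built into the notion of a partition, but it is also consistent with minimality of $C$, since each $C_i$ is a nonzero polynomial (for $s=1$ the statement is trivial, and for $s\ge 2$ each cluster corresponds to a proper nonempty subset of $[k]$, so $C_i\neq 0$ by minimality), hence no $S_i$ can be empty.
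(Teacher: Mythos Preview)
Your proof is correct and takes essentially the same approach as the paper: assume without loss of generality $r_C \ge r_D$, apply \autoref{cla:low-rank-finer-part-syn} to conclude that $(D_1,\ldots,D_s)$ refines $(C_1,\ldots,C_s)$, and then observe that a refinement with the same number of blocks must coincide with the original partition. The paper states this last step in one sentence, while you spell out the counting argument explicitly; the content is identical.
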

	
	\begin{proof}
		Without loss of generality suppose that $r_C \ge r_D$. By \autoref{cla:low-rank-finer-part-syn}, $(D_1, \ldots, D_s)$ is a refinement of $(C_1, \ldots, C_s)$. However, they have the same number of clusters, so they must be the same partition.
	\end{proof}

	\subsubsection{Algorithms for Computing Partitions}

	An algorithm for computing $(\tau,r)$-syntactic partitions was provided by Karnin and Shpilka \cite{KarninShpilka09}.
	
	\sloppy
	\begin{lemma}[Syntactic Clustering Algorithm; See Algorithm 1 and Lemma 5.1 of \cite{KarninShpilka09}]
		\label{lem:syntactic-clustering}
		Let  $n,k, r_{init}, \tau \in \N$.
		There exists an algorithm that given $\tau$ and an $n$-variate multilinear $\Sps{k}$ circuit $C$ as input, outputs $r \in \N$ such that 
		\[R_M(2k) \leq r \leq k^{(k-2) \cdot \lceil \log_k(\tau)\rceil} \cdot R_M(2k)\leq  (k\tau)^{k-2} \cdot R_M(2k)
		\]
		and a $(\tau, r)$-syntactic partition of $[k]$, in time $O(\log(\tau) \cdot n^3k^4).$ Further, with an additional running time of $2^{O(k^2)} \cdot \poly(n)$, we can guarantee that this syntactic partition has the lowest value of $r$ among all $\tau$ syntactic partitions of $C$.
	\end{lemma}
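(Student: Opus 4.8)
The plan is to follow the clustering procedure of Karnin and Shpilka. We maintain a partition $\mathcal P$ of $[k]$ together with an integer rank bound $r$, initialised to the singleton partition $\{\{1\},\dots,\{k\}\}$ and $r \gets R_M(2k)$ (the value $r_{init}$). Each step is driven by a rank computation: we look for two distinct clusters $A,B\in\mathcal P$ with $\dist(A,B) < \tau r$, which by \autoref{def:syn-rank-dist} means $\ranksyn(A+B) < \tau r$ for the $\SumProdSumk$ circuit $A+B$. If no such pair exists we stop and output $(\mathcal P,r)$; otherwise we replace $A,B$ in $\mathcal P$ by the single cluster $A+B$ and set $r \gets \max\{r,\ranksyn(A+B)\}$, then repeat. (Equivalently one may merge an entire connected component of the ``$\dist<\tau r$'' graph at once; the one-pair version keeps the rank bookkeeping simplest.)

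Correctness of the output partition is immediate from the stopping rule together with the update to $r$: on termination every pair of clusters is at distance $\ge \tau r$, and $r$ was raised along the way to dominate $\ranksyn(C_i)$ for each surviving cluster $C_i$, so $(\mathcal P,r)$ is a $(\tau,r)$-syntactic partition of $C$ in the sense of \autoref{def:tau-r-syntactic}. For the range of $r$: since $r$ is only ever increased, $r\ge R_M(2k)$ throughout. For the upper bound, note that every merge strictly decreases $|\mathcal P|$, and --- this is the one quantitative point I would import from the Karnin--Shpilka analysis, using minimality of $C$ and the multilinear rank bound (\autoref{thm:rank bound ml}), possibly via \autoref{lem:small-syn-rank-t} to control the syntactic rank of a union of clusters --- the number of merges that actually increase $r$ is at most $k-2$. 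Each such merge sets the new cluster's rank to $\dist(A,B) < \tau r \le k^{\lceil \log_k \tau\rceil}\, r$, i.e.\ multiplies $r$ by at most $k^{\lceil\log_k\tau\rceil}$. Starting from $R_M(2k)$ and applying this at most $k-2$ times gives $r \le k^{(k-2)\lceil\log_k\tau\rceil} R_M(2k)$, and since $k^{\lceil\log_k\tau\rceil}\le k^{\log_k\tau+1}=k\tau$ this is at most $(k\tau)^{k-2}R_M(2k)$, as claimed.

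For the running time, the only non-trivial operation is computing $\dist(A,B)=\ranksyn(A+B)$: since $C$ is multilinear, $A+B$ has at most $k$ multiplication gates each of degree at most $n$, so after stripping the common linear factors one computes the dimension of the span of $O(nk)$ linear forms in $n$ variables by Gaussian elimination in time $O(n^3k^2)$. A round inspects $O(k^2)$ candidate pairs, and by the bookkeeping of \cite{KarninShpilka09} (the number of distinct values $r$ runs through is $O((k-2)\lceil\log_k\tau\rceil)$, together with the arithmetic on $\tau$) the number of rounds is absorbed into an $O(\log\tau)$ factor, yielding the stated $O(\log\tau\cdot n^3k^4)$ bound. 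Finally, to additionally guarantee that $r$ is smallest possible among all $\tau$-syntactic partitions of $C$, we brute-force: there are at most $k^k = 2^{O(k^2)}$ partitions of $[k]$, and for each partition $P=(C_1,\dots,C_s)$ we compute $r_P:=\max_i\ranksyn(C_i)$ and $\delta_P := \min_{i\ne j}\dist(C_i,C_j)$ in $\poly(n,k)$ time; $P$ admits a $(\tau,r)$-partition exactly when $\tau r_P\le\delta_P$ (with $r=r_P$ then optimal for $P$), so we output the feasible $P$ minimising $r_P$, for a total overhead of $2^{O(k^2)}\poly(n)$.

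The main obstacle is the bound ``at most $k-2$ rank-increasing merges'': one must rule out the merging being driven down to a single cluster while $r$ keeps growing, which is exactly where minimality of $C$ and \autoref{thm:rank bound ml} enter, together with the fact that a union of $t$ clusters of rank $\le \rho$ whose pairwise-adjacent distances are $<\tau\rho$ still has controlled syntactic rank. Everything else --- the linear algebra for rank computations, the gcd extraction on products of linear forms, and the round count --- is routine.
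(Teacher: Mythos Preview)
The paper does not give its own proof of this lemma; it simply cites \cite{KarninShpilka09} for the main statement, and the only thing it adds is the remark that the ``further'' part follows by brute-forcing over all partitions of $[k]$ and picking the $\tau$-partition of smallest $r$. Your treatment of the ``further'' clause is exactly this, so on that point you match the paper.

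For the main statement, your sketch is in the right spirit but leaves the decisive quantitative step unproved. You yourself flag ``at most $k-2$ rank-increasing merges'' as \emph{the main obstacle}, and then appeal to minimality of $C$ together with \autoref{thm:rank bound ml}. Neither applies: the lemma does not assume $C$ is minimal, and \autoref{thm:rank bound ml} concerns circuits computing the zero polynomial, which is not the situation here. The only bound you actually justify is the trivial one, namely at most $k-1$ merges (since each merge shrinks $|\mathcal P|$ by one), which yields $r<\tau^{k-1}R_M(2k)\le k^{(k-1)\lceil\log_k\tau\rceil}R_M(2k)$ --- off by one in the exponent from what is claimed. Your one-pair-at-a-time merging, with $r\gets\max\{r,\ranksyn(A+B)\}$, is also not the KS09 procedure: there the algorithm proceeds in \emph{phases}, multiplying $r$ by $k$ (not by $\tau$) per phase and bounding the number of phases by $(k-2)\lceil\log_k\tau\rceil$ via a separate argument. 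So both the algorithm description and the analysis diverge from the source. Since the paper itself defers entirely to \cite{KarninShpilka09}, the honest fix is either to do the same, or to actually reproduce the phase-based argument rather than this variant with an unfilled gap.
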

	We remark that the ``further'' part isn't explicitly stated in \cite{KarninShpilka09}. However, it is easy to modify their algorithm in order to guarantee this property. For example, after running their algorithm one can run a brute force search over all partitions of $[k]$ and search for a $\tau$-partition with a lower value of $r$. In the applications of \autoref{lem:syntactic-clustering}, the additional running time incurred by this step is either irrelevant or anyway subsumed by larger factors of $k$ originating from other elements in the proof.

	\subsection{Existence of a Unique Syntactic Partition}
	
	In this section we prove that for every multilinear polynomial $f\in\Sps{k}$ there is a parameter $\tau$, which is bounded by some function of $k$, such that any two $\tau$ partitions of any two $\Sps{k}$ circuits computing $f$ define, up to a permutation, the same clusters.
	
	We start with the following claim that shows the existence of a $(\tau_1,r)$ partition with the special property that its rank is bounded as function of $\tau_0$ that is much smaller than $\tau_1$.

	\begin{claim}
		\label{cl:partition-two-taus-syn}
		For every function $\varphi: \mathbb{N} \to \mathbb{N}$, parameter $\tau_\text{min}$ and  multilinear $f \in \Sps{k}$ there is $\tau_\text{min}\leq \tau_0 \le {\tau_\text{min}}^{ \varphi(k)^k}$ and a $({\tau_1}, r)$-syntactic partition of $f$ with:
		\begin{itemize}
			\item ${\tau_1} = \tau_0^{\varphi(k)}$.
			\item $r \le  R_M(2k) \cdot  (k\tau_0)^{k-2}$.
		\end{itemize}
	\end{claim}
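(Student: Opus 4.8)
The plan is to run the syntactic clustering algorithm of \autoref{lem:syntactic-clustering} along a geometric tower of values of the distance parameter, and then apply pigeonhole to the number of clusters, which is always at most $k$. Fix a minimal multilinear $\Sps{k}$ circuit $C$ computing $f$ (we may assume $C$ is minimal, deleting the gates of a vanishing proper sub-sum otherwise, and we assume $\tau_\text{min}\ge 10$, as holds in all our applications). Set $T_0=\tau_\text{min}$ and $T_{j+1}=T_j^{\varphi(k)}$, so that $T_j=\tau_\text{min}^{\varphi(k)^j}$, and for each $j\in\{0,1,\ldots,k\}$ let $\mathcal P_j$ be a $(T_j,r_j)$-syntactic partition of $C$ with the smallest possible $r_j$, as provided by \autoref{lem:syntactic-clustering}; thus $R_M(2k)\le r_j\le (kT_j)^{k-2}R_M(2k)$. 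Write $s_j$ for the number of clusters of $\mathcal P_j$.

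I would next record two monotonicity facts. First, since any $(\tau',r)$-syntactic partition is also a $(\tau,r)$-syntactic partition whenever $\tau\le\tau'$ (only the distance requirement weakens), the set of ranks attainable by $T_{j+1}$-partitions is contained in that of $T_j$-partitions, so $r_0\le r_1\le\cdots\le r_k$. Second, $\mathcal P_{j+1}$, regarded as a $(T_j,r_{j+1})$-partition, has rank $r_{j+1}\ge r_j$; invoking \autoref{cla:low-rank-finer-part-syn} with the two $T_j$-partitions $\mathcal P_{j+1}$ (the one of larger rank) and $\mathcal P_j$ (the one of smaller rank) --- legitimate since $T_j\ge 10$ and $C$ is minimal multilinear --- shows every cluster of $\mathcal P_{j+1}$ is a union of clusters of $\mathcal P_j$. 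Hence $\mathcal P_j$ refines $\mathcal P_{j+1}$ and $s_0\ge s_1\ge\cdots\ge s_k\ge 1$.

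Now pigeonhole: the non-increasing sequence $s_0,\ldots,s_k$ takes $k+1$ values in $\{1,\ldots,k\}$, so it cannot strictly decrease at every step, and there is $i\in\{0,\ldots,k-1\}$ with $s_i=s_{i+1}$. Since $\mathcal P_i$ refines $\mathcal P_{i+1}$ and they have equally many clusters, $\mathcal P_i=\mathcal P_{i+1}$. Put $\tau_0:=T_i$ and $\tau_1:=T_{i+1}=\tau_0^{\varphi(k)}$. The partition $\mathcal P_i=\mathcal P_{i+1}$ has every cluster of syntactic rank at most $r_i$ (from being a $(T_i,r_i)$-partition) and, being the $(\tau_1,r_{i+1})$-partition $\mathcal P_{i+1}$, has every pair of clusters at distance at least $\tau_1 r_{i+1}\ge \tau_1 r_i$; hence it is a $(\tau_1,r_i)$-syntactic partition of $C$. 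Finally $r_i\le (k\tau_0)^{k-2}R_M(2k)$, and $\tau_0=\tau_\text{min}^{\varphi(k)^i}$ lies in $[\tau_\text{min},\tau_\text{min}^{\varphi(k)^k}]$ because $0\le i\le k-1$; taking $r:=r_i$ completes the argument.

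The step I expect to require the most care is getting the comparison direction right so that the rank is bounded by a function of the \emph{small} parameter $\tau_0$ rather than $\tau_1$: the point is that at the stabilization index $i$ the minimum-rank partition produced for $T_i$ remains a valid $\tau_1$-partition (the distance gap only grew), so its rank --- which \autoref{lem:syntactic-clustering} controls in terms of $T_i$ --- is inherited by the $\tau_1$-partition we output. The only other things to watch are the hypotheses of \autoref{cla:low-rank-finer-part-syn} (equal $\tau\ge 10$ for both partitions, $C$ minimal multilinear), which is why I first rewrite $\mathcal P_{j+1}$ as a $T_j$-partition before invoking it; everything else is bounded-$k$ pigeonhole and bookkeeping.
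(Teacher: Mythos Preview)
Your proof is correct and follows essentially the same approach as the paper: run the clustering algorithm of \autoref{lem:syntactic-clustering} along the geometric tower $T_j=\tau_\text{min}^{\varphi(k)^j}$, observe via \autoref{cla:low-rank-finer-part-syn} that the number of clusters is nonincreasing in $j$, and use pigeonhole (at most $k$ possible cluster counts) to find a level at which the partition stabilizes, whereupon the rank bound from the smaller parameter $T_i$ is inherited by the $T_{i+1}$-partition. The paper phrases this as an iteration that halts when the current partition is already valid for the next parameter, while you compute all $k{+}1$ levels up front and pigeonhole on $s_j$; these are the same argument.
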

	
	\begin{proof}
		Let $C$ be any $\Sps{k}$ circuit computing $f$.
		Using the algorithm promised in \autoref{lem:syntactic-clustering}, find a $\kappa_0=\tau_\text{min}$-syntactic partition of $C$.
		\autoref{lem:syntactic-clustering} guarantees that  this $(\kappa_0, r_0)$ partition satisfies $r_0 \le R_M(2k) \cdot  (k\kappa_0)^{k-2}$. If this partition is also a $(\kappa_1, r_0)$ partition for $\kappa_1 = \kappa_0^{\varphi(k)}$, then we are done by setting $\tau_0=\kappa_0$ and $\tau_1=\kappa_1$.
		
		Otherwise, we continue in the same manner with $\kappa_1$ instead of $\kappa_0$ (i.e., we consider a syntactic partition with parameter $\kappa_1$) etc. 
		
		We claim that this process terminates after at most $k$ iterations and finds the desired partition. It suffices to show that at every step the number of  clusters in the partition decreases.
		At the $i$-th iteration of this algorithm we have a $(\kappa_i, r_i)$ partition and similarly at the $(i+1)$-th step, a $(\kappa_{i+1}, r_{i+1})$ partition.
		Note that $r_{i+1} \ge r_i$, as otherwise, the $(i+1)$-th partition would also be a $\kappa_i$ partition with a lower rank than the $i$-th partition, and would have been found by the algorithm in \autoref{lem:syntactic-clustering}.
		
		Both the $i$-th and the $(i+1)$-th partitions are $\tau_\text{min}$ partitions. By \autoref{cla:low-rank-finer-part-syn}, the $i$-th partition is a refinement of the $(i+1)$-th partition, and in particular, since they are not the same partition (as otherwise the algorithm terminates), the number of clusters decreases.
	\end{proof}
	
	The next claim proves that for every multilinear polynomial $f \in \Sps{k}$ there is $\tau =O(k^{k+2})^{k^{2k+1}}$ such that all $\tau$ partitions of $f$ are equivalent. This result fixes the aforementioned error in \cite{KarninShpilka09} and makes the argument in \cite{BSV21} work.
	
	\begin{theorem}\label{thm:unique-syn-part}
		For every multilinear polynomial $f \in \Sps{k}$ there is $\tau =O(k^{k+2})^{k^{2k+1}}$ such that the following holds: Let $C,D$ be any two $\Sps{k}$ circuits computing $f$. Let $C=\sum_{i=1}^{s}C_i$ and $D=\sum_{i=1}^{s'}D_i$ be the $\tau$-partitions of $C$ and $D$, respectively, that \autoref{lem:syntactic-clustering} guarantees. Then $s=s'$ and there is a permutation $\pi:[s]\to[s]$ such that $[C_i]=[D_{\pi(i)}]$. Furthermore, for every $i$,  $\ranksyn(C_i)/k -2 R_M(2k)\le \ranksyn(D_{\pi(i)})\le  k \cdot \ranksyn(C_i)+2k R_M(2k)$. 
	\end{theorem}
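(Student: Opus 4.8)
The plan is to compare the two clusterings through the identically-zero circuit $E := C-D$ (the depth-$3$ circuit whose multiplication gates are those of $C$ together with those of $D$ with negated top coefficients), using the rank bound for minimal zero circuits (\autoref{thm:rank bound ml}) and \autoref{lem:small-syn-rank-t}. Assume $f\neq 0$ and that $C,D$ are minimal (the degenerate cases being routine). The first step is to fix $\tau$: apply \autoref{cl:partition-two-taus-syn} with $\tau_{\text{min}}$ a large enough fixed polynomial in $k$ (so $\tau_{\text{min}}\ge 10$, which makes \autoref{lem:large-rank-syn} applicable, and the clustering output rank is $\ge R_M(2k)$) and with $\varphi(k)$ a large enough polynomial in $k$ (so that $\tau_0^{\varphi(k)}$ dominates $R_M(2k)(k\tau_0)^{k-2}$ by a factor $\ge 3k^2$). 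This yields, for every circuit computing $f$, a ``robust'' $(\tau_1,r)$-syntactic partition whose distance parameter $\tau_1=\tau_0^{\varphi(k)}$ dwarfs its cluster-rank bound $r$; since always $\tau_0\le\tau_{\text{min}}^{\varphi(k)^k}$, every such $\tau_1$ is at most $\tau_{\text{min}}^{\varphi(k)^{k+1}}$, which I take for $\tau$. Propagating $\tau_{\text{min}},\varphi(k)=\poly(k)$ through this expression gives the bound $\tau=O(k^{k+2})^{k^{2k+1}}$.

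The combinatorial core is a cluster-graph argument applied to two syntactic partitions $\{C_i\}$ of $C$ and $\{D_j\}$ of $D$ with distance parameters $\ge 10$. Decompose the gates of $E$ into minimal zero blocks $E_1,\dots,E_m$. Since $\simp(E_\beta)$ is a simple, minimal, multilinear circuit computing zero on $\le 2k$ gates, $\ranksyn(E_\beta)=\ranksyn(\simp(E_\beta))\le R_M(2k)$ by \autoref{thm:rank bound ml}; because $\ranksyn$ is monotone under adjoining gates, \autoref{lem:large-rank-syn} forces every block to meet at most one cluster of $C$ and at most one cluster of $D$ (two gates of $E_\beta$ from distinct $C$-clusters would have distance $>\tau_C r_C/10\ge R_M(2k)$). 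Minimality of $C,D$ and $f\ne 0$ show that every block meets at least one cluster on each side, so the blocks form a bipartite multigraph between $C$-clusters and $D$-clusters with no isolated vertex. In a connected component the blocks sum to $\sum_t[C_{i_t}]=\sum_t[D_{j_t}]$ and present each $[C_{i_t}]$ and $[D_{j_t}]$ as a sum of block polynomials; feeding these identically-zero block decompositions into \autoref{lem:small-syn-rank-t} (with $t\le k$ the number of blocks) bounds the syntactic rank of a union of $D$-clusters of the component by $O(k)$ times the rank of the corresponding union of $C$-clusters, plus $O(k)R_M(2k)$, and symmetrically. When the parameters are favorable — the distance parameter of each partition exceeding a fixed $\poly(k)$ multiple of the other partition's cluster-rank bound and of $R_M(2k)$ — these estimates conflict with the cluster-distance lower bounds $\tau_C r_C,\ \tau_D r_D$ unless every component is a single edge, which gives $s=s'$ and a bijection $\pi$ with $[C_i]=[D_{\pi(i)}]$.

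The hard part is that the favorable-parameter condition cannot be met by the min-rank partitions at the fixed level $\tau$ (their ranks are controlled only by a function of $\tau$ itself), which is exactly why the robust partitions of \autoref{cl:partition-two-taus-syn} are needed. But the robust partitions of $C$ and of $D$ are produced at their own scales $\tau_1^C=(\tau_0^C)^{\varphi(k)}$ and $\tau_1^D=(\tau_0^D)^{\varphi(k)}$, and these may differ because the iteration of \autoref{cl:partition-two-taus-syn} can stabilize at different depths for different circuits. If $\tau_1^C=\tau_1^D$ the condition holds in both directions and one finishes at the level of the robust partitions. If not, only one direction survives, and the component analysis yields merely a one-sided semantic coarsening — say each robust cluster of $C$ is a sum of robust clusters of $D$. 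To recover equality I would bootstrap: regroup the gates of the finer circuit along this coarsening into a partition $\{\widehat D_i\}$ with $[\widehat D_i]=[C_i^{\mathrm{rob}}]$, re-estimate $\ranksyn(\widehat D_i)$ and $\dist(\widehat D_i,\widehat D_{i'})$ via \autoref{lem:small-syn-rank-t} so that $\{\widehat D_i\}$ is again a partition of $D$ with distance-to-rank ratio $\gtrsim\tau_1^C/\poly(k)$, and then invoke \autoref{cla:low-rank-finer-part-syn}, \autoref{cor:unique-syntactic}, and the slack in the choice of $\tau$ to conclude that the clustering algorithm at level $\tau$ must output, up to permutation and semantically, this common partition on both $C$ and $D$. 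Keeping all these inequalities simultaneously satisfiable — $\tau$ above every robust rank that can occur, below every robust distance parameter one needs, and compatible with the coarser semantic partition on each side — is where essentially all the work lies, and it is what forces $\varphi(k)$, hence $\tau$, to be as large as claimed.

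Finally, the ``furthermore'' clause is one more application of \autoref{lem:small-syn-rank-t}: given $[C_i]=[D_{\pi(i)}]$, the circuit $C_i-D_{\pi(i)}$ computes zero on at most $2k$ gates, hence decomposes into at most $k$ minimal zero blocks, each of syntactic rank $\le R_M(2k)$; taking $t\le k$, with the $C_i$-parts as the $M_j$ and the $D_{\pi(i)}$-parts as the $T_j$, gives $\ranksyn(C_i)\le k\big(\ranksyn(D_{\pi(i)})+2R_M(2k)\big)$, and the mirror choice gives $\ranksyn(D_{\pi(i)})\le k\big(\ranksyn(C_i)+2R_M(2k)\big)$, which rearrange to the stated two-sided bound.
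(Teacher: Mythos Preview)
Your combinatorial backbone is right and matches the paper: decompose $E=C-D$ into minimal zero subcircuits, use the rank bound to show each block touches at most one cluster on each side, and then push rank bounds through with \autoref{lem:small-syn-rank-t}. Your ``furthermore'' paragraph is also correct and is exactly what the paper does.

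The genuine gap is in your parameter management. You apply \autoref{cl:partition-two-taus-syn} \emph{symmetrically} to both $C$ and $D$, then take $\tau$ to be the global upper bound $\tau_{\text{min}}^{\varphi(k)^{k+1}}$ on all possible $\tau_1$'s. This does not work: the robust partition of $C$ is only a $(\tau_1^C,r_C)$-partition, and since $\tau_1^C$ may be much smaller than your $\tau$, it need not be a $\tau$-partition at all, so you cannot identify it with the clustering algorithm's output at level $\tau$. Worse, in the mismatched case your component argument really can fail in one direction: if $\tau_0^C=\tau_{\text{min}}$ while $\tau_0^D=\tau_{\text{min}}^{\varphi(k)^k}$, then $\tau_1^C\approx\tau_{\text{min}}^{\varphi(k)}$ while $r_D\le r(\tau_0^D)\approx\tau_{\text{min}}^{(k-2)\varphi(k)^k}$, and the needed inequality $\tau_1^C\gg \poly(k)\cdot r_D$ is false. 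Your ``bootstrap'' regrouping is an attempt to rescue this, but it is vague at exactly the crucial step --- showing that the regrouped $\{\widehat D_i\}$ is the min-rank $\tau$-partition that the clustering algorithm actually outputs --- and it is not clear how to close it.

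The paper avoids all of this with an \emph{asymmetric} trick you are missing. It applies \autoref{cl:partition-two-taus-syn} to a \emph{single} circuit $C$ with $\varphi(k)=k^2$, obtaining a $(\tau_1,r)$-partition of $C$ with $\tau_1=\tau_0^{k^2}$ and $r\le r(\tau_0)$. It then sets $\tau=\tau_0^k$, strictly between $\tau_0$ and $\tau_1$, and compares $C$'s robust $\tau_1$-partition directly against the clustering algorithm's $\tau$-partition of an \emph{arbitrary} $D$. The point of the intermediate scale is that both needed inequalities hold simultaneously: $\tau_1=\tau_0^{k^2}$ dominates $r(\tau)\approx(k\tau_0^k)^{k-2}$ (so $C$'s distances beat $D$'s ranks, giving $s'\ge s$), and $\tau=\tau_0^k$ dominates $k\cdot r(\tau_0)\approx k(k\tau_0)^{k-2}$ (so $D$'s distances beat $C$'s ranks, giving $s'\le s$). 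Since $D$ was arbitrary, taking $D=C$ shows that $C$'s own $\tau$-partition coincides with its $\tau_1$-partition, and hence every circuit's $\tau$-partition agrees with it. No two-sided robustification and no bootstrap are needed.
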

	
	\begin{proof}
		For every $\tau$ we denote by
		\[
		r(\tau):= R_M(2k) \cdot  (k \tau)^{k-2}
		\] 
		the rank bound given in \autoref{lem:syntactic-clustering} for the rank of clusters in a $\tau$ partition. 
		
		Let $C$ be any $\Sps{k}$ circuit computing $f$.	
		Apply \autoref{cl:partition-two-taus-syn} on $C$ with $\varphi(k)=k^2$ and $\tau_\text{min}= 10 R_M(2k)k^{k}$ and  let $\tau_0,\tau_1$ be the parameters of the claimed partition. Denote this partition as $C=\sum_{i=1}^{s}C_i$.  Set $\tau=\tau_0^k$. 
		Calculating we see that 
		\[
		\tau=\tau_0^k \leq \left(\tau_\text{min}^{\varphi(k)^k}\right)^k=\tau_\text{min}^{k^{2k+1}}=( 10 R_M(2k)k^{k-2})^{k^{2k+1}}= O(k^{k+2})^{k^{2k+1}}.
		\]
		
		Let $D$ be any $\Sps{k}$ computing $f$ such that $D=\sum_{i=1}^{s'}D_i$ is a $(\tau,r_D)$ partition of $D$, as promised by \autoref{lem:syntactic-clustering}. 
		
		Consider the circuit $C-D$ and a minimal subcircuit in it $E=\sum M_i -\sum T_j$ where the $M_i$'s are multiplication gates in $C$ and the $T_j$'s are from $D$. We claim that there cannot be $T_1$ and $T_2$ from different $D_i$'s. Assume for a contradiction  that $T_1\in D_1$ and that $T_2\in D_2$.  In this case \autoref{lem:large-rank-syn} gives
		\[
		{\tau_\text{min}}^k/10\leq \tau_0^k/10=\tau/10\leq  \tau r_D/10 < \ranksyn(T_1+T_2)\leq \ranksyn(E) \leq R_M(2k)
		\]
		in contradiction to the choice of $\tau_\text{min}$. A similar argument would show that all the $M_i$'s belong to the same cluster. Thus, every minimal circuit $E$ contains gates from a single $C_i$ and a single $D_j$. 
		We partition each cluster $C_i$ and each $D_i$ according to the subsets of multiplication gates appearing in each minimal circuit. That is,
		we write $C_i=\sum_j C_{i,j}$ and $D_i=\sum_j D_{i,j}$ so that each such minimal circuit $E$ is of the form $C_{i,j}-D_{i',j'}$.
		
		As each such minimal $E$ contain multiplication gates from a single $D_i$, we can represent every $D_i$ as $D_i=\sum_j\sum_p C_{j,p}$. We wish to show that all $C_{j,p}$ come from a single $C_j$. Indeed, assume towards a contradiction that $C_{1,1}\in C_1$ and $C_{2,1}\in C_2$ (this is without loss of generality) and that $E_1=D_{i,1}-C_{1,1}$ and $E_2=D_{i,2}-C_{2,1}$ are minimal circuits computing the zero polynomial. As $\ranksyn(D_{i,1}+D_{i,2})\leq\ranksyn(D_i)\leq r_D$ and $\ranksyn(E_1),\ranksyn(E_2)\leq R_M(2k)$, we get from \autoref{lem:small-syn-rank-t} (for $t=2$) and  \autoref{thm:rank bound ml} that
		\begin{align}\label{eq:oneC}
			\tau_1/10\leq  \tau_1 r_C/10 &< \ranksyn(C_{1,1}+C_{2,1})\leq 2\ranksyn(D_i) + 4R_M(2k) \nonumber \\ &\leq  2 R_M(2k) \cdot  (k\tau)^{k-2}+ 4R_M(2k)
		\end{align}
		(where the lower bound follows by \autoref{lem:large-rank-syn}). This again contradicts the choice of $\tau_\text{min}$ and the fact that $\tau_1=\tau^k$. 
		
		This implies that for every $i \in [s]$, we can sum over the different minimal circuits in which  the different ``parts'' of $C_i$ appear, and get  a subset $S_i \subseteq [s']$ such that
		$C_i = \sum_{j \in S_i} D_j$. This implies that $s' \ge s$, and observe that if $s=s'$ then this guarantees the existence of the claimed permutation $\pi$. So assume that $s'>s$ and that $|S_i|>1$. Consider the $\Sps{2k}$ circuit $C_i-\sum_{j \in S_i} D_j$. Using the same notation as before we write $\sum_j C_{i,j}=C_i=\sum_{j\in S_i}\sum_p D_{j,p}$, such that there is a one to one and onto map, that maps, for every $j$, a pair $j',p'$ such that $E_j=C_{i,j}-D_{j',p'}$ is a minimal circuit computing the zero polynomial. Applying \autoref{lem:small-syn-rank-t} again we get that  
		\begin{equation}\label{eq:rcrd}
			\ranksyn(\sum_{j\in S_i}\sum_p D_{j,p})=\ranksyn(\sum_j C_{i,j})\leq tr_C+2t R_M(2k) \leq kr_C+2k R_M(2k).
		\end{equation}
		On the other hand, our assumption that $|S_i|>1$ implies that there are $j\neq j'\in S_i$. 
		We now get that
		\begin{equation}\label{eq:rctau'}
			\tau/10 \leq \tau r_D/10< \ranksyn(D_{j,1}+D_{j',1})\leq 	\ranksyn(\sum_{j\in S_i}\sum_p D_{j,p}) .
		\end{equation}
		Observe that 
		\begin{align*}
			kr_C+2k R_M(2k)& \leq k r(\tau_0) +2k R_M(2k) = k  R_M(2k) \cdot  (k \tau_0)^{k-2} +2k R_M(2k) \nonumber\\
			&\leq R_M(2k) k^{k-1} \tau_0^{k-2} +2k R_M(2k) < 3R_M(2k)k^k \tau_0^{k-2} \nonumber\\
			& < \tau_0^k/10 =\tau/10, 
		\end{align*}
		where the last inequality follows from definitions of $\tau_\text{min}$ and the fact that $\tau_\text{min}\leq \tau_0$.
		Combining the last calculation with \eqref{eq:rcrd} and \eqref{eq:rctau'} we get a contradiction. Hence $s=s'$ and there is a matching $\pi$ such that $C_i=D_{\pi(i)}$.

		The claim regarding the relation between $\ranksyn(C_i)$ and $\ranksyn(D_{\pi(i)})$ follows from the same argument as the one leading to \eqref{eq:rcrd}.
	\end{proof}
	

	\section{Semantic Rank of Depth-$3$ Circuits}
	
	In the following section, we define the semantic rank of polynomials computed by $\Sps{k}$ circuits. Note that while the \emph{syntactic} rank is inherently tied to a \emph{circuit} $C$ computing the polynomial, the \emph{semantic} rank is independent of the representation or computation of the polynomial.

	We say that a polynomial $g \in \F[x_1, \ldots, x_n]$ depends on $r$ linear functions if there exist $r$ linear functions $\ell_1, \ldots, \ell_r$ and a polynomial $h \in \F[y_1, \ldots, y_r]$ such that $g(\vx) = h(\ell_1(\vx), \ldots, \ell_r(\vx))$.
	
	\begin{definition}[Semantic Rank]
		\label{def:sem-rank}
		Let $f \in \F[x_1,\ldots,x_n]$ be a polynomial.   Define $\Lin(f)$ to be the product of the
		linear factors of $f$. Let $r \in \N$ be the minimal integer such that $f /Lin(f)$ is a polynomial of exactly $r$ linear functions. We define $\ranksem(f) = r$.
	\end{definition}
	
	Recall that the number of linear functions that a polynomial depends on equals the rank of its \emph{partial derivative matrix}.
	
	\begin{definition}
		\label{def:partial-derivative-matrix}
		Let $f \in \F[x_1,\ldots,x_n]$ be a polynomial.  Define $M_f$ to be a matrix whose $i$-th row contains the coefficients of $\partial f / \partial x_i$.
	\end{definition}
	
	Note that if $f$ depends on exactly $r$ linear functions and  $\Char(\F)=0$ or $\Char(\F)> \deg(f)$, then $\rank(M_f) = r$.
	
	\begin{remark}
		\label{rem:semantic-rank-0}
		Note that under the definition above, it may be the case that $f$ is non-zero and yet $\ranksem(f)$ equals $0$. This happens when $f$ is a product of linear functions.
		In what follows we will often implicitly assume that $\ranksem(f) \ge 1$. This doesn't affect our results but somewhat simplifies the presentation. One may also arbitrarily define the semantic rank of $f$ to be 1 when $f$ is a non-zero product of linear functions.
	\end{remark}

	\subsection{Semantic vs.\ Syntactic Rank}
	
	We now prove several claims that relate the syntactic and semantic notions of rank for polynomials computed by multilinear $\Sps{k}$ circuits. We start by observing that the semantic rank is at most the syntactic rank.

	\begin{observation}
		\label{obs:semantic-vs-syntactic}
		Suppose $f$ is a polynomial in multilinear $\Sps{k}$.
		Then, every multilinear $\Sps{k}$ circuit $C$  computing $f$ satisfies $\ranksyn(C) \ge \ranksem(f)$.
	\end{observation}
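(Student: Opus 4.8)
The plan is to write $f$ as $\gcd(C)\cdot\simp(C)$, note that $\simp(C)$ depends on at most $\ranksyn(C)$ linear functions essentially by the definition of syntactic rank, and then argue that passing from $\simp(C)$ to $f/\Lin(f)$ only removes linear \emph{factors} and hence cannot increase the number of linear functions the polynomial depends on. Throughout we may assume $f\neq 0$ (otherwise the statement is vacuous), so that $\simp(C)\neq 0$ and all the divisions below take place in the integral domain $\F[\vx]$.

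Concretely, set $\rho=\ranksyn(C)$ and $g=\gcd(C)$, so that $f=g\cdot\simp(C)$. By \autoref{def:syn-rank-dist}, all linear functions occurring in the multiplication gates of $\simp(C)$ span a space of dimension $\rho$; hence there is an invertible affine change of coordinates $A$ such that $\simp(C)(A\vx)$ is a polynomial in $x_1,\ldots,x_\rho$ alone, and in particular $\simp(C)$ depends on at most $\rho$ linear functions. Because $f$ is multilinear, $g$ and $\Lin(f)$ are both square-free products of linear forms, and since $g$ is a product of linear factors of $f$ it divides $\Lin(f)$; write $\Lin(f)=g\cdot L$ with $L$ a product of linear forms coprime to $g$. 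From $\Lin(f)\mid f=g\cdot\simp(C)$ we may cancel $g$ to obtain $L\mid\simp(C)$, and therefore
\[
h:=f/\Lin(f)=\simp(C)/L .
\]

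It then remains to show that $h$ depends on at most $\rho$ linear functions. Working in the coordinates given by $A$, we have $\simp(C)(A\vx)\in\F[x_1,\ldots,x_\rho]$ and $L(A\vx)\mid\simp(C)(A\vx)$. The subring $\F[x_1,\ldots,x_\rho]\subseteq\F[x_1,\ldots,x_n]$ is factor-closed: if a product of two polynomials has degree $0$ in the variables $x_{\rho+1},\ldots,x_n$, then, the ring being an integral domain, each factor has degree $0$ in those variables as well. Hence $L(A\vx)$ and $h(A\vx)=\simp(C)(A\vx)/L(A\vx)$ both lie in $\F[x_1,\ldots,x_\rho]$, so $h(A\vx)$ — and therefore $h$ itself, after applying $A^{-1}$ — depends on at most $\rho$ linear functions. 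By \autoref{def:sem-rank} this gives $\ranksem(f)\le\rho=\ranksyn(C)$, as desired.

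There is no deep obstacle here; the statement is bookkeeping. The step requiring the most care is the last one: knowing that $\simp(C)$ depends on few linear functions does not by a mere dimension count imply the same for the quotient $\simp(C)/L$, so one genuinely needs the factor-closedness of the coordinate subring rather than linear algebra alone. One must also invoke multilinearity to ensure that $\gcd(C)$ and $\Lin(f)$ are square-free, so that $\Lin(f)=g\cdot L$ with $g$ and $L$ coprime and the cancellation yielding $L\mid\simp(C)$ is valid.
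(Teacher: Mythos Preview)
Your proof is correct. The paper states this result as an observation with no proof given (indeed, it immediately follows from the definitions once one notes that $\simp(C)$ is a polynomial in $\ranksyn(C)$ linear functions and that dividing out further linear factors cannot increase this count). Your argument is precisely the natural way to spell out these details; the factor-closedness step you single out is exactly the point that makes the ``obvious'' observation actually go through.
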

	
	We will also need the following definition from \cite{KarninShpilka09}.
	
	\begin{definition}[$\SumProdSum(k,d,\rho)$ circuits, \cite{KarninShpilka09}]\label{def:k-d-rho}
		A generalized depth-3 circuit with parameters $(k,d,\rho)$ is a depth-$3$ circuit with top fan-in $k$ such that every multiplication gate can also multiply a polynomial that depends on at most $\rho$ linear functions. That is, a polynomial $f(\vx)$ that is computed by a $\SumProdSum(k,d,\rho)$ circuit has the form
		\[
		f(\vx) = \sum_{i=1}^k \left( \prod_{j=1}^{d_i} \ell_{i,j} (\vx) \right) \cdot h_i (\ell'_{i,1} (\vx), \ldots, \ell'_{i,\rho_i}(\vx))
		\]
		where $d_i \le d$ and $\rho_i \le \rho$ for all $i \in [k]$.
	\end{definition}
	
	The measure $\ranksyn$ is extended to $\SumProdSum(k,d,\rho)$ circuits in Definition 2.1 of \cite{KarninShpilka08} (where it is simply called ``rank'').
	
	The following observation follows immediately from the definitions.
	\begin{observation}
		\label{obs:1-d-r}
		Suppose $f$ is a polynomial computed by a multilinear $\Sps{k}$ circuit $C$. Let $r = \ranksem(f)$.
		Then, there is a multilinear $\SumProdSum(1,d,r)$ circuit $C_f$ computing $f$ such that $\ranksyn(C_f) = \ranksem(f)$.
	\end{observation}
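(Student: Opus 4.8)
The plan is to unpack \autoref{def:sem-rank} directly; the statement is essentially a translation of that definition into the language of generalized depth-$3$ circuits, so no new idea is needed. First I would write $f = \Lin(f) \cdot h$, where $\Lin(f) = \prod_{j=1}^m \ell_j$ is the product of the linear factors of $f$ and $h := f/\Lin(f)$. By \autoref{def:sem-rank}, $h$ depends on exactly $r := \ranksem(f)$ linear functions, so there are linear functions $\ell'_1, \ldots, \ell'_r$ and $\tilde h \in \F[y_1, \ldots, y_r]$ with $h(\vx) = \tilde h(\ell'_1(\vx), \ldots, \ell'_r(\vx))$; by minimality of $r$ we may assume $\ell'_1, \ldots, \ell'_r$ are linearly independent (if one of them were an affine combination of the others, $h$ could be rewritten using fewer linear functions). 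The case $r=0$ is immediate, so we may assume $r \ge 1$.

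Next I would simply read off the circuit from this factorization. The identity $f = \left(\prod_{j=1}^m \ell_j\right) \cdot \tilde h(\ell'_1, \ldots, \ell'_r)$ is a $\SumProdSum(1, d, r)$ circuit $C_f$: it has top fan-in $1$, and its single multiplication gate multiplies the $m$ linear functions $\ell_j$ (with $m \le \deg f \le d$) together with a polynomial $\tilde h$ depending on $r$ linear functions. Since $f$ is multilinear and $\Lin(f)$ and $h$ both divide $f$, each of $\ell_1, \ldots, \ell_m$ and $h$ is multilinear, so $C_f$ is a multilinear generalized depth-$3$ circuit.

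Finally I would compute $\ranksyn(C_f)$ according to the extension of $\ranksyn$ to $\SumProdSum(k,d,\rho)$ circuits (Definition 2.1 of \cite{KarninShpilka08}): one factors out the gcd of the multiplication gates and takes the dimension of the span of the linear functions that remain. Here there is a single gate whose linear part is exactly $\prod_{j=1}^m \ell_j$ — and $\tilde h$ has no linear factor, since any such factor would also divide $f$, contradicting that $\Lin(f)$ collects \emph{all} linear factors of $f$ — so after simplification we are left with $\tilde h(\ell'_1, \ldots, \ell'_r)$, whose rank is $\dim \Span\{\ell'_1, \ldots, \ell'_r\} = r$ by the linear independence established above. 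Hence $\ranksyn(C_f) = r = \ranksem(f)$, as claimed. The only point requiring a moment's care is the observation that $\tilde h$ contributes nothing to the gcd, which is what forces the syntactic rank to equal $r$ exactly rather than to drop below it; everything else is bookkeeping.
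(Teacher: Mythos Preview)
Your proposal is correct and is exactly the unpacking of definitions that the paper intends: the paper states that the observation ``follows immediately from the definitions'' and gives no further proof, and your argument is precisely the natural verification one would write down for that sentence. The only point you add beyond bookkeeping is the observation that $\tilde h$ has no linear factor (so nothing from the $h$-part gets absorbed into the gcd), which is indeed the one thing worth checking.
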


	An analog of \autoref{obs:semantic-vs-syntactic} also holds for $\SumProdSum(k,d,\rho)$ circuits.
	
	\begin{lemma}[Lemma 2.20 in \cite{KarninShpilka09}]
		\label{lem:syntatic-vs-semantic-k-d-rho}
		Suppose $f$ is a polynomial computed by a multilinear $\SumProdSum(k,d,\rho)$ circuit $C$.
		Then, every multilinear  $\SumProdSum(k,d,\rho)$ circuit $C'$  computing $f$ satisfies $\ranksyn(C') \ge \ranksem(f)$.
	\end{lemma}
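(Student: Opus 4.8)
The plan is to mimic the (standard) argument behind \autoref{obs:semantic-vs-syntactic}, adding the bookkeeping needed to also account for the linear functions on which the gates' ``$h_i$-parts'' depend. Write the given circuit as $C' = \sum_{i=1}^{k}\bigl(\prod_{j=1}^{d_i}\ell_{i,j}\bigr)\, h_i(\ell'_{i,1},\dots,\ell'_{i,\rho_i})$, let $g$ be the product of the linear functions dividing every multiplication gate of $C'$ (the $\gcd$ in the sense of item 2 of \autoref{def:syn-rank-dist}), and set $\tilde f \eqdef f / g$, the polynomial computed by $\simp(C')$. By the definition of $\ranksyn$ for $\SumProdSum(k,d,\rho)$ circuits (Definition 2.1 of \cite{KarninShpilka08}), $r \eqdef \ranksyn(C')$ equals the dimension of the span $V$ of all linear functions occurring in $\simp(C')$: the simplified product-factors together with the linear functions $\ell'_{i,m}$. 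Picking a basis $\ell_1,\dots,\ell_r$ of $V$, every multiplication gate of $\simp(C')$ lies in the subring $\F[\ell_1,\dots,\ell_r]\subseteq\F[\vx]$, and hence so does $\tilde f$; in particular $\tilde f$ depends on at most $r$ linear functions.

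I would then reduce $\ranksem(f)$ to $\ranksem(\tilde f)$. Since $g$ is a product of linear functions, unique factorization gives $\Lin(f) = g\cdot\Lin(\tilde f)$ up to a nonzero scalar, so $f/\Lin(f) = \tilde f/\Lin(\tilde f)$ and therefore $\ranksem(f) = \ranksem(\tilde f)$. It thus remains to bound $\ranksem(\tilde f)$, for which the key point is the general claim: \emph{if a polynomial $p$ depends on at most $m$ linear functions, then $\ranksem(p)\le m$.} To see this, change variables by an invertible linear map $\vx = A\vy$ so that $p(A\vy)\in\F[y_1,\dots,y_m]$, and write $\F[\vy]=\F[y_1,\dots,y_m][y_{m+1},\dots,y_n]$. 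In any factorization $p(A\vy)=q\cdot q'$ in $\F[\vy]$, comparing $y_j$-degrees for each $j>m$ forces $\deg_{y_j}(q)=0$ (because $p(A\vy)$ has $y_j$-degree $0$ and degrees are additive); hence every factor of $p$ --- in particular $\Lin(p)$, and then the quotient $p/\Lin(p)$ --- lies in $\F[y_1,\dots,y_m]$ after the change of variables, i.e.\ depends on at most $m$ linear functions. Applying the claim with $p=\tilde f$ and $m=r$ gives $\ranksem(f)=\ranksem(\tilde f)\le r=\ranksyn(C')$, as desired.

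The only non-routine ingredient is the degree argument in the last paragraph (which is precisely what is left implicit in \autoref{obs:semantic-vs-syntactic}); the remaining manipulations with $\gcd$, $\Lin$, and the description of $\ranksyn$ for $\SumProdSum(k,d,\rho)$ circuits are straightforward. I therefore do not expect a serious obstacle, beyond being careful that $g$ is taken to be a \emph{product of linear functions} (rather than the full polynomial $\gcd$ of the gates), which is exactly what makes the reduction $\ranksem(f)=\ranksem(\tilde f)$ valid.
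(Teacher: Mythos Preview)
The paper does not give its own proof of this lemma; it is simply quoted from \cite{KarninShpilka09}. Your argument is correct and is exactly the natural one: pull out the common linear factor $g$, observe that the simplified circuit (hence $\tilde f=f/g$) lives in the polynomial ring generated by the $r=\ranksyn(C')$ spanning linear forms, note that $\ranksem$ is invariant under removing a product-of-linear-functions factor, and finish with the fact that factoring stays inside $\F[y_1,\dots,y_r]$ after a linear change of variables. Your closing caveat---that one must take $g$ to be the product of the common \emph{linear} factors (as in the $\SumProdSum(k,d,\rho)$ definition of $\gcd$), not an arbitrary polynomial $\gcd$---is precisely the point that makes $\Lin(f)=g\cdot\Lin(\tilde f)$ hold, and you handle it correctly.
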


	We now want to upper bound the syntactic rank as a function of the semantic rank (naturally, this only makes sense for \emph{minimal} circuits, as other circuits can have artificially large syntactic rank). Our argument is essentially identical to Lemma 2.20 of \cite{KarninShpilka09}. However since our notation is different and since we observe that we can slightly improve the bound in the multilinear case (the bound in \cite{KarninShpilka09} depends on the degree of the polynomial being computed), we repeat their short argument.
	
	We start with the rank bound for $\SumProdSum(k,d,\rho)$ circuit proved in \cite{KarninShpilka08}.
	
	\begin{lemma}[Lemma 4.2 of \cite{KarninShpilka08}]
		\label{lem:rank-bound-k-d-pho}
		Let $C$ be a simple and minimal $\SumProdSum(k,d,\rho)$ circuit computing the zero polynomial. Suppose
		\[
		C = \sum_{i=1}^k \left( \prod_{j=1}^{d_i} \ell_{i,j} \right) \cdot h_i (\tilde{\ell}_{i_1}, \ldots, \tilde{\ell}_{i,\rho_i})
		\]
		and let $\tilde{R} = \sum_{i=1}^k \rho_i$.
		Then $\ranksyn(C) \le R(k,d) + \tilde{R}$.
	\end{lemma}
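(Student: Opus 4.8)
The plan is to reduce to the rank bound $R(k,d)$ for ordinary $\Sps{k}$ circuits (\autoref{thm:rank bound ml}) by restricting $C$ to a generic affine subspace on which every $h_i$ collapses to a nonzero constant. Write $C=\sum_{i=1}^k A_i\cdot h_i$ with $A_i=\prod_{j=1}^{d_i}\ell_{i,j}$ and $h_i=h_i(\tilde\ell_{i,1},\dots,\tilde\ell_{i,\rho_i})$, where for each $i$ the forms $\tilde\ell_{i,1},\dots,\tilde\ell_{i,\rho_i}$ are linearly independent (and $h_i\not\equiv 0$: if some $h_i\equiv 0$ then minimality of $C$ forces $k=1$ and the statement is trivial). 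Let $W=\Span\{\tilde\ell_{i,j}\}$, so that $w:=\dim W\le\sum_i\rho_i=\tilde R$. If $\ranksyn(C)\le\tilde R$ the claim is trivial, so assume $m:=\ranksyn(C)>\tilde R\ge w$. After an invertible linear change of variables we may assume all linear forms appearing in $C$ lie in $\F[x_1,\dots,x_m]$ and that $W$ is spanned by $x_{m-w+1},\dots,x_m$; set $L=\{x:x_{m-w+1}=c_1,\dots,x_m=c_w\}$, where $c$ is chosen generically. (Inhomogeneous constant terms are harmless and are most conveniently absorbed by homogenizing with a fresh variable.)

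Next I would verify, for generic $c$, that $C|_L$ is a simple, minimal $\Sps{k}$ circuit of degree at most $d$ computing $0$. It computes $0$ since $C$ does, and $\deg(C|_L)=\max_i\deg(A_i|_L)\le\max_i d_i\le d$, because each $h_i|_L=h_i(\tilde\ell_{i,1}(c),\dots)$ is a constant, and this constant is nonzero for generic $c$ since $h_i\not\equiv 0$ and the $\tilde\ell_{i,j}$ span $W$. Minimality holds because for every $\emptyset\neq S\subsetneq[k]$ the polynomial $\sum_{i\in S}A_ih_i$ is nonzero on $\F^m$, hence nonzero on a generic affine subspace of dimension $m-w\ge 1$. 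Simplicity is the crucial point: I claim that for generic $c$, if two factors $\ell_{i,j_i},\ell_{i',j_{i'}}\notin W$ satisfy $\ell_{i,j_i}|_L\sim\ell_{i',j_{i'}}|_L$, then already $\ell_{i,j_i}\sim\ell_{i',j_{i'}}$; indeed, otherwise $\ell_{i,j_i}-\mu\ell_{i',j_{i'}}$ is a nonzero element of $W$ for the appropriate scalar $\mu$, so $\ell_{i,j_i}|_L-\mu\ell_{i',j_{i'}}|_L$ equals the nonzero constant $(\ell_{i,j_i}-\mu\ell_{i',j_{i'}})(c)$, a contradiction. (Factors lying in $W$ restrict to constants and contribute no linear factor to $A_i|_L$.) Hence if a linear form $\ell$ divides $A_i|_L$ for every $i$, then, taking a witnessing factor $\ell_{1,j_1}$ of $A_1|_L$, we get $\ell_{i,j_i}\sim\ell_{1,j_1}$ for all $i$, so $\ell_{1,j_1}$ divides $\gcd(A_1,\dots,A_k)$, hence also $\gcd(A_1h_1,\dots,A_kh_k)$, contradicting the simplicity of $C$. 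Each of these finitely many conditions on $c$ is violated only on a proper subvariety, so a good $c$ exists (enlarging $\F$ if necessary).

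Finally I would combine the bounds. By \autoref{thm:rank bound ml} applied to $C|_L$, we get $\ranksyn(C|_L)\le R(k,d)$, and since $C|_L$ is simple this equals $\dim\Span\{\ell_{i,j}|_L:i,j\}$. On the other hand, because $C$ is simple, the forms $\{\ell_{i,j}\}\cup\{\tilde\ell_{i,j}\}$ span $\langle x_1,\dots,x_m\rangle$; since every $\tilde\ell_{i,j}$ lies in $W=\langle x_{m-w+1},\dots,x_m\rangle$, the images of the $\ell_{i,j}$ under the projection killing $W$ must span $\langle x_1,\dots,x_{m-w}\rangle$. These images are exactly the linear parts of the $\ell_{i,j}|_L$, so $\dim\Span\{\ell_{i,j}|_L\}\ge m-w$. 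Therefore $m-w\le R(k,d)$, i.e.\ $\ranksyn(C)=m\le R(k,d)+w\le R(k,d)+\tilde R$.

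I expect the main obstacle to be establishing that $C|_L$ stays simple — that freezing the $\tilde\ell_{i,j}$ to constants creates no new common linear factor — together with the routine but fiddly bookkeeping of inhomogeneous linear forms and of the constant terms introduced by the substitution; homogenizing with an auxiliary variable is the cleanest way to keep that under control.
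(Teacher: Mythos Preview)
Your proposal is correct and follows essentially the same approach that the paper describes for this lemma: one fixes the linear functions $\tilde{\ell}_{i,j}$ to generic constants, argues that the resulting $\Sps{k}$ circuit remains simple and minimal, applies the rank bound $R(k,d)$, and accounts for the rank drop of at most $\tilde{R}$ incurred by the restriction. You have also correctly identified (and handled, modulo the homogenization bookkeeping you flag) the one nontrivial verification, namely that simplicity is preserved under the generic restriction.
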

	
	Here $R(k,d) = 4k^2 \log (2d)$ is the rank bound for (not necessarily multilinear) $\Sps{k}$ circuits (recall \autoref{thm:rank bound ml}). Note that trivially $\tilde{R} \le k \rho$, but in \autoref{lem:small-semantic-implies-small-syntactic-rank} we shall use the stricter upper bound stated in the lemma.

	The proof of \autoref{lem:rank-bound-k-d-pho} in \cite{KarninShpilka08} is also not very complicated. Given a $\SumProdSum(k,d,\rho)$  circuit $C$ as in the statement of the lemma, one fixes randomly the linear functions $\tilde{\ell}_{i_1}, \ldots, \tilde{\ell}_{i,\rho_i}$, for $i \in [k]$, to obtain a simple and minimal $\Sps{k}$ circuit of degree at most $d$, and applies the rank bound for such circuits. Note that fixing those linear functions might make the circuit non-multilinear even if the original circuit was multilinear, which means we have to use the rank bound $R(k,d)$ for non-multilinear $\Sps{k}$ circuits. This incurs a dependence on the degree $d$. 	However, it is also convenient to have a form of \autoref{lem:rank-bound-k-d-pho} with no dependence on $d$. This is possible since $C$ is multilinear. A similar observation was made by Dvir and Shpilka \cite{DS07} for $\Sps{k}$ circuits. Since $C$ is multilinear,  all linear functions appearing in each multiplication gate are variable disjoint, and hence linearly independent, which implies that $\ranksyn(C) \ge d$. Together with the upper bound in \autoref{lem:rank-bound-k-d-pho}, this implies the following corollary.
	
	\begin{corollary}[Rank bound for multilinear $\SumProdSum(k,d,\rho)$ circuits with no dependence on $d$]
		\label{cor:rank-bound-multilinear-k-d-pho}
		Let $C$ be a simple and minimal $\SumProdSum(k,d,\rho)$ circuit computing the zero polynomial. Then,
		\[
		\ranksyn(C) \le 40 \cdot (k^2 \log k + k^2 \rho).
		\]
	\end{corollary}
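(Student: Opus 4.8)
The plan is to combine the degree-dependent rank bound of \autoref{lem:rank-bound-k-d-pho} with a lower bound on $\ranksyn(C)$ coming from multilinearity, so as to obtain a self-referential inequality for $R := \ranksyn(C)$ that no longer mentions $d$, and then to solve that inequality.

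First I would extract the lower bound. Write $C = \sum_{i=1}^k \bigl(\prod_{j=1}^{d_i}\ell_{i,j}\bigr)\cdot h_i(\tilde\ell_{i,1},\ldots,\tilde\ell_{i,\rho_i})$ and set $d^\ast = \max_i d_i$. Since $C$ is simple we have $\gcd(C)=1$, so $\simp(C)=C$ and every $\ell_{i,j}$ genuinely contributes to $\ranksyn(C)$. Fix $i_0$ with $d_{i_0}=d^\ast$. As the product $\prod_j \ell_{i_0,j}$ computes a multilinear polynomial, its linear factors must be pairwise variable-disjoint --- if two of them shared a variable the product would have $x$-degree at least $2$ --- hence linearly independent. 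Therefore $\ranksyn(C) \ge d^\ast$. (We may assume $d^\ast \ge 1$; otherwise $C$ is a sum of $k$ pieces each depending on at most $\rho$ linear forms, and the claimed bound is immediate from \autoref{lem:rank-bound-k-d-pho} applied with degree $1$.)

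Next I would apply \autoref{lem:rank-bound-k-d-pho}, viewing $C$ as a $\SumProdSum(k,d^\ast,\rho)$ circuit (legitimate since $d_i \le d^\ast$ for every $i$): this gives $\ranksyn(C) \le R(k,d^\ast) + \widetilde R$ with $\widetilde R = \sum_i \rho_i \le k\rho$ and $R(k,d^\ast) \le 4k^2\log(2d^\ast)$ by \autoref{thm:rank bound ml}. Combining with $d^\ast \le \ranksyn(C)$ and writing $R := \ranksyn(C)$ yields
\[
R \;\le\; 4k^2\log(2R) + k\rho .
\]
Finally I would solve this. For $R \ge 40k^2$ the map $R \mapsto R - 4k^2\log(2R) - k\rho$ is strictly increasing, so it suffices to check that the right-hand side above is already smaller than $R_0 := 40\,(k^2\log k + k^2\rho)$ when $R = R_0$: using $\log(2R_0) = \log 80 + 2\log k + \log(\log k + \rho) \le \log 80 + 3\log k + \rho$ (via $\log y \le y$), together with $k\rho \le k^2\rho$ and $\log k \ge 1$ for $k \ge 2$, a direct estimate gives $4k^2\log(2R_0) + k\rho < R_0$, which forces $R \le R_0$. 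The case $k=1$ is degenerate: a simple, minimal $\SumProdSum(1,d,\rho)$ circuit computing $0$ is the empty circuit, of rank $0$.

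The only non-routine point --- and the place where care is needed --- is the degree-removal step: one must observe that \autoref{lem:rank-bound-k-d-pho} may be applied with the \emph{actual} maximal product degree $d^\ast$ in place of the formal parameter $d$, and that multilinearity lets us feed $d^\ast \le \ranksyn(C)$ back into that bound. Everything else is elementary manipulation of the transcendental inequality $R \le 4k^2\log(2R) + k\rho$, where one simply has to track constants carefully to land at the stated factor $40$.
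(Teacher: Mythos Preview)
Your proof is correct and follows essentially the same approach as the paper: both combine the multilinearity lower bound $\ranksyn(C)\ge d^\ast$ with the upper bound from \autoref{lem:rank-bound-k-d-pho} to obtain a self-referential inequality $R \le 4k^2\log(2R)+k\rho$, and then solve it. Your write-up is in fact slightly more careful than the paper's in that you work directly with $R=\ranksyn(C)$ and explicitly verify the inequality at $R_0$, whereas the paper bounds $d$ and then asserts that this implies the bound on $\ranksyn(C)$.
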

	
	\begin{proof}
		Since $C$ is multilinear and by \autoref{lem:rank-bound-k-d-pho}, we have that
		\[
		d \le \ranksyn(C) \le R(k,d) + k \rho \le 4k^2 \log(2d)  + k \rho.
		\]
		We claim that inequality holds only if $d \le 40 \cdot (k^2 \log k + k^2 \rho)$. Indeed, if $d \le 40 k^2\rho$ then the bound clearly holds. Otherwise we get
		\begin{align*}
			d &\leq 4 k^2 \log (2d) +k \rho <  4 k^2 \log(2d) + d/40k \\
			d&\leq \left((160k^3)/(40k-1)\right)\log (2d)
		\end{align*}
		and a simple calculation shows that this does not hold if 
		$d > 40 \cdot k^2 \log k$. This implies the claimed upper bound on $\ranksyn(C)$.
	\end{proof}

	The following lemma uses the notation of \autoref{thm:rank bound ml}.
	
	\begin{lemma}[Small semantic-rank implies small syntactic-rank, similar to Lemma 2.20 in \cite{KarninShpilka09}]
		\label{lem:small-semantic-implies-small-syntactic-rank}
		Let $C$ be a minimal multilinear $\Sps{k}$ circuit computing a polynomial $f$. Suppose that $\ranksem(f) \le r$. Then $\ranksyn(C) \le r  + R(k+1,\ranksyn(C))$. In particular, $\ranksyn(C)  \le 2^7 r k^2 \log k$. 
	\end{lemma}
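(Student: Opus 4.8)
The plan is to follow the argument of Lemma 2.20 of \cite{KarninShpilka09}: isolate the linear-free part of $f$ as a single generalized gate, subtract it from $C$, and apply the rank bound of \autoref{lem:rank-bound-k-d-pho} to the resulting generalized circuit, which computes the zero polynomial. First I would reduce to the case that $C$ is simple. Dividing a circuit by $\gcd(C)$ (a product of linear functions) affects neither $\ranksyn$, which by \autoref{def:syn-rank-dist} is defined only after factoring out the gcd, nor $\ranksem$ of the computed polynomial, since $\Lin(f)=\gcd(C)\cdot\Lin([\simp(C)])$ and hence $f/\Lin(f)=[\simp(C)]/\Lin([\simp(C)])$; as $\simp(C)$ is again minimal and multilinear, I may assume $\gcd(C)=1$. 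If $f=0$ then $C$ is a simple, minimal, multilinear $\Sps{k}$ circuit computing zero, so \autoref{thm:rank bound ml} already gives $\ranksyn(C)\le R_M(k)\le 10k^2\log k$ and we are done; so assume $f\neq 0$ and, using the convention of \autoref{rem:semantic-rank-0}, put $r=\ranksem(f)\ge 1$.

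By \autoref{obs:1-d-r} there is a multilinear $\SumProdSum(1,d,r)$ circuit $C_f=\Lin(f)\cdot h$ computing $f$, where $h$ has no linear factor and depends on $\ranksem(f)\le r$ linear functions. Form the generalized depth-$3$ circuit $C-C_f$: it has $k+1$ multiplication gates — the $k$ gates of $C$, each with no ``$h$-part'', together with the single gate of $C_f$ — so it is a multilinear $\SumProdSum(k+1,D_0,r)$ circuit with $\sum_i\rho_i\le r$, where $D_0=\deg(C-C_f)$, and it computes the zero polynomial. I claim it is simple and minimal. It is simple because $\gcd(C-C_f)$ divides the gcd of any subset of its gates, in particular $\gcd(M_1,\ldots,M_k)=\gcd(C)=1$. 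For minimality: if some proper nonempty subset $S$ of the $k+1$ gates summed to zero, then so would its complement, so I may assume the gate of $C_f$ is not in $S$; then $S$ is a nonempty set of gates of $C$ summing to zero, which contradicts minimality of $C$ unless $S$ consists of all $k$ gates of $C$, in which case $f=[C]=0$, a contradiction.

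Now apply \autoref{lem:rank-bound-k-d-pho} to the simple, minimal $\SumProdSum(k+1,D_0,r)$ circuit $C-C_f$, giving $\ranksyn(C-C_f)\le R(k+1,D_0)+\sum_i\rho_i\le R(k+1,D_0)+r$. Since $C$ is simple and multilinear, the $\deg(M_i)$ linear factors of each $M_i$ are variable-disjoint, hence linearly independent, and all appear in $\simp(C)=C$, so $\ranksyn(C)\ge\deg(C)\ge\deg(f)=\deg(C_f)$ and therefore $D_0=\max(\deg(C),\deg(C_f))\le\ranksyn(C)$; by monotonicity of $R$ this yields $R(k+1,D_0)\le R(k+1,\ranksyn(C))$. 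On the other hand, since $\gcd(C-C_f)=1$, every linear form of $M_1,\ldots,M_k$ occurs in $C-C_f$, so $\ranksyn(C-C_f)\ge\ranksyn(C)$. Combining the three inequalities gives $\ranksyn(C)\le R(k+1,\ranksyn(C))+r$, the first assertion. For the ``in particular'' part, write $D=\ranksyn(C)$ and use $R(k+1,D)\le 4(k+1)^2\log(2D)\le 16k^2\log(2D)$, so $D\le r+16k^2\log(2D)$; if $D\le 2r$ the bound $D\le 2^7 rk^2\log k$ is immediate, and otherwise $r\le D/2$ gives $D\le 32k^2\log(2D)$, which by an elementary computation exactly as in the proof of \autoref{cor:rank-bound-multilinear-k-d-pho} forces $D=O(k^2\log k)$, in fact $D\le 2^7 k^2\log k\le 2^7 rk^2\log k$.

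The only steps needing care — and the places where the argument could fail — are checking that $C-C_f$ is simple and minimal, which is what makes \autoref{lem:rank-bound-k-d-pho} applicable, and the degree estimate $D_0\le\ranksyn(C)$, which is what allows the rank bound to be phrased in terms of $\ranksyn(C)$ rather than $\deg$. Everything else is bookkeeping together with the routine final logarithmic inequality.
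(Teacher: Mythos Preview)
Your proof is correct and follows essentially the same approach as the paper: form the generalized circuit $C-C_f$, verify it is simple and minimal, apply \autoref{lem:rank-bound-k-d-pho}, and use multilinearity to bound the degree by $\ranksyn(C)$. The only cosmetic difference is that you reduce to the case $\gcd(C)=1$ at the outset, whereas the paper instead factors out $\gcd(M_1,\ldots,M_k)$ from $C-C_f$ in the middle of the argument and works with the quotient circuit $D$; the resulting inequalities and the final logarithmic computation are the same.
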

	
	\begin{proof}
		Denote $C = \sum_{i=1}^k M_i$ where each $M_i$ is a multiplication gate, and denote $r_C = \ranksyn(C)$.
		
		Let $C_f$ be a $\SumProdSum(1,d,r)$ computing $f$, so that $C_f = (\prod_i \ell_i) \cdot h(\tilde{\ell}_1, \ldots, \tilde{\ell}_r)$ where $h$ has no linear factors.
		
		Consider the circuit $C - C_f$, which computes the zero polynomial. We factor out the $\gcd$ of this circuit, which is the common linear factor of $M_1, \ldots, M_k$ and $ (\prod_i \ell_i) $. Note, however, that if a linear function divides all of $M_1, \ldots, M_k$, then it divides $f$ (since $C$ computes $f$), and thus it is one of the $\ell_i$'s. Hence, the $\gcd$ of the circuit $C-C_f$ equals $\gcd(M_1, \ldots, M_k)$.
		
		Consequently, we can write $C-C_f$ as
		\[
		C-C_f = \gcd(M_1, \ldots, M_k) \cdot \left( \sum_{i=1}^k \tilde{M}_i - \left( \prod_{i \in A} \ell_i \right) \cdot h(\tilde{\ell}_1, \ldots, \tilde{\ell}_r) \right),
		\]
		where $A$ is some subset of the (indices of the) original linear functions appearing in $C_f$.
		Let $D=\left( \sum_{i=1}^k \tilde{M}_i - \left( \prod_{i \in A} \ell_i \right) \cdot h(\tilde{\ell}_1, \ldots, \tilde{\ell}_r) \right)$. Since $C-C_f$ computes the zero polynomial, $D$ is also a circuit computing the zero polynomial
		
		By construction, $D$ is simple, and by minimality of $C$ it is also minimal: indeed, no subcircuit of $C$ (nor $C$ itself) computes the zero polynomial, and no subset $S \subsetneq [k]$ of the $\tilde{M_i}$'s can equal $\left( \prod_{i \in A} \ell_i \right) \cdot h$, as this would imply that $[\sum_{i \in S} M_i]=f$, which would again contradict the minimality of $C$.
		
		Finally, note that the degree of $D$ is at most $r_C$. Indeed, by multilinearity, each $\tilde{M_i}$ is a product of variable-disjoint (and hence linearly independent) linear functions, which means that their number can be at most $r_C$. As $D$ computes the zero polynomial, the degree of $ \left( \prod_{i \in A} \ell_i \right) \cdot h(\tilde{\ell}_1, \ldots, \tilde{\ell}_r)$ is also at most $r_C$.
		
		Thus, we have concluded that $D$ is a simple and minimal $\SumProdSum(k+1, r_C, r)$ circuit computing the zero polynomial. By \autoref{lem:rank-bound-k-d-pho},
		\[
		\ranksyn(D) \le R(k+1, r_C) + r.
		\]
		Finally note that $r_C$, which is the dimension of the span of the linear functions in the $\tilde{M}_i$'s, is at most $\ranksyn(D)$. Thus, we get the inequality
		\[
		r_C\le\ranksyn(D) \le R(k+1, r_C) + r \le  4 (k+1)^2 \log (2r_C)+r.
		\]
		For the ``in particular'' part, as before, we consider two cases. If $r_C  \le ((k+1)^2+1)r$,  then the bound clearly holds. Otherwise,
		$r_C \le 4 ((k+1)^2+1) \log (2r_C)$ and a simple calculation shows that this implies  $r_C \le 2^7 k^2 \log k$. The statement follows from a combination of the two cases.
	\end{proof}

	\subsection{Semantic Partitions of $\Sps{k}$ Circuits}\label{sec:sem-part}
	
	We next define semantic partitions of $\Sps{k}$ circuits, that correspond to semantic rank in the same manner that syntactic partitions correspond to syntactic rank (recall \autoref{def:tau-r-syntactic}).

	\begin{definition}[Semantic Partition]
		\label{def:tau-r-semantic}
		Let $f$ be a multilinear polynomial.
		We say that $(f_1, \ldots, f_s)$ is a $(\tau,r)$ \emph{semantic} partition of $f$ if $f = \sum_{i=1}^s f_i$, and
		\begin{itemize}
			\item For every $i \in [s]$, $\ranksem(f_i) \le r$.
			\item For every $i \neq j \in [s]$, $\ranksem(f_i + f_j) \ge \tau r$.
		\end{itemize}
		We further say that the partition is \emph{realizable} if there exists a $\Sps{k}$ circuit $C$ computing $f$ and a partition of its multiplication gates $(C_1, \ldots, C_s)$ such that $[C_i] = f_i$. From now on, we only consider realizable partitions.
	\end{definition}
	
	We also often use the term ``$\tau$-partition'' (either syntactic or semantic) where  it is implied that the partition is a $(\tau,r)$-partition for some value of $r$.
	
	\begin{corollary}\label{cor:sem-is-syn}
		Let $C$ be a minimal multilinear $\Sps{k}$ circuit.
		Every $(\tau,r)$-semantic partition of $[C]$ is also a $(\tau',r')$-syntactic partition of $C$ with $r'=2^7 k^2 \log k \cdot r $ and $\tau'=\tau/(2^7 k^2 \log k)$.
	\end{corollary}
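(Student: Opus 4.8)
The plan is to unpack the definition of a semantic partition and feed its two defining inequalities into the two results already established relating semantic and syntactic rank. Fix the minimal multilinear $\Sps{k}$ circuit $C$ and a $(\tau,r)$-semantic partition $(f_1,\dots,f_s)$ of $[C]$. By our standing convention the partition is realizable, so it corresponds to a partition $\{A_1,\dots,A_s\}$ of the multiplication gates of $C$: writing $C_i \eqdef \sum_{j\in A_i} M_j$, we have $[C_i]=f_i$ for every $i$. I claim $(C_1,\dots,C_s)$ is a $(\tau',r')$-syntactic partition of $C$ with $r'=2^7 k^2\log k\cdot r$ and $\tau'=\tau/(2^7 k^2\log k)$; the point of this choice is that $\tau' r'=\tau r$, so that the two semantic inequalities below translate verbatim into the ones required by \autoref{def:tau-r-syntactic}.

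First I would check the cluster-rank bound. Each $C_i$ is a multilinear $\Sps{k}$ circuit computing $f_i$, and it is minimal, since a vanishing nonempty proper sub-sum of the gates of $C_i$ would in particular be a vanishing nonempty proper sub-sum of the gates of $C$, contradicting minimality of $C$. As the semantic partition guarantees $\ranksem(f_i)\le r$ (and $r\ge 1$ by the standing assumption that semantic ranks are at least $1$, or trivially $\ranksyn(C_i)=0$ when $f_i$ is a single product of linear functions, as then $\simp(C_i)=1$), \autoref{lem:small-semantic-implies-small-syntactic-rank} applied to $C_i$ yields $\ranksyn(C_i)\le 2^7 r k^2\log k = r'$. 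This is the first bullet of \autoref{def:tau-r-syntactic}.

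Next I would check the distance bound. For $i\neq j$, the index sets $A_i$ and $A_j$ are disjoint, so the syntactic sum $C_i+C_j$ has $|A_i|+|A_j|\le k$ gates, all of which are multilinear; hence $C_i+C_j$ is a multilinear $\Sps{k}$ circuit, and it computes $f_i+f_j$. Applying \autoref{obs:semantic-vs-syntactic} --- which imposes no minimality hypothesis and so is safe to use on the (possibly non-minimal) circuit $C_i+C_j$ --- gives $\dist(C_i,C_j)=\ranksyn(C_i+C_j)\ge\ranksem(f_i+f_j)$, and the semantic partition guarantees $\ranksem(f_i+f_j)\ge\tau r=\tau'r'$. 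This is the second bullet of \autoref{def:tau-r-syntactic}, so $(C_1,\dots,C_s)$ is a $(\tau',r')$-syntactic partition of $C$, as claimed.

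There is no substantive obstacle here; the argument is essentially a routine translation. The only points requiring minor care are: (i) each cluster $C_i$ inherits minimality and multilinearity from $C$, so that \autoref{lem:small-semantic-implies-small-syntactic-rank} genuinely applies to it; (ii) \autoref{obs:semantic-vs-syntactic} carries no minimality assumption, which is what lets us apply it to $C_i+C_j$ even though that circuit may fail to be minimal; and (iii) the parameters $\tau',r'$ are chosen precisely so that $\tau'r'=\tau r$, matching the quantity controlled on the semantic side.
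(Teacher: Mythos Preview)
Your proof is correct and follows essentially the same approach as the paper: apply \autoref{lem:small-semantic-implies-small-syntactic-rank} to each cluster $C_i$ for the rank bound, and \autoref{obs:semantic-vs-syntactic} to each pair $C_i+C_j$ for the distance bound, using that $\tau'r'=\tau r$. You are a bit more explicit than the paper in verifying that each $C_i$ inherits minimality from $C$ (so that \autoref{lem:small-semantic-implies-small-syntactic-rank} applies) and that \autoref{obs:semantic-vs-syntactic} needs no minimality hypothesis, but the argument is otherwise identical.
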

	\begin{proof}
		Let $[C]=\sum_i [C_i]$ be the assumed semantic partition (recall that we always assume that the partition is realizable). From \autoref{lem:small-semantic-implies-small-syntactic-rank} we get $\ranksyn(C_i)\leq 2^7 \ranksem([C_i]) k^2 \log k \leq  2^7 r k^2 \log k=r'$.  Furthermore, \autoref{obs:semantic-vs-syntactic} implies that $\ranksyn(C'_i + C'_j)\geq \ranksem([C'_i] + [C'_j])\geq  \tau r \ge \tau'\cdot r'$, as claimed. 
	\end{proof}
	
	\begin{corollary}\label{cor:syn-is-sem}
		Let $C$ be a minimal multilinear $\Sps{k}$ circuit.
		Let $(C_1, \ldots, C_s)$ be a $(\tau,r)$-syntactic partition of $C$. Then $([C_1], \ldots, [C_s])$ is a $(\tau',r)$-semantic partition of $[C]$ with $\tau'=\tau/ (2^7 k^2 \log k)$.
	\end{corollary}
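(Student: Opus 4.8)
The plan is to check the two defining properties of a $(\tau', r)$-semantic partition of $[C]$ directly, leaning on \autoref{obs:semantic-vs-syntactic} for the rank bound and on \autoref{lem:small-semantic-implies-small-syntactic-rank} for the distance bound. Realizability of $([C_1], \ldots, [C_s])$ is free: it is witnessed by $C$ itself together with the partition $(C_1, \ldots, C_s)$ of its multiplication gates.

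First I would verify the rank condition. Since $(C_1,\ldots,C_s)$ is a $(\tau,r)$-syntactic partition of $C$, we have $\ranksyn(C_i) \le r$ for each $i$; and since $C_i$ is a multilinear $\Sps{k}$ circuit computing $[C_i]$, \autoref{obs:semantic-vs-syntactic} yields $\ranksem([C_i]) \le \ranksyn(C_i) \le r$ (invoking the convention of \autoref{rem:semantic-rank-0} if some $[C_i]$ happens to be a product of linear functions). This is exactly the rank condition, with the same bound $r$.

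Next I would verify the distance condition. Fix $i \ne j$ and consider the circuit $C_i + C_j$. Because $A_i$ and $A_j$ are disjoint blocks of a partition of $[k]$, this circuit has at most $k$ multiplication gates, all of which are gates of $C$; it is multilinear since $C$ is; and it is minimal, because any nonempty proper sub-sum of $C_i + C_j$ is a nonempty proper sub-sum of $C$, hence nonzero by minimality of $C$. So $C_i + C_j$ is a minimal multilinear $\Sps{k}$ circuit, and \autoref{lem:small-semantic-implies-small-syntactic-rank} gives
\[
\ranksyn(C_i + C_j) \le 2^7 k^2 \log k \cdot \ranksem([C_i + C_j]).
\]
Since by definition $\ranksyn(C_i + C_j) = \dist(C_i, C_j) \ge \tau r$, rearranging gives
\[
\ranksem([C_i] + [C_j]) = \ranksem([C_i+C_j]) \ge \frac{\tau r}{2^7 k^2 \log k} = \tau' r,
\]
which is the required distance condition.

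The only points needing (mild) care will be confirming that $C_i + C_j$ really is minimal, so that \autoref{lem:small-semantic-implies-small-syntactic-rank} applies — handled by inheriting minimality from $C$ as above — and the edge case where $[C_i + C_j]$ is a nonzero product of linear functions and hence has semantic rank $0$: there one either uses the convention of \autoref{rem:semantic-rank-0} (treating it as $1$) or observes directly that then $\tau r = \ranksyn(C_i+C_j) \le 2^7 k^2 \log k$, so $\tau' r \le 1$ and the claimed inequality still holds. No quantitative tuning is required, as the constant $2^7 k^2 \log k$ is inherited verbatim from \autoref{lem:small-semantic-implies-small-syntactic-rank}.
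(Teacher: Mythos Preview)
Your proof is correct and follows essentially the same approach as the paper: apply \autoref{obs:semantic-vs-syntactic} for the rank bound on each $[C_i]$, and \autoref{lem:small-semantic-implies-small-syntactic-rank} to $C_i + C_j$ for the distance bound. You are in fact more careful than the paper's terse proof in explicitly verifying that $C_i + C_j$ is minimal (which is required to invoke \autoref{lem:small-semantic-implies-small-syntactic-rank}) and in handling the semantic-rank-zero edge case.
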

	\begin{proof}
		By \autoref{obs:semantic-vs-syntactic}, for every $i \in [s]$ we have that $\ranksem([C_i]) \le \ranksyn(C_i) \le r$.
		Furthermore, \autoref{lem:small-semantic-implies-small-syntactic-rank} implies that for every $i \neq j$,
		\[
		\ranksem([C_i] + [C_j]) \ge \ranksyn(C_i + C_j) / (2^7 k^2 \log k) \ge \tau r / (2^7 k^2 \log k) = \tau' r.  \qedhere
		\]
	\end{proof}

	\subsubsection{Uniqueness Properties of Semantic Partitions}
	\label{sec:unique-semantic-partition}
	
	We next state and prove an analogous claim to \autoref{cla:low-rank-finer-part-syn}.
	
	\begin{claim}[Lower rank implies finer partition]
		\label{cla:low-rank-finer-part}
		Let $\tau >2^{10} (k^2 \log k)$.
		Let $C, D$ be two minimal multilinear $\Sps{k}$ circuits computing the same polynomial $f$. Let $C=\sum_{i=1}^{s}C_i$ be a partition of the multiplication gates in $C$ and similarly $D=\sum_{i=1}^{s'}D_i$ a partition of the gates in $D$.
		Let $f_i = [C_i]$ and suppose that $(f_1, \ldots, f_s)$ is a $(\tau,r_1)$-semantic partition of $f$. Similarly, let $g_i = [D_i]$ and suppose that $(g_1, \ldots, g_{s'})$ is a $(\tau, r_2)$-semantic partition of $f$. Assume $r_1 \ge r_2$. 
		
		Then, for every $i \in [s]$ there is a subset $S_i \subseteq [s']$ such that
		\[
		f_i = \sum_{j \in S_i} g_j
		\]
		and the subsets $S_1, \ldots, S_s$ form a partition of $[s']$.
	\end{claim}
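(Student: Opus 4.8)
The plan is to follow the template of the proof of \autoref{cla:low-rank-finer-part-syn}, but, since $C$ and $D$ are a priori different circuits and the relevant quantities are semantic, to replace all arguments about individual multiplication gates by the generalized-circuit machinery of \cite{KarninShpilka08, KarninShpilka09}. First I would collapse each cluster to a single generalized gate: by \autoref{obs:1-d-r}, for each $i \in [s]$ there is a multilinear $\SumProdSum(1,d,r_1)$ circuit $G_i$ with $[G_i] = f_i$ and $\ranksyn(G_i) = \ranksem(f_i) \le r_1$, and for each $j \in [s']$ a multilinear $\SumProdSum(1,d,r_2)$ circuit $H_j$ with $[H_j] = g_j$ and $\ranksyn(H_j) \le r_2$. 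Then $\sum_{i=1}^s G_i - \sum_{j=1}^{s'} H_j$ is a multilinear generalized circuit computing the zero polynomial, which I would decompose — exactly as one does for ordinary $\Sps{}$ circuits — into \emph{minimal} generalized sub-circuits computing zero; these partition the gate set $\{G_1,\dots,G_s,H_1,\dots,H_{s'}\}$.

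The crux is to show that each such minimal sub-circuit $E$ contains \emph{exactly one} of the $G_i$'s. It cannot contain two: if $G_{i_1},G_{i_2}$ both occur in $E$, then $G_{i_1}+G_{i_2}$ is a sub-circuit of $E$, so $\ranksyn(G_{i_1}+G_{i_2}) \le \ranksyn(E)$ (the common factor of $E$ divides that of $G_{i_1},G_{i_2}$, so the simplified sub-circuit carries only fewer linear forms), while \autoref{lem:syntatic-vs-semantic-k-d-rho} applied to the multilinear $\SumProdSum(2,d,r_1)$ circuit $G_{i_1}+G_{i_2}$ gives $\ranksyn(G_{i_1}+G_{i_2}) \ge \ranksem(f_{i_1}+f_{i_2}) \ge \tau r_1$. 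Since $\simp(E)$ is a simple, minimal, multilinear $\SumProdSum(\le 2k,d,r_1)$ circuit computing zero (here $r_1 \ge r_2$ is used), \autoref{cor:rank-bound-multilinear-k-d-pho} — invoked via the sharp $\tilde R = \sum_i \rho_i$ form of \autoref{lem:rank-bound-k-d-pho} together with the multilinearity bound $\deg(E) \le \ranksyn(E)$, so as to keep the dependence on $r_1$ linear rather than quadratic in $k$ — bounds $\ranksyn(E)$ by roughly $k^2\log k + k r_1$, which after dividing by $r_1 \ge 1$ contradicts $\tau > 40k^2\log k + k^2$. And $E$ must contain at least one $G_i$: otherwise $E$ is a nonempty sub-sum of the $H_j$'s computing zero, i.e. $\sum_{j \in Q}g_j = 0$ for some $\emptyset \ne Q \subseteq [s']$, contradicting the minimality of the realizing circuit $D$ (a sub-sum of its clusters is a sub-sum of its gates).

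Consequently there are exactly $s$ minimal sub-circuits, one per $G_i$, necessarily of the form $E_i = G_i - \sum_{j \in S_i}H_j$; since the $E_i$ partition the gate set, the sets $S_1,\dots,S_s$ partition $[s']$, and $f_i = [G_i] = \sum_{j \in S_i}[H_j] = \sum_{j \in S_i}g_j$, which is exactly the assertion.

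The structural skeleton above is routine once set up this way; the one genuinely delicate point is the parameter arithmetic in the second paragraph — one must check that the $\SumProdSum(\le 2k,d,r_1)$ rank bound for a minimal zero circuit really is dominated by $\tau r_1$ for $\tau$ above the stated threshold. In particular one should not apply \autoref{cor:rank-bound-multilinear-k-d-pho} in its crude form (which would contribute a term of order $k^2 r_1$ and yield a threshold of the wrong shape when $r_1$ is large), but instead track $\tilde R$ and re-run the multilinearity/degree argument from the proof of that corollary; it is precisely this that makes a threshold of order $40k^2\log k + k^2$ suffice. A secondary, purely bookkeeping, matter is to confirm that the passages between semantic and syntactic rank (via \autoref{obs:1-d-r}, \autoref{obs:semantic-vs-syntactic}, \autoref{lem:syntatic-vs-semantic-k-d-rho}) and the ``$\ranksyn$ of a sub-circuit is at most $\ranksyn$ of the whole circuit'' monotonicity remain valid for the generalized circuits $G_i$, $H_j$, $E$.
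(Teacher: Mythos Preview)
Your approach is essentially identical to the paper's: form the generalized zero-circuit $\sum_i C_{f_i} - \sum_j C_{g_j}$ out of single $\SumProdSum(1,d,r_1)$ gates for the clusters, take a minimal subcircuit $E'$, and combine \autoref{lem:syntatic-vs-semantic-k-d-rho} with the multilinear generalized rank bound (\autoref{cor:rank-bound-multilinear-k-d-pho}) to rule out two $G_i$'s in $E'$. The paper in fact applies \autoref{cor:rank-bound-multilinear-k-d-pho} in its crude form directly (getting $\ranksyn(E')\le 40(k^2\log k + k^2 r_1)$) rather than tracking $\tilde R$ as you suggest, so your extra care about the constants is not needed for the argument as written.
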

	
	\begin{proof}
		The proof is similar to the proof of  \autoref{cla:low-rank-finer-part-syn}.	
		Consider the following $\SumProdSum(2k,d,r_1)$ circuit $E$ computing the zero polynomial:
		\[
		\sum_{i=1}^s C_{f_i} - \sum_{i=1}^{s'} C_{g_i}
		\]
		(where $C_{f_i}$ is a $\SumProdSum(1,\deg(f_i),r_1)$ gate computing $f_i$, and similarly $C_{g_i}$ computes $g_i$).
		Consider a minimal subcircuit $E'$ of $E$,
		\[
		\sum_{i \in \cI} C_{f_i} - \sum_{j \in \cJ} C_{g_j}
		\]
		(note that $\cI, \cJ \neq \emptyset$ since both $C,D$ are minimal circuits).
		
		Assume towards contradiction that there exist $i,i' \in \cI$ with $i \neq i'$. Then
		\[
		\ranksem(f_i + f_{i'}) \le \ranksyn(C_{i} + C_{{i'}}) \le \ranksyn(E') \le 40 ((2k)^2 \log (2k) + (2k)^2 r_1).
		\]
		The first inequality follows from \autoref{lem:syntatic-vs-semantic-k-d-rho}. The second inequality is immediate from the definition of syntactic rank, and the last inequality is the rank bound of \autoref{cor:rank-bound-multilinear-k-d-pho}.  
		
		From the assumption, $\ranksem(f_i + f_{i'}) \ge \tau r_1$, which contradicts the choice of $\tau$.
	\end{proof}

	\begin{corollary}\label{cor:same-num-semantic}
		Let $C$ and $D$ be as in \autoref{cla:low-rank-finer-part}. If $s=s'$ then there is a permutation $\pi$ of $[s]$ such that $f_i=g_{\pi(i)}$.
	\end{corollary}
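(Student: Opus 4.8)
The plan is to derive this immediately from \autoref{cla:low-rank-finer-part} together with a simple counting argument. First I would apply \autoref{cla:low-rank-finer-part}: since $r_1 \ge r_2$, for every $i \in [s]$ there is a subset $S_i \subseteq [s']$ with $f_i = \sum_{j \in S_i} g_j$, and moreover $S_1, \ldots, S_s$ partition $[s']$.

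Next I would observe that each $S_i$ is nonempty. Indeed, $f_i = [C_i]$ and $C$ is a minimal circuit, so no sub-circuit of $C$ computes the zero polynomial; in particular $f_i \ne 0$. If $S_i$ were empty we would get $f_i = \sum_{j \in \emptyset} g_j = 0$, a contradiction. Hence $\{S_i\}_{i \in [s]}$ is a partition of $[s']$ into $s$ nonempty blocks.

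Finally I would invoke the hypothesis $s = s'$: a set of size $s'$ cannot be partitioned into $s = s'$ nonempty blocks unless every block is a singleton. Therefore $|S_i| = 1$ for each $i$, say $S_i = \{\pi(i)\}$, and since the $S_i$ partition $[s']=[s]$ the map $\pi : [s] \to [s]$ is a bijection, i.e.\ a permutation. Substituting gives $f_i = g_{\pi(i)}$ for every $i$, as claimed. There is no real obstacle here — the only thing to be careful about is justifying nonemptiness of the $S_i$, which follows from minimality of $C$; the rest is pure counting.
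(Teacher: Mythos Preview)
Your proof is correct and follows essentially the same approach as the paper's: apply \autoref{cla:low-rank-finer-part} to get the partition $\{S_i\}$, then use $s=s'$ to force each $S_i$ to be a singleton. You are slightly more careful than the paper in explicitly justifying that each $S_i$ is nonempty (via minimality of $C$), whereas the paper's one-line argument ``otherwise $s'>s$'' tacitly assumes this; your added justification is correct and makes the counting airtight.
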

	
	\begin{proof}
		Indeed, in the notation of \autoref{cla:low-rank-finer-part}, we will get that the sets $S_i$ must be of size $1$, as otherwise we would have $s'>s$, in contradiction to the assumption. As $|S_i|=1$ it follows that $f_i=g_j$ where $S_i=\{j\}$.
	\end{proof}

	The next lemma considers two different semantic partitions. In the case of syntactic partitions, \autoref{cor:unique-syntactic} argues about partitions of the same circuit, in contrast, in the case of semantic partitions, we can compare partitions of different circuits computing $f$. This is one of the advantages of semantic rank over syntactic rank.
	
	\begin{lemma}[Relation between different partitions]\label{lem:lower-rank-finer-partition}
		Let $\tau > R_M(2k) + 2^{10} k^2 \log k$.
		Let $C, D$ be two minimal multilinear $\Sps{k}$ circuits computing the same polynomial $f$. Let $C=\sum_{i=1}^{s}C_i$ be a partition of the multiplication gates in $C$ and similarly $D=\sum_{i=1}^{s'}D_i$ a partition of the gates in $D$.
		
		Let $f_i = [C_i]$ and suppose that $(f_1, \ldots, f_s)$ is a $(\tau,r_1)$-semantic partition of $f$. Similarly, let $g_i = [D_i]$ and suppose that $(g_1, \ldots, g_{s'})$ is a $(\tau, r_2)$-semantic partition of $f$. Assume $r_1 \ge r_2$.
		
		Then, there's a refinement of $(C_1, \ldots, C_s)$, denoted $(C'_1, \ldots, C'_{s'})$, such that up to reordering, $[C'_i] = g_i$. 
	\end{lemma}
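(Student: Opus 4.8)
The plan is to build on \autoref{cla:low-rank-finer-part}, whose hypotheses are implied by our assumptions ($\tau > R_M(2k)+2^8k^2\log k$ and $r_1\ge r_2$): it already supplies the \emph{semantic} refinement, namely for every $i\in[s]$ a set $S_i\subseteq[s']$ with $f_i=\sum_{j\in S_i} g_j$, the $S_i$ forming a partition of $[s']$. What remains is to lift each such identity to the level of the individual multiplication gates of $C$: I want to show that the gates of each cluster $C_i$ can be partitioned into $|S_i|$ groups, one for each $j\in S_i$, so that the $j$-th group computes exactly $g_j$. Concatenating these partitions over all $i$ will give a refinement $(C'_1,\dots,C'_{s'})$ of $(C_1,\dots,C_s)$ with $[C'_j]=g_j$, which is what we want.

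Fix $i\in[s]$. For each $j\in S_i$, apply \autoref{obs:1-d-r} to the (multilinear $\Sps{k}$) sub-circuit $D_j$ of $D$ to obtain a multilinear $\SumProdSum(1,d,r_2)$ gate $C_{g_j}$ computing $g_j$ with $\ranksyn(C_{g_j})=\ranksem(g_j)\le r_2$, and form
\[
E_i:=C_i-\sum_{j\in S_i} C_{g_j},
\]
a multilinear $\SumProdSum(k_i+|S_i|,\,d,\,r_2)$ circuit, where $k_i$ is the number of multiplication gates of $C_i$; note $k_i+|S_i|\le 2k$. By the semantic identity above $E_i$ computes the zero polynomial, so (peeling off minimal zero sub-sums greedily) it decomposes into minimal zero sub-circuits that partition its gates; write a typical one as $E_i^{(t)}=\sum_{p\in P_t}M_p-\sum_{j\in Q_t}C_{g_j}$, with the $M_p$ among the gates of $C_i$.

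The crux is to prove that each $E_i^{(t)}$ contains at least one gate of $C_i$ and \emph{exactly one} generalized gate $C_{g_j}$. That $P_t,Q_t\ne\emptyset$ follows from minimality of $C$ and of $D$ (otherwise a nonempty sub-sum of the multiplication gates of $C$, respectively of $D$, would vanish). For $|Q_t|\le 1$: assuming $j\ne j'\in Q_t$, I would factor out the gcd of $E_i^{(t)}$ to get a simple, minimal, multilinear $\SumProdSum(k',d,r_2)$ circuit $\widehat E$ computing zero with $k'\le 2k$, to which \autoref{cor:rank-bound-multilinear-k-d-pho} applies, giving $\ranksyn(\widehat E)\le 40(k^2\log k+k^2 r_2)$. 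Since the simplifications of $C_{g_j}$ and $C_{g_{j'}}$ are among the gates of $\widehat E$, and since semantic rank is unaffected by removing linear factors, \autoref{lem:syntatic-vs-semantic-k-d-rho} yields $\ranksem(g_j+g_{j'})\le \ranksyn(\widehat E)\le 40(k^2\log k+k^2 r_2)$. But $(g_1,\dots,g_{s'})$ being a $(\tau,r_2)$-semantic partition forces $\ranksem(g_j+g_{j'})\ge \tau r_2$, which (using $r_2\ge 1$, per the convention of \autoref{rem:semantic-rank-0}) contradicts $\tau > R_M(2k)+2^8k^2\log k$.

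With this in hand the conclusion is routine: since the $E_i^{(t)}$ partition the gates of $E_i$ and each uses exactly one $C_{g_j}$, there are precisely $|S_i|$ of them, in bijection with $S_i$; writing $E_i^{(j)}=C'_{i,j}-C_{g_j}$ gives $[C'_{i,j}]=g_j$ and the gate sets of $\{C'_{i,j}\}_{j\in S_i}$ partition the gate set of $C_i$. Putting $C'_j:=C'_{i,j}$ for the unique $i$ with $j\in S_i$ gives the claimed refinement. I expect the only real work to be in the third paragraph --- verifying that simplifying a minimal zero sub-circuit keeps it minimal and multilinear so that \autoref{cor:rank-bound-multilinear-k-d-pho} is applicable, and checking that the stated $\tau$ genuinely exceeds $40(k^2\log k+k^2 r_2)$; the rest mirrors the proof of \autoref{cla:low-rank-finer-part}, carried out gate-by-gate rather than cluster-by-cluster.
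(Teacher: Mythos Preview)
Your approach is correct and follows the paper's strategy closely: invoke \autoref{cla:low-rank-finer-part} for the semantic refinement, then analyze minimal zero subcircuits to show each touches only one $j\in S_i$. The one variation is in how you represent $g_j$: you use a single $\SumProdSum(1,d,r_2)$ gate $C_{g_j}$, whereas the paper uses the actual subcircuit $D_j$ of $D$. The paper's minimal subcircuit $E'$ is therefore a standard $\Sps{2k}$ circuit, so $R_M(2k)$ bounds its rank directly; it then combines the triangle inequality for $\dist$ with \autoref{lem:small-semantic-implies-small-syntactic-rank} to get $\ranksem(g_{j_1}+g_{j_2})\le \dist(D_{j_1},D_{j_2})\le 2^8 k^2\log k\cdot r_2 + R_M(2k)$, matching the hypothesis on $\tau$ exactly. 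Your route via \autoref{cor:rank-bound-multilinear-k-d-pho} works too, and the single-gate packaging makes the final bijection cleaner (the paper must implicitly group several minimal subcircuits that hit the same $D_j$). One caveat: since $\widehat E$ has top fan-in up to $2k$, the corollary actually yields $40\bigl((2k)^2\log(2k)+(2k)^2 r_2\bigr)$, not $40(k^2\log k+k^2 r_2)$; with this correction the contradiction requires $\tau$ a constant factor larger than the stated hypothesis for small $k$, so to prove the lemma with the exact threshold given you should either sharpen the constants or switch to the paper's $D_j$-based argument.
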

	
	\begin{proof}
		By \autoref{cla:low-rank-finer-part}, for every $i \in [s]$ there's a subset $S_i \subseteq [s']$ such that
		\begin{equation}
			\label{eq:fi-sum-of-g_js}
			f_i = \sum_{j \in S_i} g_j
		\end{equation}
		and the subsets $S_1, \ldots, S_s$ form a partition of $[s']$.
		Equation \eqref{eq:fi-sum-of-g_js} implies $[C_i] = [\sum_{j \in S_i} g_j]$. Thus,
		\[
		C_i - \sum_{j \in S_i} D_j
		\]
		is a $\Sps{2k}$ circuit $E$ computing the zero polynomial.
		
		Let $E'$ be a minimal subcircuit of $E$. Write
		\[
		E' = \sum_t M_t - \sum_{j} \sum_p T_{j,p}
		\]
		where $M_t \in C_i$ and $T_{j,p} \in D_j$. In particular,
		\[
		\ranksyn(\sum_j \sum_p T_{j,p}) \le \ranksyn(E') \le R_M(2k).
		\]
		Suppose $j_1\neq j_2 \in [s']$ appear in the sum $\sum_j \sum_p T_{j,p}$. Let $T_{j_1,p_1} \in D_{j_1}$ and $T_{j_2,p_2} \in D_{j_2}$. As
		\[
		\dist(T_{j_1,p_1}, T_{j_2,p_2}) \le R_M(2k),
		\]
		we get from Lemma 2.17 in \cite{KarninShpilka09} (the triangle inequality for syntactic rank) and \autoref{lem:small-semantic-implies-small-syntactic-rank}  that 
		\begin{align*}
			\dist(D_{j_1}, D_{j_2}) &\le \dist(D_{j_1}, T_{j_1,p_1}) +\dist(T_{j_1,p_1}, D_{j_2})\\ &\leq \dist(D_{j_1}, T_{j_1,p_1}) +\dist(T_{j_1,p_1}, T_{j_2,p_2}) + \dist(T_{j_2,p_2}, D_{j_2})\\ &\le \ranksyn(D_{j_1}) +R_M(2k) + \ranksyn(D_{j_2})\\ &\le 2^8 k^2 \log k \cdot r_2 + R_M(2k)\;.
		\end{align*}
		Hence, using \autoref{obs:semantic-vs-syntactic},
		\[
		\tau r_2 \le \ranksem(g_{j_1} + g_{j_2}) \le \dist(D_{j_1}, D_{j_2})\le 2^8 k^2 \log k \cdot r_2 + R_M(2k)
		\]
		 in contradiction. Thus, the sum $\sum_j \sum_p T_{j,p}$ contains multiplication gates from a single $D_j$. This implies the claim.
	\end{proof}

	\begin{corollary}\label{cor:max-clus-min-rank}
		Let  $\tau > R_M(2k) + 2^{10} k^2 \log k$.
		If $f=\sum f_i$ is a  $(\tau,r_1)$-semantic partition that has the largest number of clusters among all $\tau$-semantic partitions of $f$, then for every other $(\tau,r_2)$-semantic partition $f=\sum g_i$ we have that $r_1\leq r_{2}$.
	\end{corollary}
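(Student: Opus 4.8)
The plan is to prove \autoref{cor:max-clus-min-rank} in exactly the way its syntactic analogue \autoref{cor:max-clus-min-rank-syn} was proved, replacing the syntactic refinement claim \autoref{cla:low-rank-finer-part-syn} with its semantic counterpart \autoref{cla:low-rank-finer-part}. First I would check that the hypothesis $\tau > R_M(2k) + 2^8 k^2 \log k$ is strong enough to invoke \autoref{cla:low-rank-finer-part}, whose requirement is the weaker $\tau > 40 k^2 \log k + k^2$ (this holds since $2^8 > 40$ and, for $k \ge 2$, $2^8 k^2 \log k > k^2$; for $k=1$ the class $\Sps{1}$ admits a unique partition and there is nothing to prove). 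I would also fix once and for all minimal multilinear $\Sps{k}$ circuits $C$ and $D$ realizing the two partitions $(f_1,\dots,f_s)$ and $(g_1,\dots,g_{s'})$, as is implicit in the notion of a realizable partition, and adopt the convention that $r_1 = \max_i \ranksem(f_i)$ and $r_2 = \max_i \ranksem(g_i)$ are the minimal parameters with respect to which these qualify as $(\tau,\cdot)$-semantic partitions — this is needed so that two partitions which coincide carry the same parameter.

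Next, suppose towards a contradiction that $r_1 > r_2$. Since in particular $r_1 \ge r_2$, \autoref{cla:low-rank-finer-part} yields, for every $i \in [s]$, a subset $S_i \subseteq [s']$ with $f_i = \sum_{j \in S_i} g_j$, where the $S_i$ partition $[s']$. Because $C$ is minimal and $f \ne 0$, each cluster $C_i$ computes a nonzero polynomial, so each $S_i$ is nonempty, and therefore $s' = \sum_{i=1}^s |S_i| \ge s$. The maximality hypothesis — $(f_1,\dots,f_s)$ has the largest number of clusters among all $\tau$-semantic partitions of $f$ — forces $s \ge s'$, hence $s = s'$ and $|S_i| = 1$ for every $i$. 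By \autoref{cor:same-num-semantic} there is a permutation $\pi$ of $[s]$ with $f_i = g_{\pi(i)}$, and consequently $r_1 = \max_i \ranksem(f_i) = \max_i \ranksem(g_{\pi(i)}) = r_2$, contradicting $r_1 > r_2$.

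The argument is short and essentially combinatorial once \autoref{cla:low-rank-finer-part} is available, so I do not anticipate a serious obstacle. The only points that demand a little care are bookkeeping ones: confirming the parameter ranges line up so that \autoref{cla:low-rank-finer-part} is applicable, ensuring the realizing circuits are minimal (both so the claim applies and so each $f_i$ is nonzero, which is what makes the cluster-counting inequality $s' \ge s$ strict enough to use), and being precise about what ``the parameter $r$ of a partition'' means so that identical partitions are seen to carry identical $r$ — the step where the contradiction is actually extracted.
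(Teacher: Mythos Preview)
Your proposal is correct and follows essentially the same route as the paper's one-line proof. The only difference is that the paper invokes \autoref{lem:lower-rank-finer-partition} (which gives a circuit-level refinement and whose hypothesis matches the stated bound on $\tau$ exactly), whereas you invoke the polynomial-level refinement \autoref{cla:low-rank-finer-part}; for the purpose of comparing cluster counts either suffices, and your explicit handling of the convention $r=\max_i\ranksem(f_i)$ and of the nonvanishing of the $f_i$ fills in details the paper leaves implicit.
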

	
	\begin{proof}
		If it was the other case then \autoref{lem:lower-rank-finer-partition} would give that there must be more clusters in $\{g_i\}$. 
	\end{proof}

	\begin{corollary}[Uniqueness of maximal partition regardless of representation]\label{cor:unique-semantic}
		Let $\tau > R_M(2k) + 2^{10} k^2 \log k$.
		Let $C, D$ be two minimal multilinear $\Sps{k}$ circuits computing the same polynomial $f$. Let $(C_1, \ldots, C_s)$ be a $\tau$-semantic partition of the multiplication gates in $C$ of minimal semantic rank. Similarly let $(D_1, \ldots, D_{s'})$ be a $\tau$-semantic partition of  $D$ of minimal semantic rank.
		Then $s=s'$ and there is a permutation $\pi$ such that for every $i\in[s]$, $[C_i]=[D_{\pi(i)}]$.	
	\end{corollary}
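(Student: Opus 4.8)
The plan is to deduce the statement from \autoref{lem:lower-rank-finer-partition} and \autoref{cla:low-rank-finer-part} together with the minimality of the semantic ranks, which will force the two rank parameters to coincide. Write $r_1 = \max_{i \in [s]} \ranksem([C_i])$ and $r_2 = \max_{i \in [s']} \ranksem([D_i])$ for the semantic ranks of the two given partitions; since the claim is symmetric in $C$ and $D$, we may assume $r_1 \ge r_2$. Note that our hypothesis $\tau > R_M(2k) + 2^8 k^2 \log k$ is exactly the bound required by \autoref{lem:lower-rank-finer-partition}, and it also implies the weaker bound $\tau > 40 k^2 \log k + k^2$ needed for \autoref{cla:low-rank-finer-part} and \autoref{cor:same-num-semantic}.

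First I would apply \autoref{lem:lower-rank-finer-partition} to $C$ and $D$: this yields a refinement $(C'_1, \ldots, C'_{s'})$ of the gate partition $(C_1, \ldots, C_s)$ such that, up to reordering, $[C'_i] = [D_i]$ for all $i \in [s']$; in particular $s' \ge s$. The key observation is that $(C'_1, \ldots, C'_{s'})$ is itself a realizable $\tau$-semantic partition of $[C]$: it is realizable by $C$ since a refinement of a gate partition of $C$ is again a gate partition of $C$, each part has $\ranksem([C'_i]) = \ranksem([D_i]) \le r_2$, and for $i \neq j$ we have $\ranksem([C'_i] + [C'_j]) = \ranksem([D_i] + [D_j]) \ge \tau r_2 \ge \tau \cdot \max_t \ranksem([C'_t])$, so it is a $(\tau, r')$-semantic partition with $r' := \max_t \ranksem([C'_t]) \le r_2 \le r_1$. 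Since $(C_1, \ldots, C_s)$ was chosen to have \emph{minimal} semantic rank among all $\tau$-semantic partitions of $C$, this forces $r_1 \le r'$, and therefore $r_1 = r' = r_2$.

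Once $r_1 = r_2$, the inequality $r_1 \ge r_2$ in \autoref{cla:low-rank-finer-part} also holds with the roles of $C$ and $D$ interchanged, so applying that claim in the reverse direction gives $s \ge s'$; combined with $s' \ge s$ from the first step we get $s = s'$. Finally, with $s = s'$ and $r_1 \ge r_2$ in hand, \autoref{cor:same-num-semantic} provides a permutation $\pi$ of $[s]$ with $[C_i] = [D_{\pi(i)}]$ for every $i$, completing the proof.

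I do not expect a genuine obstacle here; the statement is essentially a corollary of the machinery built in \autoref{sec:unique-semantic-partition}. The one point that requires care is the parameter bookkeeping in the middle paragraph: one must be precise that ``minimal semantic rank'' refers to minimizing $\max_i \ranksem([C_i])$ over all \emph{realizable} $\tau$-semantic partitions of $C$, verify that the refinement output by \autoref{lem:lower-rank-finer-partition} is indeed realizable by $C$, and check that its distance parameter is still at least $\tau$ times its (possibly smaller) rank --- all of which hold, but all of which are needed to legitimately invoke minimality.
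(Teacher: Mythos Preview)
Your proof is correct and follows essentially the same approach as the paper. The paper's proof is a one-liner citing \autoref{cor:max-clus-min-rank} and \autoref{lem:lower-rank-finer-partition}; you unpack the same mechanism directly, using the refinement from \autoref{lem:lower-rank-finer-partition} to force $r_1=r_2$ via minimality, and then \autoref{cla:low-rank-finer-part} and \autoref{cor:same-num-semantic} to conclude $s=s'$ and obtain the permutation. The careful check that the refinement is realizable in $C$ and is itself a $(\tau,r')$-semantic partition is exactly the point that justifies invoking minimality, and you handle it correctly.
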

	
	\begin{proof}
		The claim follows immediately from \autoref{cor:max-clus-min-rank} and  \autoref{lem:lower-rank-finer-partition}.
	\end{proof}


	\subsubsection{An Algorithm for Computing Partitions}
	
	In this section we provide an algorithm for constructing $(\tau,r)$ partitions (either syntactic or semantic).
	
	In \autoref{lem:syntactic-clustering} we saw an algorithm,  by Karnin and Shpilka \cite{KarninShpilka09}, for computing $(\tau,r)$-syntactic partitions.
	We now provide an algorithm for constructing $(\tau,r)$-semantic partitions. Our algorithm is quite straightforward, it  simply checks all possible partitions and picks the best one. The guarantee on the output of the algorithm follow from \autoref{lem:syntactic-clustering} and the relations between syntactic and semantic rank.

	We first note that computing the semantic rank of a polynomial $f$ in $\Sps{k}$ can be done in randomized polynomial time given black box access to $f$.
	
	\begin{lemma}
		\label{lem:compute-semantic-rank}
		There exists a randomized polynomial time algorithm that, given black box access to a polynomial $f \in \Sps{k}$ computes $\ranksem(f)$.
	\end{lemma}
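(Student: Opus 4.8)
The plan is to reduce the computation of $\ranksem(f)$ to counting essential variables, which is handled by \autoref{lem:essential-vars}. Recall from \autoref{def:sem-rank} that $\ranksem(f)$ is the least $r$ such that $h := f / \Lin(f)$ depends on $r$ linear functions, and that the minimal number of linear functions a polynomial depends on is exactly its number of essential variables: a minimal linearly independent set of such linear functions can be completed to a basis, and conversely the first $m$ rows of the inverse change of basis witness that $h$ depends on $m$ linear functions. Hence it suffices to (i) simulate black box access to $h$, and then (ii) run the algorithm of \autoref{lem:essential-vars} on $h$ and output the number of variables that $h(A\vx)$ depends on, for the returned matrix $A$.

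For step (i): since $f \in \Sps{k}$ has degree $d$, it is computed by a circuit of size $O(kdn)$, so by the classical results on multivariate factorization of polynomials presented by black boxes (Kaltofen; Kaltofen and Trager) there is a randomized polynomial-time algorithm that, given black box access to $f$, produces black boxes (indeed, circuits of size $\poly(n,d,k)$) for the irreducible factors of $f$ together with their multiplicities. The degree-one factors are recovered explicitly, so we obtain $\Lin(f)$ explicitly, and $h$ is a factor of $f$ and hence also computed by a circuit of size $\poly(n,d,k)$. We can then evaluate $h = f/\Lin(f)$ at any point $\va$: pick a generic direction $\vec v$, interpolate the univariate polynomial $t \mapsto f(\va + t\vec v)$ of degree at most $d$ from $d+1$ queries to $f$, perform the exact division by the explicitly known univariate polynomial $t \mapsto \Lin(f)(\va + t\vec v)$ (which is not identically zero for generic $\vec v$, as every linear factor has a nonzero homogeneous part), and evaluate the quotient at $t=0$. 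Alternatively one may directly use the black box for $h$ output by the factorization routine.

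For step (ii): $h$ has no linear factors and is computed by a circuit of size $\poly(n,d,k)$, so \autoref{lem:essential-vars} applies and, in randomized $\poly(n,d,k)$ time, returns an invertible $A$ with $h(A\vx)$ depending only on $x_1, \ldots, x_m$; we output $m$. By the reduction above, $m$ equals the number of essential variables of $h$, which is $\ranksem(f)$. Throughout we use the standing assumptions that the field is large enough (for the interpolation and for \autoref{lem:essential-vars}) and that $\Char(\F)=0$ or $\Char(\F)>d$, which is exactly what makes ``number of linear functions depended on'' agree with ``number of essential variables'' via the rank of the partial derivative matrix $M_h$ (as noted after \autoref{def:partial-derivative-matrix}).

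The only substantive ingredient is step (i) — obtaining efficient black box access to $h = f/\Lin(f)$ — which rests on black-box polynomial factorization together with the fact that factors of small circuits have small circuits; the remaining pieces, namely the identification of $\ranksem(f)$ with the number of essential variables of $h$ and the invocation of \autoref{lem:essential-vars}, are routine.
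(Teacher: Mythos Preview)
Your proposal is correct and takes essentially the same approach as the paper: factor out the linear part via Kaltofen--Trager to obtain black-box access to $h = f/\Lin(f)$, then compute the number of essential variables of $h$ (equivalently, $\rank(M_h)$) using Kayal's method. The paper phrases the second step as directly computing $\rank(M_g)$ via \cite{Kayal11}, while you invoke \autoref{lem:essential-vars} and read off $m$, but these are the same underlying computation.
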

	
	\begin{proof}
		We start by using the Kaltofen-Trager black box factorization algorithm \cite{KT90} in order to factor out the linear factor of $f$. Write $f=P \cdot g$ where $P$ is the product of the linear factors of $f$. Then $\ranksem(f)=\rank(M_g)$ (recall \autoref{def:partial-derivative-matrix}). Thus it remains to compute the rank of the matrix $M_g$, which is the dimension of the set of first order partial derivatives of $g$. This can be done in randomized polynomial time as in (for example) Lemma 4.1 of \cite{Kayal11}.
	\end{proof}
		
	We now describe the semantic clustering algorithm.
	
	\begin{algorithm}[H]
		\caption{: Semantic clustering algorithm}
		\begin{algorithmic}[1]
			\Require{White box access to a $\Sps{k}$ circuit, $C$, and a parameter $\tau$.}
			\Ensure{A $(\tau,r)$ partition of $C$.}
			\State{$r=\infty$}
			\For{every partition $(C_1, \ldots, C_s)$ of the $k$ multiplication gates of $C$} \label{line:part-mainloop}
			\State{Compute the semantic ranks $r_i$ of $[C_i]$ and the distances $d_{i,j} = \ranksem([C_i], [C_j])$}
			\State{Set $r'=\max_{i \in [s]} r_i$}
			\If{for all $i,j$, $d_{i,j} \ge \tau r'$ and $r' < r$}
			\State{Set $r:=r'$ and save the partition}
			\EndIf
			\EndFor
			\State{Return $r$ and the saved partition}
		\end{algorithmic}
		\label{algo:semantic-clustering-algorithm}
	\end{algorithm}
	
	Recall that \autoref{lem:syntactic-clustering} implies that the Karnin-Shpilka \emph{syntactic} clustering algorithm returns \emph{syntactic} clusters. This allows us to obtain some guarantees on the output of \autoref{algo:semantic-clustering-algorithm}.
	
	\begin{claim}
		\label{cl:semantic-clusters}
		For every $\tau$,
		\autoref{algo:semantic-clustering-algorithm} runs in time at most $2^{k^2} \cdot \poly(n)$ and outputs a $(\tau,r)$ partition where 
		\[
		r \le R_M(2k) \cdot k^{{\lceil \log_k (\tau \cdot 2^7 k^2 \log k) \rceil} \cdot (k-2)}\leq R_M(2k) \cdot 2^{7k} k^{4k} \tau^{k-2}.
		\]
	\end{claim}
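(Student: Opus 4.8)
The plan is to separate the claim into three parts: (i) the running time; (ii) the fact that whatever the algorithm outputs is genuinely a $(\tau,r)$-semantic partition; and (iii) the stated upper bound on $r$, which is the only non-immediate part. For (i), I would note that the main loop of \autoref{algo:semantic-clustering-algorithm} ranges over the partitions of the $k$ multiplication gates of $C$, and the number of these is the Bell number $B_k \le k^k \le 2^{k^2}$. Within each iteration we compute at most $k$ semantic ranks and at most $\binom{k}{2}$ pairwise semantic distances; every polynomial involved — each $[C_i]$ and each $[C_i]+[C_j]$ — is computed by a subcircuit of $C$ with at most $k$ multiplication gates, hence lies in $\Sps{k}$, so by \autoref{lem:compute-semantic-rank} each such quantity is computable in randomized $\poly(n)$ time. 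Thus the total running time is $2^{k^2}\cdot\poly(n)$.

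For (ii), one inspects the algorithm: when it updates the saved partition it stores a partition $(C_1,\ldots,C_s)$ of the gates of $C$ together with the value $r=\max_i \ranksem([C_i])$, and it does so only when $\ranksem([C_i]+[C_j])\ge \tau r$ for all $i\ne j$. Since $[C]=f$, this partition of the gates of $C$ is realizable, and the two inequalities above are precisely the defining conditions of a $(\tau,r)$-semantic partition in \autoref{def:tau-r-semantic}. Hence the output, whenever the loop updates at least once, is a $(\tau,r)$-semantic partition; that the loop does update is guaranteed by the candidate exhibited in step (iii).

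For (iii), set $\sigma\eqdef\tau\cdot 2^7 k^2\log k$ and run the Karnin--Shpilka syntactic clustering algorithm of \autoref{lem:syntactic-clustering} on $C$ with parameter $\sigma$. This returns a $(\sigma,\rho)$-syntactic partition $(C_1,\ldots,C_s)$ of $C$ with $\rho\le R_M(2k)\cdot k^{(k-2)\lceil\log_k\sigma\rceil}$. By \autoref{cor:syn-is-sem} the induced partition $([C_1],\ldots,[C_s])$ is a $(\sigma/(2^7 k^2\log k),\rho)=(\tau,\rho)$-semantic partition of $[C]$, so $\max_i\ranksem([C_i])\le\rho$ and $\ranksem([C_i]+[C_j])\ge\tau\rho$ for all $i\ne j$. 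When the main loop reaches this partition it passes the acceptance test (its value $r'=\max_i\ranksem([C_i])\le\rho$, and all distances are $\ge\tau\rho\ge\tau r'$), so the saved value of $r$ ends at most $\rho$. Finally, using $\lceil\log_k\sigma\rceil\le\log_k\sigma+1$ and $\log k\le k$,
\[
k^{(k-2)\lceil\log_k\sigma\rceil}\le k^{k-2}\sigma^{k-2}=k^{k-2}(\tau\cdot 2^7 k^2\log k)^{k-2}\le 2^{7k}k^{4k}\tau^{k-2},
\]
which yields $r\le R_M(2k)\cdot 2^{7k}k^{4k}\tau^{k-2}$. The only real subtlety is the parameter bookkeeping in step (iii): one must run the syntactic clustering with the inflated parameter $\sigma=\tau\cdot 2^7 k^2\log k$ exactly so that \autoref{cor:syn-is-sem} converts it back into a $\tau$-semantic partition whose ranks and distances satisfy the algorithm's acceptance test; everything else is routine.
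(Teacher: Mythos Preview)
Your proposal is correct and follows essentially the same approach as the paper: both run the syntactic clustering algorithm of \autoref{lem:syntactic-clustering} with the inflated parameter $\kappa=\tau\cdot 2^7 k^2\log k$, then invoke \autoref{cor:syn-is-sem} to convert the resulting syntactic partition into a $(\tau,\rho)$-semantic partition that the brute-force loop will accept, and bound the running time via \autoref{lem:compute-semantic-rank} and the number of partitions of $[k]$. Your write-up is simply more explicit about the acceptance test and the final numerical estimate than the paper's terse version.
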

	
	\begin{proof}
		By \autoref{lem:syntactic-clustering} with $\kappa=\tau \cdot 2^7 k^2 \log k$, 
		there exist a $(\kappa,r)$ \emph{syntactic} partition $(C_1, \ldots, C_s)$ where $r \le R_M(2k) \cdot k^{{\lceil \log_k (\kappa) \rceil} \cdot (k-2)}\leq  (k\kappa)^{k-2} \cdot R_M(2k)$. \autoref{cor:syn-is-sem} implies that it is also a $(\tau,r)$ \emph{semantic}-partition. Thus, the algorithm will find at least one $\tau$-semantic partition and it clearly outputs the one with the minimal rank.
		The statement on the running time follows from \autoref{lem:compute-semantic-rank}.
	\end{proof}

	\begin{remark}
		\label{rem:alg-semantic-unique}
		Note that \autoref{cor:unique-semantic} shows that the semantic partition with the minimal semantic rank is unique regardless of the representation. Hence the output of \autoref{algo:semantic-clustering-algorithm} does not depend on the circuit $C$ but only on the polynomial $f$ it computes.
	\end{remark}
	
	
	\subsubsection{Semantic Partitions under Restrictions}
	
	\autoref{cor:unique-semantic} proves that any maximal semantic partition is unique. However, in our reconstruction algorithm we shall consider restrictions of the unknown polynomial to subsets of the variables. Hence, we will need a stronger property that is analogous to the one proved in \autoref{thm:unique-syn-part}.
	We start with the obvious fact that restrictions can't increase the semantic rank (recall \autoref{def:restriction}).
	
	\begin{claim}
		\label{cl:restriction-rank}
		If $\ranksem(f|_{B,\va})\geq t$ then $\ranksem(f)\geq t$.
	\end{claim}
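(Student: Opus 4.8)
The statement to prove is: if $\ranksem(f|_{B,\va}) \ge t$, then $\ranksem(f) \ge t$. The natural strategy is to take the contrapositive and show directly that $\ranksem(f|_{B,\va}) \le \ranksem(f)$. The plan is to start from the decomposition $f = \Lin(f) \cdot g$, where $g$ has no linear factors, and let $r = \ranksem(f) = \ranksem(g)$, so that $g$ depends on only $r$ linear functions; write $g(\vx) = h(\ell_1(\vx), \ldots, \ell_r(\vx))$ for some $\ell_1, \ldots, \ell_r$ and some polynomial $h$. Restricting to $B$ by fixing $x_j = \va_j$ for $j \notin B$ gives $f|_{B,\va} = \Lin(f)|_{B,\va} \cdot g|_{B,\va}$, and $g|_{B,\va}(\vx) = h(\ell_1|_{B,\va}(\vx), \ldots, \ell_r|_{B,\va}(\vx))$, so $g|_{B,\va}$ still depends on at most $r$ linear functions (the restricted $\ell_i|_{B,\va}$ are still affine). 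Hence the ``non-linear part'' of $g|_{B,\va}$ depends on at most $r$ linear functions.

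The subtlety is that $f|_{B,\va}$ may acquire \emph{new} linear factors not present in $f$: a restriction can, for instance, turn an irreducible quadratic into a product of two linear forms, or $\Lin(f)|_{B,\va}$ may pick up extra structure. But this only helps us: $\Lin(f|_{B,\va})$ is by definition the product of \emph{all} linear factors of $f|_{B,\va}$, which includes whatever comes from $\Lin(f)|_{B,\va}$ together with any linear factors of $g|_{B,\va}$. Therefore $f|_{B,\va} / \Lin(f|_{B,\va})$ is a further quotient of $g|_{B,\va}$ by some of its linear factors, and removing linear factors cannot increase the number of linear functions a polynomial depends on (equivalently, cannot increase the rank of the partial derivative matrix). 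Thus $\ranksem(f|_{B,\va}) \le r = \ranksem(f)$, which is exactly the contrapositive of the claim.

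The one point that needs a small argument is the assertion that dividing a polynomial by one of its linear factors does not increase its semantic rank. If $p = \ell \cdot q$ and $p$ depends on $m$ linear functions $\lambda_1, \ldots, \lambda_m$, one should argue that $q$ depends on at most $m$ linear functions as well — the cleanest route is to observe that $p$ lies in the ring generated by $\lambda_1, \ldots, \lambda_m$ after an invertible change of coordinates sending these to $y_1, \ldots, y_m$, so $p$ becomes a polynomial in $y_1, \ldots, y_m$ only; then its factor $\ell$ (which divides $p$) must also be a polynomial in $y_1, \ldots, y_m$ only, and hence so is $q = p / \ell$, giving the bound. I expect this sub-step to be the only place requiring care; the rest is just unwinding \autoref{def:sem-rank} and the definition of restriction. (One should also handle the degenerate case where $f|_{B,\va}$ is itself a product of linear functions, where $\ranksem$ is $0$ by \autoref{rem:semantic-rank-0}, in which case the claim is trivial since $t$ must be $0$.)
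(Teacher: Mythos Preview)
Your proof is correct. The paper does not actually supply a proof for this claim: it simply introduces it as ``the obvious fact that restrictions can't increase the semantic rank'' and moves on. Your write-up is a faithful unpacking of why the fact is obvious, and the only substantive step you flag --- that a linear factor $\ell$ of a polynomial $p$ depending on $y_1,\ldots,y_m$ must itself lie in $\F[y_1,\ldots,y_m]$ --- follows from unique factorization (the irreducible factorization of $p$ computed in $\F[y_1,\ldots,y_m]$ remains the irreducible factorization in $\F[y_1,\ldots,y_n]$, so every irreducible divisor of $p$, in particular $\ell$, is already in the smaller ring up to a unit). One tiny cosmetic point: in your final paragraph you say dividing by a linear factor ``does not increase its semantic rank,'' but what you actually need and prove there is that it does not increase the number of linear functions the polynomial depends on; the semantic rank (which already strips all linear factors) is a downstream consequence.
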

	
	The next claim shows that for every multilinear polynomial, $f\in\Sps{k}$, there exists a $({\tau_1},r)$-semantic partition with the special property that its rank bound, $r$, is upper bounded as a function of $\tau_0$, which is much smaller than ${\tau_1}$.
	The claim and its proof are completely analogous to that of \autoref{cl:partition-two-taus-syn} and so we omit the proof.
	
	\begin{claim}
		\label{cl:partition-two-taus}
		For every function $\varphi: \mathbb{N} \to \mathbb{N}$, every $\tau_\text{min} \in\mathbb{N}$ and for every multilinear $f \in \Sps{k}$ there is $\tau_\text{min}\leq \tau_0 \le \tau(k) = R_M(2k) ^{ \varphi(k)^k}$ such that there is a $({\tau_1}, r)$-semantic partition of $f$ with:
		\begin{itemize}
			\item ${\tau_1} = \tau_0^{\varphi(k)}$.
			\item $r \le  R_M(2k) 2^{7k} k^{4k} \tau_0^{k-2}$. 
		\end{itemize}
	\end{claim}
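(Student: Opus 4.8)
The plan is to follow the proof of \autoref{cl:partition-two-taus-syn} essentially verbatim, with the syntactic clustering algorithm of \autoref{lem:syntactic-clustering} replaced by the semantic clustering algorithm \autoref{algo:semantic-clustering-algorithm} (together with its guarantee \autoref{cl:semantic-clusters}), and \autoref{cla:low-rank-finer-part-syn} replaced by its semantic analogue \autoref{cla:low-rank-finer-part}. We may assume $\tau_\text{min}$ is at least the threshold $40k^2\log k + k^2$ required by \autoref{cla:low-rank-finer-part} (and, for the stated bound $\tau(k)$, at most $R_M(2k)$), and we fix any minimal multilinear $\Sps{k}$ circuit $C$ computing $f$; minimality is without loss of generality, since one can always delete a vanishing proper subset of multiplication gates.

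Concretely, I would iterate as follows. Set $\kappa_0 := \tau_\text{min}$. At stage $i$, run \autoref{algo:semantic-clustering-algorithm} on $C$ with parameter $\kappa_i$; by \autoref{cl:semantic-clusters} this returns the $\kappa_i$-semantic partition of $C$ of minimal semantic rank $r_i$, which moreover satisfies $r_i \le R_M(2k)\cdot 2^{7k}k^{4k}\kappa_i^{k-2}$. If this partition is also a $(\kappa_{i+1}, r_i)$-semantic partition, where $\kappa_{i+1} := \kappa_i^{\varphi(k)}$, then stop and output it with $\tau_0 := \kappa_i$ and $\tau_1 := \kappa_{i+1}$; the rank bound $r \le R_M(2k)\,2^{7k}k^{4k}\tau_0^{k-2}$ is then exactly \autoref{cl:semantic-clusters} applied with parameter $\tau_0$. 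Otherwise, continue to stage $i+1$.

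The crux is that this process halts within $k$ stages, and this is where the semantic machinery enters. Since $\kappa_{i+1}\ge\kappa_i$, any $(\kappa_{i+1},r)$-semantic partition is also a $(\kappa_i,r)$-semantic partition (a larger separation multiplier is a stronger requirement), so minimality of $r_i$ forces $r_{i+1}\ge r_i$. Both the stage-$i$ and the stage-$(i+1)$ partitions are in particular $\tau_\text{min}$-semantic partitions realizable by $C$, so \autoref{cla:low-rank-finer-part} (invoked with $C=D$, and with $r_1=r_{i+1}\ge r_2=r_i$) shows that the stage-$i$ partition refines the stage-$(i+1)$ one. If the process has not stopped at stage $i$, the two partitions are distinct — the stage-$(i+1)$ partition is a $\kappa_{i+1}$-partition while the stage-$i$ one is not — so the refinement is proper and the number of clusters strictly decreases. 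Starting from at most $k$ clusters, the process therefore terminates at some stage $j\le k$, whence $\tau_\text{min}\le\tau_0=\kappa_j=\tau_\text{min}^{\varphi(k)^j}\le\tau_\text{min}^{\varphi(k)^k}\le\tau(k)$, and $\tau_1=\tau_0^{\varphi(k)}$, as required.

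I expect the only genuinely fiddly points to be (i) checking the monotonicity $r_{i+1}\ge r_i$ carefully, since the whole termination count rests on it, and (ii) verifying that the hypotheses of \autoref{cla:low-rank-finer-part} — minimality of the circuit, realizability of both partitions, the threshold on $\tau$, and the ordering $r_1\ge r_2$ — really hold at every stage; realizability is immediate because \autoref{algo:semantic-clustering-algorithm} outputs partitions of the multiplication gates of $C$ itself. Everything else is routine bookkeeping with the bounds already established.
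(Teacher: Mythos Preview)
Your proposal is correct and follows essentially the same approach as the paper, which explicitly states that the proof is ``completely analogous to \autoref{cl:partition-two-taus-syn}'' and omits it; you have made precisely the expected substitutions (\autoref{algo:semantic-clustering-algorithm} with \autoref{cl:semantic-clusters} in place of \autoref{lem:syntactic-clustering}, and \autoref{cla:low-rank-finer-part} in place of \autoref{cla:low-rank-finer-part-syn}). Your caveats about the threshold on $\tau_\text{min}$ and the stated upper bound $\tau(k)=R_M(2k)^{\varphi(k)^k}$ (which implicitly assumes $\tau_\text{min}\le R_M(2k)$, even though later applications take $\tau_\text{min}$ larger) are well-observed and reflect minor imprecisions in the paper's statement rather than gaps in your argument.
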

	
	\autoref{cl:partition-two-taus} implies a useful inclusion property of clusters in any partition of a restriction of $f$.
	
	\begin{claim}
		\label{cl:cluster-refinement-in-B}
		Let $f\in\Sps{k}$  be a multilinear polynomial. Let $f = \sum_{i=1}^s f_i$ be the semantic partition guaranteed in \autoref{cl:partition-two-taus} for  $\varphi(k)\geq 1$ and $\tau_\text{min}=R_M(2k) \cdot 2^{7k+12} \cdot k^{4k+3}$, with parameters $(\tau_0, {\tau_1})$ and rank $r$. Set $\tau=\tau_0^k$ as in \autoref{thm:unique-syn-part}.  
		
		Let $B \subseteq [n]$ and $\va \in \F^n$. Let $f|_{B,\va}=\sum_{i=1}^{s'} g_i $ be a maximal ${\tau}$-semantic partition of $f|_{B,\va}$, with rank $r_B$ and let $D$ be a minimal circuit computing $f|_{B,\va}$ with $(D_1, \ldots, D_{s'})$ its corresponding partition.
		
		Then, $s' \le s$ and for every $i \in [s']$ there's a subset $S_i \subseteq [s]$ such that $D_i = \sum_{j \in S_i} (C_j)|_{B,\va}$.
	\end{claim}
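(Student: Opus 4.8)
The plan is to compare the minimal circuit $D$ (which realizes the maximal $\tau$-semantic partition $\fb{B}{\va}=\sum_i g_i$) with $\gb{C}{B}{\va}$, where $C=\sum_j C_j$ is the minimal $\Sps{k}$ circuit realizing the partition $f=\sum_j f_j$ from \autoref{cl:partition-two-taus} (so $[C_j]=f_j$). Two rank bounds drive everything. First, since $C$ is minimal and $\ranksem(f_j)\le r$, \autoref{lem:small-semantic-implies-small-syntactic-rank} gives $\ranksyn(C_j)\le 2^7 r k^2\log k$, and since fixing variables cannot increase syntactic rank, $\ranksyn(\gb{C_j}{B}{\va})\le 2^7 r k^2\log k$ as well; likewise $\ranksyn(D_i)\le 2^7 r_B k^2\log k$ for every $i$. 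Second, and crucially, $\tau=\tau_0^k\ge\tau_{\text{min}}^k$ is enormous: combining the bound $r\le R_M(2k)2^{7k}k^{4k}\tau_0^{k-2}$ from \autoref{cl:partition-two-taus} with $\tau_0\ge\tau_{\text{min}}=R_M(2k)2^{7k+12}k^{4k+3}$, a short computation shows $\tau r_B$ dominates every quantity of the form $k^{O(1)}(r+r_B+R_M(2k))$ that will appear below (we may assume $r_B\ge 1$, cf.\ \autoref{rem:semantic-rank-0}; if $r_B=0$ the conclusion is trivial). These estimates are of exactly the same flavor as those carried out in the proof of \autoref{thm:unique-syn-part}.

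Now consider $E=\gb{C}{B}{\va}-D$, a $\Sps{\le 2k}$ circuit computing zero, and write $E=E_1+\dots+E_m$ where the $E_\ell$ are minimal subcircuits (obtained by repeatedly extracting a minimal zero subcircuit), so every multiplication gate of $\gb{C}{B}{\va}$ and of $D$ lies in exactly one $E_\ell$ and, by \autoref{thm:rank bound ml}, $\ranksyn(E_\ell)\le R_M(2k)$. The first step is that each $E_\ell$ meets at most one $D$-cluster: if $T_1\in D_{i_1}$, $T_2\in D_{i_2}$ with $i_1\ne i_2$ both lay in some $E_\ell$, then by the triangle inequality for $\dist$ (Lemma 2.17 of \cite{KarninShpilka09}), by the definition of a $(\tau,r_B)$-semantic partition, and by \autoref{obs:semantic-vs-syntactic},
\[
\tau r_B\le \ranksem(g_{i_1}+g_{i_2})\le \dist(D_{i_1},D_{i_2})\le \ranksyn(D_{i_1})+\ranksyn(E_\ell)+\ranksyn(D_{i_2})\le 2^8 r_B k^2\log k+R_M(2k),
\]
contradicting the size of $\tau$. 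Label each $E_\ell$ containing a $D$-gate by the index $i(\ell)\in[s']$ of the unique $D$-cluster it meets; each such $E_\ell$ also contains a $\gb{C}{B}{\va}$-gate, since otherwise a nonempty proper subset of $D$'s gates (proper because $[D]=\fb{B}{\va}\ne 0$) would sum to zero, contradicting minimality of $D$.

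For $i\in[s']$ put $S_i=\{\,j\in[s]:C_j\text{ has a gate in some }E_\ell\text{ with }i(\ell)=i\,\}$. A symmetric estimate shows the $S_i$ are pairwise disjoint: if $j\in S_{i_1}\cap S_{i_2}$ with $i_1\ne i_2$, pick $M_a\in C_j$ in $E_{\ell_a}$ with $i(\ell_a)=i_a$ and $T_a\in D_{i_a}\cap E_{\ell_a}$, and use that $\gb{M_1}{B}{\va},\gb{M_2}{B}{\va}$ both lie in the cluster $\gb{C_j}{B}{\va}$ of syntactic rank $\le 2^7 r k^2\log k$ to get
\[
\tau r_B\le\dist(D_{i_1},D_{i_2})\le \ranksyn(D_{i_1})+\ranksyn(E_{\ell_1})+\ranksyn(\gb{M_1}{B}{\va}+\gb{M_2}{B}{\va})+\ranksyn(E_{\ell_2})+\ranksyn(D_{i_2})\le 2^8 r_B k^2\log k+2R_M(2k)+2^7 r k^2\log k,
\]
again contradicting the size of $\tau$. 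Each $S_i$ is nonempty (an $i$-labeled $E_\ell$ exists, since $g_i\ne 0$, and it contains a $\gb{C}{B}{\va}$-gate), so disjointness gives $s'\le s$. Finally fix $i$: since each $E_\ell$ computes zero, the sums of its $\gb{C}{B}{\va}$-gates and of its $D$-gates agree; summing over $E_\ell$ with $i(\ell)=i$ (whose $D$-gates are precisely those of $D_i$) yields $[D_i]=\sum_{\ell:\,i(\ell)=i}\sum_{M\in E_\ell\cap\,\gb{C}{B}{\va}}[M]$, and by the definition of $S_i$ together with disjointness the $\gb{C}{B}{\va}$-gates occurring here are exactly the gates of $\bigcup_{j\in S_i}C_j$ that do not lie in a \emph{parasitic} $E_\ell$, i.e.\ one containing no $D$-gate at all. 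If every parasitic $E_\ell$ is contained in a single cluster $C_j$, then each such $E_\ell$ (computing zero) may be deleted without changing any $[\gb{C_j}{B}{\va}]$, and we obtain $[D_i]=\sum_{j\in S_i}[\gb{C_j}{B}{\va}]$, with any $j$ for which $[\gb{C_j}{B}{\va}]=0$ absorbed arbitrarily so that the $S_i$ partition $[s]$.

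The main obstacle is exactly this last claim about parasitic cancellations — minimal zero subcircuits of $\gb{C}{B}{\va}$ created by the substitution. One cannot simply re-run the first step with $\gb{C}{B}{\va}$ and the partition $(\gb{C_j}{B}{\va})_j$ in place of $D$, because restriction can collapse the syntactic distance between two distinct restricted clusters $\gb{C_{j_1}}{B}{\va},\gb{C_{j_2}}{B}{\va}$ essentially to zero. Instead I would group the gates of a parasitic subcircuit by their $C$-clusters, obtaining a minimal zero combination $\sum_{t=1}^u N_t|_{B,\va}=0$ of at most $k$ sub-cluster circuits, each of syntactic rank $\le 2^7 r k^2\log k$ as a subcircuit of some $C_j$; using \autoref{thm:rank bound ml} and \autoref{lem:small-syn-rank-t} to bound the syntactic rank of $\sum_{t\ge 3}N_t|_{B,\va}$, hence of $N_1|_{B,\va}+N_2|_{B,\va}$, against the lower bound $\ranksem(f_{j_1}+f_{j_2})\ge\tau_1 r$ coming from $f=\sum_j f_j$ (and \autoref{cl:restriction-rank} to pass between $f$ and its restriction), the enormous value of $\tau_1$ (equivalently $\tau_0$, which is at least $\tau_{\text{min}}$) forces $u=1$. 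The residual inequalities are again elementary and mirror those in the proof of \autoref{thm:unique-syn-part}.
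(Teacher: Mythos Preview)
Your argument has a genuine gap in the treatment of the parasitic minimal subcircuits. You claim that a parasitic $E_\ell$ (containing only gates of $\gb{C}{B}{\va}$) must lie inside a single restricted cluster $\gb{C_j}{B}{\va}$, and you propose to derive a contradiction from the lower bound $\ranksem(f_{j_1}+f_{j_2})\ge \tau_1 r$, invoking \autoref{cl:restriction-rank} ``to pass between $f$ and its restriction.'' But \autoref{cl:restriction-rank} goes in the wrong direction: it says that a lower bound on $\ranksem(\fb{B}{\va})$ implies one on $\ranksem(f)$, not the other way around. Nothing in the setup prevents the restriction from collapsing the semantic (or syntactic) distance between $\gb{C_{j_1}}{B}{\va}$ and $\gb{C_{j_2}}{B}{\va}$ to essentially zero --- you yourself note this a few lines earlier --- so the large value of $\ranksem(f_{j_1}+f_{j_2})$ before restriction gives you no usable lower bound on $\ranksyn(N_1|_{B,\va}+N_2|_{B,\va})$. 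Moreover, the $N_t$ are only \emph{sub}circuits of the $C_{j_t}$, so $[N_t|_{B,\va}]$ is not even $(f_{j_t})|_{B,\va}$. Without control on parasitic cancellations that cross clusters, your accounting of which $\gb{C_j}{B}{\va}$-gates contribute to $[D_i]$ breaks down and you cannot conclude $[D_i]=\sum_{j\in S_i}[\gb{C_j}{B}{\va}]$.

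The paper avoids this difficulty entirely by a different and much shorter route: rather than comparing an arbitrary minimal $D$ with $\gb{C}{B}{\va}$ via zero subcircuits, it first uses \autoref{cor:max-clus-min-rank} and \autoref{lem:lower-rank-finer-partition} to realize the maximal $\tau$-semantic partition $(g_1,\ldots,g_{s'})$ \emph{directly as a partition of the multiplication gates of $\gb{C}{B}{\va}$}, i.e., it takes the $D_i$ to be subcircuits of $\gb{C}{B}{\va}$ with $[D_i]=g_i$. Once the $D_i$ live inside $\gb{C}{B}{\va}$, there are no parasitic pieces to worry about: one only has to show that two gates $M_1,M_2$ from the same original cluster $C_j$ cannot land in different $D_i$'s. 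This follows from a single chain of inequalities using \autoref{cor:sem-is-syn}, \autoref{lem:large-rank-syn}, and \autoref{lem:small-semantic-implies-small-syntactic-rank}: if $M_1|_{B,\va}\in D_1$ and $M_2|_{B,\va}\in D_2$, then $\tau r_B/10\le \ranksyn(M_1|_{B,\va}+M_2|_{B,\va})\le \ranksyn(\gb{C_j}{B}{\va})\le \ranksyn(C_j)\le 2^7 k^2\log k\cdot r$, which contradicts the choice of $\tau_{\min}$ and $\tau=\tau_0^k$. The realizability step (passing from an arbitrary minimal $D$ to a partition of $\gb{C}{B}{\va}$) is exactly what your approach is missing.
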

	
	\begin{proof}
		Let $C=\sum_{i=1}^{s}C_i$ be a $\Sps{k}$ realization of the partition  $f = \sum_{i=1}^s f_i$, where $[C_i]=f_i$. \autoref{rem:alg-semantic-unique} and  \autoref{cl:partition-two-taus} imply that $ r \leq  R_M(2k) 2^{7k} k^{4k} \tau_0^{k-2}$.
		
		\sloppy As $C|_{B,\va}$ computes $f|_{B,\va}$, \autoref{cor:max-clus-min-rank} and \autoref{lem:lower-rank-finer-partition}  imply that there is a partition $C|_{B,\va}=\sum_{i=1}^{s'}D_i$  such that $[D_i]=g_i$. \autoref{cor:sem-is-syn} implies that this is also a $(\tau_\text{syn}, r_{B,\text{syn}})$-syntactic partition for $\tau_{\text{syn}} := \tau/(2^7 k^2 \log k)$ and $r_{B,\text{syn}}:=r_B\cdot 2^7 k^2 \log k$.
		
		Let $M_1, M_2$ be two multiplication gates appearing in the same cluster $C_1$. Assume towards contradiction that their restrictions to $(B,\va)$ appear in different clusters $D_1, D_2$ (without loss of generality).
		\autoref{lem:large-rank-syn} and \autoref{lem:small-semantic-implies-small-syntactic-rank} imply that 
		\begin{align*}
			\tau /10 &\leq  \tau r_B/10  = \tau_\text{syn} r_{B,\text{syn}}/10\leq  \ranksyn((M_1)|_{B,\va}+(M_2)|_{B,\va}) \leq \ranksyn((C_1)|_{B,\va}) \\ &\leq \ranksyn(C_1) \leq 2^7 k^2 \log k \ranksem(C_1) \leq 2^7 k^2 \log k \cdot  r  \\ & \leq 2^7 k^2 \log k \cdot  R_M(2k) \cdot 2^{7k} k^{4k} \tau_0^{k-2} \leq  R_M(2k) \cdot 2^{7k+7} k^{4k+3} \tau_0^{k-2}.
		\end{align*}
		This contradicts the choice of $\tau_\text{min}$ and $\tau$.
	\end{proof}


	
	\section{Learning Low Degree Polynomials}
	\label{sec:lowdegree}

	As in \cite{BSV21}, we start by providing an algorithm, which is efficient when the degree $d$ is very small. 
	
	\subsection{Reconstruction Algorithm for Low Degree Multilinear $\SumProdSumk$ circuits}
	\label{sec:sumprodsumk-lowdeg}
	
	Bhargava, Saraf and Volkovich proved the following lemma.
	
	\begin{lemma}[Lemma 6.4 in \cite{BSV21}]
		\label{lem:BSV-low degree}
		Let $f \in \Fn$ be a polynomial computed by a degree $d$, multilinear $\Sps{k}$ circuit $C_f$ of the form 
		\begin{equation*}
			\sum\limits_{i = 1}^k T_i(X)= \sum_{i= 1}^k
			\prod_{j=1}^{d_i}\ell_{i,j}(X).
		\end{equation*} 
		Then, there is a randomized algorithm that given $k,d$ and black-box access to $f$ outputs a multilinear $\Sps{k}$ circuit computing $f$, in time $\poly(c) \cdot (dkn)^{{O(d^2k^3)}^{d^2k^2}}$, 	where $c=\log q$ if $\F=\F_q$ is a finite field, and $c$ equals the maximum bit complexity of any coefficient of $f$ if $\F$ is infinite.
	\end{lemma}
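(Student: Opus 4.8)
The plan is to follow the approach of \cite{BSV21}, which has three moving parts: (a) reduce $f$ to a polynomial on only a few \emph{variables} (not merely on few linear functions); (b) reconstruct a $\Sps{k}$ circuit for that small polynomial essentially by brute force, namely by solving a system of polynomial equations whose unknowns are the coefficients of the linear forms of a purported circuit; and (c) transform the circuit back to the original coordinates, adjoining further equations to the same system that force the transformed circuit to be multilinear.

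For part (a): since $f$ is computed by a multilinear $\Sps{k}$ circuit whose product gates have degree at most $d$, it depends on at most $kd$ linear functions, hence on at most $m\le kd$ essential variables. I would invoke \autoref{lem:essential-vars} to compute, in randomized $\poly(n,d,k)$ time, an invertible $A\in\F^{n\times n}$ with $g:=f(A\vx)$ depending only on $x_1,\dots,x_m$ (this is the step where a characteristic assumption $\Char(\F)=0$ or $\Char(\F)>d$ enters, if one uses \autoref{lem:essential-vars} as stated). Since $g$ has $m\le kd$ variables and degree at most $d$, and a black box for it is available through $f$, I can interpolate all $\binom{m+d}{d}\le(2kd)^d$ of its coefficients in time $\poly(c,n,(kd)^d)$.

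For parts (b) and (c): introduce unknowns $a_{i,j,t}$ for $i\in[k]$, $j\in[d]$, $0\le t\le m$, encoding a purported circuit $\widehat C=\sum_{i=1}^{k}\prod_{j=1}^{d}\widehat\ell_{i,j}$ with $\widehat\ell_{i,j}=a_{i,j,0}+\sum_{t=1}^{m}a_{i,j,t}x_t$ (gates with fewer than $d$ factors padded by the constant form $1$); this gives $N=O(k^2d^2)$ unknowns. Impose two families of equations, each of degree at most $d$ in the $a$'s: (i) for every monomial in $x_1,\dots,x_m$ of degree at most $d$, equate its coefficient in $\sum_i\prod_j\widehat\ell_{i,j}$ with the interpolated coefficient of $g$ (these $\binom{m+d}{d}$ equations force $[\widehat C]=g$); and (ii) for every $i\in[k]$ and every non-multilinear monomial $\vx^{\beta}$ of degree at most $d$, require the coefficient of $\vx^{\beta}$ in $\prod_{j=1}^{d}\widehat\ell_{i,j}(A^{-1}\vx)$ to vanish (these $k\cdot n^{O(d)}$ equations force every product gate of $\widehat C(A^{-1}\vx)$ to compute a multilinear polynomial). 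The system is consistent, as $\widehat C=C_f(A\vx)$ satisfies (i) because $[C_f(A\vx)]=g$, and satisfies (ii) because its $i$-th product gate composed with $A^{-1}$ equals the $i$-th product gate of the multilinear circuit $C_f$. I would solve the system with the algorithm underlying $\Sys_\F$ (recalled in \autoref{sec:sys}) and, for a solution $\{a^{\ast}_{i,j,t}\}$, output $\widehat C^{\ast}(A^{-1}\vx)$: this is a $\Sps{k}$ circuit computing $g(A^{-1}\vx)=f$, all of whose product gates are multilinear by (ii), and whose top gate computes the multilinear polynomial $f$, so it is a multilinear circuit. The running time is $\poly(c,n,(kd)^d)\cdot\Sys_\F(O(k^2d^2),\,k\cdot n^{O(d)},\,d)$, which after routine bookkeeping is $\poly(c)\cdot(dkn)^{O(d^2k^3)^{d^2k^2}}$; over $\R,\C,\F_q$ no field extension is needed, and over $\R$ or $\C$ the system solving---hence the whole algorithm, modulo \autoref{lem:essential-vars}---is deterministic.

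I expect family (ii) to be the crux. A circuit reconstructed from (i) alone computes $g$, but when composed with $A^{-1}$ it computes the multilinear polynomial $f$ through product gates that need not be multilinear, so it cannot be returned as a multilinear circuit; (ii) rules this out. The points to verify are that the enlarged system stays solvable (witnessed by $C_f(A\vx)$) and, crucially, that adjoining (ii) increases neither the number of unknowns past $O(k^2d^2)$ nor the degree past $d$, since it is these two parameters that drive the $N^N$-type cost in $\Sys_\F$; the $n^{O(d)}$ count of equations in (ii) is harmless, as $\Sys_\F$ depends on the number of equations only polynomially. (The redundancy among the equations of (ii) is precisely what the present paper later exploits to pare the dependence on $n$ down to linear.)
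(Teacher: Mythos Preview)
Your proposal is correct and follows the same three-step skeleton as the paper's proof sketch: variable reduction via \autoref{lem:essential-vars}, brute-force interpolation of $g$, and solving a polynomial system whose unknowns are the coefficients of the linear forms, augmented by constraints that force the lifted circuit to be multilinear.

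The one genuine difference is in how you encode the multilinearity constraint (ii). You require, for each product gate $i$ and each non-multilinear monomial $\vx^\beta$ in the original $n$ variables, that the coefficient of $\vx^\beta$ in $\prod_j \widehat\ell_{i,j}(A^{-1}\vx)$ vanish; this yields $k\cdot n^{O(d)}$ equations of degree $d$. The paper instead exploits the product-of-linear-forms structure: for each gate $i$, each variable $x_p$, and each pair $j_1\neq j_2\in[d]$, it requires that the product of the coefficients of $x_p$ in $\widehat\ell_{i,j_1}(A^{-1}\vx)$ and $\widehat\ell_{i,j_2}(A^{-1}\vx)$ be zero. This gives only $k\cdot n\cdot \binom{d}{2}$ \emph{quadratic} equations, and is equivalent to multilinearity since a product of nonzero linear forms is multilinear if and only if the forms are pairwise variable-disjoint. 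Both encodings are valid and both fit inside the stated time bound (as you note, $\Sys_\F$ depends only polynomially on the number of equations). The paper's encoding is what makes the subsequent improvement in \autoref{lem:low degree} clean: since the multilinearity equations are quadratic in $O(k^2d^2)$ unknowns, their span has dimension at most $O(k^4d^4)$, so one can replace the $n$-dependent family by a basis of that size. Your encoding would also admit a dimension argument, but with degree-$d$ equations the basis could have size $(kd)^{O(d)}$ rather than $\poly(k,d)$.
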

	
	As before, we'd like to make the dependence on $n$ polynomial, regardless of $d$ and $k$. To explain the required changes, we start by sketching the proof of \autoref{lem:BSV-low degree}.
	
	\begin{proof}[Proof sketch of \autoref{lem:BSV-low degree}]
		The proof follows the following steps.
		\begin{enumerate}
			\item \label{item:var-reduction} Variable reduction: Denote by $m$ the number of essential variables of $f$. As there are at most $kd$ linear functions appearing in the circuit, it holds that $m \leq kd$.
			
			By Lemma~\ref{lem:essential-vars}, there is a polynomial-time randomized algorithm that, given black-box access to $f$, computes an invertible linear transformation $A \in \F^{n \times n}$ such that $f(A  \vx)$ depends only on the first $m$ variables. Apply this algorithm to obtain $A$, and denote $g:=f(A\vx)$.
			
			\item \label{item:sparse-interpolate} Learn $g$. As $g$ has $m\leq kd$ variables and degree at most $d$, we can interpolate in order to find a representation $g(x) = \sum_{ \ve : |\ve| \leq d} c_{\ve}\cdot  \vx^{\ve}$. This is done in time $\poly(\binom{m+d}{d})\leq \poly(\binom{kd+d}{d})$.
			
			\item \label{item:require-sumprodsum} Add the requirement that $g$ has a $\SumProdSumk$ representation. To do so, consider the representation 
			\[\sum_{i=1}^k \prod_{j=1}^d (a^{(i)}_{j,1}x_1 + a^{(i)}_{j,2}x_2 + \ldots + a^{(i)}_{j,m}x_m + a^{(i)}_{j, m+1}) = g = \sum_{ \ve } c_{\ve}\cdot  \vx^{\ve}.\] 
			We view this as a set of at most $\binom{kd+d}{d}$ polynomial equations in  $kd(m+1)\leq 2k^2d^2$ variables, $\{a^{(i)}_{j,t}\mid i\in [k], j\in[d], t\in[m+1]\}$.
			
			\item\label{item:lifting-multilinear} Make sure that $f=g(A^{-1}x)$ is a multilinear polynomial (``lifting''). Recall that we have $A$, thus we can compute $A^{-1}$. Let $L_t = \langle R_t, \vx \rangle$ where $R_t$ is the $t$-th row of $A^{-1}$. $L_t$ is a linear function in $\vx$, i.e., $L_t = \sum_{p\in [n]}\alpha^t_p x_p$ for some coefficients $\alpha^t_p$. 
			
			For every $p\in [n], j\in [k]$, make sure that the degree of $x_p$ in the $j$-th product gate is at most one. We can do this by adding a set of polynomial equations that guarantee that the product of coefficients of $x_p$ in any two linear forms appearing in the $j$-th product gate is $0$. This adds $k\cdot n \cdot d^2$ equations in the variables $\{a^{(i)}_{j,t}\mid i\in [k], j\in[d], t\in[m+1]\}$.
			
			\item \label{item:solve-poly-eqns} Solve the polynomial equations. In total we have $k\cdot n \cdot d^2+ \binom{kd+d}{d}$ many equations in $2k^2d^2$ variables. As mentioned in \autoref{sec:sys}, this is solvable in time
			\[
			\poly{\left(n,\Sys(2d^2k^2, kd^2n+ \binom{dk+d}{d},d)\right)} \leq  (dkn)^{{O(d^2k^3)}^{2d^2k^2}}.\qedhere
			\]
		\end{enumerate}
	\end{proof}
	
	The following lemma improves the time complexity of \autoref{lem:BSV-low degree}.
	
	\begin{lemma}
		\label{lem:low degree}
		Let $f \in \Fn$ be a polynomial computed by a degree $d$, multilinear $\Sps{k}$ circuit $C$ of the form 
		\begin{equation*}
			\sum\limits_{i = 1}^k T_i(X)= \sum_{i= 1}^k
			\prod_{j=1}^{d_i}\ell_{i,j}(X).
		\end{equation*} 
		Then, there is a randomized algorithm that given $k,d$ and black-box access to $f$ outputs a multilinear $\Sps{k}$ circuit computing $f$, in time $ 	\poly\left(n, c, (dk)^{{O(d^3k^2)}^{2d^2k^2}}\right)$,
		where $c=\log q$ if $\F=\F_q$ is a finite field and $c$ is the maximal bit complexity of the coefficients of $f$ if $\F$ is infinite.		
	\end{lemma}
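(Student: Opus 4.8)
The plan is to reuse the five-step proof of \autoref{lem:BSV-low degree} essentially verbatim, speeding up only Step~\ref{item:lifting-multilinear} (``lifting''), which is the sole source of the $n^{\Theta(d^2k^2)}$-type overhead. In Steps~\ref{item:var-reduction}--\ref{item:require-sumprodsum} I would proceed exactly as in \cite{BSV21}: run the algorithm of \autoref{lem:essential-vars} to obtain an invertible $A$ with $g := f(A\vx)$ depending on only $m \le kd$ variables; densely interpolate $g$ (cost $\poly(\binom{kd+d}{d})$); and form the polynomial system $\mathcal{E}_1$ expressing that $g = \sum_{i=1}^k \prod_{j=1}^d \ell_{i,j}$ in the $\le 2k^2 d^2$ coefficient-unknowns $\{a^{(i)}_{j,t}\}$, with $\binom{kd+d}{d}$ equations of degree $\le d$.

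The only genuinely new ingredient is the treatment of the requirement that $f = g(A^{-1}\vx)$ be multilinear. Once $A$ (hence $A^{-1}$) is fixed, the coefficient of a variable $x_p$ in the pullback $\ell_{i,j}(A^{-1}\vx)$ is a \emph{fixed} linear form $L_p$ (whose coefficients are entries of the $p$-th column of $A^{-1}$, not unknowns) evaluated at the variable block $a^{(i)}_{j,\cdot} := (a^{(i)}_{j,1},\dots,a^{(i)}_{j,m})$. As in \cite{BSV21}, it suffices to force every product gate to be multilinear, which loses no solutions because $f$ itself has a multilinear $\Sps{k}$ circuit; this amounts to the quadratic equations $L_p(a^{(i)}_{j,\cdot})\cdot L_p(a^{(i)}_{j',\cdot}) = 0$ over all $i \in [k]$, $j \neq j' \in [d]$, $p \in [n]$ --- the $\Theta(knd^2)$ equations I want to thin out.

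The key observation is that these equations are massively redundant across $p$. For fixed $i,j,j'$ the $p$-th equation is the bilinear form in the blocks $u = a^{(i)}_{j,\cdot}$, $v = a^{(i)}_{j',\cdot}$ whose coefficient matrix is $r_p r_p^{\mathsf T}$, where $r_p \in \F^m$ is the coefficient vector of $L_p$; this matrix involves no unknowns and does not depend on $i,j,j'$. Since all the $r_p r_p^{\mathsf T}$ are symmetric, their $\F$-span has dimension at most $\binom{m+1}{2} \le \binom{kd+1}{2}$. I would therefore compute all $r_p$ from $A^{-1}$ in $\poly(n)$ time and, once and for all, use Gaussian elimination to pick $P \subseteq [n]$ with $|P| \le \binom{kd+1}{2}$ such that $\{r_q r_q^{\mathsf T} : q \in P\}$ spans $\{r_p r_p^{\mathsf T} : p \in [n]\}$. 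Then, for every $p$ and every $(i,j,j')$, the equation $L_p(u)L_p(v)=0$ is an $\F$-linear combination of the equations $\{L_q(u)L_q(v)=0 : q \in P\}$, so the reduced system $\mathcal{E}_2$ of those $k\binom{d}{2}|P| = O(k^3 d^4)$ equations has \emph{exactly} the same solution set as the full multilinearity system (no valid solution is dropped and no spurious one is admitted). Finally I would solve $\mathcal{E}_1 \cup \mathcal{E}_2$: $\binom{kd+d}{d} + O(k^3 d^4)$ equations, $\le 2k^2 d^2$ variables, degree $\le d$. By the bound recalled in \autoref{sec:sys}, $\Sys_\F\!\big(2k^2 d^2,\ \binom{kd+d}{d}+O(k^3 d^4),\ d\big) \le \poly(c)\cdot (dk)^{O(d^3 k^2)^{2d^2 k^2}}$; a solution exists (push the multilinear circuit for $f$ through $A$), and from it I output $g(A^{-1}\vx)$, a multilinear $\Sps{k}$ circuit computing $f$.

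All the auxiliary steps --- running \autoref{lem:essential-vars}, the dense interpolation of $g$, forming $A^{-1}$ and the vectors $r_p$, and the Gaussian elimination --- cost $\poly(n,c,(dk)^{O(d)})$, which is absorbed, giving total running time $\poly(n,c,(dk)^{O(d^3 k^2)^{2d^2 k^2}})$. I expect the only real subtlety (the ``main obstacle'', such as it is) to be the bookkeeping around the lifting step: cleanly identifying each multilinearity constraint as a bilinear form with rank-one symmetric coefficient matrix, verifying that passing to a spanning subset preserves the solution set exactly, and re-deriving the solver's running time so that, after the $\Theta(knd^2) \to O(k^3 d^4)$ reduction, the exponent still collapses to the claimed $O(d^3 k^2)^{2d^2 k^2}$. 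Everything else is a direct repetition of the argument of \cite{BSV21}.
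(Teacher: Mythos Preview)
Your proposal is correct and follows essentially the same approach as the paper: both recognize that the only bottleneck is the $\Theta(knd^2)$ multilinearity constraints in Step~\ref{item:lifting-multilinear}, and both reduce them by observing that these quadratic equations live in a low-dimensional space, so a basis of size independent of $n$ suffices. The paper uses the coarser bound that all quadratics in $\le k^2d^2$ unknowns span a space of dimension at most $k^4d^4$; your argument is a mild refinement, exploiting the rank-one symmetric structure $r_p r_p^{\mathsf T}$ (shared across all $(i,j,j')$) to get $|P|\le\binom{kd+1}{2}$ and hence $O(k^3d^4)$ equations, but this sharper count is immaterial to the final running-time bound.
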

	
	As we only change one step in the algorithm of \cite{BSV21}, we will describe the change and its effect on the time complexity.

	\begin{proof}
		
		We only change the behavior  of Step~\ref{item:lifting-multilinear} in \autoref{lem:BSV-low degree}. In this step we make sure that the ''lifting" of the polynomial $g$ is indeed a multilinear polynomial. This is done by adding a set of $k\cdot d^2\cdot n$ polynomial equations that make sure that the degree of each variable in each product gate is at most one.    
		To ease notation let us assume that we want to enforce that the individual degree of $x_1$ is 1. For $t\in [m]$, denote the coefficient of $x_1$ in $L_t$ by $\alpha^t_1$ (recall that $L_t$ is the linear function corresponding to the $t$-th row of $A^{-1}$). Note that the algorithm knows $\alpha^1_1,\ldots,\alpha^m_1$. Therefore, when substituting $A^{-1} \vx$ in the circuit, the coefficient of $x_1$ in the $j$-th linear function of the $i$-th multiplication gate, $\ell_{i,j}$, is $\sum_t a^{(i)}_{j,t} \alpha^t_1$. As we need $x_1$ to appear in at most one of the linear functions $\ell_{i,1},\ldots, \ell_{i,d_i}$, it is enough to require that for every $j_1\neq j_2\in [d_i]$, 
		\[(\sum_{t=1}^m a^{(i)}_{j_1,t} \alpha^t_1) \cdot (\sum_{t=1}^m a^{(i)}_{j_2,t} \alpha^t_1) =0.\]
		
		This is a quadratic equation in the coefficients $\set{a^{(i)}_{j,t}}$. 
		As the set of quadratic polynomials in $mkd = k^2d^2$ variables has dimension at most $k^4d^4$, we can find a basis to the set of equations (in time $\poly(n,k,d)$) and add only the equations in the basis to the set of our polynomial constrains, thus adding at most $k^4d^4$ polynomial equations.
		
		\sloppy
		Observe that any solution to the new system will have the property that the lift will be multilinear. Moreover, this system is solvable: as noted by \cite{BSV21}, this follows since the ``natural'' circuit $C_g$ for $g$, which is obtained from the original multilinear circuit for $f$ by applying $A \vx$ to the inputs, has the property that if we replace (in $C_g$) the input $\vx$ by $A^{-1} \vx$, then we get a multilinear depth-$3$ circuit for $f$. 
		
		Hence, the system is solvable in time $\Sys(mdk, k^4d^4+ \binom{dk+d}{d},d)$, and the overall time complexity of the algorithm is bounded by $\poly{\left(n, c, \Sys(d^2k^2, k^4d^4+ \binom{dk+d}{d},d)\right)} \leq  \poly\left(n,c, (dk)^{{O(d^3k^2)}^{d^2k^2}}\right)$.
	\end{proof}
	
	Similarly to Lemma 6.5 in \cite{BSV21}, we can use our improved running time in order to learn circuits with low \emph{syntactic} rank.
	
	\begin{lemma}[Similar to Lemma 6.5 in \cite{BSV21}]
		\label{lem:low rank exact}
		Let $f \in \Fn$ be a polynomial computed by multilinear $\SumProdSumk$ circuit $C$ with $\ranksyn(C) \leq r$.
		Then, there is a randomized algorithm that given $k,r$ and black-box access to $f$, outputs a multilinear $\Sps{k'}$ circuit computing $f$, where $k' \leq k$ is the smallest possible fan-in,
		in time \sloppy$ \poly(n, c, (rk)^{{O(r^3k^2)}^{r^2k^2}})$.
	\end{lemma}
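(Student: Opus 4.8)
The plan is to reduce to the low-degree algorithm of \autoref{lem:low degree} by peeling off the linear factors of $f$. First I would use the Kaltofen--Trager black-box factorization algorithm \cite{KT90} to write $f = P \cdot h$, where $P = \Lin(f)$ is the product of the linear factors of $f$ and $h$ has no linear factors. Since $f$ is multilinear every divisor of $f$ is multilinear, so $f$ is squarefree and $P$ is a product of pairwise variable-disjoint linear forms; in particular the variable support of $P$ is disjoint from the support $V \subseteq [n]$ of $h$ (a variable occurring in both would occur squared in $f$). The crucial structural point is that $h$ --- and, more importantly, a circuit we will use for a nonzero scalar multiple of $h$ --- has degree at most $r$. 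Indeed, letting $C$ be the promised circuit and $C' = \simp(C)$, the circuit $C'$ is simple, multilinear, and satisfies $\ranksyn(C') = \ranksyn(C) \le r$; in a simple multilinear circuit each multiplication gate is a product of pairwise variable-disjoint (hence linearly independent) linear forms lying in an $r$-dimensional space, so $\deg(C') \le r$. As $\gcd(C)$ and the product of linear factors of $[C']$ are both products of linear factors of $f$, they are supported outside $V$; hence fixing the coordinates outside $V$ to a generic point $\beta$ turns $C$ into a multilinear $\Sps{k}$ circuit $\gb{C}{V}{\beta}$ of degree at most $r$ computing $\gb{f}{V}{\beta} = P(\beta)\cdot h$, where $P(\beta) \neq 0$.

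Second, I would run \autoref{lem:low degree} with $d = r$ on black-box access to $\gb{f}{V}{\beta}$ (which involves only the $\le n$ variables of $V$), invoking it with top fan-in $1, 2, \ldots, k$ in turn so as to take the smallest feasible value $k_h$ (feasibility for some $k_h \le k$ is guaranteed by the first paragraph); this produces a multilinear $\Sps{k_h}$ circuit $\tilde C$ computing $P(\beta) h$, in time $\poly(n, c, (rk)^{{O(r^3k^2)}^{r^2k^2}})$. Replacing $\tilde C$ with its restriction that fixes the coordinates outside $V$ to $0$ changes neither the computed polynomial nor (by minimality of $k_h$) the fan-in, and keeps the circuit multilinear, so I may assume $\tilde C$ is supported on $V$. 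The output is then the $\Sps{k_h}$ circuit obtained from $\tilde C$ by multiplying each multiplication gate by $P(\beta)^{-1}\cdot P$: since the linear forms of $P$ are supported outside $V$ while the gates of $\tilde C$ are supported inside $V$, each resulting gate is a product of pairwise variable-disjoint linear forms, so the output is multilinear, and it computes $P \cdot P(\beta)^{-1}\cdot P(\beta) h = f$.

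For the optimality of the output fan-in it suffices to observe that the smallest top fan-in of a multilinear depth-$3$ circuit computing $f$ equals that of one computing $h$ (equivalently, $P(\beta)h$): one direction is witnessed by the circuit we output, and for the other, restricting any multilinear $\Sps{k'}$ circuit for $f$ to a generic point on the coordinates outside $V$ yields a multilinear $\Sps{\le k'}$ circuit computing a nonzero scalar multiple of $h$. Combined with $k_h \le k$ from the first paragraph, this shows the output has the smallest possible fan-in. The running time is dominated by the factorization step ($\poly(n,d,c) = \poly(n,c)$ since $\deg f \le n$), the search for a generic $\beta$ ($\poly(n)$, after passing to a field extension if $\F$ is too small), and the $\le k$ calls to \autoref{lem:low degree} with $d = r$; the remaining manipulations are $\poly(n,k)$.

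I expect the step requiring the most care to be the structural claim of the first paragraph: that $\gb{f}{V}{\beta}$ is computed by a multilinear depth-$3$ circuit of top fan-in $\le k$ and degree $\le r$. This combines the fact that the degree of a simple multilinear circuit is bounded by its syntactic rank with the observation that every linear factor of $f$ --- and in particular $\gcd(C)$ and the linear part of $\simp(C)$ --- is supported outside the support of $h$, which itself rests on the fact that a multilinear polynomial is squarefree with pairwise variable-disjoint distinct linear factors. Everything else is routine bookkeeping about how restrictions and multiplication by variable-disjoint linear forms interact with multilinearity, together with the by-now-standard idea (used in \autoref{lem:low degree} and in \cite{BSV21}) of obtaining the minimal fan-in by trying all values up to $k$.
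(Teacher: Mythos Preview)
Your approach---factor out the linear part of $f$ via black-box factorization, observe that the cofactor has a multilinear $\Sps{k}$ circuit of degree at most $r$ (since $\gcd(C)$ is supported outside $V$ and $\deg(\simp(C))\le\ranksyn(C)\le r$), learn it with \autoref{lem:low degree}, and multiply back---is exactly the argument the paper defers to \cite{BSV21} for, and your structural justification in the first paragraph is correct.

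There is one genuine gap in your optimality argument. You call \autoref{lem:low degree} with $d=r$ and claim that trying $k'=1,\dots,k$ yields the \emph{absolute} minimum fan-in $k^*$ for $h$ (hence for $f$). But \autoref{lem:low degree} with parameters $(k',d)$ succeeds only when $P(\beta)h$ admits a multilinear $\Sps{k'}$ circuit \emph{of degree $\le d$}; you have not argued that a minimum-fan-in circuit for $P(\beta)h$ has degree $\le r$. The circuit $C|_{V,\beta}$ that witnesses degree $\le r$ has fan-in $k$, not $k^*$, and a different $\Sps{k^*}$ circuit $D'$ could a priori have $\ranksyn(D')>r$ and hence degree $>r$. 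The fix is short and uses the paper's own machinery: since $P(\beta)h$ has no linear factors, any minimal multilinear $\Sps{k^*}$ circuit $D'$ for it is simple, so $\deg(D')\le\ranksyn(D')$; and by \autoref{lem:small-semantic-implies-small-syntactic-rank}, $\ranksyn(D')\le 2^7 k^2\log k\cdot \ranksem(P(\beta)h)\le 2^7 k^2\log k\cdot r$. Thus calling \autoref{lem:low degree} with $d=O(k^2\log k)\cdot r$ (rather than $d=r$) guarantees success at $k'=k^*$. This only perturbs the constants hidden in the running-time exponent and leaves the overall shape $\poly(n,c)\cdot (rk)^{(rk)^{\poly(r,k)}}$ intact.
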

	
	This lemma also enables us to learn circuits computing polynomials of low \emph{semantic} rank.
	
	\begin{lemma}
		\label{lem:low semantic rank exact}
		Let $f \in \Fn$ be a polynomial computed by multilinear $\SumProdSumk$ circuit $C$ with $\ranksem(f) \leq r$.
		Then there is a randomized algorithm that given $k,r$ and black-box access to $f$ outputs a multilinear $\Sps{k'}$ circuit computing $f$, where $k' \leq k$ is the smallest possible fan-in,
		in time \sloppy$ \poly(n, c, (rk)^{{(rk)}^{\poly(r,k)}})$.
	\end{lemma}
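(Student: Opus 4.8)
The plan is to reduce the task of learning a polynomial of small \emph{semantic} rank to the already-solved task of learning one of small \emph{syntactic} rank, i.e.\ to \autoref{lem:low rank exact}. The only gap between the two is the quantitative passage from semantic to syntactic rank, which is precisely \autoref{lem:small-semantic-implies-small-syntactic-rank}; however, that lemma only applies to \emph{minimal} circuits, so the first thing to do is to pass to a minimal representation before invoking it.

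Concretely, since $f$ is computed by \emph{some} multilinear $\Sps{k}$ circuit, it is also computed by a \emph{minimal} multilinear $\Sps{k'}$ circuit $C'$ for some $k' \le k$: starting from $C$, as long as some nonempty proper subset of the multiplication gates sums to the zero polynomial, delete that subset; this preserves multilinearity and the computed polynomial and strictly decreases the top fan-in, so the process terminates at a minimal circuit. (We only need the \emph{existence} of $C'$; we do not need to find it.) Now $C'$ is a minimal multilinear $\Sps{k'}$ circuit computing $f$, and $\ranksem(f) \le r$, so \autoref{lem:small-semantic-implies-small-syntactic-rank}, applied with top fan-in $k' \le k$, gives
\[
\ranksyn(C') \le 2^7 r (k')^2 \log k' \le 2^7 r k^2 \log k =: r'
\]
(the case $k \le 1$ being trivial). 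Thus $f$ is computed by a multilinear $\Sps{k'}$ circuit of syntactic rank at most $r'$, so we may simply run the algorithm of \autoref{lem:low rank exact} with parameters $k$ and $r'$; it outputs a multilinear $\Sps{k''}$ circuit computing $f$ with $k'' \le k$ the smallest possible fan-in, in time $\poly\!\big(n,c,(r'k)^{O((r')^3k^2)^{(r')^2k^2}}\big)$.

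It remains to simplify the running-time bound. Since $r' = 2^7 r k^2 \log k = \poly(r,k)$, both $(r')^2 k^2$ and $O((r')^3 k^2)$ are $\poly(r,k)$, so the exponent $O((r')^3 k^2)^{(r')^2 k^2}$ is of the form $\poly(r,k)^{\poly(r,k)} = (rk)^{\poly(r,k)}$; and since $r'k = \poly(r,k)$, the whole quantity is $(r'k)^{(rk)^{\poly(r,k)}} = (rk)^{(rk)^{\poly(r,k)}}$, giving the claimed time $\poly\!\big(n,c,(rk)^{(rk)^{\poly(r,k)}}\big)$. I do not expect a genuine obstacle here: the one point that must not be overlooked is that \autoref{lem:small-semantic-implies-small-syntactic-rank} is only valid for minimal circuits (an arbitrary $\Sps{k}$ circuit for $f$ can have artificially inflated syntactic rank), so the reduction to a minimal representation is essential; beyond that, and tracking the field-size and characteristic hypotheses inherited from the earlier lemmas, the argument is bookkeeping.
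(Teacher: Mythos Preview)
Your proof is correct and takes essentially the same approach as the paper: apply \autoref{lem:small-semantic-implies-small-syntactic-rank} to bound the syntactic rank and then invoke \autoref{lem:low rank exact}. Your added care in explicitly passing to a minimal circuit before applying \autoref{lem:small-semantic-implies-small-syntactic-rank} is a point the paper leaves implicit, but otherwise the arguments coincide.
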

	
	\begin{proof}
		By \autoref{lem:small-semantic-implies-small-syntactic-rank}, there is a circuit $C$ computing $f$ with $\ranksyn(C) \le 2^7 k^2 \log k \cdot r$. We now apply \autoref{lem:low rank exact}. Note that since the algorithm is black box we don't need to actually know $C$.
	\end{proof}

	\subsection{Reconstruction of Low-Degree Depth-$3$ Set-Multilinear Circuits}
	Since set-multilinear depth-$3$ circuits are a special case of depth-$3$ multilinear circuits, it's clear that
	given a black box access to a set-multilinear polynomials $f(\vx_1, \ldots, \vx_d)$ computed by a depth-$3$ set-multilinear circuit of top fan-in $k$, the algorithm in \autoref{sec:sumprodsumk-lowdeg} is able to reconstruct $f$. However, as stated it's possible that the algorithm outputs a multilinear circuit rather than a set-multilinear circuit, and thus doesn't give a proper learning algorithm.
	In this section we explain how to modify the proof of \autoref{lem:BSV-low degree} to output a set-multilinear circuit.
	
	Let $\vx=\vx_1 \bigcup \vx_2 \cdots \vx_d$ denote the set of variables and let $n_i = |\vx_i|$. 
	
	We first observe that in \autoref{item:var-reduction} we can find a matrix $A$ that ``respects'' the partition $\vx_1, \ldots, \vx_d$: that is, $A$ is a direct sum of matrices $A_1, \ldots, A_d$ such that $A_i$ operates on $\vx_i$ and $f(A\vx)$ depends on at most $kd$ variables.
	As noted in \autoref{lem:det-essential-vars}, finding, e.g., $A_1$ amounts to finding a basis to the vector space
	\[
	\set{ \va \in \F^{n_1} : \sum_{i=1}^{n_1} \frac{\partial f}{\partial x_{1,i}} (\vx) = 0}.
	\]
	By \autoref{cor:poly-size-hitting-set-set-ml} we can find such $A_1$ deterministically in polynomial time (for small, super-constant $k$), so that now the polynomial depends on at most $m_1 \le k$ variables from $\vx_1$. We then find $A_2, A_3, \ldots, A_d$.
	
	The next step, \autoref{item:sparse-interpolate} is done as in \autoref{lem:BSV-low degree}.
	
	We change \autoref{item:require-sumprodsum} to require that $f$ has a depth-$3$ set-multilinear representation of top fan-in $k$. This is simply done by changing the system of polynomial equations to
	\[
	\sum_{i=1}^k \prod_{j=1}^d (a^{(i)}_{j,1} x_{j,1} + \cdots + a^{(i)}_{j,n_j}x_{j,n_j}+ a^{(i)}_{j,m_j+1}) = g = \sum_{ \ve } c_{\ve}\cdot  \vx^{\ve}.
	\]
	For $m=\max_{j} m_j \le k$, this is a set of at most $\binom{kd+d}{d}$ polynomial equations in  $kd(m+1)\leq 2k^2d^2$ variables. Solving the system can be done in time 
	$\Sys(d^2k^2, \binom{dk+d}{d},d)$.

	Note that \autoref{item:lifting-multilinear} and \autoref{item:solve-poly-eqns} are now redundant. By the structure of $A$, we can simply apply $A^{-1}$ (which replaces every variable $x_{i,j} \in \vx_i$ by a linear function in $\vx_i$). This is because the set multilinear circuits only multiplies linear functions in distinct sets in the partition $\vx_1, \ldots, \vx_d$, so that $f(A^{-1} \vx)$ is set multilinear.
	
	As a result of this discussion we obtain the corollary.
	
	\begin{corollary}
		\label{cor:set-ml-low-deg}
		Let $f(\vx) \in \F[\vx]$ be a set-multilinear polynomial computed by a degree $d$, set-multilinear depth-$3$ circuit. Suppose $\vx=\vx_1 \cup \cdots \cup \vx_d$ and $|\vx_i| \le n$ for all $i$.
		Then, there is a randomized algorithm that given $n,k,d$ and black-box access to $f$ outputs a set-multilinear depth-$3$ circuit with top fan-in $k$ that computes $f$, in time $\poly\left(n, c, (dk)^{{O(d^2k^3)}^{d^2k^2}}\right)$.
	\end{corollary}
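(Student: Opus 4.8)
The plan is to instantiate the four–step template underlying \autoref{lem:BSV-low degree}, carrying out each step so that it preserves the set-multilinear partition $\vx = \vx_1 \cup \cdots \cup \vx_d$, exactly along the lines sketched in the discussion preceding the statement. The one structural change that drives everything is to produce a \emph{block-diagonal} change of basis $A = A_1 \oplus \cdots \oplus A_d$ with $A_i$ acting on $\vx_i$: its inverse is then also block-diagonal, and substituting a block-diagonal linear map into a set-multilinear $\Sps{k}$ circuit again yields a set-multilinear $\Sps{k}$ circuit. This is precisely what makes the ``lifting'' step (\autoref{item:lifting-multilinear}) and the $\poly(n)$ multilinearity equations it introduced unnecessary, and hence what improves the running time over the general multilinear case.

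Concretely, I would proceed as follows. \textbf{(i) Block-wise variable reduction.} Inside each block $\vx_i$, find a basis for $\{\va \in \F^{n_i} : \sum_j a_j\,\partial f/\partial x_{i,j} = 0\}$ using the kernel-finding step of \autoref{lem:det-essential-vars}; the required dependency tests go through via \autoref{lem:hitting-set-dependency} using the $\poly(n,d)$-size hitting set for set-multilinear circuits from \autoref{cor:poly-size-hitting-set-set-ml} (valid for the mildly super-constant range of $k$; for larger $k$ one can instead apply the randomized \autoref{lem:essential-vars} block-by-block), since partials of set-multilinear $\Sps{k}$ circuits are set-multilinear and a $\Sigma^{t+1}$-combination of such circuits is set-multilinear of top fan-in $(t+1)k$. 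This produces invertible $A_i$ and $A = A_1 \oplus \cdots \oplus A_d$ such that $g := f(A\vx)$ depends on at most $m_i \le k$ variables from each $\vx_i$, hence on at most $kd$ variables in total. \textbf{(ii) Interpolation.} Since $g$ has $\le kd$ essential variables and degree $\le d$, interpolate its dense representation $g = \sum_{|\ve|\le d} c_{\ve}\vx^{\ve}$ in time $\poly\binom{kd+d}{d}$. \textbf{(iii) Polynomial system.} Impose that $g$ has a set-multilinear $\Sps{k}$ representation via
\[
\sum_{i=1}^{k} \prod_{j=1}^{d} \Bigl( a^{(i)}_{j,1} x_{j,1} + \cdots + a^{(i)}_{j,n_j} x_{j,n_j} + a^{(i)}_{j,m_j+1} \Bigr) \;=\; \sum_{\ve} c_{\ve}\,\vx^{\ve},
\]
a system of at most $\binom{kd+d}{d}$ equations of degree $\le d$ in at most $kd(k+1) \le 2k^2 d^2$ unknowns $\{a^{(i)}_{j,t}\}$; it is solvable because the ``natural'' circuit obtained from the given set-multilinear circuit for $f$ by applying $A$ to its inputs is a solution. \textbf{(iv) Output.} Return $g(A^{-1}\vx)$, which by the block structure of $A^{-1}$ is automatically a set-multilinear $\Sps{k}$ circuit for $f$; no lifting step or extra equations are needed.

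For the running time, steps (i), (ii), (iv) together cost $\poly(n,d,c) + \poly\binom{kd+d}{d}$, and step (iii) is solved in time $\Sys(2d^2k^2, \binom{dk+d}{d}, d)$ by the discussion of \autoref{sec:sys}; substituting the bound $\Sys_{\F}(N,M,D) = \poly(NMD)^{N^N}$ yields total time $\poly(n, c, (dk)^{O(d^2k^3)^{d^2k^2}})$, matching the claim. I expect the only real (and modest) obstacle to be the bookkeeping in step (i): checking that the block-by-block reduction can be run with a polynomial-size hitting set in the relevant $k$-regime and genuinely outputs a block-diagonal $A$, and — crucially for the running-time bound — that the interpolated $g$ and the system in step (iii) are written over the $\le kd$ reduced variables rather than all $n$ of them, so that the number of unknowns, and thus the $\Sys$ bound, depends only on $k$ and $d$. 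Everything else is a direct transcription of the multilinear argument with the lifting step deleted.
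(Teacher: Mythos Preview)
Your proposal is correct and follows essentially the same approach as the paper: block-diagonal variable reduction $A = A_1 \oplus \cdots \oplus A_d$ using the per-block kernel computation (via \autoref{lem:det-essential-vars} and \autoref{cor:poly-size-hitting-set-set-ml}), interpolation in the $\le kd$ reduced variables, a polynomial system enforcing a set-multilinear $\Sps{k}$ representation, and outputting $g(A^{-1}\vx)$, which is automatically set-multilinear so that the lifting step and its extra equations are unnecessary. The paper's discussion preceding the corollary is exactly this outline, with the same running-time bound coming from $\Sys(d^2k^2, \binom{dk+d}{d}, d)$.
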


	The following is analogous to \autoref{lem:low rank exact}.
	
	\begin{lemma}
		\label{lem:low rank set-ml exact}
		Let $f \in \Fn$ be a polynomial computed by set-multilinear $\SumProdSumk$ circuit $C$ with $\ranksem(C) \leq r$.
		Then, there is a randomized algorithm that given $k,r$ and black-box access to $f$ outputs a set-multilinear $\SumProdSumk$ circuit computing $f$, where $k' \leq k$ is the smallest possible fan-in,
		in time \sloppy$ \poly\left(n,c,(rk)^{{O(r^3k^2)}^{r^2k^2}}\right)$.
	\end{lemma}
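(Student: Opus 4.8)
The plan is to mirror the proof of \autoref{lem:low rank exact}: using black-box factorization, reduce to a set-multilinear polynomial of \emph{low degree}, and then invoke the low-degree set-multilinear reconstruction algorithm of \autoref{cor:set-ml-low-deg}. The key structural observation is a clean degree bound: if $\hat g$ is a nonzero set-multilinear polynomial with no linear factors, then $\deg(\hat g) \le \ranksem(\hat g)$ (assuming $\Char(\F) = 0$ or $\Char(\F)$ large enough). Indeed, if $\hat g$ is set-multilinear over a set of $d' = \deg(\hat g)$ blocks, it genuinely involves each of them, so for every such block $b$ we may choose a variable $x_{b,p_b}$ with $\partial \hat g / \partial x_{b,p_b} \neq 0$; since $\partial \hat g / \partial x_{b,p_b}$ is set-multilinear over the remaining $d'-1$ blocks, the $d'$ polynomials $\{\partial \hat g / \partial x_{b,p_b}\}_b$ have pairwise disjoint monomial supports, hence are linearly independent, which forces $\rank(M_{\hat g}) \ge d'$; and $\rank(M_{\hat g}) = \ranksem(\hat g)$ because $\hat g$ has no linear factors.

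Given black-box access to $f$, the algorithm then proceeds as follows. First, compute $\Lin(f)$ explicitly via the Kaltofen--Trager black-box factorization algorithm (as in \autoref{lem:compute-semantic-rank}) and use it to simulate black-box access to $\hat g := f / \Lin(f)$. By the observation above, $\deg(\hat g) \le \ranksem(\hat g) = \ranksem(f) \le r$, and $\hat g$ is set-multilinear over the blocks not used by $\Lin(f)$; therefore, running \autoref{cor:set-ml-low-deg} with degree parameter $r$ and fan-in parameter $j = 1, 2, 3, \ldots$, and keeping the first successful run, reconstructs $\hat g$ as a set-multilinear depth-$3$ circuit of the smallest possible fan-in. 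Finally, distribute the linear forms of $\Lin(f)$ into the multiplication gates of this circuit to obtain a set-multilinear $\Sps{k'}$ circuit computing $f = \Lin(f)\cdot \hat g$.

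Two elementary facts about set-multilinear polynomials justify the steps above and should be checked carefully. Because $f$ is set-multilinear and homogeneous, $\Lin(f)$ is a product of homogeneous linear forms lying in pairwise distinct blocks, so $\hat g$ is again set-multilinear. And dividing a set-multilinear $\Sps{j}$ circuit by such a linear factor $\ell = \gamma_p x_{b,p} + \cdots$ amounts to applying $\gamma_p^{-1}\,\partial/\partial x_{b,p}$ gate-wise (some gates may vanish), which keeps the circuit set-multilinear with fan-in $\le j$. Iterating this over all linear factors of $f$ shows simultaneously that $\hat g$ is computed by a set-multilinear $\Sps{\le k}$ circuit (so \autoref{cor:set-ml-low-deg} indeed succeeds for some $j \le k$) and, applied to an arbitrary minimal set-multilinear circuit for $f$, that the minimal fan-in of $\hat g$ equals the minimal fan-in of $f$; hence the output has the smallest possible fan-in $k' \le k$, as required.

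For the running time, each simulated query to $\hat g$ costs $\poly(n,c,d)$ and the factorization steps run in randomized polynomial time, so the dominant cost is the call to \autoref{cor:set-ml-low-deg} with degree parameter $r$, which yields the stated bound $\poly\!\left(n,c,(rk)^{{O(r^3k^2)}^{r^2k^2}}\right)$ exactly as in the reduction from \autoref{lem:low rank exact} to \autoref{lem:low degree}. I expect the only delicate point to be the bookkeeping in the first step — that $\Lin(f)$ splits across distinct blocks, that quotienting preserves both set-multilinearity and membership in the circuit class, and the resulting equality of minimal fan-ins — since the degree bound, the applicability of \autoref{cor:set-ml-low-deg}, and the smallest-fan-in guarantee all rest on it; this is routine verification rather than a conceptual obstacle.
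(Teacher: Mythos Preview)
Your proposal is correct and follows the same template the paper intends: the paper gives no proof for this lemma beyond the remark ``analogous to \autoref{lem:low rank exact}'', and your argument---factor out $\Lin(f)$ via Kaltofen--Trager, observe the quotient has degree at most $r$, invoke \autoref{cor:set-ml-low-deg}, and multiply the linear factors back in---is exactly that analogy spelled out for the set-multilinear setting.

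One small remark worth noting: your direct inequality $\deg(\hat g)\le \ranksem(\hat g)$ for set-multilinear $\hat g$ with no linear factors (via the block-disjoint support of the chosen partial derivatives) is a genuinely cleaner route than what the paper does in the multilinear analogue. There, \autoref{lem:low semantic rank exact} bounds the semantic-rank case by first passing through syntactic rank via \autoref{lem:small-semantic-implies-small-syntactic-rank}, incurring a $2^7 k^2\log k$ blow-up before applying \autoref{lem:low rank exact}. Your observation avoids that detour entirely in the set-multilinear case and lands exactly on the stated running time. The bookkeeping you flag---that each linear factor of a set-multilinear $f$ lies in a single block, that distinct linear factors occupy distinct blocks, and that quotienting by such a factor is realized gate-wise by $\gamma_p^{-1}\partial/\partial x_{b,p}$---is all correct and suffices for both the applicability of \autoref{cor:set-ml-low-deg} and the minimal fan-in claim.
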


	\section{Efficient Construction of Cluster Preserving Sets}
	\label{sec:cluster-preserving}
	
	In order to reconstruct general multilinear $\Sps{k}$ circuits,  we would again like to follow the steps of \cite{BSV21}. However, as some of our definitions are different, and we replace some brute force steps with algorithmically efficient steps, we're required to make substantial changes in the algorithm. In particular we replace their use of the notion of ``rank preserving subspaces'' with an explicit construction of a subset $B$ of the variables that, in some sense, preserves the structure of semantic clusters of $f$.
	
	The following algorithm attempts to construct a set $B$ together with a vector $\va$ such that the clusters of $f|_{B,\va}$ (with respect to a certain semantic $\tau$-partition), found by \autoref{algo:semantic-clustering-algorithm}, are in one-to-one correspondence with the clusters that the same algorithm would have outputted on $f$. Our algorithm receives $\tau$ as a parameter ($\tau$ will be related to the parameters from \autoref{cl:partition-two-taus}).

	\begin{algorithm}[H]
		\caption{: Randomized construction of a cluster-preserving set}
		\begin{algorithmic}[1]
			\Require{Black box access to a degree $d$, $n$-variate polynomial $f$, computed by a $\Sps{k}$ circuit, $C$, and a parameter $\tau$.}
			\Ensure{A subset of the variables $B$ and an assignment $\va \in \F^n$ (whose properties are specified in \autoref{cl:tau-semantic-restriction})}
			\State{Let $S \subseteq \F$ an arbitrary set of size $n^{k^{k^{O(k)}}}$}
			\State{Set $B = \emptyset$, $s=0$, $f_{curr}=\sum_{j=1}^s C_j = 0$}
			\State{Pick at random $\va \in S^n$.}
			\For{every $I  \subseteq \set{x_1, \ldots, x_n} \setminus B$ of size at most 4} \label{line:spsk-mainloop}
			\For{$k'\in[k]$}\label{line:spsk-minimalloop}
			\State{Using \autoref{lem:low semantic rank exact} with $k=k'$, learn $f_I = \fb{B\cup I}{\va}$. Using randomized polynomial identity testing, make sure that indeed $f_I \equiv \fb{B \cup I}{\va}$, and if so, go to the next line}\label{step:learn-C}
			\EndFor
			\State{Run \autoref{algo:semantic-clustering-algorithm} on $f_I$ with parameter $\tau$ to get $f_I = \sum_{j=1}^{s_I}C'_j$}\label{step:cluster}
			\If{$s_I \neq s$} \label{line:cluster-increase}
			\State{Set $B=B \cup I$, $s=s_I$, $f_{curr}=f_I$, and save $(C'_1,\ldots, C'_{s_I})$. Restart the main loop in line \ref{line:spsk-mainloop}}
			\EndIf
			\For{$j\in [s]$}
			\State{Find $\sigma(j)\in [s]$ s.t. $C'_j|_{x_I=\va_I} = C_{\sigma(j)}$ using a randomized PIT algorithm}
			\State{Calculate $r'_j = \ranksem(C'_j), r_j = \ranksem(C_{\sigma(j)})$}
			\If{$r'_j > r_j$} \label{line:rank-increase}
			\State{Set $B=B \cup I$, $s=s_I$, $f_{curr}=f_I$, and save $(C'_1,\ldots, C'_{s_I})$. Restart the main loop in line \ref{line:spsk-mainloop}}
			\State{Abort in case any subprocedure failed during the execution of the algorithm}
			\EndIf
			\EndFor
			\EndFor
			\State{Return $B$ and $\va$.}
		\end{algorithmic}
		\label{algo:finding-B}
	\end{algorithm}
	
	We now explain what guarantees we get on the outputs $(B,\va)$ of \autoref{algo:finding-B}.
	We first bound the running time of the algorithm.
	
	\begin{claim}
		\label{cla:size-of-B}
		\autoref{algo:finding-B}, when given the parameter $\tau$ guaranteed in \autoref{cl:tau-semantic-restriction} as input, runs in time $\poly(n) \cdot k^{k^{k^{k^{\poly(k)}}}}$ and returns a set $B$ of size at most $k^{k^{O(k)}}$.
	\end{claim}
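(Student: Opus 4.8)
The plan is to reduce the whole statement to one quantity: the number of times \autoref{algo:finding-B} restarts its main loop (line~\ref{line:spsk-mainloop}). Call the work between two consecutive restarts, or between a restart and termination, a \emph{pass}. Every restart enlarges $B$ by a set $I$ with $|I|\le 4$, so $|B|\le 4\cdot(\#\text{restarts})$; and every pass is a loop over $O(n^4)$ sets $I$ whose processing will not blow up the $n$-dependence, so the total time is $(\#\text{restarts}+1)\cdot O(n^4)\cdot(\text{cost of processing one }I)$. Thus it suffices to show (a) $\#\text{restarts}=k^{k^{O(k)}}$, so that $|B|\le 4k^{k^{O(k)}}=k^{k^{O(k)}}$, and (b) the cost of processing one $I$ is $\poly(n,c)\cdot k^{k^{k^{k^{\poly(k)}}}}$, so that the running time is $\poly(n,c)\cdot k^{k^{k^{k^{\poly(k)}}}}$, matching the statement (which folds the polynomial $c$-dependence into the input size).

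\textbf{Step (a).} A restart is triggered at line~\ref{line:cluster-increase}, when the number $s$ of clusters of the maximal $\tau$-semantic partition changes on passing from $\fb{B}{\va}$ to $\fb{B\cup I}{\va}$, or at line~\ref{line:rank-increase}, when a matched cluster's semantic rank strictly increases. I would first argue that $s$ is monotone non-decreasing: $\fb{B}{\va}$ is a restriction of $\fb{B\cup I}{\va}$ (both multilinear $\Sps{k}$ polynomials), so by \autoref{cl:restriction-rank} together with \autoref{lem:lower-rank-finer-partition}, \autoref{cor:max-clus-min-rank} and \autoref{cl:cluster-refinement-in-B} (applied with $\fb{B\cup I}{\va}$ playing the role of the global polynomial), the maximal $\tau$-partition of $\fb{B}{\va}$ is coarser than that of $\fb{B\cup I}{\va}$; since $s\le k$, line~\ref{line:cluster-increase} fires at most $k$ times, cutting the run into at most $k+1$ phases on which $s$ is fixed. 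Inside a phase every restart is at line~\ref{line:rank-increase}, where the algorithm matches a cluster $C'_j$ of $\fb{B\cup I}{\va}$ to $C_{\sigma(j)}=C'_j|_{x_I=\va_I}$, a cluster of $\fb{B}{\va}$, so by \autoref{cl:restriction-rank} $\ranksem(C_{\sigma(j)})\le\ranksem(C'_j)$: the vector of cluster semantic ranks is coordinatewise non-decreasing along the phase, and at least one coordinate strictly grows at each restart. By \autoref{cl:semantic-clusters}, for the value of $\tau$ supplied by \autoref{cl:tau-semantic-restriction} (of order $2^{k^{O(k)}}$, in analogy with \autoref{cl:partition-two-taus}), every cluster produced by \autoref{algo:semantic-clustering-algorithm} has semantic rank at most $T(k):=R_M(2k)\cdot 2^{7k}k^{4k}\tau^{k-2}=k^{k^{O(k)}}$; hence each phase holds at most $k\cdot T(k)$ restarts and $\#\text{restarts}\le k+(k+1)kT(k)=k^{k^{O(k)}}$.

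\textbf{Step (b).} Drawing $\va\in S^n$ costs $\poly(n)\cdot k^{k^{O(k)}}$ bits since $\log|S|=k^{k^{O(k)}}\log n$. Because $\fb{B\cup I}{\va}$ depends on at most $|B|+4\le k^{k^{O(k)}}$ variables, its semantic rank is at most $r:=k^{k^{O(k)}}$, so learning it via \autoref{lem:low semantic rank exact} (run for the $\le k$ values of $k'$, each verified by a $\poly(n)$-time randomized identity test) costs $\poly(n,c)\cdot(rk)^{(rk)^{\poly(r,k)}}$; substituting $r=2^{k^{O(k)}}$ and collapsing the exponential tower gives $\poly(n,c)\cdot k^{k^{k^{k^{\poly(k)}}}}$. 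The remaining processing of one $I$ — a call to \autoref{algo:semantic-clustering-algorithm} ($2^{k^2}\poly(n)$ by \autoref{cl:semantic-clusters}), at most $k$ semantic-rank computations (randomized polynomial time by \autoref{lem:compute-semantic-rank}), and a few identity tests to determine $\sigma$ — is dominated by this. Multiplying by the $O(n^4)$ choices of $I$ in a pass and by the $k^{k^{O(k)}}$ passes preserves the bound $\poly(n,c)\cdot k^{k^{k^{k^{\poly(k)}}}}$. All of this is a high-probability statement: if a random choice makes a subroutine fail, the algorithm aborts (``Abort\ldots''), which only shortens the run.

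\textbf{Main obstacle.} The arithmetic in Step (b) is routine once one has $\ranksem(\fb{B\cup I}{\va})\le k^{k^{O(k)}}$ and $\#\text{passes}\le k^{k^{O(k)}}$, so the hard part will be Step (a): establishing that $s$ is monotone and that the \emph{single} quantity $T(k)$ bounds every cluster rank met during the run. Both are assertions about maximal semantic $\tau$-partitions under variable restriction — that restricting can only coarsen the partition and only decrease a cluster's semantic rank — i.e.\ the content of \autoref{cl:restriction-rank}, \autoref{lem:lower-rank-finer-partition}, \autoref{cor:max-clus-min-rank} and \autoref{cl:cluster-refinement-in-B}, and this is precisely what forces $\tau$ to be taken tower-in-$k$ large, as \autoref{cl:tau-semantic-restriction} does. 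Making these structural facts apply uniformly to every intermediate polynomial $\fb{B}{\va}$, not just to $f$, is the delicate point.
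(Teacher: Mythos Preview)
Your running-time accounting in Step~(b) is fine, but Step~(a) has a real gap, and it is exactly the one you flag as the ``main obstacle'' without resolving. You invoke \autoref{cl:cluster-refinement-in-B} with $\fb{B\cup I}{\va}$ playing the role of the global polynomial to conclude that the maximal $\tau$-partition of $\fb{B}{\va}$ is a coarsening of that of $\fb{B\cup I}{\va}$, and hence that $s$ is monotone. But \autoref{cl:cluster-refinement-in-B} applies only to a polynomial equipped with the special $(\tau_1,r)$-partition of \autoref{cl:partition-two-taus}, with $\tau_1=\tau_0^{\varphi(k)}$ and $\tau=\tau_0^k$ for the \emph{particular} $\tau_0$ that \autoref{cl:partition-two-taus} produces for that polynomial. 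That $\tau_0$ exists for $\fb{B\cup I}{\va}$, but there is no reason it coincides with the $\tau_0$ chosen for $f$, which is the one the algorithm is using. So the hypotheses of \autoref{cl:cluster-refinement-in-B} are not available for the intermediate polynomials, and neither the monotonicity of $s$ nor the existence of a consistent bijection $\sigma$ between clusters at consecutive steps follows from the lemmas you cite. (Indeed, the paper's proof of \autoref{cl:algo-progress-measure} explicitly analyses the case $r_{t+1}<r_t$, in which $\pi(B_{t+1})$ is a strict refinement of $\pi(B_t)$; nothing rules out the partitions being incomparable at different steps.)

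The paper sidesteps this entirely. It applies \autoref{cl:cluster-refinement-in-B} only with $f$ itself as the global polynomial, which is legitimate; this says that at every step the clusters of $\fb{B_t}{\va}$ are unions of restrictions of the \emph{original} clusters $f_1,\dots,f_s$ of $f$, so each $B_t$ determines a partition $\pi(B_t)$ of $[s]$. There are at most $2^{k^2}$ such partitions, so by pigeonhole some $\pi$ recurs at least $T/2^{k^2}$ times among $T$ restarts. The key technical step is then \autoref{cl:algo-progress-measure}: whenever $\pi(B_t)=\pi(B_{t'})$ with $t<t'$, there is a fixed block $S_i$ of $\pi$ and a subset $J\subseteq S_i$ with $\ranksem(\sum_{j\in J}f_j|_{B_t,\va})<\ranksem(\sum_{j\in J}f_j|_{B_{t'},\va})$. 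A second pigeonhole over the $\le 2^k$ choices of $J$ and the rank bound from \autoref{cl:rank-of-subcluster} then caps $T$. This argument never compares the cluster structures of two intermediate restrictions directly; everything is anchored to the fixed clusters of $f$, which is why the single choice of $\tau$ suffices.
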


	The following important claim shows that if $(B,\va)$ is the output of \autoref{algo:finding-B}, then for the ``correct'' choice of $\tau$, $f|_{B,\va}$ preserves the clusters (with respect to a $\tau$-semantic partition) of $f$.

	\begin{claim}
		\label{cl:tau-semantic-restriction}
		Let $f \in \Sps{k}$ be a  multilinear polynomial and let $C$ be a minimal multilinear $\Sps{k}$ computing $f$. 
		There exists a non-zero polynomial $\Gamma_C$ of degree at most $n^{k^{k^{O(k)}}}$ such that if $B,\va$ are the outputs of \autoref{algo:finding-B} on $f$, and $\Gamma_C(\va) \neq 0$, then the following holds:		
		Consider the semantic partition of $f$, $f = \sum_{i=1}^s f_i$,  given by \autoref{cl:partition-two-taus} with $\varphi(k) = k^2$ and $\tau_\text{min}=R_M(2k) \cdot 2^{7k+20} \cdot k^{4k+4}$. Let $\tau_0, \tau_1, r$ be its parameters as promised by the claim and let $\tau=\tau_0^{k}$.
		Let $D$ be a minimal multilinear $\Sps{k}$ circuit computing $f|_{B,\va}$. Then, the output of \autoref{algo:semantic-clustering-algorithm} on $D$ with parameter $\tau$, denoted by $[D] = \sum_{i=1}^{s'} g_i$, satisfies:
		\begin{enumerate}
			\item $s'=s$.
			\item There is a permutation $\pi$ on $[s]$ such that $g_{\pi(i)} = (f_i)|_{B,\va}$.
			\item $\ranksem(g_{\pi(i)}) = \ranksem(f_i)$.
		\end{enumerate}
		In particular, the $g_i$'s also form a $(\tau,r)$ partition.
	\end{claim}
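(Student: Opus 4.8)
The plan is to show that the stopping condition of \autoref{algo:finding-B} --- that no set $I$ with $|I|\le 4$ can be adjoined to $B$ without raising the number of $\tau$-clusters of $f|_{B,\va}$ or the semantic rank of one of them --- forces $f|_{B,\va}$ to already display the full $\tau$-cluster structure of $f$. Fix a minimal multilinear $\Sps{k}$ circuit $C=\sum_{i=1}^s C_i$ realizing the partition $f=\sum_i f_i$ furnished by \autoref{cl:partition-two-taus} with $\varphi(k)=k^2$ and the stated $\tau_\text{min}$, so that $\tau_1=\tau_0^{k^2}=\tau^{k}$ exceeds $\tau$ times the rank bound of any $\tau$-semantic partition of any restriction of $f$ (via \autoref{cl:semantic-clusters}). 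Take $\Gamma_C$ to be the product of all the ``genericity'' polynomials the argument uses: the degree-bounded minors of the partial-derivative matrices of the $f_i$, of the $f_i+f_j$, and of their restrictions to the sets $B\cup I$ encountered along a run, whose non-vanishing certifies both that those restrictions do not acquire spurious common or linear factors and that the relevant semantic ranks and distances do not drop below their eventual values once $B$ is large enough, together with the error polynomials of the boundedly many randomized learning and PIT subroutine calls made in the run. Since a run never makes $|B|$ exceed $k^{k^{O(k)}}$ (\autoref{cla:size-of-B}), this is a product of a controlled number of factors of degree $n^{k^{O(k)}}$, so $\deg\Gamma_C\le n^{k^{k^{O(k)}}}$.

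Assume $\Gamma_C(\va)\neq 0$; we track a run of the algorithm. The first ingredient is a monotonicity lemma, proved by induction along the run. For every set $B'$ arising as (or between) values of $B$, \autoref{cl:cluster-refinement-in-B} gives that the unique maximal $\tau$-semantic partition of $f|_{B',\va}$ has clusters $[\,\sum_{j\in S^{B'}_m}(C_j)|_{B',\va}\,]$ for a partition $\{S^{B'}_m\}$ of $[s]$; combining this with \autoref{cl:restriction-rank}, the syntactic/semantic translations of \autoref{sec:sem-part} (\autoref{lem:small-semantic-implies-small-syntactic-rank}, \autoref{lem:large-rank-syn}, \autoref{cor:sem-is-syn}), and $\Gamma_C(\va)\neq 0$ to exclude degenerate restrictions, one shows that enlarging $B'$ only refines $\{S^{B'}_m\}$ and only increases the semantic ranks of the clusters. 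In particular the cluster count $s(B')$ is non-decreasing, $s(B')\le s$ always, and $s([n])=s$ with the maximal $\tau$-partition of $f$ equal, by \autoref{cor:unique-semantic}, to $f=\sum f_i$ (hence the partition $\{S^{[n]}_m\}$ is $\{\{1\},\dots,\{s\}\}$); likewise $\ranksem((f_i)|_{B',\va})\le\ranksem(f_i)$. This monotonicity also guarantees that the matchings $\sigma$ computed inside \autoref{algo:finding-B} are well defined, so the algorithm does not abort.

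The heart of the proof is that the stopping condition makes these inequalities tight at the output $B$. Suppose first $s(B)<s$; then two true clusters are merged at $B$, say $i_1,i_2\in S^B_m$. Along any chain $B=B_0\subsetneq B_1\subsetneq\cdots\subsetneq B_t=[n]$ adding one variable at a time, the partitions $\{S^{B_\ell}_m\}$ refine with $\ell$ (monotonicity lemma), while $i_1,i_2$ lie in a common part at $B_0$ and in distinct parts at $B_t$, so at some single-variable step the cluster count strictly increases --- caught at line~\ref{line:cluster-increase} with the size-$1$ set $I$, contradicting termination. Hence $s(B)=s$, so every $S^B_m$ is a singleton $\{j_m\}$ and the $m$-th cluster rank equals $\ranksem((f_{j_m})|_{B,\va})$. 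Now suppose some such rank is deficient: $\ranksem((f_{j_m})|_{B,\va})<\ranksem(f_{j_m})$. Along the same kind of chain, $\ranksem((f_{j_m})|_{B_\ell,\va})$ is non-decreasing (\autoref{cl:restriction-rank} applied to each intermediate polynomial) and rises from a value below $\ranksem(f_{j_m})$ up to $\ranksem(f_{j_m})$; since the cluster count is already $s$ and cannot exceed $s$, every intermediate partition is again all singletons, so the rank of the cluster containing $j_m$ strictly increases at some single-variable step --- caught at line~\ref{line:rank-increase}, a contradiction. Therefore at the output $B$ the partition $\{S^B_m\}$ is discrete and every cluster rank is maximal. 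In the notation of the claim this gives $s'=s$ (conclusion 1), the permutation $\pi$ with $S^B_{\pi^{-1}(i)}=\{i\}$ and $g_{\pi(i)}=[(C_i)|_{B,\va}]=(f_i)|_{B,\va}$ (conclusion 2), and $\ranksem(g_{\pi(i)})=\ranksem((f_i)|_{B,\va})=\ranksem(f_i)$ (conclusion 3). Finally $\ranksem(g_{\pi(i)})=\ranksem(f_i)\le r$, and since $[D]=\sum g_i$ is the output of \autoref{algo:semantic-clustering-algorithm} with parameter $\tau$, its clusters are pairwise at semantic distance $\ge\tau\max_i\ranksem(g_{\pi(i)})=\tau r$, so the $g_i$'s form a $(\tau,r)$-partition.

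I expect the monotonicity lemma --- in particular the claim that enlarging $B'$ refines the $\tau$-cluster partition and never decreases a cluster's semantic rank --- and, inside it, the control of how $\Lin(\cdot)$, cluster gcd's, and hence semantic ranks and distances behave under $|_{B',\va}$, to be the main obstacle: this is exactly where the genericity encoded in $\Gamma_C$ (survival of partial-derivative minors, absence of spurious linear or common factors) must be invoked, and where one must check that the window size $4$ hard-coded in \autoref{algo:finding-B} is enough to witness any single increment that is globally available. The remainder is bookkeeping on top of \autoref{cl:cluster-refinement-in-B}, \autoref{cl:restriction-rank}, and the size bound of \autoref{cla:size-of-B}.
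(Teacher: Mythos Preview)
Your proposal has a genuine gap in the chain argument. You argue: along a single-variable chain $B=B_0\subsetneq\cdots\subsetneq B_t=[n]$, the cluster count (or a cluster rank) must strictly increase at \emph{some} step $\ell$, and this is ``caught at line~\ref{line:cluster-increase}'' (resp.\ line~\ref{line:rank-increase}), contradicting termination. But termination is at $B=B_0$, and the algorithm's stopping condition tests sets $I$ of size $\le 4$ adjoined to $B_0$, not to $B_\ell$. The step where the increase occurs may be $B_\ell\to B_{\ell+1}$ with $\ell>0$; the single variable $B_{\ell+1}\setminus B_\ell$ may only work \emph{after} the variables $B_\ell\setminus B_0$ have already been added, and there is no reason any size-$\le 4$ set adjoined directly to $B_0$ witnesses an increase. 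So the chain argument does not yield the contradiction you claim.

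This is exactly why the paper needs \autoref{cla:add-variables-to-B}: for any $B$ with $\ranksem(h|_{B,\va})<\ranksem(h)$ (applied to $h=\sum_{j\in S_1}f_j$), there exist $\le 4$ variables whose addition to $B$ strictly increases the restricted semantic rank. Its proof is the technical heart of the section and is not a bookkeeping matter: two variables are needed to raise $\rank(M_{P|_{B,\va}})$ via the Forbes--Ghosh--Saxena cone-closed basis trick (\autoref{cl:add-var-to-increase-rank}, \autoref{cl:good-basis-for-shift}), and up to two more are needed to kill spurious linear factors created by the restriction, handled via resultants (\autoref{cl:no-linear-factors}). Your proposal neither proves nor invokes such a statement, and the parenthetical ``one must check that the window size $4$ is enough'' is precisely the missing lemma. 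Relatedly, your monotonicity lemma (enlarging $B'$ \emph{refines} the $\tau$-partition) is not established in the paper and is not obvious; the paper's analysis of \autoref{algo:finding-B} (see \autoref{cl:algo-progress-measure}) explicitly avoids assuming it and instead pigeonholes over repeated partitions. The paper's proof of \autoref{cl:tau-semantic-restriction} also does not use monotonicity: it assumes $s'<s$, exhibits a merged cluster $g_1=\sum_{j\in S_1}(f_j)|_{B,\va}$, uses the $\tau_1$-vs-$\tau$ gap to force $\ranksem(\sum_{j\in S_1}f_j)>\ranksem(g_1)$, applies \autoref{cla:add-variables-to-B} to get a size-$\le 4$ set $I$, and then argues (via \autoref{cl:restriction-partition}, \autoref{cor:sem-is-syn}, \autoref{cor:same-num-semantic}, and the $\Upsilon_{B,C}$ minimality polynomial) that the algorithm would detect $I$ at $B$.
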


	The rest of this section is devoted to proving \autoref{cla:size-of-B} and \autoref{cl:tau-semantic-restriction}.

	\begin{remark}\label{rem:no-derand}
		The fact that $\deg(\Gamma_C)=n^{k^{k^{O(k)}}}$ is the bottleneck in derandomizing our algorithms. For derandomization we shall have to find $\va$ such that $\Gamma_C(\va)\neq 0$, and it is not clear how to achieve this in time $\poly(n,F(k))$ (i.e. not have $k$ in the exponent of $n$).
	\end{remark}

	\subsection{Proof of \autoref{cla:size-of-B}}
	
	Let $f$ be the multilinear polynomial in question.
	Let  $C$ be a multilinear $\Sps{k}$ circuit computing $f$. 
	Consider a partition of $f$, $f = \sum_{i=1}^s f_i$, as given by \autoref{cl:partition-two-taus} with $\varphi(k) = k^2$. Let $f = \sum_{i=1}^s f_i$ the $\tau=\tau_0^k$-semantic partition of minimal rank and let $C = \sum_{i=1}^s C_i$ the corresponding partition of the multiplication gates in $C$.

	\begin{claim}
		\label{cl:rank-of-subcluster}
		Let $B \subseteq [n]$ and $\va \in \F^n$.  Consider a ${\tau}$-semantic partition of $g=f|_{B,\va}$, $g=\sum_{i=1}^{s'} g_i$. Observe that by \autoref{cl:cluster-refinement-in-B}, there is $\cI \subseteq [s]$ such that $g_1 = \sum_{i \in \cI} (f_i)|_{B,\va}$. Then, for every $\cJ \subseteq \cI$ it holds that
		\[
		\ranksem(\sum_{j \in \cJ} (f_j)|_{B,\va}) \le 2^7 k^2 \log k \cdot \ranksem(g_1).
		\]
	\end{claim}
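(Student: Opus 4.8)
The plan is to derive the bound from two facts: the implication ``small semantic rank $\Rightarrow$ small syntactic rank'' for \emph{minimal} multilinear circuits (\autoref{lem:small-semantic-implies-small-syntactic-rank}), together with the observation that syntactic rank never increases when multiplication gates are deleted from a circuit. The input that lets us apply the first fact is \autoref{cl:cluster-refinement-in-B}: it tells us not merely that $g_1=\sum_{i\in I}(f_i)|_{B,\va}$ as polynomials, but that $g_1=[D_1]$ for a cluster $D_1$ of a \emph{minimal} multilinear $\Sps{k}$ circuit $D$ computing $g=f|_{B,\va}$, and moreover that, gate for gate, $D_1=\sum_{i\in I}(C_i)|_{B,\va}$, where $C=\sum_{i=1}^s C_i$ is the circuit fixed at the start of this subsection (so $[C_i]=f_i$).

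First I would record the monotonicity of syntactic rank under passing to subcircuits: if $C''$ is obtained from a multilinear $\Sps{k}$ circuit $C'$ by deleting some of its multiplication gates, then $\ranksyn(C'')\le\ranksyn(C')$. Indeed, $\gcd(C')$ divides $\gcd(C'')$ (a linear form dividing every gate of $C'$ in particular divides every surviving gate), so for each surviving gate $M$ the linear forms of $M/\gcd(C'')$ are obtained from those of $M/\gcd(C')$ by deleting the forms of $\gcd(C'')/\gcd(C')$; hence the set of linear forms appearing in $\simp(C'')$ is contained in the set appearing in $\simp(C')$, and taking spans gives the claim. Next, since $D$ is minimal so is its subcircuit $D_1$, and as $[D_1]=g_1$, \autoref{lem:small-semantic-implies-small-syntactic-rank} yields
\[
\ranksyn(D_1)\ \le\ 2^7 k^2\log k\cdot \ranksem(g_1).
\]

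Now fix $J\subseteq I$ and set $D_1^J:=\sum_{j\in J}(C_j)|_{B,\va}$. By \autoref{cl:cluster-refinement-in-B}, $D_1^J$ is exactly the subcircuit of $D_1$ obtained by deleting the gates coming from the clusters $C_i$ with $i\in I\setminus J$; so the monotonicity above gives $\ranksyn(D_1^J)\le\ranksyn(D_1)$. Since $[D_1^J]=\sum_{j\in J}(f_j)|_{B,\va}$, applying \autoref{obs:semantic-vs-syntactic} to the multilinear $\Sps{k}$ circuit $D_1^J$ gives
\[
\ranksem\bigl(\textstyle\sum_{j\in J}(f_j)|_{B,\va}\bigr)\ \le\ \ranksyn(D_1^J)\ \le\ \ranksyn(D_1)\ \le\ 2^7 k^2\log k\cdot \ranksem(g_1),
\]
which is the asserted inequality.

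The step I expect to be the real content — and the reason the statement is phrased through $g_1$ rather than through the bare circuit $\sum_{i\in I}(C_i)|_{B,\va}$ — is the appeal to minimality: the restricted circuit $\sum_{i\in I}(C_i)|_{B,\va}$ need not itself be minimal, because restricting variables can create new vanishing subcircuits, and a non-minimal multilinear circuit can have syntactic rank far exceeding the semantic rank of the polynomial it computes, which would break the estimate $\ranksyn(D_1)\le 2^7k^2\log k\cdot\ranksem(g_1)$. It is precisely \autoref{cl:cluster-refinement-in-B} that circumvents this, by realizing $g_1$ as a cluster of a minimal circuit while simultaneously tracking which restricted gates of the original clusters $C_i$ end up in that cluster; once this is in hand, the two monotonicity observations above are routine.
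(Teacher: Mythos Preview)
Your proof is correct and follows essentially the same route as the paper: bound $\ranksem$ by $\ranksyn$ (\autoref{obs:semantic-vs-syntactic}), use monotonicity of $\ranksyn$ under deleting gates to pass from $\sum_{j\in J}(C_j)|_{B,\va}$ to $D_1=\sum_{i\in I}(C_i)|_{B,\va}$, and then invoke \autoref{lem:small-semantic-implies-small-syntactic-rank} on $D_1$. The paper's version cites \autoref{rem:alg-semantic-unique} to locate $D_1$ inside $D=C|_{B,\va}$, whereas you appeal to \autoref{cl:cluster-refinement-in-B}; your explicit attention to minimality of $D_1$ (which \autoref{lem:small-semantic-implies-small-syntactic-rank} requires) is if anything more careful than the paper's own write-up.
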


	\begin{proof}
		Let $D=C|_{B,\va}$ be the restriction of $C$ to $(B,\va)$, which is a multilinear $\Sps{k}$ circuit computing $g$. By \autoref{rem:alg-semantic-unique}, there is a subcircuit $D_1$ such that $[D_1]=g_1$.
		Note that
		\begin{align*}
			\ranksem \left( \sum_{j \in \cJ} (f_j)|_{B,\va} \right) & \le \ranksyn \left(\sum_{j \in \cJ} (C_j)|_{B,\va} \right) \le \ranksyn \left(\sum_{i \in \cI} (C_i)|_{B,\va} \right) \\
			&= \ranksyn(D_1) \le 2^7 k^2 \log k \cdot \ranksem(g_1).
		\end{align*}
		where the first inequality follows from \autoref{obs:semantic-vs-syntactic}, the second from monotonicity of the syntactic rank, and the third inequality follows from \autoref{lem:small-semantic-implies-small-syntactic-rank}.
	\end{proof}
	
	Denote by $B_i$ the set obtained in the $i$-th iteration of the main loop in \autoref{algo:finding-B}. Associate with $B_i$ a partition $\pi(B_i)$ of $[k]$, $[k]=S_1\sqcup S_2 \ldots\sqcup S_{s_i}$  such that if we denote $g_{i,j}=\sum_{e\in S_j} f_e|_{B_i,\va}$, then $g_i=\sum_{j=1}^{s_i}g_{i,j}$ is a $\tau$-semantic partition of $g_i=f|_{B_i,\va}$. We denote by $r_i$ the rank of this partition.

	The following claim provides a ``progress measure'' that bounds the running time of \autoref{algo:finding-B}.
	
	\begin{claim}
		\label{cl:algo-progress-measure}
		
		Let $t < t'$ such that $\pi(B_t) = \pi(B_{t'})$. 
		Then, there exists $i \in [s_{t}]$ and $\cJ \subseteq S_i$ such that
		\[
		\ranksem \left( \sum_{j \in \cJ} (f_j)|_{B_t,\va} \right) < \ranksem \left( \sum_{j \in \cJ} (f_j)|_{B_{t'},\va} \right).
		\]
	\end{claim}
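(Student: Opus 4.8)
Write $B_0=\emptyset\subsetneq B_1\subsetneq B_2\subsetneq\cdots$ for the successive values of $B$ produced by \autoref{algo:finding-B}, so $B_i=B_{i-1}\cup I_i$ with $|I_i|\le 4$, and the $i$-th update is caused either by a change in the number of clusters (line~\ref{line:cluster-increase}) or by an increase in the semantic rank of some cluster (line~\ref{line:rank-increase}). Recall that $\pi(B_i)$ is the partition of the global clusters induced by the (unique, minimal-rank) maximal $\tau$-semantic partition of $f|_{B_i,\va}$; this is well defined by \autoref{cl:cluster-refinement-in-B} together with \autoref{cor:unique-semantic} and \autoref{rem:alg-semantic-unique}, and $r_i$ is its rank. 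The first and crucial step is to show that $\pi(B)$ \emph{refines monotonically} as $B$ grows: for $B\subseteq B'$ (same $\va$), $\pi(B')$ refines $\pi(B)$. Indeed $f|_{B,\va}=(f|_{B',\va})|_{B,\va}$, and running the refinement machinery of \autoref{sec:unique-semantic-partition} (\autoref{cor:max-clus-min-rank}, \autoref{lem:lower-rank-finer-partition}) on the circuit $C|_{B',\va}$ and its further restriction $C|_{B,\va}$ — exactly as in the proof of \autoref{cl:cluster-refinement-in-B} — shows that each cluster of $f|_{B',\va}$ restricts into a single cluster of $f|_{B,\va}$, so every block of $\pi(B')$ lies inside a block of $\pi(B)$. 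In particular $s_i:=|\pi(B_i)|$ is non-decreasing in $i$. Since $\pi(B_t)=\pi(B_{t'})$, and for $t\le i\le t'$ one has $\pi(B_t)\preceq\pi(B_i)\preceq\pi(B_{t'})$ (with $\preceq$ denoting ``coarser than''), we get $\pi(B_i)=\pi(B_t)$ for all $t\le i\le t'$. Thus the cluster count is constant on $[t,t']$, so every update $B_{u-1}\to B_u$ with $t<u\le t'$ — and there is at least one, since $B_t\subsetneq B_{t'}$ — was triggered by the rank-increase branch (line~\ref{line:rank-increase}), not by line~\ref{line:cluster-increase}.

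Next I trace one such update. Fix $u$ with $t<u\le t'$ and write $B_u=B_{u-1}\cup I$. At that update the algorithm held $f_{\mathrm{curr}}=f|_{B_{u-1},\va}$ with its saved maximal $\tau$-clusters $C_1,\dots,C_{s_{u-1}}$, learned $f_I=f|_{B_u,\va}$ and clustered it as $\sum_j C'_j$, matched $C'_j|_{x_I=\va_I}=C_{\sigma(j)}$, and found some $j$ with $\ranksem(C'_j)>\ranksem(C_{\sigma(j)})$. By \autoref{cl:cluster-refinement-in-B} applied to $f|_{B_u,\va}$, we have $C'_j=\sum_{e\in S}f_e|_{B_u,\va}$ for some block $S$ of $\pi(B_u)=\pi(B_t)$; fixing $x_I=\va_I$ and using $B_{u-1}=B_u\setminus I$ gives the polynomial identity $C_{\sigma(j)}=C'_j|_{x_I=\va_I}=\sum_{e\in S}f_e|_{B_{u-1},\va}$. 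Hence
\[
\ranksem\Bigl(\sum_{e\in S}f_e|_{B_{u-1},\va}\Bigr)=\ranksem(C_{\sigma(j)})<\ranksem(C'_j)=\ranksem\Bigl(\sum_{e\in S}f_e|_{B_u,\va}\Bigr).
\]

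Finally, since $B_t\subseteq B_{u-1}$ and $B_u\subseteq B_{t'}$, and $h|_{A,\va}=(h|_{A',\va})|_{A,\va}$ for $A\subseteq A'$, \autoref{cl:restriction-rank} gives $\ranksem(\sum_{e\in S}f_e|_{B_t,\va})\le\ranksem(\sum_{e\in S}f_e|_{B_{u-1},\va})$ and $\ranksem(\sum_{e\in S}f_e|_{B_u,\va})\le\ranksem(\sum_{e\in S}f_e|_{B_{t'},\va})$. Chaining these with the strict inequality above yields $\ranksem(\sum_{e\in S}f_e|_{B_t,\va})<\ranksem(\sum_{e\in S}f_e|_{B_{t'},\va})$, and taking $J:=S$ (a full block of $\pi(B_t)$, hence $J\subseteq S_i$ for $i$ its index) is exactly the claim. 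The only nontrivial ingredient is the monotone-refinement property of $\pi(B)$ established in the first paragraph — i.e. that enlarging $B$ never decreases the number of clusters — which is what forces the intermediate updates to be rank increases; granting that, the rest is bookkeeping with \autoref{cl:cluster-refinement-in-B} and \autoref{cl:restriction-rank}.
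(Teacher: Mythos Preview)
Your argument hinges on the ``monotone refinement'' claim in the first paragraph: that for $B\subseteq B'$ the maximal $\tau$-semantic partition $\pi(B')$ refines $\pi(B)$ (equivalently, $s_B\le s_{B'}$). The justification you give --- applying the proof of \autoref{cl:cluster-refinement-in-B} with $f|_{B',\va}$ playing the role of $f$ --- does not go through. That proof derives a contradiction from the chain
\[
\tau/10 \;\le\; \ranksyn\bigl((M_1)|_{B,\va}+(M_2)|_{B,\va}\bigr)\;\le\;\ranksyn(C_1)\;\le\;2^7k^2\log k\cdot r,
\]
and the contradiction relies on the \emph{special} rank bound $r\le R_M(2k)\,2^{7k}k^{4k}\tau_0^{\,k-2}$ for the global partition coming from \autoref{cl:partition-two-taus}, so that $2^7k^2\log k\cdot r \ll \tau_0^{\,k}=\tau$. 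If instead you start from the $\tau$-partition of $f|_{B',\va}$, the only available bound is $r_{B'}\le R_M(2k)\,2^{7k}k^{4k}\tau^{\,k-2}$ (\autoref{cl:semantic-clusters}), and the inequality $\tau/10\le 2^7k^2\log k\cdot r_{B'}$ is no longer a contradiction for $k\ge 3$. Nor do \autoref{lem:lower-rank-finer-partition} or \autoref{cor:max-clus-min-rank} help, since those compare two partitions of the \emph{same} polynomial, whereas you are comparing partitions of $f|_{B,\va}$ and $f|_{B',\va}$.

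In fact the paper's own proof explicitly treats a scenario incompatible with your monotonicity: in its third case ($r_{t+1}<r_t$) one has $\pi(B_{t+1})$ a \emph{strict} refinement of $\pi(B_t)=\pi(B_{t'})$, so $s_{t+1}>s_t=s_{t'}$ even though $B_{t+1}\subseteq B_{t'}$. This is precisely why the statement allows $J\subsetneq S_i$ rather than just $J=S_i$: in that case the witness $J$ is a block $R_i$ of the finer partition $\pi(B_{t+1})$, properly contained in some $S_j$. Your argument, which only ever produces $J$ equal to a full block of $\pi(B_t)$, cannot cover this case. To fix the proof you would need an independent argument ruling out $s_{B'}<s_B$ for $B\subseteq B'$, and the paper gives no such statement; absent that, you must follow the paper's case analysis on $r_{t+1}$ versus $r_t$.
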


	\begin{proof}
		
		Observe that since 	 $\pi(B_t) = \pi(B_{t'})$ then for every $i\in [s_t]=[s_{t'}]$ it holds that $g_{t,i}= g_{t',i}|_{B_t,\va}$. For the rest of the proof we shall denote $g_i=g_{t',i}$ and $s'=s_t=s_{t'}$.
		
		We split the proof into three cases according to the relation between the rank parameters of the partitions obtained in various stages of the algorithm.
		
		We first observe that if 
		$(g_1)|_{B_{t+1},\va}, \ldots, (g_{s'})|_{B_{t+1},\va}$ is not a $(\tau, r_t)$ partition then the claim holds. Indeed,  since for every $i \neq j$
		\[
		\ranksem((g_i)|_{B_{t+1},\va}, (g_j)|_{B_{t+1},\va}) \ge \ranksem((g_i)|_{B_{t},\va}, (g_j)|_{B_{t},\va}) \ge \tau \cdot r_t,
		\]
		for it not to be a $(\tau, r_t)$ partition it must be the case that there's some $i \in [s']$ such that $\ranksem((g_i)|_{B_{t+1},\va}) > r_t$ and thus
		\begin{equation}\label{eq:rankt+1}
			\ranksem ((g_i)|_{B_{t'},\va}) \ge \ranksem ((g_i)|_{B_{t+1},\va})  > r_t,
		\end{equation}
		which implies the claim. Hence,  we assume from now on that $(g_1)|_{B_{t+1},\va}, \ldots, (g_{s'})|_{B_{t+1},\va}$  is  a $(\tau, r_t)$ partition and that $r_{t+1} \le r_t$.
		

		If $r_{t+1} = r_t$, then \autoref{cor:unique-semantic} implies that the polynomials $(g_i)|_{B_{t+1}, \va}$ are the unique partition. As the number of clusters did not change and since we added variables to $B_t$, the description of the algorithm implies that the rank of one of the clusters increased, which proves the claim.
		
		Finally, suppose $r_{t+1} < r_t$. Consider the output of \autoref{algo:semantic-clustering-algorithm} at the $(t+1)$-th step. This corresponds to a partition $\pi(B_{t+1})$ of the clusters of $f$. 
		As we showed that $(g_1)|_{B_{t+1},\va}, \ldots, (g_{s'})|_{B_{t+1},\va}$ are a $(\tau, r_t)$ partition, the assumption that $r_{t+1} < r_t$ and 
		\autoref{lem:lower-rank-finer-partition} imply that  $\pi(B_{t+1})$ is a refinement of $\pi(B_t)=\pi(B_{t'})$.

		To ease notation let us denote with $h_1,\ldots,h_{s_{t+1}}$ the polynomials corresponding to the clusters of the partition $B_{t+1}$. I.e., $h_i=\sum_{j\in R_i}f_j|_{B_{t+1},\va}$, where $R_1,\ldots,R_{s_{t+1}}$ are the sets in the partition $\pi(B_{t+1})$.
		In the $t'$-th step, the polynomials $\{h_i|_{B_{t'},\va}:=\sum_{j\in R_i}f_j|_{B_{t'},\va}\}$  no longer form a $(\tau, r_{t+1})$ partition (here we use the fact that $\pi(B_{t'}) = \pi(B_t) \neq \pi(B_{t+1})$), as by \autoref{cla:low-rank-finer-part} a finer partition with the same $\tau$ implies lower rank, which means that the output of \autoref{algo:semantic-clustering-algorithm} at the $t'$-th step would not have been $g_1,\ldots,g_{s'}$.
		Similarly to before, this implies that for some $i \in [s_{t+1}]$,
		\[
		\ranksem(h_i|_{B_{t'}}) > \ranksem(h_i) \ge  \ranksem(h_i|_{B_{t}}).
		\]
		As  $\pi(B_{t+1})$ is a refinement of $\pi(B_t)$, there is some $j$ such that $R_i\subseteq S_j$, and the claim holds. This concludes the proof.
	\end{proof}

	\begin{proof}[Proof of \autoref{cla:size-of-B}]
		Suppose that the algorithm makes $T$ iterations of additions of variables to $B$ (each addition adds at most 4 variables). Recall that at each such iteration the algorithm holds a partition $\pi$ of the clusters of $f$. As there are at most $2^{k^2}$ possible partitions of the clusters (since there are at most $k$ clusters), there exists a partition $\pi$ that is obtained at least $T / 2^{k^2}$ times.
		
		By \autoref{cl:algo-progress-measure} and by another application of the pigeonhole principle, there exists a cluster $i$ and a set $\cJ \subseteq S_i$, such that for at least $T / (2^{k^2} \cdot 2^{k})$ values of $t$, $\ranksem(\sum_{j \in \cJ} (f_j)|_{B_t,\va})$ increases. In particular after that many steps,
		\[
		\ranksem(\sum_{j \in \cJ} (f_j)|_{B_t,\va}) > \frac{T}{2^{k^2+k}}.
		\]
		On the other hand, \autoref{cl:rank-of-subcluster} and \autoref{cl:partition-two-taus} promise that
		\[
		\ranksem(\sum_{j \in \cJ} (f_j)|_{B_t,\va}) \leq  2^7 k^2 \log k \cdot  R_M(2k) \cdot 2^{7k} \cdot k^{4k} \tau^{k-2} \le R_M(2k) \cdot 2^{7k+7}k^{4k+3}\tau^{k-2}. \]
		Hence,
		\begin{equation}
			\label{eq:upper-bound-on-B}
			|B|=4T\leq 4 \cdot 2^{k^2+k} \cdot\ranksem(\sum_{j \in \cJ} (f_j)|_{B_t,\va}) < R_M(2k) \cdot 2^{(k+4)^2}k^{4k+3}\tau^{k-2}.
		\end{equation}
		To complete the proof we recall that by \autoref{cl:partition-two-taus}, 
		\[
		\tau=\tau_0^k \leq \left(R_M(2k)^{{(k^2)}^{k}}\right)^k =k^{ k^{O(k)}}.
		\] 
		Together with \eqref{eq:upper-bound-on-B}, this implies $|B| \le k^{k^{O(k)}}$.
		
		We also note that by \autoref{cl:semantic-clusters} at each step
		\[
		r_i\leq R_M(2k) \cdot 2^{7k} k^{4k} \tau^{k-2} \leq  k^{ k^{O(k)}}.
		\] 
		Plugging in the upper bounds on $r$ and $|B|$ to \autoref{lem:low semantic rank exact} shows that every iteration of Step~\ref{step:learn-C} of \autoref{algo:finding-B} takes at most 
		\[\poly\left(|B|,(rk)^{{(rk)}^{\poly(r,k)}}\right)= k^{k^{ k^{ k^{O(k)}}}}.
		\] 
		The other steps in the algorithm run in time polynomial in $n$ and in smaller factors of $k$, and thus the claim follows.
	\end{proof}
	
	\begin{remark}
		\label{rem:running-time-of-finding-B}
		We have bounded the size of $B$ and the running time only for a certain ``correct'' choice of the parameter $\tau$ in \autoref{algo:finding-B}. In our reconstruction algorithm we run \autoref{algo:finding-B} with many possible choices of $\tau$, since we don't know a-priori which is the correct one. However, it is easy to modify the algorithm so that the upper bounds on the size of $B$ and on the running time always hold, by simply terminating the algorithm if $B$ gets too large or if it runs for too many steps.
	\end{remark}

	\subsection{Proof of \autoref{cl:tau-semantic-restriction} }
	
	\sloppy
	
	The following lemma shows that for a ``good'' choice of $\va$, if the semantic rank of $f|_{B,\va}$ is smaller than the semantic rank of $f$, then one can add a small number of variables to $B$ so that the semantic rank increases. 
	Note that such a statement is fairly easy to prove for \emph{syntactic} rank. Proving it for semantic rank, however, takes a considerable amount of work and is deferred to \autoref{subsec:adding-vars-to-B}. In \autoref{sec:whysemantic} we explain why we had to work with semantic rank rather than with syntactic rank.

	
	\begin{claim}\label{cla:add-variables-to-B}
		Let $f\in \F[x_1,\ldots,x_n]$ and let $B\subseteq\{x_1,\ldots,x_n\}$. There exists a non-zero polynomial $\Psi_B$ of degree at most $6n^7$ with the following property: For every $\va\in \F^n$ such that $\Psi_B(\va) \neq 0$,  if $\ranksem(f|_{B,\va})< \ranksem(f)$ then there exist variables $x,y,z,w \notin B$ such that  $\ranksem(f|_{B,\va})< \ranksem(f|_{B\cup \{x,y,z,w\}, \va})$.
	\end{claim}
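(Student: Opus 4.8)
The plan is to identify a \emph{minimal witnessing set}. Among all finite subsets $J\subseteq\overline B$ for which $\ranksem(f|_{B\cup J,\va})>\ranksem(f|_{B,\va})$, pick one that is minimal with respect to inclusion and call it $J^\ast$; such a $J$ exists since $J=\overline B$ works, because $f|_{B\cup\overline B,\va}=f$ and by hypothesis $\ranksem(f|_{B,\va})<\ranksem(f)$. The entire claim reduces to proving that $|J^\ast|\le 4$ whenever $\va$ avoids the zero set of a suitable nonzero polynomial $\Psi_B$ of degree at most $4n^6$ (after which we may take $\{x,y,z,w\}\supseteq J^\ast$, padding with arbitrary variables of $\overline B$ if necessary). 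Minimality has a convenient reformulation via \autoref{cl:restriction-rank}: since $\ranksem$ is monotone non-decreasing under enlarging the live set, for every $x\in J^\ast$ we must have $\ranksem(f|_{B\cup(J^\ast\setminus\{x\}),\va})=\ranksem(f|_{B,\va})$. Thus $J^\ast$ is \emph{critical} --- restoring all but one of its variables fails to raise the semantic rank --- and this rigidity is what will force $|J^\ast|$ to be small.

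The second step is to design $\Psi_B$ so that all the relevant quantities take their stable, ``generic'' values on restrictions. I would let $\Psi_B$ be a product of two families of factors, ranging over the small subsets $I\subseteq\overline B$ that are relevant to the argument: (i) a fixed nonzero maximal minor of the partial-derivative matrix $M_{f|_{B\cup I,\va}}$ from \autoref{def:partial-derivative-matrix}, with its entries regarded as polynomials in the free coordinates of $\va$; imposing $\Psi_B(\va)\neq 0$ then forces the essential-variable count $\mathrm{ess}(f|_{B\cup I,\va}):=\rank(M_{f|_{B\cup I,\va}})$ to attain its maximum possible value, the one given by a purely symbolic restriction; and (ii) resultants, in the sense of \autoref{sec:resultant}, that detect whether a linear factor appearing in $f|_{B,\va}$ does or does not persist (``lift'') to a linear factor of $f|_{B\cup I,\va}$, so that the behaviour of the \emph{accidental} linear factors created by restriction is also governed by $B$ alone and not by the particular $\va$. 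Since we are in the multilinear setting (so $\deg f\le n$), each of these minors and resultants is a polynomial of degree $O(n^2)$ in the coordinates of $\va$, and only a polynomial-in-$n$ number of them are needed; a routine bookkeeping of degrees then yields $\deg\Psi_B\le 4n^6$, and also shows $\Psi_B\not\equiv 0$ because each factor has a point where it is nonzero (e.g.\ the symbolic restriction realizing the stable values).

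The heart of the matter, and the step I expect to be the main obstacle, is the bound $|J^\ast|\le 4$. The right bookkeeping is $\ranksem(g)=\mathrm{ess}(g)-\delta(g)$, where $\delta(g)\ge 0$ counts the linearly independent linear factors of $g$ that lie \emph{outside} the span of the linear functions $g/\Lin(g)$ genuinely depends on; as the live set grows, $\mathrm{ess}$ is non-decreasing while the accidental linear factors feeding $\delta$ tend to vanish, and the net effect on $\ranksem$ is non-decreasing. Using criticality of $J^\ast$, one splits into two regimes. In the first, restoring the whole of $J^\ast$ strictly increases $\mathrm{ess}$; here one shows that a single variable of $J^\ast$ already does so (a standard essential-variables argument, using condition (i)), which in itself need not raise $\ranksem$ --- the gained direction may be absorbed into an accidental linear factor --- and this is exactly where further variables are spent. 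In the second regime, $\mathrm{ess}$ does not move but $\ranksem$ does, so restoring $J^\ast$ must destroy at least one ``genuine'' accidental linear factor of $f|_{B,\va}$; using the resultant conditions (ii) one argues that each such linear factor can be killed by adding a bounded number of variables of $\overline B$. A careful combination --- one variable to produce a new essential direction, and at most three more to prevent that direction (or a genuine accidental factor) from being masked --- shows that the increase is always realized inside a set of size at most $4$, so by minimality $|J^\ast|\le 4$. The delicate point throughout is that this entire count must be performed against the \emph{symbolic} restriction (legitimate precisely because $\Psi_B(\va)\neq 0$), since only the generic restriction has the stable essential-variable and linear-factor structure that makes the argument close; pinning down these genericity statements, and verifying that they are captured by a $\Psi_B$ of degree at most $4n^6$, is where the bulk of the work lies.
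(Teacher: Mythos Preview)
Your high-level decomposition --- separating the essential-variable count $\mathrm{ess}$ from the effect of accidental linear factors, and using resultants for the latter --- is in the same spirit as the paper's. However, there is a genuine gap in your first regime.

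You assert that if restoring all of $J^\ast$ strictly increases $\mathrm{ess}$, then ``a single variable of $J^\ast$ already does so (a standard essential-variables argument).'' This is false. Take $f = x_1 + x_3 x_4$ and $B=\{x_1\}$. Then $f|_{B,\va}=x_1+a_3a_4$ has $\mathrm{ess}=1$; adding $x_3$ alone gives $x_1+a_4x_3$, still $\mathrm{ess}=1$, and symmetrically for $x_4$; but adding both gives $x_1+x_3x_4$ with $\mathrm{ess}=3$. So no single variable of $J^\ast=\{x_3,x_4\}$ raises $\mathrm{ess}$, yet the pair does. There is no ``standard'' argument here: the paper handles exactly this step (\autoref{cl:add-var-to-increase-rank} via \autoref{cl:good-basis-for-shift}) by passing to the \emph{symbolic shift} $f(\vx+\vz)$ and invoking the cone-closed-basis machinery of Forbes--Ghosh--Saxena \cite{FGS18} to show that \emph{two} variables always suffice to raise $\rank(M_{f|_{B,\va}})$. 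This is the key non-trivial ingredient you are missing, and your downstream count (``one variable to produce a new essential direction, and at most three more to prevent that direction from being masked'') rests on the incorrect premise.

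The paper does not use a minimal witnessing set at all. It factors $f=\prod_i\ell_i\cdot\prod_i P_i$ with the $P_i$ irreducible non-linear, and argues directly by cases: if some $(P_i)|_{B,\va}$ has acquired a linear factor, then two variables (found via resultants, \autoref{cl:no-linear-factors}) kill it and the semantic rank strictly rises; otherwise $\ranksem(f|_{B,\va})=\rank(M_{P|_{B,\va}})$, two variables (via FGS) raise this matrix rank, and if that step happens to create a new linear factor, two further variables remove it. The worst case is $2+2=4$. Your plan for $\Psi_B$ via minors and resultants is reasonable in outline, but without the FGS step the argument does not close.
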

	
	We stress that \autoref{cla:add-variables-to-B} holds for \emph{every} set $B$ and the polynomial $\Psi_B$ doesn't depend on $\va$. In \autoref{algo:finding-B}, the set $B$ that the algorithm constructs \emph{does} potentially depend on $\va$. However, we will later assert that since $\va$ is chosen randomly, it is (among other things) not a zero of $\Psi_B$ for \emph{any} set $B$.

	We continue with the proof of \autoref{cl:tau-semantic-restriction}.
	It turns out that it is easier to argue about how \emph{syntactic} rank behaves with respect to restrictions. We next state some lemmas showing that there are simple polynomial conditions such that if $\va$ is not a zero of any of them then fixing some variables according to $\va$ preserves the structure of the circuit.
	
	We start with a well known lemma (see, e.g., Observation 2.1 in \cite{KarninMSV13}) that constructs a polynomial that preserves pairwise linear independence between linear functions. 
	
	\begin{lemma}\label{lem:D-lin-ind}
		Let $L(\vx)=\sum_{i=1}^{n}a_ix_i+a_0$ and $R(\vx)=\sum_{i=1}^{n}b_ix_i+b_0$ be two linearly independent linear functions. Denote $S=\{i\mid a_i\neq 0 \; \text{or}\; b_i\neq 0\}$. Let
		\[D(L,R):=\prod_{i\in S}\left(a_i R(\vx) -b_i L(\vx)\right).
		\]
		Assume that $\va$ satisfies  $D(L(\va),R(\va))\neq 0$. Then, for every $I\subsetneq [n]$, such that $S\not\subseteq I$, it holds that $L|_{\vx_I=\va_I}$ and $R|_{\vx_I=\va_I}$ are linearly independent.
	\end{lemma}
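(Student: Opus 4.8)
The plan is to argue by contradiction. Suppose $\va$ satisfies $D(L,R)(\va)\neq 0$ --- equivalently, $a_iR(\va)-b_iL(\va)\neq 0$ for every $i\in S$ --- and yet for some $I$ with $S\not\subseteq I$ the restricted affine functions $L':=L|_{\vx_I=\va_I}$ and $R':=R|_{\vx_I=\va_I}$ (polynomials in the free variables $\{x_j:j\notin I\}$) are linearly dependent, i.e.\ there is $(\alpha,\beta)\in\F^2\setminus\{0\}$ with $\alpha L'+\beta R'\equiv 0$. Before this, I would also record the (easy but reassuring) fact that $D(L,R)$ is not the zero polynomial: each factor $a_iR-b_iL$ with $i\in S$ is nonzero, since otherwise linear independence of $L$ and $R$ would force $a_i=b_i=0$, contradicting $i\in S$; hence the hypothesis $D(L,R)(\va)\neq 0$ is satisfiable.

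The core of the argument is to extract two linear constraints on $(\alpha,\beta)$ from the relation $\alpha L'+\beta R'\equiv 0$. First, pick an index $i_0\in S\setminus I$, which is nonempty by hypothesis. Since $i_0\notin I$, the variable $x_{i_0}$ is not substituted, so the coefficient of $x_{i_0}$ in $L'$ is $a_{i_0}$ and in $R'$ is $b_{i_0}$; comparing coefficients of $x_{i_0}$ in $\alpha L'+\beta R'\equiv 0$ gives $\alpha a_{i_0}+\beta b_{i_0}=0$. Second, evaluate the identity $\alpha L'+\beta R'\equiv 0$ at the point assigning $x_j=\va_j$ to every free coordinate $j\notin I$: restricting $L$ by $\vx_I=\va_I$ and then evaluating the remaining variables at $\va$ produces exactly $L(\va)$ (and similarly $R(\va)$), so this yields $\alpha L(\va)+\beta R(\va)=0$.

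Combining the two, $(\alpha,\beta)\neq 0$ is a nonzero solution of the homogeneous $2\times2$ system whose rows are $(a_{i_0},b_{i_0})$ and $(L(\va),R(\va))$, so the determinant $a_{i_0}R(\va)-b_{i_0}L(\va)$ must vanish. Since $i_0\in S$, this contradicts the assumption that the factor $a_{i_0}R(\vx)-b_{i_0}L(\vx)$ of $D(L,R)$ does not vanish at $\va$. Hence $L'$ and $R'$ are linearly independent, as claimed.

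I do not expect a genuine obstacle here; the only points requiring care are the restriction convention (in $L|_{\vx_I=\va_I}$ it is the coordinates in $I$ that get fixed, the complement staying free) and the degenerate possibility that $L'$ or $R'$ becomes a constant or even identically zero. The latter is handled uniformly by the same determinant computation: the two extracted equations $\alpha a_{i_0}+\beta b_{i_0}=0$ and $\alpha L(\va)+\beta R(\va)=0$ make no assumption that $L'$ or $R'$ is non-constant, so the contradiction goes through verbatim (for instance if $L'\equiv0$ then $a_{i_0}=0$ and $L(\va)=0$, and again $a_{i_0}R(\va)-b_{i_0}L(\va)=0$).
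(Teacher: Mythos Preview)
Your proof is correct. The paper does not actually provide its own proof of this lemma: it is stated as a well-known fact with a citation to Observation~2.1 in \cite{KarninMSV13}, so there is nothing to compare against. Your argument---extracting the two constraints $\alpha a_{i_0}+\beta b_{i_0}=0$ (from the coefficient of the free variable $x_{i_0}$) and $\alpha L(\va)+\beta R(\va)=0$ (from evaluation at $\va$), then observing that the resulting $2\times2$ determinant is precisely the factor of $D(L,R)$ indexed by $i_0$---is the natural and standard way to prove this, and your handling of the degenerate cases is sound.
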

	
	We next define another polynomial condition that allows us to claim that a certain fixing does not hurt the rank too much.

	\begin{claim}
		\label{cl:rank-of-restriction}\label{cl:restriction-partition}
		Let $C$ be a multilinear $\spsk$ circuit, and $\set{C_1, \ldots, C_s}$ be a $(\tau, r)$-syntactic partition of $C$.
		There exists a non-zero $n$-variate polynomial $\Phi_{C}$ of degree at most $n^3k^3$, such that for every $\va \in \F^n$, if $\Phi_{C}(\va) \neq 0$, then:
		\begin{enumerate}
			\item For every $I\subseteq [n]$ and every subcircuit $C'$ of $C$, it holds that $\ranksyn(C'|_{x_I=\va_I}) \ge \ranksyn(C')-|I|$.
			\item For every set $I \subseteq \set{x_1, \ldots, x_n}$
			it holds that $(C_1)|_{\vx_I=\va_I}, \ldots, (C_r)|_{\vx_I=\va_I}$ is a $(\tau-|I|, r)$-syntactic partition of $C|_{\vx_I=\va_I}$.
		\end{enumerate}
	\end{claim}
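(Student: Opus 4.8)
The plan is to take $\Phi_C$ to be a single product of the pairwise–independence polynomials of \autoref{lem:D-lin-ind}, one for each pair of linearly independent linear functions occurring anywhere in $C$, and then to deduce both items from one statement: for $\va$ off the zero set of $\Phi_C$, fixing the variables of $I$ creates no new common linear factor in any subcircuit, and a restriction of $|I|$ coordinates can only drop the dimension of a span of linear functions by at most $|I|$. Concretely, let $\mathcal L(C)$ be the set of linear functions appearing in $C$; by multilinearity $|\mathcal L(C)|\le kd\le kn$. Set
\[
\Phi_C \;:=\; \prod_{\substack{L,R\in\mathcal L(C)\\ L\not\sim R}} D(L,R),
\]
where $D(L,R)$ is the polynomial from \autoref{lem:D-lin-ind}. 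Each factor is a nonzero polynomial of degree at most $n$, and $L\not\sim R$ forces each factor to be nonzero, so $\Phi_C$ is a nonzero polynomial of degree at most $(kn)^2\cdot n\le n^3k^3$. Fix $\va$ with $\Phi_C(\va)\neq 0$.

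For Item 1, I would first establish a gcd–preservation statement: for every subcircuit $C'$ of $C$ (note $\mathcal L(C')\subseteq\mathcal L(C)$, so the same $\Phi_C$ serves all subcircuits) and every $I$, the linear functions appearing in $\simp(C'|_{\vx_I=\va_I})$ are exactly $\{\,\ell|_{\vx_I=\va_I}:\ell\in\mathcal L(\simp(C')),\ \ell|_{\vx_I=\va_I}\text{ non-constant}\,\}$; i.e.\ restriction introduces no new common linear factor. The argument: if $\ell\in\mathcal L(\simp(C'))$ occurs in $M_a/\gcd(C')$ but not in $M_b/\gcd(C')$, then since a restricted multiplication gate is a product of linear functions and $\ell|_{\vx_I=\va_I}$ is irreducible (when non-constant), $\ell|_{\vx_I=\va_I}$ can divide $M_b|_{\vx_I=\va_I}/\gcd(C')|_{\vx_I=\va_I}$ only if it is a scalar multiple of $\ell'|_{\vx_I=\va_I}$ for some $\ell'$ appearing in $M_b/\gcd(C')$; but $\ell\not\sim\ell'$, so the factor $D(\ell,\ell')$ of $\Phi_C$ together with \autoref{lem:D-lin-ind} rules this out unless the joint support of $\ell,\ell'$ lies inside $I$, in which case $\ell|_{\vx_I=\va_I}$ is constant and not a linear factor at all. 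Granting this, $\ranksyn(C'|_{\vx_I=\va_I})$ equals the dimension of the span of $\{\ell|_{\vx_I=\va_I}:\ell\in\mathcal L(\simp(C'))\}$. Finally, restriction of the variables of $I$ is a fixed linear map on the $(n+1)$-dimensional space of affine functions whose kernel (the affine functions supported on $I$ that vanish at $\va_I$) has dimension $|I|$, so it lowers the dimension of any subspace by at most $|I|$; combining the two facts gives $\ranksyn(C'|_{\vx_I=\va_I})\ge \ranksyn(C')-|I|$.

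Item 2 then follows with essentially no extra work, taking $(C_1,\dots,C_s)$ to be the given $(\tau,r)$-syntactic partition of $C$. Restriction never increases syntactic rank (the linear functions of $\simp(C_i|_{\vx_I=\va_I})$ are among the restrictions of those of $\simp(C_i)$), so every cluster still has rank at most $r$. For $i\neq j$, applying Item 1 to the subcircuit $C_i+C_j$ yields
\[
\dist\!\big(C_i|_{\vx_I=\va_I},\,C_j|_{\vx_I=\va_I}\big)=\ranksyn\!\big((C_i+C_j)|_{\vx_I=\va_I}\big)\ge \ranksyn(C_i+C_j)-|I|\ge \tau r-|I|\ge(\tau-|I|)\,r,
\]
using $r\ge 1$; hence $(C_1|_{\vx_I=\va_I},\dots,C_s|_{\vx_I=\va_I})$ is a $(\tau-|I|,r)$-syntactic partition of $C|_{\vx_I=\va_I}$. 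The step I expect to be the main obstacle is the gcd–preservation claim inside Item 1: one must rule out that any linear function "migrates" into the gcd after restriction, carefully handling the degenerate cases where a restricted linear function becomes constant or two become proportional — precisely what the factors $D(L,R)$ are designed to prevent, but the bookkeeping (the irreducibility/UFD reasoning, and the harmless off-by-one contributed by linear functions whose support is entirely in $I$, which add nothing to the syntactic rank after restriction once the constant function is recorded in the span as usual) is where the care is required.
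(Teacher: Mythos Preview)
Your approach is essentially the same as the paper's, but your $\Phi_C$ is missing a crucial factor: the paper takes $\Phi_C=\prod_{i=1}^k T_i \cdot \prod_{(i,j)\neq(i',j')} D(\ell_{i,j},\ell_{i',j'})$, including the product of all multiplication gates. Without the factor $\prod_i T_i$, nothing prevents a linear function $\ell$ whose support lies entirely inside $I$ from restricting to the \emph{zero} constant, which kills an entire gate. Your gcd-preservation step silently assumes each restricted gate $M_b|_{\vx_I=\va_I}$ is still a nonzero product of linear functions (so that divisibility by $\ell|_{\vx_I=\va_I}$ forces a proportionality with some factor). When a gate collapses to zero this fails, and then linear functions from the surviving gates can migrate into the gcd.

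Concretely, take $C'=M_1+M_2+M_3$ with $M_1=\ell_1\ell_2$, $M_2=\ell_1\ell_3$, $M_3=\ell_4\ell_5$, all five functions independent, $\gcd(C')=1$, and $I=\{j\}$ with $\ell_4=x_j-\va_j$. Your $\Phi_C(\va)$ can be nonzero (the $D$-factors only control pairwise independence), yet $M_3|_{\vx_I=\va_I}=0$, so $\gcd(C'|_{\vx_I=\va_I})$ now contains $\ell_1|_{\vx_I=\va_I}$ and $\ranksyn(C'|_{\vx_I=\va_I})\le 2<5-1=\ranksyn(C')-|I|$. The fix is exactly what the paper does: multiply in $\prod_i T_i$, so that $\prod_i T_i(\va)\neq 0$ forces every $\ell$ supported in $I$ to restrict to a \emph{nonzero} constant; then your argument for Item~1 (and hence Item~2) goes through, and the degree bound $kn+\binom{kn}{2}\cdot n\le n^3k^3$ is still met.
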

	
	\begin{proof}
		Let $C=\sum_{i=1}^{k}T_i$, where $T_i=\prod_j\ell_{i,j}(\vx)$ are multilinear multiplication gates, and let $C'=\sum_{i\in S}T_i$ be a subcircuit of $C$, where $S\subseteq [k]$. Assume $\ranksyn(C')=r'$.
		As in the proof of Lemma 6.14 of \cite{BSV21}
		we define the polynomial 
		\[\Phi_C(\vx)=\prod_{i=1}^{k}T_i \cdot \prod_{(i,j)\neq(i',j')}D(\ell_{i,j},\ell_{i',j'}).\]
		Observe that $\deg(\Phi_C)\leq kn\cdot {kn \choose 2} < n^3k^3$.
		
		We first claim that if $\Phi_C(\va) \neq 0$, then for every $I\subseteq [n]$, $\gcd(C'|_{\vx_I=\va_I}) \sim \gcd(C')|_{\vx_I=\va_I}$.
		Indeed, it's clear that for every $\ell \in \gcd(C')$, $\ell|_{\vx_I=\va_I}$ is also in $\gcd(C')|_{\vx_I=\va_I}$. In the other direction, suppose that
		$\ell|_{\vx_I=\va_I}$ is in $\gcd(C')|_{\vx_I=\va_I}$ 
		but $\ell$ isn't in $\gcd(C')$. This implies that there is a multiplication gate $T$ in $C'$ such that $\ell$ doesn't divide $T$.  If $\ell$ is not supported on $I$, then \autoref{lem:D-lin-ind} implies that $\ell|_{\vx_I=\va_I}$ does \emph{not} divide $T|_{\vx_j=a_j}$, which contradicts the assumption that $\ell|_{\vx_I=\va_I}$ is in $\gcd(C')|_{\vx_I=\va_I}$. 
		If $\ell$ is supported on $I$ then, as $\prod_j T_j(\va)\neq 0$, $\ell_{\vx_I=\va_I}\neq 0$. Thus, $\ell$ is restricted to a nonzero constant and does not affect the gcd.

		Having shown that, let $\tilde{C'}$ be the simplification of $C'$. The argument above shows that the only linear functions that ``disappear'' from $\tilde{C'}$ when restricted to $\vx_I=\va_I$, are those that are supported on the variables in $I$.
		Let $\ell_1, \ldots, \ell_{r'}$ be linear functions that span the linear functions in $\tilde{C'}$. Thus, $(\ell_1)|_{\vx_I=\va_I}, \ldots, (\ell_r)_{\vx_I=\va_I}$ span the simplification of $C'|_{\vx_I=\va_I}$. Finally, note that the dimension of $\set{(\ell_1)|_{\vx_I=\va_I}, \ldots, (\ell_{r'})_{\vx_I=\va_I}}$ is at least $r'-|I|$, as it is the intersection of an $r$ dimensional space with a subspace of codimension $|I|$.
		
		To prove the second item, we note that  for every $i \in [s]$, $\ranksyn((C_i)_{\vx_I=\va_I}) \le \ranksyn(C_i) \le r$.
		Further, for every $i \neq i'$, the argument above applied  to $C'=C_i+C_{i'}$, implies that
		\begin{align*}
			\dist(C_i|_{\vx_I=\va_I}, C_{i'}|_{\vx_I=\va_I}) &= \ranksyn(C_i|_{\vx_I=\va_I} + C_{i'}|_{\vx_I=\va_I}) = \ranksyn(C'|_{\vx_I=\va_I})\\ 
			&\ge \ranksyn(C') - |I|  \ge \tau r - |I| \ge (\tau-|I|) r. \qedhere
		\end{align*}
	\end{proof}
	
	\begin{claim}
		\label{cl:anti-restriction-partition}
		Let $C=\sum_{i=1}^{s}C_i$ be a  multilinear $\spsk$ circuit, where each $C_i$ is a sum of one or more multiplication gates. Let $\Phi_C$ be as in \autoref{cl:restriction-partition} and  
		$\va \in \F^n$ be such that $\Phi_C(\va) \neq 0$. Then,
		for every subset $I \subseteq \set{x_1, \ldots, x_n}$ of at most $4$ variables, if  $C|_{\vx_I=\va_I}=\sum_{i=1}^{s}C_i|_{\vx_I=\va_I}$ is a $(\tau,r)$-syntactic partition, then it holds that $\set{C_1, \ldots, C_s}$ is a $(\tau/5, r+4)$-syntactic partition of $C$.
	\end{claim}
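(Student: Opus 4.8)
The plan is to push the two defining properties of a $(\tau,r)$-syntactic partition — that every cluster has syntactic rank at most $r$, and that every two clusters have distance at least $\tau r$ — from the restricted circuit $C|_{\vx_I=\va_I}$ back up to $C$ itself. The restriction can only shrink the syntactic rank of subcircuits of $C$ (this is exactly where the hypothesis $\Phi_C(\va)\neq 0$ enters), so the distances survive up to the factor hidden in $\tau/5$; and by \autoref{cl:restriction-partition}(1) the cluster ranks grow by at most $|I|\le 4$.

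The one ingredient I would isolate first is the two-sided estimate: for every subcircuit $C'$ of $C$ and every $I\subseteq\{x_1,\ldots,x_n\}$, if $\Phi_C(\va)\neq 0$ then
\[
\ranksyn(C')-|I|\ \le\ \ranksyn\big(C'|_{\vx_I=\va_I}\big)\ \le\ \ranksyn(C').
\]
The left-hand inequality is exactly item~1 of \autoref{cl:restriction-partition}. For the right-hand inequality I would reuse the computation already carried out in the proof of \autoref{cl:restriction-partition}: the hypothesis $\Phi_C(\va)\neq 0$ forces $\gcd(C'|_{\vx_I=\va_I})\sim\gcd(C')|_{\vx_I=\va_I}$, hence $\simp(C'|_{\vx_I=\va_I})\sim\simp(C')|_{\vx_I=\va_I}$; since the map $\ell\mapsto\ell|_{\vx_I=\va_I}$ is a linear map on the space of affine forms, it cannot increase the dimension of the span of the linear forms occurring in $\simp(C')$, which gives $\ranksyn(C'|_{\vx_I=\va_I})\le\ranksyn(C')$.

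Granting this, the rest is short bookkeeping. Since $C|_{\vx_I=\va_I}=\sum_i C_i|_{\vx_I=\va_I}$ is a $(\tau,r)$-syntactic partition we have $\ranksyn(C_i|_{\vx_I=\va_I})\le r$ for every $i$; applying the left inequality with $C'=C_i$ and $|I|\le 4$ gives $\ranksyn(C_i)\le r+4$. For $i\neq j$ the circuit $C_i+C_j$ is a subcircuit of $C$, so the right inequality together with the distance bound of the restricted partition gives
\[
\dist(C_i,C_j)=\ranksyn(C_i+C_j)\ \ge\ \ranksyn\big((C_i+C_j)|_{\vx_I=\va_I}\big)\ \ge\ \tau r .
\]
Finally $\tau r\ge(\tau/5)(r+4)$ is equivalent to $5r\ge r+4$, i.e.\ to $r\ge 1$; when $r=0$ the statement is degenerate (each restricted cluster is then a product of linear forms of syntactic rank $0$), so we may assume $r\ge 1$. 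Combining the two bounds shows that $(C_1,\ldots,C_s)$ is a $(\tau/5,\,r+4)$-syntactic partition of $C$.

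I expect the only non-routine point to be the second inequality in the two-sided estimate — ``restriction does not raise the syntactic rank of subcircuits of $C$'' — but this is already contained in the $\gcd$-preservation argument that is proved as part of \autoref{cl:restriction-partition}, so no genuinely new idea is needed; the remainder is just the definition of a $(\tau,r)$-partition and the elementary inequality $5r\ge r+4$.
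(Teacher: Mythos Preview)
Your proof is correct and follows essentially the same approach as the paper's own proof: use item~1 of \autoref{cl:restriction-partition} to bound $\ranksyn(C_i)\le r+4$, use the (easy) fact that restriction cannot increase syntactic rank to get $\dist(C_i,C_j)\ge\tau r$, and then observe $\tau r\ge(\tau/5)(r+4)$. You supply more detail than the paper does---in particular you spell out why restriction cannot raise syntactic rank and you flag the $r\ge 1$ assumption that the paper leaves implicit---but the argument is the same.
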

	
	\begin{proof}
		\autoref{cl:restriction-partition} gives
		$\ranksyn(C_i) \le \ranksyn((C_i)|_{\vx_I=\va_I})+4 \le r+4$.
		Further, for $i \neq i'$,
		\begin{align*}
			\dist(C_i, C_{i'}) &= \ranksyn(C_i+ C_{i'}) \\
			&\ge \ranksyn((C_i)|_{\vx_I=\va_I} + (C_{i'})|_{\vx_I=\va_I}) \ge \tau r \ge (\tau/5)(r+4). \qedhere
		\end{align*}
	\end{proof}
	
	Finally, we need a simple lemma regarding restrictions that preserve minimality of circuits.
	
	\begin{lemma}
		\label{lem:restriction-minimal}
		Let $C=\sum_{i=1}^{k}T_i$ be a minimal multilinear $\Sps{k}$ circuit, where the $T_i$s are multiplication gates. Assume that $k$ is the minimal integer such that $[C]\in\Sps{k}$. Let $B$  a subset of the variables. Then, there's a polynomial $\Upsilon_{B,C}$ of degree at most $n \cdot 2^{3k}$ with the following property: for every $\va \in \F^n$, if $\Upsilon_{B,C} (\va) \neq 0$ then $f|_{B,\va}$ is minimal. Furthermore, for every $S\neq S'$, $\sum_{i\in S}T_i|_{B,\va} -\sum_{i\in S'}^{k}T_i|_{B,\va}\neq 0$.
	\end{lemma}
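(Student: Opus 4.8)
The plan is to take $\Upsilon_{B,C}$ to be a product of ``coefficient polynomials'', one for every signed sum of multiplication gates that could possibly vanish under the restriction. For a pair of disjoint subsets $S,S'\subseteq[k]$ with $S\neq S'$, let $g_{S,S'}:=\sum_{i\in S}T_i-\sum_{i\in S'}T_i\in\F[x_1,\dots,x_n]$. For an arbitrary (not necessarily disjoint) pair $S\neq S'$, cancellation gives $\sum_{i\in S}T_i-\sum_{i\in S'}T_i=g_{S\setminus S',\,S'\setminus S}$, where $(S\setminus S',\,S'\setminus S)$ is a disjoint pair whose two parts are distinct and not both empty; the same identity holds after restricting the variables outside $B$. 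So it suffices to arrange that every $g_{S,S'}$ of the first kind stays nonzero after setting $x_j=\va_j$ for $j\notin B$, and the minimality of the restricted circuit $\sum_i T_i|_{B,\va}$ is then the special case $S'=\emptyset$, $\emptyset\neq S\subsetneq[k]$.

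First I would check that each such $g_{S,S'}$ is a nonzero polynomial. Suppose not, i.e.\ $\sum_{i\in S}T_i=\sum_{i\in S'}T_i$. Substituting this into $[C]=\sum_{i=1}^{k}T_i$ shows that $[C]=\sum_{i\notin S\cup S'}T_i+2\sum_{i\in S'}T_i$, where the last term vanishes identically when $\Char(\F)=2$. Since $S,S'$ are disjoint and distinct, $S\cup S'\neq\emptyset$; if both are nonempty this is a $\SumProdSum$ representation of $[C]$ with top fan-in at most $k-|S|$ (each $2T_i$ is a single multiplication gate, the scalar absorbed into a linear form), and with top fan-in at most $k-|S\cup S'|$ in characteristic $2$ — in either case strictly less than $k$, contradicting the hypothesis that $k$ is the minimal integer with $[C]\in\Sps{k}$. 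The degenerate cases $S=\emptyset$ or $S'=\emptyset$ just say that some (proper or full) subsum of the $T_i$ vanishes, which contradicts the minimality of $C$ (or, when the subsum is all of $[k]$, again the optimality of $k$, since $[C]\neq 0$). Hence $g_{S,S'}\neq 0$ for every admissible pair.

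Finally I would assemble $\Upsilon_{B,C}$. For each disjoint pair with $S\neq S'$, regard $g_{S,S'}$ as a polynomial in $\{x_j:j\in B\}$ with coefficients in $\F[\{x_j:j\notin B\}]$; since $g_{S,S'}\neq 0$, some monomial $\vx_B^{\ve}$ has a nonzero coefficient polynomial $c_{S,S'}$, and since $g_{S,S'}$ is multilinear in $n$ variables, $\deg(c_{S,S'})\le\deg(g_{S,S'})\le n$. Set $\Upsilon_{B,C}:=\prod_{(S,S')}c_{S,S'}$, the product over all such pairs; there are at most $3^{k}$ of them, so $\Upsilon_{B,C}$ is a nonzero $n$-variate polynomial of degree at most $3^{k}n\le 2^{3k}n$. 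If $\Upsilon_{B,C}(\va)\neq 0$ then $c_{S,S'}(\va)\neq 0$ for every pair, so the coefficient of $\vx_B^{\ve}$ in $g_{S,S'}|_{B,\va}$ is nonzero, whence $g_{S,S'}|_{B,\va}\neq 0$; taking $S'=\emptyset$ with $\emptyset\neq S\subsetneq[k]$ shows $\sum_{i\in S}T_i|_{B,\va}\neq 0$, so $\sum_i T_i|_{B,\va}$ is minimal, and the reduction in the first paragraph upgrades this to $\sum_{i\in S}T_i|_{B,\va}-\sum_{i\in S'}T_i|_{B,\va}\neq 0$ for all $S\neq S'$. I expect the only real content to be the nonvanishing of $g_{S,S'}$, which is exactly where the optimality of $k$ is used; everything else is degree bookkeeping, the only mild subtleties being the characteristic-$2$ bound and the passage from arbitrary to disjoint pairs.
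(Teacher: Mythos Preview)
Your proof is correct and follows essentially the same approach as the paper's: for each relevant signed sum of gates, verify it is a nonzero polynomial, pick one nonzero coefficient (viewed as a polynomial in the variables outside $B$), and let $\Upsilon_{B,C}$ be the product. The paper is terser about why $C_S-C_{S'}\neq 0$ (it just says ``by the assumption on $k$''), whereas you spell out the top-fan-in reduction argument; you also add the small optimization of reducing arbitrary pairs to disjoint pairs, yielding $3^k$ factors instead of roughly $2^{2k}$, but this does not change the method.
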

	
	\begin{proof}
		Consider the symbolic restriction $C|_{B,\vz}$ over $\F(\vz)$. This is still a minimal circuit since this restriction just amounts to renaming some of the $\vx$ variables to $\vz$. For every non-empty $S \subseteq [k]$, let $C_S$ be the subcircuit of $C|_{B,\vz}$ consisting of the multiplication gates in $S$. $C_S$ computes a non-zero multilinear polynomial in $\vx,\vz$. Similarly, by the assumption on $k$, we have that $C_S-C_{S'}\neq 0$. Thus, for every such $S$ and $S'$ there are polynomials $\Upsilon_S$ and $\Upsilon_{S,S'}$, in the $\vz$ variables, of degree at most $n$, such that if $\Upsilon_S (\va) \cdot \Upsilon_{S,S'}(\va)\neq 0$ then the restriction of $\vz$ to $\va$ preserves the non-zeroness of both $C_S$ and $C_S-C_{S'}$.  Finally take $\Upsilon_{B,C} = \prod_S \Upsilon_S\cdot \prod_{S\neq S'}\Upsilon_{S,S'}$.
	\end{proof}

	We are now ready to prove \autoref{cl:tau-semantic-restriction}.
	
	\begin{proof}
		Let $C\in \spsk$ be the minimal circuit in the statement of the claim such that $[C]=f$ and let $C_i$ be the subcircuit of $C$ computing $f_i$. Thus, $[C|_{B,\va}] = f|_{B,\va}$. 
		Let
		\[
		\Gamma_C = \Phi_C \cdot \left( \prod_B \Upsilon_{B,C} \cdot \left( \prod_{S \subseteq [s]} \Psi_{B,S} \right) \right) \cdot \Xi (\vx)
		\]
		where the product is over all sets $B$ of size at most $k^{k^{O(k)}}$, $\Phi_C$ is as defined in \autoref{cl:restriction-partition}, $\Upsilon_{B,C}$ is as defined in \autoref{lem:restriction-minimal}, $\Psi_{B,S}$ is the polynomial $\Psi_B$ from \autoref{cla:add-variables-to-B} applied to the polynomial $\sum_{i \in S} f_i$, and $\Xi(\vx) := \prod_{i=1}^s f_i \cdot \prod_{i\neq j \in [s]} (f_i + f_j)$ (the fact that $\va$ is not a root of $\Xi(\vx)$ will not be used in this proof but only later, in the proof of \autoref{cl:beta-clusters}).
				
		We start by showing the degree upper bound on $\Gamma_C$. The major contribution to the degree comes from multiplying over all possible subsets $B$, but
		note that by \autoref{cla:size-of-B} 
		we may assume that $|B| \le k^{k^{O(k)}}$. Now observe that the polynomials $\Phi_C$ and $\Xi$ are of degree $\poly(n)$; that for each fixed $B$, the polynomial $\Upsilon_{B,C}$ is of degree $n \cdot 2^{3k}$; and similarly, for each fixed $B$ and $S$, the degree of of $\Psi_{B,S}$ is at most $\poly(n)$.
		 Therefore, since $S$ ranges over all subsets of $[s]$ whose number is at most $2^k$, and since $B$ ranges over all possible subsets of $[n]$ of size $k^{k^{O(k)}}$, the degree of $\Gamma_C$ is at most $n^{k^{k^{O(k)}}}$.
		
		Let $\va$ be such that $\Gamma_C(\va) \neq 0$. Then in particular $C|_{B,\va}$ is also minimal.
		
		Let $[D] = \sum_{i=1}^{s'} g_i$ be the output of \autoref{algo:semantic-clustering-algorithm} on $D$ with parameter $\tau$, where $D$ is a minimal multilinear $\Sps{k}$ circuit computing $f|_{B,\va}$, as in the statement of the claim.
		
		By \autoref{cl:cluster-refinement-in-B}, $s' \le s$ and for every $i \in [s']$ there's a subset $S_i \subseteq [s]$ such that $g_i = \sum_{j \in S_i} (f_j)|_{B,\va}$. Note that if $s'=s$ then the second item of \autoref{cl:tau-semantic-restriction} follows immediately.
		
		Suppose towards contradiction that $s' < s$. Without loss of generality, suppose that $g_1 = \sum_{j \in S_1} (f_j)|_{B,\va}$ where $|S_1|>1$. To reach a contradiction we will show that if this was the case then the algorithm would have continued running and in particular would have added more variables to $B$. 
		
		We start by proving that
		\begin{equation}\label{eq:drankfvsdrankg}
			\ranksem \left( \sum_{j \in S_1} f_j \right) >  \ranksem \left( \sum_{j \in S_1} (f_j)|_{B,\va} \right) = \ranksem(g_1).
		\end{equation}
		This is a consequence of the following set of inequalities. First, 
		\begin{equation}
			\label{eq:lower-bound-on-rank}
			\ranksem \left( \sum_{j \in S_1} f_j \right) \ge \ranksyn \left( \sum_{j \in S_1} C_j \right) / (2^7 k^2 \log k) \ge \tau_1 r / (2^{11} k^2 \log k),
		\end{equation}
		where the first inequality follows from \autoref{lem:small-semantic-implies-small-syntactic-rank}, and the second from the fact that $|S_1| > 1$, the assumption in the statement of \autoref{cl:tau-semantic-restriction}  and \autoref{lem:large-rank-syn}. On the other hand,  \autoref{cl:semantic-clusters} promises that the output of \autoref{algo:semantic-clustering-algorithm} on $D$, with parameter $\tau$, satisfies $\ranksem(g_1) \le R_M(2k) 2^{7k} k^{4k} \tau^{k-2}$,
		which,  by \eqref{eq:lower-bound-on-rank} and our choice of parameters, is indeed smaller than $\ranksem \left( \sum_{j \in S_1} f_j \right)$. Thus by \autoref{cla:add-variables-to-B} there's a set $I$ of at most $4$ variables such that
		\begin{equation}\label{eq:drankfincreased}
			\ranksem \left( \sum_{j \in S_1} (f_j)|_{B \cup I,\va} \right)  > \ranksem \left( \sum_{j \in S_1} (f_j)|_{B,\va} \right).
		\end{equation}
		
		We wish to show that when considering the set $I$,  \autoref{algo:finding-B} will add it to $B$ in contradiction to the fact that it returned $B$ when run on $f$.
		
		Denote with $h = \sum_{i=1}^{s_I} h_i$, the output of \autoref{algo:semantic-clustering-algorithm} on $f|_{B \cup I, \va}$. If $s_I\neq s'$ then \autoref{algo:finding-B} would have added $I$ to $B$. 
		So assume that $s_I=s'$. 
		
		The $h_i$'s form a $\tau$-semantic partition of $f|_{B \cup I, \va}$, and thus, from \autoref{cor:sem-is-syn}, they are also $\kappa$-syntactic partition for $\kappa=\tau/(2^7 k^2 \log k)$. \autoref{cl:restriction-partition} shows that the $h_i|_{\vx_I=\va_I}$'s are a $(\kappa-4)$-syntactic partition of $f|_{B,\va}$ as well. Finally, \autoref{cor:syn-is-sem} shows that they are a $(\kappa-4)/(2^7 k^2 \log k)$-semantic partition of $f|_{B,\va}$. As the $g_i$'s are also a $(\kappa-4)/(2^7 k^2 \log k)$-semantic partition of $f|_{B,\va}$, and $s'=s_I$,  \autoref{cor:same-num-semantic} shows the existence of the a matching between them. That is, there is a permutation $\pi$ of $[s']$ such that $h_i|_{\vx_I=\va_I}=g_{\pi(i)}$.
		
		For simplicity assume that $\pi$ is the identity permutation. I.e., $h_i|_{\vx_I=\va_I}=g_i= \sum_{j \in S_i} (f_j)|_{B,\va}$. Going back to \autoref{algo:finding-B} we see that the algorithm can find the required map $\sigma$ (this is the $\pi$ that we found). 
		
		We now wish to show that $h_1= \sum_{j \in S_1} (f_j)|_{B \cup I,\va} $. If this is not the case then, as before, there is a set $S'_1\neq S_1$ such that  $h_1= \sum_{j \in S'_1} (f_j)|_{B \cup I,\va} $. Hence  
		\[
		\sum_{j \in S'_1} (f_j)|_{B,\va} =h_1|_{B,\va} = \sum_{j \in S_1} (f_j)|_{B,\va}. 
		\]
		Therefore, 
		\[
		\sum_{j \in S'_1} (C_j)|_{B,\va} - \sum_{j \in S_1} (C_j)|_{B,\va}=
		\sum_{j \in S'_1} (f_j)|_{B,\va} - \sum_{j \in S_1} (f_j)|_{B,\va}=0
		\] 
		As $S'_1\neq S_1$, this contradicts the fact that $\Upsilon_{S_1,S'_1}(\va)\neq 0$ (recall \autoref{lem:restriction-minimal}).
		
		Concluding, the algorithm found the partition $h=\sum_{i=1}^{s_I} h_i$ such that $s_I=s'$, a matching between the $h_i$'s and $g_i$'s, and it holds that $h_1= \sum_{j \in S_1} (f_j)|_{B \cup I,\va} $. This means that, when considering the set $I$, the algorithm would add it to $B$, due to \eqref{eq:drankfincreased}, in contradiction.
		
		We have therefore shown that it must be the case that $s=s'$. \autoref{cor:same-num-semantic} implies the second item in the claim, which also implies the third item.
	\end{proof}

		\subsubsection{Proof of \autoref{cla:add-variables-to-B}}
		\label{subsec:adding-vars-to-B}

		We first prove some preliminary lemmas. We start with an obvious observation.
		
		\begin{lemma}
			\label{lem:rank-under-shifts}
			Let $f \in \F[\vx]$ be a polynomial and $\va \in \F^n$. Let $g(\vx)=f(\vx+\va)$. Then, $\rank(M_f) = \rank(M_g)$.
		\end{lemma}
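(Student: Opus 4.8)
The plan is to observe that translating the argument by a fixed vector is an invertible linear operation on the space of polynomials, and that it interacts in the obvious way with taking first order partial derivatives. First I would record the chain-rule identity
\[
\frac{\partial g}{\partial x_i}(\vx) = \frac{\partial f}{\partial x_i}(\vx + \va) \qquad \text{for every } i \in [n].
\]
Indeed, writing $y_j = x_j + a_j$ we have $g(\vx) = f(\vy)$ and $\partial y_j/\partial x_i = \delta_{i,j}$, so the identity follows immediately from the chain rule.

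Next I would introduce the shift operator $\sigma_\va$ on $\F[\vx]$ defined by $\sigma_\va(p)(\vx) = p(\vx + \va)$. This is an $\F$-linear map, it does not increase degree, and it is invertible with inverse $\sigma_{-\va}$; in particular it restricts to an invertible linear operator on the finite-dimensional space $V$ of polynomials of degree at most $\deg(f)$. By the identity from the previous paragraph, $\sigma_\va$ sends $\partial f/\partial x_i$ to $\partial g/\partial x_i$ for each $i \in [n]$.

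Finally I would use the fact that $\rank(M_f)$ is exactly the dimension of $\Span\{\partial f/\partial x_1, \ldots, \partial f/\partial x_n\}$ inside $V$: the rows of $M_f$ are the coordinate vectors of these polynomials in the monomial basis, so the rank of the matrix equals the dimension of the span of the polynomials it encodes, and the same holds for $M_g$. Since an invertible linear map preserves the dimension of the span of a finite set of vectors,
\[
\rank(M_g) = \dim \Span\{\partial g/\partial x_i\}_{i} = \dim \Span\{\sigma_\va(\partial f/\partial x_i)\}_{i} = \dim \Span\{\partial f/\partial x_i\}_{i} = \rank(M_f).
\]
There is essentially no obstacle here; the only points worth stating carefully are that $\sigma_\va$ respects the degree bound (so all polynomials involved live in the single finite-dimensional space $V$) and that no assumption on the characteristic of $\F$ is needed, since the argument never divides by an integer.
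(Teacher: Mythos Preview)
Your proof is correct. The paper does not actually give a proof of this lemma --- it introduces it as ``an obvious observation'' and moves on --- so there is nothing to compare against; your argument via the chain rule and the invertibility of the shift operator $\sigma_\va$ is exactly the kind of routine verification the authors are leaving to the reader.
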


		The next claim shows that we can add two variables to $B$ and increase the rank of the partial derivative matrix (note that this doesn't imply \autoref{cla:add-variables-to-B}, as the rank of a polynomial is the rank of this matrix after pulling out the linear factors).
		
		\begin{claim}
			\label{cl:add-var-to-increase-rank}
			Let $f$ be a multilinear polynomial and $B$ a subset of the variables. 
			Suppose that $\rank(M_f)=r$. There exists a polynomial $\Lambda_B$ of degree at most $n^2$ with the following property: for every $\va \in \F^n$ such that $\Lambda_B(\va)\neq 0$,  if
			$\rank(M_{f|_{B,\va}}) = t <r$, then there are two variables $x_i , w\not \in B$ such that
			\[
			\rank(M_{f|_{B \cup \set{x_i,w},\va}}) >t.
			\]
		\end{claim}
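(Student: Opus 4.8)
The plan is to exploit the fact that $\rank(M_f)$ equals the number of linear functions $f$ depends on, and that $\rank(M_{f|_{B,\va}})$ increasing by adding variables to $B$ can be detected by looking at how the partial derivatives of $f$ behave along the coordinates outside $B$. First I would set up notation: let $\overline{B}=\{x_1,\dots,x_n\}\setminus B$ and think of $f|_{B,\va}$ as the result of substituting $\va$ into the variables of $\overline{B}$. Since $\rank(M_{f|_{B,\va}})=t<r=\rank(M_f)$, the partial derivatives of $f$ span an $r$-dimensional space of polynomials, but after the restriction this space collapses to dimension $t$. The key point is to identify \emph{which} derivatives are responsible for the extra $r-t$ dimensions and show that by reintroducing a couple of the $\overline{B}$ variables we recover at least one more dimension.

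Concretely, here are the steps I would carry out. (1) By \autoref{lem:rank-under-shifts} we may, after a shift, assume $\va=0$, so the restriction is just setting the $\overline{B}$-variables to $0$; this is only a notational convenience but it makes the bookkeeping cleaner, and the shift only changes $\Lambda_B$ into a shifted copy of the same degree. (2) Consider the $n$ polynomials $\partial f/\partial x_1,\dots,\partial f/\partial x_n$ and their restrictions to $\overline{B}=0$. Choose a maximal linearly independent subset $\{\partial f/\partial x_{i_1},\dots,\partial f/\partial x_{i_r}\}$ whose restrictions we track: exactly $t$ of these restrictions can be made to stay independent, while the span drops. (3) Since $f$ is multilinear, each $\partial f/\partial x_i$ has degree $\le d-1$ and, crucially, writing $f=\sum_{S\subseteq \overline{B}} x^S g_S$ where each $g_S$ is a polynomial in the $B$-variables only, the restriction $f|_{B,0}=g_\emptyset$, and the obstruction to the rank being $r$ lives in the "higher" pieces $g_S$ with $S\neq\emptyset$. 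Because $\rank(M_f)=r>t$, there is some nonzero $S$ and some variable $x_i$ such that $\partial g_S/\partial x_i$ (or $g_S$ itself if $x_i\in S$) is not in the span of the partial derivatives of $g_\emptyset$. (4) Pick $w\in S$; then $f|_{B\cup\{x_i,w\},0}$ still sees the monomial $x_w\cdot(\text{stuff})$ and one computes directly that $\rank(M_{f|_{B\cup\{x_i,w\},0}})\ge t+1$, since we have reintroduced a derivative direction that was previously killed. (5) Finally, define $\Lambda_B$: the only genericity we need is that (a) the shift-and-restrict operation does not accidentally create extra cancellations among the chosen independent partials — this is a determinant (a maximal minor of an appropriate coefficient matrix) in the entries of $\va$ of degree at most $n$ per row, hence at most $n\cdot n = n^2$ overall (matching the claimed bound), and (b) that the minor witnessing rank $t$ for $f|_{B,\va}$ is itself nonzero, which is again a polynomial condition of degree $\le n$ subsumed in the product. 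Taking $\Lambda_B$ to be the product of these minors over the (finitely many, structurally determined) relevant choices gives a nonzero polynomial of degree at most $n^2$.

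The main obstacle, and the step deserving the most care, is step (3)–(4): arguing that \emph{two} variables suffice and pinning down exactly which two. The subtlety is that reintroducing a single variable $x_i\in\overline{B}$ might not raise the rank — the "missing" linear function could involve a genuine pair of coordinates — so one has to argue via the monomial decomposition $f=\sum_S x^S g_S$ that if the rank of $M_{f|_{B,\va}}$ is deficient then there is a witness supported on at most two of the $\overline{B}$ coordinates, using multilinearity to bound the degree in the $\overline{B}$-variables of the relevant derivative. I would handle this by an extremal/minimality argument: take a set $T\subseteq\overline{B}$ of \emph{minimal} size such that $\rank(M_{f|_{B\cup T,\va}})>t$ (such $T$ exists since $T=\overline{B}$ works), and show $|T|\le 2$ by a derivative computation — if $|T|\ge 3$, differentiating $f|_{B\cup T,\va}$ with respect to a variable in $T$ and using that the rank jump is witnessed by a low-order phenomenon contradicts minimality. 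This minimality trick is the cleanest route and is exactly the kind of argument that then gets iterated (from two variables to four) in the proof of \autoref{cla:add-variables-to-B}, where one must additionally pass from $\rank(M_f)$ back to $\ranksem(f)$ by controlling the linear factors.
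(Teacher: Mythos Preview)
Your proposal has a genuine gap at exactly the step you flag as ``the main obstacle'': the minimality argument in (3)--(4) does not work as written, and this is precisely where the paper invokes a nontrivial external tool.

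Concretely, you take a minimal $T\subseteq\overline{B}$ with $\rank(M_{f|_{B\cup T,\va}})>t$ and claim that if $|T|\ge 3$ one gets a contradiction by ``differentiating $f|_{B\cup T,\va}$ with respect to a variable in $T$.'' But differentiating with respect to $y\in T$ produces $\partial f/\partial y$, which is a \emph{different} polynomial; it does not give you control over $\rank(M_{f|_{B\cup(T\setminus\{y\}),\va}})$, which is what minimality of $T$ concerns. There is no monotonicity or derivative identity that turns this into a contradiction. Indeed, without genericity the bound $|T|\le 2$ is false (e.g.\ $f=x_1x_2x_3$, $B=\emptyset$, $\va=0$), so any valid argument must actually \emph{use} the hypothesis $\Lambda_B(\va)\neq 0$ in this step, and your sketch does not.

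The paper's proof takes a different route. It works with the symbolic shift $g(\vx)=f(\vx+\vz)$ over $\F(\vz)$, picks $w\notin B$ so that $\partial g/\partial w$ gives a new independent \emph{row} of $M_g$ (this is the easy variable), and then must find a new \emph{column} --- i.e.\ a monomial --- witnessing the rank increase and supported on $B\cup\{x_i\}$ for a single additional variable $x_i$. This column step is exactly where the difficulty lies, and the paper handles it by invoking the cone-closed-basis result of Forbes--Ghosh--Saxena \cite{FGS18} (their Algorithm~1 and Theorem~2): after a further generic shift, the column space of $M_g$ admits a basis whose supporting monomials are cone-closed, which forces the one new basis monomial to introduce at most one new variable. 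The polynomial $\Lambda_B$ is then the determinant of this specific $(t+1)\times(t+1)$ minor, viewed as a polynomial in the shift variables, giving the degree bound $(t+1)\cdot n\le n^2$.

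A smaller issue: your step (1) shifts by $\va$ and then says ``this only changes $\Lambda_B$ into a shifted copy.'' But $\Lambda_B$ must be a single polynomial independent of $\va$; you cannot choose $\va$ first and then build $\Lambda_B$ around it. The correct order (which the paper follows) is to shift symbolically, construct $\Lambda_B$ in the symbolic shift variables, and only then specialize.
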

		
		Some preliminary work is required before proving \autoref{cl:add-var-to-increase-rank}.
		A polynomial $P(x) = \sum_{m} c_m \cdot m\in\F^t[\vx]$ is called a vector polynomial. I.e., it is a polynomial whose coefficients $c_m \in \F^t$. We denote by $V(P)$ the vector space spanned by its coefficients.
		
		For a multilinear polynomial $f$ we define $P_f$ the polynomial whose coefficients are the corresponding columns of the matrix $M_f$. In particular, as $f$ is multilinear, the coefficient $c_M\in\F^n$ of a monomial $M$ satisfies: $(c_M)_i$ is the coefficient of $M\cdot x_i$ in $f$. 
		
		The following remark can be verified by direct inspection:
		
		\begin{remark}
			\label{lem:shifting-vector}
			Let $h(\vx)=f(\vx+\vz)$. Then $P_h(\vx)=P_f(\vx+\vz)$.
		\end{remark}
		
		The next key claim follows from a work of Forbes-Ghosh-Saxena \cite{FGS18}.
		
		\begin{claim}
			\label{cl:good-basis-for-shift}
			Let $f$ be a multilinear polynomial and $B$ a subset of the variables.
			Suppose $V(P_f)$ has a basis $\mathcal{B}$ such that the monomials corresponding to all but at most one of the basis elements are supported only on variables of $B$.
			
			Then, for some $x_i \not \in B$, $V(P_{f(\vx+\vz')})$ (which is a vector space over $\F(\vz')$) has a  basis\footnote{The proof of \cite{FGS18} gives a stronger property -- that the basis $\mathcal{A}$ is \emph{cone closed}. I.e. that it is closed under taking submonomials. In other words, if the coefficient of a monomial $M$ is in the basis and $N$ divides $M$ then so is the coefficient of $N$. We do not need this stronger property.} $\mathcal{A}$ such that the monomials corresponding to \emph{all} the basis elements are supported only on variables of $B \cup \set{x_i}$.
		\end{claim}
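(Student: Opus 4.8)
The plan is to pass to the shifted vector polynomial $P_f(\vx+\vz') = P_{f(\vx+\vz')}$ (by \autoref{lem:shifting-vector}) and to invoke the cone‑closed basis construction of Forbes--Ghosh--Saxena \cite{FGS18}. Throughout I identify a multilinear monomial with its support; for a monomial $M$ write $c_M\in\F^n$ for the coefficient of $\vx^M$ in $P_f$ and $c'_M\in\F(\vz')^n$ for the coefficient of $\vx^M$ in $P_f(\vx+\vz')$, so that expanding the shift gives $c'_M = \sum_{N\supseteq M}c_N\prod_{j\in N\setminus M}z'_j$. Since a translation of the variables does not change the rank of the partial‑derivative matrix (the formal‑shift version of \autoref{lem:rank-under-shifts}), we have $\dim_{\F(\vz')}V(P_{f(\vx+\vz')}) = \dim_\F V(P_f) =: r$, and $V(P_{f(\vx+\vz')}) = \mathrm{span}_{\F(\vz')}\{c'_M\}$.

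Write the hypothesized basis of $V(P_f)$ as $\{c_{N_1},\dots,c_{N_r}\}$ with $\mathrm{supp}(N_j)\subseteq B$ for $j\le r-1$, and set $W := \mathrm{span}_{\F(\vz')}\{c'_M : \mathrm{supp}(M)\subseteq B\}$. The first step is the estimate $\dim_{\F(\vz')}W\ge r-1$: indeed $c'_{N_1},\dots,c'_{N_{r-1}}\in W$, and the substitution $\vz'\mapsto 0$ sends $c'_{N_j}$ to $c_{N_j}$ (every other summand in $c'_{N_j}$ carries a factor $z'_\ell$), so an $(r-1)\times(r-1)$ minor that is nonzero on the matrix with rows $c_{N_1},\dots,c_{N_{r-1}}$ lifts to a minor that is a nonzero polynomial in $\vz'$ on the matrix with rows $c'_{N_1},\dots,c'_{N_{r-1}}$; hence these vectors are $\F(\vz')$‑independent. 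If $\dim W = r$ then $W = V(P_{f(\vx+\vz')})$, and any $r$ of the vectors $\{c'_M : \mathrm{supp}(M)\subseteq B\}$ forming a basis constitute the desired $\mathcal A$ (for any $x_i\notin B$). So from now on assume $\dim W = r-1$.

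Next I would invoke \cite{FGS18}: applied to the vector polynomial $P_f$ (equivalently, to $\nabla f$), it yields a downward‑closed (``cone‑closed'') set of monomials $\mathcal A$ for which $\{c'_M : M\in\mathcal A\}$ is a basis of $V(P_{f(\vx+\vz')})$, and such an $\mathcal A$ may be produced by a greedy procedure that examines monomials in any linear order refining the divisibility order, keeping a monomial exactly when its shifted coefficient is independent of those kept so far. Running this procedure with all monomials supported on $B$ examined first (a valid order, since $M'\mid M$ implies $\mathrm{supp}(M')\subseteq\mathrm{supp}(M)$), the kept $B$‑monomials already span $\mathrm{span}_{\F(\vz')}\{c'_M : \mathrm{supp}(M)\subseteq B\} = W$. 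Therefore $T_0 := \{M\in\mathcal A : \mathrm{supp}(M)\subseteq B\}$ is downward‑closed (being the intersection of the cone‑closed $\mathcal A$ with the downward‑closed family of monomials supported on $B$), its shifted coefficients span $W$, and since $T_0\subseteq\mathcal A$ is linearly independent, $|T_0| = \dim W = r-1$. Consequently $\mathcal A\setminus T_0$ is a single monomial $M^*$ with $\mathrm{supp}(M^*)\not\subseteq B$.

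To finish: for every variable $x\in\mathrm{supp}(M^*)$ the degree‑one monomial $\{x\}$ divides $M^*$, hence lies in $\mathcal A$ by cone‑closedness; if moreover $x\notin B$, then $\{x\}\notin T_0$, forcing $\{x\} = M^*$. Since $\mathrm{supp}(M^*)\setminus B$ is nonempty, this shows $\mathrm{supp}(M^*)\setminus B$ consists of a single variable $x_i$ and that $M^* = \{x_i\}$. Hence every monomial of $\mathcal A = T_0\cup\{\{x_i\}\}$ is supported on $B\cup\{x_i\}$, as required. The main obstacle is citing \cite{FGS18} correctly in the vector‑polynomial setting and with a monomial order that isolates the $B$‑monomials, so that the ``extra'' monomial is unique and the cone‑closed structure can be exploited; the dimension bookkeeping in the second step — that in the nontrivial case $W$ has codimension exactly one — is also delicate, and it is precisely what makes the clean conclusion $M^* = \{x_i\}$ possible.
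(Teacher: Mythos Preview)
Your argument takes a genuinely different, more structural route than the paper. The paper orders the variables with $B$ first and then traces through Algorithm~1 of \cite{FGS18} step by step: as the recursion erases the last (non-$B$) variable whenever the projection $\pi$ stays injective, the first time injectivity fails pins down the single extra variable $x_i$, and from that point on the working set is supported on $B\cup\{x_i\}$. You instead (i) use the hypothesis on $\mathcal{B}$ only to show $\dim W\ge r-1$, (ii) invoke \cite{FGS18} abstractly to obtain a cone-closed basis $\mathcal{A}$, (iii) use dimension counting to force $|\mathcal{A}\setminus T_0|=1$, and (iv) exploit cone-closedness to conclude that the single extra monomial must literally be the variable $x_i$ --- a sharper conclusion than the paper states.

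The one step to tighten is your assertion that the \cite{FGS18} cone-closed basis ``may be produced by a greedy procedure that examines monomials in any linear order refining the divisibility order,'' run with all $B$-supported monomials first so that $T_0$ spans $W$. Algorithm~1 of \cite{FGS18} is recursive on \emph{variables}, not a direct greedy over monomials, and cone-closedness of a greedy output is not automatic for an arbitrary divisibility-refining monomial order. Without this ordering guarantee you only get $|T_0|\le r-1$ from independence, not equality, and the rest of your argument collapses. The fix is exactly what the paper does implicitly: run \cite{FGS18} with the variable order placing $B$-variables first (so non-$B$ variables are erased first), and observe that the resulting cone-closed $\mathcal{A}$ then has $T_0$ spanning $W$; alternatively, verify separately that the \cite{FGS18} analysis goes through for your monomial order. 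Once this is nailed down, your proof is correct and arguably cleaner than the paper's operational trace.
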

		
		\begin{proof}
			Order the variables so that the variables in $B$ appear first, and then the variables outside of $B$.
			
			The claim follows from closely looking at  Algorithm 1 and Theorem 2 of \cite{FGS18}. We use $\mathcal{B}$ and $\mathcal{A}$, respectively, to denote the sets called $B$ and $A$ in their algorithm. The fact the $\mathcal{A}$ is a basis (and is in fact cone-closed) follows from Theorem 2 of \cite{FGS18}.
			The fact that $\mathcal{A}$ is supported on $B \cup \set{x_i}$ for some variable $x_i$ follows by inspecting their Algorithm 1: Indeed, note that, starting from $\mathcal{B}$, as long as the projection map $\pi$ (defined in their algorithm) is one-to-one (i.e., in the notation of that algorithm $\ell=1$) their algorithm ``erases'' the last variable and continues recursively. In the first time that $\ell=2$, since we started with a basis $\mathcal{B}$ such that the monomials corresponding to all but at most one of the basis elements are supported on only  variables of $B$, it must be the case that the current set held by the algorithm is supported on only variables of $B \cup \set{x_i}$. This fact doesn't change until the completion of the algorithm.
		\end{proof}

		\begin{claim}
			\label{cl:good-shift} Let $f$ be as in \autoref{cl:add-var-to-increase-rank}.
			There exists a non-zero $n$-variate polynomial $\Lambda_B(\vy)$, of degree at most $n^2$, such that if $\Lambda_B(\va) \neq 0$, then,   $g(\vx)=f(\vx+\va)$ satisfies
			$\rank(M_{g|_{B,0}}) = t$, and there exist $x_i, w \not \in B$ such that for $B'=B \cup \set{x_i,w}$, $\rank(M_{g|_{B',0}}) > t$.
		\end{claim}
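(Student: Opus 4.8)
The plan is to reduce everything to the combinatorics of the coefficient space of the vector polynomial $P_g$ after a generic shift, using \autoref{cl:good-basis-for-shift}. First observe that the first conclusion is essentially free: for \emph{any} $\va$ we have $g|_{B,0}(\vx_B) = f|_{B,\va}(\vx_B+\va_B)$, so \autoref{lem:rank-under-shifts} gives $\rank(M_{g|_{B,0}}) = \rank(M_{f|_{B,\va}}) = t$. Hence $\Lambda_B$ only has to take care of the second conclusion, namely producing the two variables $x_i,w\notin B$ whose addition strictly raises the rank.

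The next step is to set up a dictionary between ranks of restrictions and supports of monomials. For a subset $B'\supseteq B$ of the variables, $M_{g|_{B',0}}$ is exactly the submatrix of $M_g$ obtained by keeping the rows indexed by $B'$ and the columns (monomials) supported on $B'$; writing $c_M\in\F^n$ for the coefficient of the monomial $M$ in $P_g$ and $\pi_{B'}$ for the projection onto the $B'$-coordinates, this means $\rank(M_{g|_{B',0}}) = \dim\Span\{\pi_{B'}(c_M): M\text{ supported on }B'\}$. The role of choosing a generic shift (this is where $\Lambda_B$ enters) is that, via \autoref{cl:good-basis-for-shift} and the Forbes--Ghosh--Saxena machinery of \cite{FGS18} it rests on, $V(P_g)$ acquires a \emph{cone-closed} basis: a set $\mathcal{M}$ of monomials, closed under taking submonomials and containing $1$, with $\{c_M:M\in\mathcal{M}\}$ a basis of $V(P_g)$, so that $|\mathcal{M}| = r$; and the same genericity lets one read off $\rank(M_{g|_{B',0}}) = |\{M\in\mathcal{M}: M\text{ supported on }B'\}|$ for the sets $B'$ of interest. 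The polynomial $\Lambda_B$ is then the product of the (polynomially many, polynomial-degree) non-vanishing determinants certifying this basis and these rank identities, and a direct degree count gives the bound $\deg(\Lambda_B)\le n^2$.

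Granting the dictionary, the second conclusion is short combinatorics. Since $t<r=|\mathcal{M}|$, the cone-closed set $\mathcal{M}$ contains a monomial not supported on $B$; pick one, $M$, minimal under division. By cone-closedness every proper submonomial of $M$ lies in $\mathcal{M}$, hence is supported on $B$, which forces $M$ to contain exactly one variable outside $B$, i.e. $M = M'x_i$ with $M'$ supported on $B$ and $x_i\notin B$. Then $\{N\in\mathcal{M}: N\text{ supported on }B\cup\{x_i\}\}$ strictly contains $\{N\in\mathcal{M}: N\text{ supported on }B\}$ (it gains $M$), so $\rank(M_{g|_{B\cup\{x_i\},0}})\ge t+1$, and a fortiori $\rank(M_{g|_{B\cup\{x_i,w\},0}})\ge t+1$ for any second variable $w\notin B$, as required.

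The main obstacle is the middle step: showing that a generic shift really does turn the relevant restricted ranks into cardinalities of sub-cones, i.e. that the coordinate projections $\pi_{B'}$ do not collapse the cone-closed basis in either direction. The cleanest route I see is to invoke \autoref{cl:good-basis-for-shift} itself on the restricted polynomials $g|_{B',0}$ — certifying that each of them, under the \emph{same} global shift, still has a cone-closed basis supported on $B'$ up to one extra variable — which is precisely why one may be forced to use a \emph{second} variable $w$, namely to absorb that one-variable slack (concretely, a coordinate outside $B\cup\{x_i\}$ witnessing that the newly exposed coefficient vector lies outside the old span). Bundling all the finitely many certifying polynomials over all $B'$ of size $|B|+O(1)$ into a single $\Lambda_B$ while keeping its degree at most $n^2$, and getting the "one extra variable" bookkeeping exactly right, is the technical heart of the argument.
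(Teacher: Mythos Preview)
Your opening observation is correct: the equality $\rank(M_{g|_{B,0}})=t$ holds for every $\va$ via \autoref{lem:rank-under-shifts}, so only the second conclusion requires work.

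The gap is in your ``dictionary''. The identity
\[
\rank(M_{g|_{B',0}}) \;=\; \bigl|\{M\in\mathcal{M}: M\text{ supported on }B'\}\bigr|
\]
is \emph{false} even for a generic shift. A cone-closed basis $\{c_M:M\in\mathcal M\}$ for $V(P_g)\subseteq\F^n$ says nothing about linear independence of the \emph{projections} $\pi_{B'}(c_M)$. Concretely, take $f=x_1x_2+x_1x_3$ and $B=\{x_1\}$. After a generic shift, $V(P_g)$ has the cone-closed basis $\mathcal M=\{1,x_1\}$, both supported on $B$, so your formula predicts $\rank(M_{g|_{B,0}})=2$; but $g|_{B,0}$ is linear in $x_1$ and the rank is $1$. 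The projection killed the $x_1$-column: $\pi_{\{x_1\}}(c_{x_1})=0$. So your combinatorial step (finding $x_i$ with more $\mathcal M$-monomials supported on $B\cup\{x_i\}$) does not by itself increase the restricted rank, and your proposed patch---applying \autoref{cl:good-basis-for-shift} to each $g|_{B',0}$ and taking a product over all small $B'$---is both vague (those polynomials live in different variable sets, so the ``same global shift'' statement needs justification) and would multiply $\Theta(n)$ many nontrivial factors, overshooting the degree bound $n^2$.

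The paper's argument inverts the order in which the two variables are found, and this is the point. First pick the extra \emph{row}: since $\rank(M_g)=r>t$, there is $w\notin B$ with $\partial g/\partial w$ independent of $\partial g/\partial x_1,\dots,\partial g/\partial x_t$; now the $(t{+}1)$-row submatrix of $M_g$ on rows $\{x_1,\dots,x_t,w\}$ has rank $t{+}1$ and admits a column basis with at most one monomial unsupported on $B$. Apply \autoref{cl:good-basis-for-shift} to \emph{this} row-restricted vector polynomial (not to the full $P_g$): after a further symbolic shift it has a column basis supported on $B\cup\{x_i\}$ for some single $x_i\notin B$. The corresponding $(t{+}1)\times(t{+}1)$ minor then sits entirely inside $M_{g|_{B\cup\{x_i,w\},0}}$, because the rows lie in $B\cup\{w\}$ and the columns are supported on $B\cup\{x_i\}$. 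Its determinant is a single nonzero polynomial in the shift of degree at most $(t{+}1)n\le n^2$, and that is $\Lambda_B$. Fixing the rows \emph{before} invoking \autoref{cl:good-basis-for-shift} is exactly what sidesteps the projection-collapse problem your approach runs into.
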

				
		\begin{proof}
			Consider the symbolic shift $g(\vx)=f(\vx+\vz)$ where $\vz$ is a new set of variables and $g \in \F(\vz)[\vx]$. By \autoref{lem:rank-under-shifts}, $\rank(M_g) = \rank(M_f) = r$ (note that $M_g$ is defined over $\F(\vz)$).
			For simplicity of notation, suppose that $B=\set{x_1, \ldots, x_b}$ for some $1 \le b \le n$.
			Observe that
			\begin{align*}
				f|_{B,\vz} (\vx) &= f(x_1, \ldots, x_b, z_{b+1}, \ldots, z_n) \\
				& = f( (x_1 + z_1) - z_1 , \ldots, (x_b + z_b) - z_b, z_{b+1}, \ldots, z_n) \\
				& = g( x_1 -z_1, \ldots x_b - z_b, 0, \ldots, 0) = g|_{B,0} (\vx-\vz).
			\end{align*}
			Therefore,
			\[
			t = \rank(M_{f|_{B,\vz}}) = \rank(M_{g|_{B,0} (\vx-\vz)}).
			\]
			By \autoref{lem:rank-under-shifts}, it now follows that
			\[
			t = \rank(M_{g|_{B,0} (\vx)}).
			\]
			Suppose the partial derivatives corresponding to $x_1, \ldots, x_t$ are a row basis for $M_{g|_{B,0}}$. There exists a corresponding set of $t$ columns, i.e., monomials in the variables of $B$, such that the submatrix has rank $t$. Since $M_{g|_{B,0}}$ is a submatrix of $M_g$, these rows and columns are also linearly independent in $M_g$. Since $t<r$, we can pick a variable $w \not\in B$ such that $\partial g / \partial w$ is linearly independent of $\partial g / \partial x_1, \ldots, \partial g / \partial x_t$. We can similarly add another monomial (which would necessarily contain a variable not in $B$) to get a $(t+1) \times (t+1)$ submatrix of $M_g$ that has full rank. 
			
			Our next goal is to restrict only to columns corresponding to monomials in $B$, $w$ and perhaps another variable $x_i \not \in B$ while maintaining the linear independence of the $t+1$ rows.
			
			By \autoref{cl:good-basis-for-shift}, if we look at $g(\vx+\vz') = f(\vx+\vz+\vz')$, where $\vz'$ is yet again another symbolic shift, then
			the matrix $M_{f(\vx+\vz+\vz')}$, restricted to the rows of $B \cup \set{w}$, has such a basis. In particular there's some $(t+1) \times (t+1)$ minor of this matrix whose determinant is non-zero in $\F(\vz+\vz')$. As  in $M_{f(\vx+\vz+\vz')}$ every entry is a polynomial in $\vz+\vz'$, this implies that when doing the ``symbolic'' shift $f(\vx+\vz+\vz')$, and looking at the same minor, its determinant is a non-zero polynomial $\Lambda_B$ in $\vz+ \vz'$  
			of degree at most $(t+1)n$. By doing the change of variables $\vy=\vz+\vz'$ we can think of $\Lambda_B$ also as a polynomial in a new set of variables $\vy$.
						
			Thus, we have found a non-zero polynomial $\Lambda_B(\vy)$ such that whenever $\Lambda_B(\va) \neq 0$, it holds, for $g(\vx) = f(\vx+\va)$ and $B' = B \cup \set{x_i, w}$, that
			$\rank(M_{g|_{B',0}}) \ge t+1$.
		\end{proof}
		
		\begin{proof}[Proof of \autoref{cl:add-var-to-increase-rank}]
			Let $\va$ be such that $\Lambda_B(\va) \neq 0$, where $\Lambda_B$ is as in \autoref{cl:good-shift}. Let $g(\vx) = f(\vx+\va)$ and $B' = B \cup \set{w, x_i}$ where $x_i, w$ are as guaranteed by \autoref{cl:good-shift}.
			By \autoref{lem:rank-under-shifts} it follows that
			\[
			\rank(M_{g|_{B', 0}(\vx-\va)}) = \rank(M_{g|_{B', 0}}) \ge t+1.
			\]
			Since $g|_{B', 0}(\vx-\va) = f|_{B', \va} (\vx)$, \autoref{cl:add-var-to-increase-rank} follows.
		\end{proof}
		
		Having proved \autoref{cl:add-var-to-increase-rank}, we now move on to handle the linear factors.
		
		\begin{claim}
			\label{cl:no-linear-factors}
			Let $P$ be an irreducible multilinear polynomial of degree at least $2$ and  $B \subseteq [n]$. Then, there exists a polynomial $\Lambda'_{B,P}$ of degree at most $3n^3$ with the following property: for every $\va \in \F^n$, if $\Lambda'_{B,P}(\va) \neq 0$ then there are $y_1, y_2 \not \in B$ such that
			$P|_{B \cup \set{y_1, y_2}, \va}$ has no linear factors.
		\end{claim}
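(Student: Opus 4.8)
The plan is to detect linear factors of a multilinear polynomial via the resultant of the polynomial with one of its first-order partial derivatives, and to take $\Lambda'_{B,P}$ to be, essentially, the product of all such resultants for $P$ itself (which are nonzero because $P$ is irreducible), together with one extra factor forcing the restriction not to collapse in degree. First I would dispose of the easy case: let $U$ be the set of variables occurring in $P$; if $|U\setminus B|\le 2$, take $\{y_1,y_2\}$ to be any two variables outside $B$ that include all of $U\setminus B$ (assume $n\ge|B|+2$, adding dummy variables if needed), so that $P|_{B\cup\{y_1,y_2\},\va}=P$ has no linear factor, and $\Lambda'_{B,P}=1$ works. So assume $|U\setminus B|\ge 3$. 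To keep the degree from dropping, I would fix a monomial $m$ of $P$ of degree $\deg P\ge 2$ and pick $\{y_1,y_2\}\subseteq U\setminus B$ with $|m\cap(B\cup\{y_1,y_2\})|\ge 2$ (if $|m\setminus B|\ge 2$ take two variables of $m\setminus B$; otherwise take the one variable of $m\setminus B$, if any, and one more from $U\setminus B$). The coefficient of $\prod_{x\in m\cap(B\cup\{y_1,y_2\})}x$ in the symbolic restriction $P|_{B\cup\{y_1,y_2\},\vz}$ is a nonzero polynomial $\delta_{B,P}(\vz)$ of degree at most $n$ (it carries the monomial $\prod_{x\in m\setminus(B\cup\{y_1,y_2\})}z_x$ with nonzero coefficient), so $\delta_{B,P}(\va)\ne 0$ forces $\deg(P|_{B\cup\{y_1,y_2\},\va})\ge 2$.

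Next comes the detection criterion, which I would state as a standalone fact: if $Q$ is multilinear of degree at least $2$ and $Q=\ell\cdot R$ for a linear form $\ell$, pick a variable $x_i$ occurring in $\ell$; by multilinearity $\ell$ and $R$ are variable-disjoint, so $x_i\notin R$, $\partial_{x_i}Q=c\cdot R$ with $c\ne 0$, and $\deg R=\deg Q-1\ge 1$. Picking a variable $x_j$ occurring in $\partial_{x_i}Q$ (equivalently in $R$), we get $x_j\notin\ell$, so $R$ has degree exactly $1$ in $x_j$ and is a common factor of positive $x_j$-degree of the two polynomials $Q$ and $\partial_{x_i}Q$, both of which themselves have degree exactly $1$ in $x_j$. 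Hence $\Res_{x_j}(\partial_{x_i}Q,Q)=0$, by the basic property of resultants (recall \autoref{sec:resultant}).

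Then I would define $\Lambda'_{B,P}:=\delta_{B,P}\cdot\prod\Res_{x_j}(\partial_{x_i}P,P)$, the product over ordered pairs of distinct variables $x_i,x_j\in U$ that co-occur in some monomial of $P$. Each factor is nonzero: since $P$ is irreducible and $\deg\partial_{x_i}P<\deg P$, we have $\gcd(P,\partial_{x_i}P)=1$. Since $P$ and $\partial_{x_i}P$ are multilinear, each $\Res_{x_j}(\partial_{x_i}P,P)$ is a $2\times 2$ Sylvester determinant of degree at most $2n-3$, and there are fewer than $n^2$ pairs, so a short computation gives $\deg\Lambda'_{B,P}\le 2n^3$. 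Now suppose $\Lambda'_{B,P}(\va)\ne 0$ but $Q:=P|_{B\cup\{y_1,y_2\},\va}$ has a linear factor. Then $\deg Q\ge 2$ by the first paragraph, and the detection criterion yields variables $x_i,x_j\in U$ occurring in $Q$, co-occurring in a monomial of $P$, with $\Res_{x_j}(\partial_{x_i}Q,Q)=0$. Since $x_i$ and $x_j$ are not among the variables fixed by the restriction and all four of $\deg_{x_j}P,\deg_{x_j}\partial_{x_i}P,\deg_{x_j}Q,\deg_{x_j}\partial_{x_i}Q$ equal $1$, the Sylvester matrix specializes entrywise under the restriction, so $\Res_{x_j}(\partial_{x_i}P,P)\big|_{\va}=\Res_{x_j}(\partial_{x_i}Q,Q)=0$; but $\Res_{x_j}(\partial_{x_i}P,P)$ is one of the factors of $\Lambda'_{B,P}$, contradicting $\Lambda'_{B,P}(\va)\ne 0$.

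The main obstacle, I expect, is the interplay of the last two paragraphs. One must isolate the right resultant condition for the absence of a \emph{linear} factor rather than for full irreducibility (which a \emph{coordinate} restriction can genuinely destroy, e.g.\ $x_1x_3+x_2$ is irreducible whereas $x_1x_3+x_2|_{x_3=a}=ax_1+x_2$ is linear); one must check that the resultant really does commute with the coordinate restriction, which reduces to verifying that no relevant $x_j$-degree drops; and one must choose $\{y_1,y_2\}$ so that the restricted polynomial keeps degree at least $2$, because a restriction that collapses to degree $1$ is vacuously divisible by a linear form whatever genericity $\va$ enjoys. The remaining details are routine bookkeeping.
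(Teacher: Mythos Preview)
Your proof is correct and, like the paper's, is resultant-based, but the two arguments are organised differently. The paper first assumes $P|_{B,\va}=\ell\cdot P'$ and splits into the cases $\deg P'\ge 1$ and $\deg P'=0$; in the first case it writes $P=x_i h_{i,1}+h_{i,0}$, locates some $x_{j_i}\in B$ appearing in $h_{i,1}$, and uses $\Res_{x_{j_i}}(h_{i,1},h_{i,0})$ together with the leading-$x_{j_i}$ coefficients to certify coprimality survives the restriction; in the degenerate case it first adds two variables and then loops back. Your version is more uniform: you fix $y_1,y_2$ in advance (via a maximal monomial) and use a single factor $\delta_{B,P}$ to force $\deg Q\ge 2$, and then your detection criterion \emph{``$Q$ has a linear factor $\Rightarrow$ some $\Res_{x_j}(\partial_{x_i}Q,Q)=0$''} lets you work directly with $\Res_{x_j}(\partial_{x_i}P,P)$ rather than with $\Res(h_{i,1},h_{i,0})$. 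The commutation step is also a bit cleaner for you, since multilinearity pins all four $x_j$-degrees to exactly $1$ and the $2\times 2$ Sylvester determinant specialises entrywise. What the paper's route buys is that it never needs to pre-commit to $y_1,y_2$ in the non-degenerate case (no variables are added there at all); what your route buys is a single-case argument with no a-posteriori reasoning about the shape of $P|_{B,\va}$, and the degree bound $\le 2n^3$ drops out of the $n^2$ pairs times a $2\times 2$ determinant bound without any extra bookkeeping.
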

		
		\begin{proof}
			Without loss of generality, suppose $B=\set{x_1, \ldots, x_b}$.
			
			We first claim that there exists a polynomial $\Lambda''_{B,P}$, of degree at most $n^3$, and at most two variables $y_1, y_2 \notin B$ such that if $\Lambda''_{B,P}(\va) \neq 0$ then $P|_{B \cup \set{y_1, y_2} ,\va}$ is not a linear function.
			
			Indeed, for every two variables $x_i, x_j$, write $P = x_i x_j Q_{i,j} + R_{i,j}$, and let $\Lambda''_{B,P} = \prod_{\set{i,j} : Q_{i,j} \neq 0} Q_{i,j}$.
			
			By direct calculation, $\deg(\Lambda''_{B,P}) \leq \binom{n}{2} \cdot n \le n^3$.			
			Further, since  $\deg P \ge 2$, there exists at least one choice of $i,j$ such that $Q_{i,j} \neq 0$ and thus $\Lambda''_{B,P}$ is not the zero polynomial.
			Let $\va$ such that $\Lambda''_{B,P} (\va) \neq 0$ and let $x_i, x_j$ be two variables such that $Q_{i,j}(\va) \neq 0$. It follows that $P|_{B\cup \set{x_i, x_j},\va}$ contains monomials that are divisible by $x_i x_j$ with non-zero coefficients, and therefore it is not a linear function.
			
			In order to simplify notation, we rename $B \cup \set{x_i, x_j}$ to $B$ again for the rest of the proof.
			We now define the polynomial $\Lambda'_{B,P}$, which will be a product of $\Lambda''_{B,P}$ defined above by another non-zero polynomial.
			
			Suppose now that $P|_{B,\va}$ has a linear factor. Write $P|_{B,\va} = \ell \cdot P'$ where, without loss of generality, $\ell = \sum_{i=1}^t \beta_i x_i$ with $\beta_i \neq 0$ and $P'$ depends on $x_{t+1}, \ldots, x_b$.
			
			Suppose first that $\deg(P') \ge 1$.
			For every $i \in [t]$ we can write $P = x_i \cdot h_{i,1} + h_{i,0}$ with $h_{i,0} \neq 0$ (since $P$ is irreducible).
			
			It holds that $P|_{B,\va} =  x_i \cdot h_{i,1}|_{B,\va} + h_{i,0}|_{B,\va}$. Since $P|_{B,\va}$ is divisible by $\ell$, $\Res_{x_i} (P|_{B,\va}, \ell)  = 0$ (recall \autoref{sec:resultant}). Note that
			$\Res_{x_i} (P|_{B,\va}, \ell) = \beta_i \cdot  h_{i,0}|_{B,\va} - h_{i,1}|_{B,\va} \cdot (\ell- \beta_i x_i)$,  which implies that $h_{i,1}|_{B,\va}$ divides $h_{i,0}|_{B,\va}$.
			
			We claim that for all $i \in [t]$, $h_{i,1}$ depends on at least one variable in $B$. Indeed, write
			$P = \sum_{j=b+1}^n (x_j - \va_j) A_j + \ell P'$. The variable $x_i$ appears in $\ell$ and it is multiplied by $P'$, which is a nonconstant polynomial in the $B$ variables. All of its other occurrences are multiplied by $(x_j - \va_j)$ for some $j \ge b+1$, which implies that there's at least one coefficient with a variable $x_{j_i}\in B$ that appears in $h_{i,1}$.
			
			Since $P$ is irreducible, $\gcd(h_{i,1}, h_{i,0})=1$ and therefore $\Res_{x_{j_i}} (h_{i,1}, h_{i,0}) \neq 0$.
			
			Let $c_1(\vx)$ denote the coefficient of $x_{j_i}$ in $h_{i,1}$ and similarly $c_0(\vx)$ the coefficient of $x_{j_i}$ in $h_{i,0}$ (these are polynomials that may depend on all variables except $x_i$ and $x_{j_i}$ and  $c_1(\vx)\neq 0$).
			
			If $(c_1)|_{B,\va} \neq 0$ and $(c_0)|_{B,\va} \neq 0$, then it holds that
			$\Res_{x_{j_i}} (h_{i,1}|_{B,\va}, h_{i,0}|_{B,\va}) = \Res_{x_{j_i}} (h_{i,1}, h_{i,0})|_{B,\va}$
			(see, e.g., Proposition 6 in Chapter 3, Section 6 of \cite{CLO07}).

			Since $\Res_{x_{j_i}} (h_{i,1}, h_{i,0})$ is a polynomial of degree at most $2n$ and $c_0, c_1$ are polynomials of degree at most $n$, we obtain a polynomial $\Lambda'''_{B,i,j_i}=c_1\cdot c_0 \cdot \Res_{x_{j_i}} (h_{i,1}, h_{i,0})$ of degree at most $4n$
			such that if $\Lambda'''_{B,i,j_i}(\va)$ is non-zero then $\Res_{x_{j_i}} (h_{i,1}, h_{i,0})|_{B,\va}$ is non-zero, which implies that $\Res_{x_{j_i}} (h_{i,1}|_{B,\va}, h_{i,0}|_{B,\va})$ is non-zero, in contradiction to the assumption that $\ell$ divides $P|_{B,\va}$. Further, one should note that $\Lambda'''_{B,i,j_i}$ depends on $B$ and $i$ but does \emph{not} depend on the restriction $\va$.
			
			We define $\Lambda'''_{B,P}$ to be the product of all non-zero polynomials $\Lambda'''_{B,i,j}$.
			Thus, if $\Lambda'''_{B,P}(\va) \neq 0$, it can't be the case that $P|_{B,\va} = \ell \cdot P'$ for $\deg(P') \ge 1$.
			
			Finally, we let $\Lambda'_{B,P} = \Lambda'_{B,P} \cdot \Lambda''_{B,P}$. Let $\va$ be such that $\Lambda'_{B,P}(\va) \neq 0$. Then by the arguments above, $P_{B,\va}$ cannot be a linear
			function nor a product of a linear function by a polynomial with positive degree. The bound on the degree of $\Lambda'$ follows by direct inspection.
		\end{proof}

		\begin{corollary}
			\label{cor:restriction-no-linear-factor}
			Let $P$ be an irreducible multilinear polynomial with no linear factors and $B$  a subset of the variables. Suppose there exists $\va \in \F^n$ such that $P|_{B,\va}$ is non-constant and has no linear factors. Then, there exists a non-zero polynomial $\Lambda'''_B$ of degree at most $2n^3$ such that $\Lambda'''_B (\va) \neq 0$ and for all $\vb \in \F^n$, if $\Lambda'''_B (\vb) \neq 0$ then $P|_{B,\vb}$ has no linear factors.
		\end{corollary}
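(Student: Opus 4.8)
The plan is to exhibit $\Lambda'_B$ as a product of three families of polynomial factors. The first family is a single polynomial $\mu$ keeping the restriction of $P$ from collapsing to a linear form; the second is a polynomial $\nu_j$ for each variable $x_j\in B$ keeping $x_j$ from dividing the restriction; and the third is a family of resultant-type polynomials $\Lambda_{j,l}$ — essentially the ones already built inside the proof of \autoref{cl:no-linear-factors} — which rule out a genuine linear factor, but taken only over those pairs $(j,l)$ at which $\Lambda_{j,l}$ does not vanish at the fixed witness $\va$. (Reusing the \emph{full} product, as in \autoref{cl:no-linear-factors}, does not work here: it may vanish at $\va$ when $P|_{B,\va}$ happens to be reducible.)

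First, the two bookkeeping reductions. Since $P|_{B,\va}$ is non-constant, multilinear, and linear-factor-free, it is not a linear form, so $\deg(P|_{B,\va})\ge 2$; fix a degree-$\ge 2$ monomial $m'$ of $P|_{B,\va}$ (necessarily supported inside $B$) and let $\mu(\vx)$ be the coefficient of $m'$ in the restriction of $P$ that leaves the variables of $B$ free — a polynomial of degree $<n$ with $\mu(\va)\neq 0$, such that $\mu(\vb)\neq 0$ forces $\deg(P|_{B,\vb})\ge 2$. Likewise, since $x_j\nmid P|_{B,\va}$ for every $x_j\in B$, the polynomial $(P|_{B,\va})|_{x_j=0}$ is nonzero, so some of its coefficients (each a polynomial in the remaining variables) is nonzero at $\va$; let $\nu_j(\vx)$ be such a coefficient — of degree $<n$, with $\nu_j(\va)\neq 0$ and $\nu_j(\vb)\neq 0$ forcing $x_j\nmid P|_{B,\vb}$.

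Next, recall the mechanism behind \autoref{cl:no-linear-factors}: if a multilinear $g$ of degree $\ge 2$ has a linear factor $\ell$ and $x_j\in\mathrm{supp}(\ell)$, then $g=\ell h$ with $h$ free of $\mathrm{supp}(\ell)$ and $\deg h\ge 1$, so $\partial_{x_j}g=a_jh$ and $g|_{x_j=0}=a_j^{-1}(\ell-a_jx_j)\,\partial_{x_j}g$ is a constant-or-linear multiple of $\partial_{x_j}g$; hence $\Res_{x_l}(\partial_{x_j}g,g|_{x_j=0})=0$ for \emph{every} variable $x_l$ occurring in $\partial_{x_j}g$. Because $P$ is irreducible, $\gcd(\partial_{x_j}P,P|_{x_j=0})=\gcd(\partial_{x_j}P,P)=1$, so for every ordered pair $(j,l)$ of variables such that $P$ has a monomial divisible by $x_jx_l$ and a monomial containing $x_l$ but not $x_j$, the polynomial
\[
\Lambda_{j,l}:=(\text{coeff.\ of }x_l\text{ in }\partial_{x_j}P)\cdot(\text{coeff.\ of }x_l\text{ in }P|_{x_j=0})\cdot\Res_{x_l}(\partial_{x_j}P,\,P|_{x_j=0})
\]
is a nonzero polynomial of degree at most $4n$, and $\Lambda_{j,l}(\vb)\neq0$ guarantees that the $x_l$-leading coefficients survive the restriction to $(B,\vb)$, so $\Res_{x_l}(\partial_{x_j}(P|_{B,\vb}),(P|_{B,\vb})|_{x_j=0})\neq0$. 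I then set
\[
\Lambda'_B:=\mu\cdot\prod_{x_j\in B}\nu_j\cdot\prod_{(j,l):\ \Lambda_{j,l}(\va)\neq0}\Lambda_{j,l},
\]
so that $\Lambda'_B(\va)\neq0$ by construction, and a careful count (the resultant of two $x_l$-linear multilinear polynomials has degree below $2n$, and there are fewer than $n^2$ pairs) gives $\deg\Lambda'_B\le 2n^3$. For correctness: if $P|_{B,\vb}$ has a linear factor $\ell$, then $\mu(\vb)\neq0$ already gives $\deg(P|_{B,\vb})\ge2$; if $\ell$ is a scalar multiple of some variable $x_j$ then $x_j\mid P|_{B,\vb}$ and $\nu_j(\vb)=0$; otherwise pick $x_j\in\mathrm{supp}(\ell)$, take any $x_l$ occurring in $\partial_{x_j}(P|_{B,\vb})$ (one exists, since this partial derivative has degree $\ge1$), verify via the monomial conditions that $\Lambda_{j,l}\not\equiv0$, and conclude $\Lambda_{j,l}(\vb)=0$.

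This leaves one gap, which I expect to be the main obstacle: the last product above ranges only over pairs activated at $\va$, so one must show that for \emph{every} $\vb$ whose restriction acquires a linear factor, \emph{at least one} witnessing pair $(j,l)$ additionally satisfies $\Lambda_{j,l}(\va)\neq0$. The difficulty is precisely that $g_\va:=P|_{B,\va}$ may be reducible: if $g_\va=q_1\cdots q_r$ with all $q_s$ nonlinear, then $\partial_{x_j}g_\va$ and $g_\va|_{x_j=0}$ share the product of the $x_j$-free $q_s$'s, so $\Lambda_{j,l}(\va)=0$ for every $x_l$ occurring in such a factor. The plan to close this is to exploit the remaining freedom — the choice of $x_j\in\mathrm{supp}(\ell)$ and of $x_l$ among the (possibly many) variables of the cofactor $h$ of the linear factor of $P|_{B,\vb}$ — together with the guarantees $\mu(\vb)\neq0$ and $\nu_j(\vb)\neq0$: a combinatorial analysis of how the irreducible factors of $g_\va$ can meet $\mathrm{supp}(\ell)$ and the monomial $m'$ shows that $h$ must involve some variable $x_l$ with $(j,l)$ activated at $\va$, for a suitable choice of $j$. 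Carrying out this case analysis, and checking that it is compatible with the degree bound $2n^3$, is where the real work lies; the corollary then follows at once.
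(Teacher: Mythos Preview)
The paper's proof is a single sentence: it asserts that the polynomial $\Lambda'_{B,P}$ built inside the proof of \autoref{cl:no-linear-factors} (the product of all nonzero $\Lambda'_{B,i,j}$) already has both required properties. That proof indeed gives the implication ``$\Lambda'_{B,P}(\vb)\neq 0 \Rightarrow P|_{B,\vb}$ has no linear factor with non-constant cofactor,'' but it says nothing about $\Lambda'_{B,P}(\va)\neq 0$, and you are right that this can fail. Concretely, take $P=(x_1x_2+1)(x_3x_4+1)+x_5$ with $B=\{x_1,\dots,x_4\}$ and $a_5=0$: then $P|_{B,\va}=(x_1x_2+1)(x_3x_4+1)$ has no linear factors, but with $h_{1,1}=x_2(x_3x_4+1)$ and $h_{1,0}=x_3x_4+1+x_5$ one computes $\Res_{x_3}(h_{1,1},h_{1,0})=x_2x_4x_5$, so $\Lambda'_{B,1,3}$ is a nonzero factor of $\Lambda'_{B,P}$ that vanishes at $\va$. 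So the paper's one-liner, read literally, does not establish the corollary; your diagnosis is correct, and your repair (keep only the factors with $\Lambda_{j,l}(\va)\neq 0$, and add $\mu,\nu_j$ to handle the degenerate cases) is the natural one.

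That said, your proposal is not yet a proof: the gap you flag is genuine and the sketched closure is not an argument. You must show that whenever $P|_{B,\vb}$ has a non-variable linear factor with non-constant cofactor, some witnessing pair $(j,l)$ is \emph{activated at $\va$}. But when $P|_{B,\va}=q_1\cdots q_r$ with all $q_s$ nonlinear and variable-disjoint, one has $\Lambda_{j,l}(\va)=0$ whenever $x_j$ and $x_l$ lie in \emph{different} $q_s$-blocks (since then $\prod_{t:\,x_j\notin q_t} q_t$ is a common factor of $h_{j,1}|_{B,\va}$ and $h_{j,0}|_{B,\va}$ depending on $x_l$). Thus your plan requires that, at $\vb$, the support of $\ell$ and the variables of its cofactor always share some $q_s$-block of the factorization at $\va$ --- a statement relating factorizations at two unrelated specializations, for which ``a combinatorial analysis of how the irreducible factors of $g_\va$ meet $\mathrm{supp}(\ell)$ and $m'$'' is not, on its own, a convincing strategy. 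As written, this is where the proposal stops short.
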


		\begin{proof}
			Observe that in the proof of \autoref{cl:no-linear-factors}, when  $\deg(P') \ge 1$, the polynomial $\Lambda'''_{B,P}$ has the claimed property.
		\end{proof}
		
		We are now ready to prove \autoref{cla:add-variables-to-B}.
		
		\begin{proof}[Proof of \autoref{cla:add-variables-to-B}]
			Write $f = \prod_{i=1}^{a} \ell_i \cdot \prod_{i=1}^b P_i$ where the $\ell_i$'s are linear functions and the $P_i$'s are irreducible non-linear polynomials. Denote $P = \prod_{i=1}^b P_i$ to be the non-linear part of $f$. By definition, it holds that $\ranksem(f) = \rank(M_P)$.
			
			Similarly, write $f|_{B,\va} = \prod_{i=1}^a (\ell_i)|_{B,\va} \prod_{i=1}^b (P_i)|_{B,\va}$.
			
			For a set $\tilde{B}$, set $\tilde{\Psi}_{\tilde{B}} = \Lambda_{\tilde{B}} \cdot \Lambda'_{\tilde{B}}$ where $\Lambda_{\tilde{B}}$ is as in \autoref{cl:add-var-to-increase-rank} and $\Lambda'_{\tilde{B}}=\prod_{i=1}^{b}\Lambda'_{\tilde{B},P_i}$, where $\Lambda'_{\tilde{B},P_i}$ is as defined in \autoref{cl:no-linear-factors}.
			$\tilde{\Psi}_{\tilde{B}}$ is a polynomial of degree at most $6n^5$.
			
			Now set $\Psi_B = \prod_{ \tilde{B} \supseteq B, |\tilde{B}| \le |B|+2} \tilde{\Psi}_{\tilde{B}}$. As there are at most $n^2$ ways to extend the set $B$ by at most $2$ elements, the degree of $\Psi_B$ is at most $6n^7$.
			
			
			Pick $\va$ such that $\Psi_B(\va) \neq 0$.
			
			We split the proof into two cases.
			\begin{enumerate}
				\item No $(P_i)|_{B,\va}$ has linear factors.
				In this case, $\ranksem(f|_{B,\va}) = \rank(M_{P|_{B,\va}})$. By \autoref{cl:add-var-to-increase-rank}, there exist two variables $x,w$ not in $B$ such that for $B'=B \cup \set{x,w}$,
				$\rank(M_{P|_{B',\va}}) >  \rank(M_{P|_{B,\va}})$.
				
				We split this case further into two subcases.
				
				\begin{enumerate}
					\item No $(P_i)|_{B',\va}$ has linear factors. In this case, $\ranksem(f|_{B',\va}) = \rank(M_{P|_{B',\va}})$ and the proof is completed.
					\item There exists $i$ such that $(P_i)|_{B',\va}$ has linear factors. This case is handled similarly to case 2 below by setting $B=B'$, observing that by definition of $\Psi_B$, it holds that $\tilde{\Psi}_{B'}(\va)$ is non-zero, and therefore $\Lambda_{B'}$ and $\Lambda'_{B'}$ are non-zero.
				\end{enumerate}

				\item There exists $i \in [b]$ such that $(P_i)|_{B,\va}$ has linear factors. Write
				\[
				f|_{B,\va} = \prod_{i=1}^a (\ell_i)_{B,\va} \cdot \prod_{i=1}^{b} \left( \prod_{j=1}^{b_i} (\tilde{\ell}_{i,j}) \right) \cdot {P_{i,B,\va}},
				\]
				where the $\tilde{\ell}_{i,j}$'s denote the linear factors added in the restriction, and ${P_{i,B,\va}}$ has no linear factors. Let ${P_{B,\va}} = \prod_{i=1}^b {P_{i,B,\va}}$ so that
				$\ranksem(f|_{B,\va}) = \rank(M_{{P_{B,\va}}})$.
				
				By the assumption there exists $i$ such that $b_i\geq 1$. Without loss of generality suppose $i=1$. I.e., $P_1$ is an irreducible polynomial of degree at least $2$ that under the restriction has linear factors. By \autoref{cl:no-linear-factors}, we can add two new variables to $B$ to obtain $B'$ such that $(P_1)|_{B',\va}$ has no linear factors. We now claim that $\ranksem(f|_{B',\va}) > \ranksem(f|_{B,\va})$.
				
				We start by similarly splitting $f|_{B',\va}$ into a product of linear functions times a polynomial ${P_{B',\va}}$ such that $\ranksem(f|_{B',\va})$ is the number of linearly independent linear functions that ${P_{B',\va}}$ depends on. First, we note that
				\[
				\ranksem(f|_{B',\va}) = \ranksem({P_{B',\va}}) \ge \ranksem({P_{B',\va}}|_{B,\va}),
				\]
				so it suffices to prove that $\ranksem({P_{B',\va}}|_{B,\va}) > \ranksem({P_{B,\va}})$. We first claim that ${P_{B,\va}}$ divides ${P_{B',\va}}|_{B,\va}$: consider the process of obtaining $f|_{B,\va}$ from $f|_{B',\va}$ by restricting the variables in $B' \setminus B$: some irreducible factors of $f|_{B',\va}$ become linear; Others remain non-linear and appear in the product defining ${P_{B,\va}}$, and those must come from restrictions to $B$ of non-linear irreducible factors that appear in ${P_{B',\va}}$. In particular, their restrictions to $B$ appear in ${P_{B',\va}}|_{B,\va}$, which implies that ${P_{B,\va}}$ divides ${P_{B',\va}}|_{B,\va}$.
				Further, we claim that $\tilde{\ell}_{1,1}$ divides ${P_{B',\va}}|_{B,\va}$. Indeed, $(P_1)|_{B',\va}$ has no linear factors. Thus by definition of ${P_{B',\va}}$,
				\[
				(P_1)|_{B',\va} \mid {P_{B',\va}}.
				\]
				Restricting both sides further to $B$, we get on the left hand side a product involving $\tilde{\ell}_{1,1}$ and on the right hand side ${P_{B',\va}}|_{B,\va}$.
				
				Thus, ${P_{B',\va}}|_{B,\va}$ depends on at least $\ranksem({P_{B,\va}})+1$ linear functions: the $\ranksem({P_{B,\va}})$ linear functions that ${P_{B,\va}}$ depend on, plus $\tilde{\ell}_{1,1},$ which is variable disjoint and hence linearly independent of all of them. \qedhere
			\end{enumerate}
		\end{proof}
		
		\subsection{Why Semantic Rank}\label{sec:whysemantic}
		In this section we add some more technical details to the explanation in \autoref{sec:errors}. Recall that \autoref{thm:unique-syn-part} proved that there are certain parameters that guarantee uniqueness of clusters in a syntactic partition. This result is enough to fix the relevant theorems in \cite{KarninShpilka09,BSV21}. As explained, one main contribution of this work is an algorithmic construction of a set $B$ such that restricting (a random shift of) the black-box polynomial $f$ to the variables in $B$, maintains the cluster structure of any $\Sps{k}$ circuit for $f$. I.e., the new clusters are (up to a permutation) equal to the restrictions of the clusters of $f$ to $B$. 
		
		We constructed $B$ gradually (\autoref{algo:finding-B}). At each step we checked whether adding a small set of variables increases either the number of clusters of $f|_B$ or the rank of at least one cluster. To prove that this process finds $B$ that preserves the cluster structure, we had to prove that if this was not the case then we could have found variables to add to $B$ (\autoref{cla:add-variables-to-B}). We then needed to prove that if we guessed the right set of variables to add then we can \emph{verify} that they indeed contributed to the rank (Step~\ref{line:rank-increase} of \autoref{algo:finding-B}). All of this works when using \emph{semantic} rank. However, when using \emph{syntactic} rank this is not quite the case. The problem is that when we learn $f_I$ (Step~\ref{step:cluster} in \autoref{algo:finding-B}) then it may be the case that $C'_j|_{x_I=\va_I} = C_{j}$, yet $\ranksyn(C'_j)<\ranksyn(C_j)$, something that cannot happen if we work with semantic rank. Indeed, the simple example (where $\omega$ is a primitive root of unity of order $d$)
		\[
		\sum_{i=1}^{k}\prod_{j=1}^{d} (x_i-\omega^j x_{i+1}) = x_1^d-x_{k+1}^{d} = \prod_{j=1}^{d} (x_1-\omega^j   x_{k+1})
		\]   
		shows that the left hand side has syntactic rank $k+1$ while the middle expression has syntactic rank  $2$. Both sides have semantic rank $0$ (although recall that by \autoref{rem:semantic-rank-0}, it's sometimes convenient to define this rank as 1). We can also cook up such examples where the number of multiplication gates is the same in both circuits, and where the circuits are multilinear and have semantic rank at least $1$. Thus, we were able to prove the correctness of \autoref{algo:finding-B} relying on semantic rank and not on syntactic rank.

		\section{Reconstruction Algorithm for Multilinear $\Sps{k}$ Circuits}
		\label{sec:highdeg}
		
		In this section we provide our algorithm for learning multilinear $\Sps{k}$ circuits.
		Our proof is similar in structure to the proof of Bhargava, Saraf and Volkovich \cite{BSV21}. However, since many of our definitions are different we are required to prove several basic claims.		
		We start by explaining how the results of the previous sections imply that we can get black box access to the clusters on arbitrary points. In the previous section, we picked a random $\va$ in \autoref{algo:finding-B} and argued about the clusters of $f|_{B,\va}$. We'd like to obtain similar claims about the clusters of $f|_{B,\vb}$, assuming $\vb$ doesn't satisfy certain degeneracy conditions.
		
		\begin{lemma}
			\label{lem:generic-projection}
			Let $P$ be a multilinear polynomial and $B$ a subset of the variables. Suppose that $\ranksem(P|_{B,\va}) \ge t$ and that $P(\va)\cdot \Lambda'''_B(\va) \neq 0$ (where $\Lambda'''_B$ is as defined in \autoref{cor:restriction-no-linear-factor}). Then, there exists a non-zero polynomial $\Theta_{B}$ of degree at most $2n^5$ such that $\Theta_B (\va) \neq 0$ and, if $\Theta_B (\vb) \neq 0$, then $\ranksem(P|_{B,\vb}) \ge t$.   
		\end{lemma}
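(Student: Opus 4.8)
The plan is to reduce the statement about semantic rank to a robust statement about the rank of a partial derivative matrix: express $\ranksem$ as $\rank(M_{\cdot})$ of the non-linear part of the restriction, then certify that this rank cannot drop by exhibiting a single non-vanishing minor, while using the polynomial $\Lambda'_B$ to prevent the non-linear part from acquiring spurious linear factors under restriction.

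First I would factor $P = \prod_{i=1}^{a}\ell_i \cdot \prod_{j=1}^{b}P_j$ with the $\ell_i$ linear and the $P_j$ irreducible of degree at least two, and set $Q := \prod_{j=1}^b P_j$, the non-linear part of $P$, noting $b \le n/2$ and $\deg Q \le n$. Recall (as in the proof of \autoref{cla:add-variables-to-B}) that $\Lambda'_B = \prod_{j=1}^b \Lambda'_{B,P_j}$, where each $\Lambda'_{B,P_j}$ is the polynomial of degree at most $2n^3$ produced by \autoref{cor:restriction-no-linear-factor} for $P_j$, so $\deg \Lambda'_B \le 2n^4$. The key point I would establish is: for any $\vc \in \F^n$ with $\Lambda'_B(\vc) \neq 0$, every $(P_j)|_{B,\vc}$ has no linear factors, hence $Q|_{B,\vc} = \prod_j (P_j)|_{B,\vc}$ has no linear factors (unique factorization), hence the non-linear part of $P|_{B,\vc}$ equals $Q|_{B,\vc}$ up to a nonzero scalar, and therefore $\ranksem(P|_{B,\vc}) = \rank(M_{Q|_{B,\vc}})$ (the rank being scale-invariant). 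Applying this at $\vc = \va$, together with $P(\va)\neq 0$ (so $Q|_{B,\va}\not\equiv 0$) and the hypothesis $\Lambda'_B(\va)\neq 0$, gives $\rank(M_{Q|_{B,\va}}) = \ranksem(P|_{B,\va}) \ge t$.

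Next I would view the entries of the matrix $M_{Q|_{B,\vy}}$ as polynomials in a formal restriction point $\vy = (y_1,\dots,y_n)$: the entry indexed by a variable $x_i \in B$ and a monomial $m$ is the coefficient in $Q$ of $x_i\cdot m$ after substituting the variables outside $B$ by the corresponding $y_j$'s, and hence is a polynomial in $\vy$ of degree at most $\deg Q \le n$. Since $\rank(M_{Q|_{B,\va}}) \ge t$, some $t\times t$ minor $\delta(\vy)$ of this matrix satisfies $\delta(\va)\neq 0$; as a polynomial in $\vy$ it has degree at most $tn \le n^2$, and $\delta(\vc)\neq 0$ forces $\rank(M_{Q|_{B,\vc}})\ge t$ for every $\vc$ (when $t = 0$ the claim is vacuous and one takes $\delta \equiv 1$). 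Finally I would set $\Theta_B := \delta \cdot \Lambda'_B$: it is a nonzero polynomial of degree at most $n^2 + 2n^4 \le 2n^5$ with $\Theta_B(\va)\neq 0$, and if $\Theta_B(\vb)\neq 0$ then $\Lambda'_B(\vb)\neq 0$ yields $\ranksem(P|_{B,\vb}) = \rank(M_{Q|_{B,\vb}})$ while $\delta(\vb)\neq 0$ yields $\rank(M_{Q|_{B,\vb}}) \ge t$, so $\ranksem(P|_{B,\vb}) \ge t$.

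The only genuinely non-routine point is the claim in the second paragraph that $Q|_{B,\vc}$ still captures the entire non-linear part of $P|_{B,\vc}$ — i.e.\ that restriction cannot spuriously create a linear factor inside an irreducible non-linear $P_j$ — which is exactly what \autoref{cor:restriction-no-linear-factor} (hence the factor $\Lambda'_B$) is designed to rule out; the remaining steps (bounding the degree of the matrix entries as polynomials in the restriction point, and transferring the same $t\times t$ minor from $\va$ to $\vb$) are straightforward bookkeeping.
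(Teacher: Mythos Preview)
Your approach is the same as the paper's — certify $\ranksem$ via a non-vanishing minor of the partial-derivative matrix of the non-linear part, and multiply by $\Lambda'_B$ to prevent spurious linear factors under restriction. The paper organizes this slightly differently: it reduces to a single irreducible factor and then splits into two cases according to whether $P|_{B,\va}$ already has a linear factor.

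There is, however, a subtle gap in your version. Your key claim is that $\Lambda'_B(\vc)\neq 0$ forces every $(P_j)|_{B,\vc}$ to have no linear factors, hence that $Q|_{B,\vc}$ has none and $\ranksem(P|_{B,\vc})=\rank(M_{Q|_{B,\vc}})$. But the construction behind \autoref{cor:restriction-no-linear-factor} (via the proof of \autoref{cl:no-linear-factors}) only rules out restrictions of the form $\ell\cdot P'$ with $\deg P'\ge 1$; it does \emph{not} rule out $(P_j)|_{B,\vc}$ collapsing to a single linear form (for instance when $P_j$ meets $B$ in too few variables). When some $(P_j)|_{B,\vb}$ is linear, $Q|_{B,\vb}$ acquires that linear factor, so $\rank(M_{Q|_{B,\vb}})>\ranksem(P|_{B,\vb})$ and your minor $\delta$ certifies the wrong quantity. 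The paper's Case~2 is precisely what covers this: when $P|_{B,\va}=L\cdot P'$ with $L$ a product of linear forms, it passes to the subset $B'\subseteq B$ of variables supporting $P'$, uses $P(\va)\neq 0$ to identify $P'$ with a nonzero scalar multiple of $P|_{B',\va}$ (so $P|_{B',\va}$ has no linear factors), runs the Case~1 argument on $(P,B')$, and concludes via $\ranksem(P|_{B,\vb})\ge\ranksem(P|_{B',\vb})\ge t$. Your argument would go through if you either incorporated this reduction to $B'$, or added to $\Theta_B$ for each $j$ a coefficient witnessing a degree-$\ge 2$ monomial of $(P_j)|_{B,\cdot}$ (and dropped from $Q$ those $P_j$ for which no such monomial exists).
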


		\begin{proof}
			Suppose without loss of generality that $P$ is irreducible (as otherwise argue separately on each irreducible factor). We may also assume that $P|_{B,\va}$ is non-constant as otherwise the statement is trivial.
			If $P|_{B,\va}$ has no linear factors, then
			\[
			\ranksem(P|_{B,\va}) = \rank(M_{P|_{B,\va}}).
			\]
			Since by assumption $\rank(M_{P|_{B,\va}}) \ge t$, there's a $t \times t$ minor whose determinant  is non-zero. This implies that, in the matrix $M_{P|_{B,\vz}}$, the same $t \times t$ minor has non-zero determinant, as a polynomial in $\vz$. Denote this determinant with $\text{Det}(\vz)$, and observe that $\text{Det}(\va) \neq 0$. Furthermore, as each coefficient in $P|_{B,\vz}$ has degree at most $n$ as  a polynomial in $\vz$, and $t\leq n$ we have that $\deg(\text{Det})\leq n^2$.
			
			Define  $\Theta_B := \Lambda'''_B \cdot \text{Det}$ where $\Lambda'''_B$ is as in \autoref{cor:restriction-no-linear-factor}. As $\deg(\Lambda'''_B)\leq 2n^3$ we get that $\deg(\Theta_B)\leq 2n^5$ and that $\Theta_B(\va) \neq 0$.
			
			Let $\vb$ be such that $\Theta_B(\vb)\neq 0$. \autoref{cor:restriction-no-linear-factor} guarantees that $P|_{B,\vb}$ has no linear factors, and since $\text{Det}(\vb) \neq 0$, $\rank(M_{P|_{B,\vb}}) \ge t$. This implies that $\ranksem(P|_{B,\vb}) \ge t$.
			
			Now suppose $P|_{B,\va}$ has linear factors and can be written as $L \cdot P'$ where $L$ is a product of linear factors and $P'$ has no linear factors, so that $\ranksem(P|_{B,\va})  = \rank(M_{P'})$. Let $B'$ be the set of variables that appears in $P'$ (and is disjoint from the set of variables that appear in $L$). Note that since $P(\va) \neq 0$ it must be the case that $P|_{B',\va}$ is non-zero and hence $L|_{B',\va}$ is a non-zero constant, i.e., $P' = c P|_{B',\va}$ for some $c \neq 0$.
			
			We now apply the same reasoning as before to $P'$, which has no linear factors, to deduce that for some non-zero polynomial $\Theta_B$, if $\Theta_B(\vb) \neq 0$ then $\ranksem(P|_{B',\vb}) \ge \ranksem(P|_{B',\va}) = \ranksem(P') = t$. Finally, note that as $B' \subseteq B$, $\ranksem(P|_{B,\vb}) \ge \ranksem(P|_{B',\vb}) \ge t$.
		\end{proof}

		We say that an output $(B,\va)$ of \autoref{algo:finding-B} is \emph{good} if it satisfies $\Gamma_C(\va) \neq 0$, where $\Gamma_C$ is the polynomial defined in \autoref{cl:tau-semantic-restriction}.

		\begin{claim}
			\label{cl:beta-clusters}
			Let $f \in \Sps{k}$ be a multilinear polynomial and let $C$ be a minimal $\Sps{k}$ computing $f$.	Let $(B,\va)$ be a good output of \autoref{algo:finding-B} on $f$.
			
			Consider a partition of $f$, $f = \sum_{i=1}^s f_i$, as given by \autoref{cl:partition-two-taus} with $\varphi(k) = k^2$ and $\tau_{\text{min}}$ as in \autoref{cl:tau-semantic-restriction}. Let $\tau_0, \tau_1, r$ be its parameters as promised by the claim. Denote $\tau=\tau_0^{k}$.
			
			Then, there exists a polynomial $\Theta_{B,C}$ of degree at most $2n^7$ such that $\Theta_{B,C} (\va) \neq 0$, and the following property holds:
			for every $\vb \in \F^n$ such that $\Theta_{B,C}(\vb) \neq 0$ and circuit $D$ computing $f|_{B,\vb}$, it holds that the output of \autoref{algo:semantic-clustering-algorithm}, when given $D$ and $\tau$ as input, which we denote $[D] = \sum_{i=1}^{s'} g_i$, satisfies:
			\begin{enumerate}
				\item $s'=s$,
				\item $g_i = (f_i)|_{B,\vb}$, up to reordering of the indices,
				\item $\ranksem(g_i) = \ranksem(f_i)$.
			\end{enumerate}
			In particular, the $g_i$'s also form a $(\tau,r)$ partition.
		\end{claim}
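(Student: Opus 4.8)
The plan is to bootstrap from \autoref{cl:tau-semantic-restriction}, which is exactly the special case $\vb=\va$ of this claim, to a generic $\vb$. The two defining properties of a $(\tau,r)$-semantic partition are a per-cluster rank upper bound and a pairwise-distance lower bound. The upper bound $\ranksem((f_i)|_{B,\vb})\le\ranksem(f_i)\le r$ holds for \emph{every} $\vb$ by \autoref{cl:restriction-rank}, so no work is needed there. The distance lower bound is where genericity enters: at the good point $\va$ we know $\ranksem((f_i+f_j)|_{B,\va})\ge\tau r$ because, by \autoref{cl:tau-semantic-restriction}, the polynomials $(f_i)|_{B,\va}$ form a $(\tau,r)$-partition; and \autoref{lem:generic-projection} is precisely the tool that converts such a ``semantic rank does not drop under this particular restriction'' statement into a single nonzero polynomial whose non-vanishing at $\vb$ preserves the inequality.

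Concretely, I would let $C=\sum_{i=1}^sC_i$ be the minimal circuit with $[C_i]=f_i$, and define $\Theta_{B,C}$ as the product of: the polynomial $\Xi(\vx)=\prod_if_i\cdot\prod_{i\ne j}(f_i+f_j)$ together with $f(\vx)$ (non-degeneracy, ensuring $f|_{B,\vb}\ne0$ and every $(f_i)|_{B,\vb}\ne0$); for each $i\in[s]$ the polynomial $\Theta^{[f_i]}_B$ from \autoref{lem:generic-projection} applied to $P=f_i$ with threshold $t=\ranksem(f_i)$; and for each $i\ne j$ the polynomial $\Theta^{[f_i+f_j]}_B$ from \autoref{lem:generic-projection} applied to $P=f_i+f_j$ with threshold $t=\tau r$. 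The hypotheses of \autoref{lem:generic-projection} are met at $\va$: the required $P(\va)\ne0$ and $\Lambda'_B(\va)\ne0$ follow from $\Gamma_C(\va)\ne0$, since $\Xi$ and the polynomials $\Psi_{B,S}$ for $S=\{i\}$ and $S=\{i,j\}$ (whose $\Lambda'_B$-factor is exactly the one appearing in \autoref{lem:generic-projection}) all divide $\Gamma_C$; and $\ranksem(P|_{B,\va})\ge t$ follows from \autoref{cl:tau-semantic-restriction} (items~2--3 give $\ranksem((f_i)|_{B,\va})=\ranksem(f_i)$, and the ``in particular'' clause gives $\ranksem((f_i+f_j)|_{B,\va})\ge\tau r$). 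Hence $\Theta_{B,C}(\va)\ne0$, and a routine degree count using $\deg(\Theta^{[\cdot]}_B)\le2n^5$, $\deg(f_i)\le d$ and $s\le k$ gives $\deg(\Theta_{B,C})\le2n^7$.

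Next, fix any $\vb$ with $\Theta_{B,C}(\vb)\ne0$, so $f|_{B,\vb}\ne0$ and each $(f_i)|_{B,\vb}\ne0$. The crux is that $\big((f_i)|_{B,\vb}\big)_{i=1}^s$ is itself a \emph{realizable} $(\tau,r)$-semantic partition of $f|_{B,\vb}$ with exactly $s$ parts: it is realized by $C|_{B,\vb}=\sum_iC_i|_{B,\vb}$ with $[C_i|_{B,\vb}]=(f_i)|_{B,\vb}$; the rank bound $\ranksem((f_i)|_{B,\vb})\le r$ is \autoref{cl:restriction-rank}; and $\ranksem((f_i+f_j)|_{B,\vb})\ge\tau r$ for $i\ne j$ holds by construction of the $\Theta^{[f_i+f_j]}_B$ factors. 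By \autoref{rem:alg-semantic-unique} the output $[D]=\sum_{i=1}^{s'}g_i$ of \autoref{algo:semantic-clustering-algorithm} depends only on $f|_{B,\vb}$, and by \autoref{cor:unique-semantic} and \autoref{cor:max-clus-min-rank} it is the unique $\tau$-semantic partition of minimal rank, equivalently of maximal number of clusters; since a valid one with $s$ clusters exists, $s'\ge s$. Conversely, applying \autoref{cl:cluster-refinement-in-B} at $\vb$ (its hypotheses hold here because the $\tau_\text{min}$ we use is larger than the one it requires) with a minimal circuit computing $f|_{B,\vb}$ yields $s'\le s$ and a partition $\{S_i\}$ of $[s]$ with $g_i=\sum_{j\in S_i}(f_j)|_{B,\vb}$. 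Hence $s'=s$, each $|S_i|=1$, and after reordering $g_i=(f_i)|_{B,\vb}$, proving items~1 and~2. Item~3 then follows from $\ranksem((f_i)|_{B,\vb})\le\ranksem(f_i)$ (\autoref{cl:restriction-rank}) and $\ranksem((f_i)|_{B,\vb})\ge\ranksem(f_i)$ (the $\Theta^{[f_i]}_B$ condition), and the ``in particular'' clause is immediate since $\{(f_i)|_{B,\vb}\}$ was shown to be a $(\tau,r)$-partition.

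Most of the real difficulty has already been absorbed by earlier results: \autoref{cl:tau-semantic-restriction} does the heavy lifting at the single point $\va$, and \autoref{lem:generic-projection} (whose proof, via the Forbes--Ghosh--Saxena shift machinery and the resultant arguments controlling spurious linear factors, is the genuinely technical ingredient) is what makes the transfer to generic $\vb$ possible. Given those, the only thing one must be careful about in the present proof is the bookkeeping: certifying, by tracing through the definition of $\Gamma_C$, that every non-degeneracy polynomial entering $\Theta_{B,C}$ is nonzero at $\va$, and including $f$ and $\Xi$ explicitly so that the degenerate cases ($f|_{B,\vb}=0$ or some $(f_i)|_{B,\vb}=0$) are excluded for the $\vb$'s we care about.
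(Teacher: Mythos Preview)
Your proposal is correct and follows essentially the same route as the paper: define $\Theta_{B,C}$ as a product of the polynomials furnished by \autoref{lem:generic-projection} applied to each $f_i$ and each $f_i+f_j$, verify via $\Gamma_C(\va)\ne0$ that the hypotheses of \autoref{lem:generic-projection} hold at $\va$, and then combine \autoref{cl:cluster-refinement-in-B} (giving $s'\le s$) with the fact that $\{(f_i)|_{B,\vb}\}$ is already a $(\tau,r)$-partition with $s$ parts (so the minimal-rank output of \autoref{algo:semantic-clustering-algorithm} has $s'\ge s$). The only cosmetic difference is that you include $f$ and $\Xi$ as explicit factors of $\Theta_{B,C}$ to rule out degenerate $\vb$'s, whereas the paper omits them and relies implicitly on \autoref{rem:semantic-rank-0}; this is harmless extra care, and the degree bound $2n^7$ still holds.
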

				
		\begin{proof}
			By \autoref{cl:tau-semantic-restriction}, we know that there exists a choice of $\vb$ (that is, $\vb=\va$) that satisfies the required properties. We first show that there is a non-zero polynomial $\Theta_{B,C}$ of degree at most $2n^7$ such that if $\Theta_{B,C}(\vb) \neq 0$ then
			\begin{enumerate}
				\item for every $i \in [s]$, $\ranksem(f_i|_{B,\vb}) = \ranksem(f_i)$,
				\item for every $i \neq j \in [s]$, $\ranksem(f_i|_{B,\vb}, f_j|_{B,\vb}) \ge \tau r$.
			\end{enumerate}
			For the first item, denote by $\Theta_i$ the polynomial promised by \autoref{lem:generic-projection} applied to $f_i$ and $B$. Since $(B,\va)$ are good, $\Gamma_C(\va) \neq 0$ (where $\Gamma_C$ is as defined in the proof of \autoref{cl:tau-semantic-restriction}) and one can verify that assumptions on $\va$ in the statement of \autoref{lem:generic-projection} indeed hold. In particular, $\Theta_i (\va) \neq 0$. Thus, if $\Theta_i (\vb) \neq 0$ then
			\[
			\ranksem((f_i)|_{B,\vb}) \ge	\ranksem((f_i)|_{B,\va}) = \ranksem (f_i),
			\]
			and the reverse inequality is clear.
			
			For the second item, we similarly let $\Theta_{i,j}$ be the polynomial promised by \autoref{lem:generic-projection} applied to $f_i + f_j$ and $B$. The assumptions on $\va$ in the statement of \autoref{lem:generic-projection} again hold because $(B,\va)$ is good.
			
			Finally, set $\Theta_{B,C} = (\prod_{i=1}^s \Theta_i) \cdot (\prod_{i \neq j \in [s]} \Theta_{i,j})$. Note that $\Theta_{B,C}(\va) \neq 0$ as each factor is non-zero. It is also clear that $\deg(\Theta_{B,C})\leq 2n^5(n+{n\choose 2})<2n^7$.
			
			Let $\vb$ be such that $\Theta_{B,C} (\vb) \neq 0$. Consider \autoref{algo:semantic-clustering-algorithm} run on $f|_{B,\vb}$ with parameter $\tau$. Denote its output by $f|_{B,\vb} = \sum_{i=1}^{s'} g_i$.
			By \autoref{cl:cluster-refinement-in-B}, we have that $s' \le s$. On the other hand, as  $\Theta_{B,C} (\vb) \neq 0$, \autoref{lem:generic-projection} implies that the set $\set{(f_i)|_{B,\vb}}$ for $i \in [s]$ is a $\tau$ partition with $s$ clusters. Hence, by \autoref{cor:max-clus-min-rank}, and the fact that \autoref{algo:semantic-clustering-algorithm} returns the partition of minimal rank (and in particular with maximal number of clusters), the output will be $\set{(f_i)|_{B,\vb}}$, as we wanted to show.			
		\end{proof}

		\begin{claim}
			\label{cl:output-B-with-random-beta}
			Let $f \in \Sps{k}$ be a multilinear polynomial and let $C$ be a $\Sps{k}$ circuit computing $f$.
			Let $(B,\va)$ be good outputs of \autoref{algo:finding-B} on $f$. Let $B' \supseteq B$.
			
			For every $\vb \in \F^n$ such that $\Theta_{B,C}(\vb) \neq 0$ the following holds:
			Denote by $f|_{B,\vb} = \sum_{i=1}^s (f_i)|_{B,\vb}$ the output of \autoref{algo:semantic-clustering-algorithm} on $f|_{B,\vb}$ with parameter $\tau$ (as promised by \autoref{cl:beta-clusters}). Let $f|_{B',\vb} = \sum_{i=1}^{s'} g_i$ be the output of \autoref{algo:semantic-clustering-algorithm} on $f|_{B',\vb}$ with parameter $\tau$.
			Then $s=s'$ and up to permutation of the indices, $g_i = (f_i)|_{B',\vb}$.
		\end{claim}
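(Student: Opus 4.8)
The plan is to determine $s'$ by a two-sided estimate and then read off the identification of clusters. The upper bound $s' \le s$ will come from \autoref{cl:cluster-refinement-in-B}, and the lower bound $s' \ge s$ from exhibiting one concrete $\tau$-semantic partition of $f|_{B',\vb}$ into $s$ clusters; once $s' = s$ is known, the refinement structure produced by \autoref{cl:cluster-refinement-in-B} collapses to the desired statement $g_i = (f_i)|_{B',\vb}$.

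To begin I would fix, exactly as in the proof of \autoref{cl:beta-clusters}, the semantic partition $f = \sum_{i=1}^s f_i$ of \autoref{cl:partition-two-taus} (taken with $\varphi(k)=k^2$ and $\tau_{\text{min}}$ as in \autoref{cl:tau-semantic-restriction}), its parameters $\tau_0,\tau_1,r$ with $\tau=\tau_0^k$, and a realization $C=\sum_i C_i$ with $[C_i]=f_i$ and $C$ minimal. For $A\supseteq B$ write $g|_A := g|_{A,\vb}$, so that $(g|_{B'})|_B = g|_B$ and likewise $(C_i|_{B'})|_B = C_i|_B$. The only use of the hypothesis $\Theta_{B,C}(\vb)\neq 0$ is to invoke \autoref{cl:beta-clusters}, which tells us that $\bigl((f_1)|_B,\dots,(f_s)|_B\bigr)$ is a $(\tau,r)$-semantic partition of $f|_B$; in particular $\ranksem\bigl((f_i)|_B + (f_j)|_B\bigr)\ge \tau r$ for all $i\neq j$.

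For the upper bound: by \autoref{cl:semantic-clusters} the output of \autoref{algo:semantic-clustering-algorithm} is a $\tau$-semantic partition of minimal semantic rank; by \autoref{cor:max-clus-min-rank} the $\tau$-semantic partition with the maximal number of clusters has minimal rank, and by \autoref{cor:unique-semantic} the minimal-rank partition is unique, so the output $f|_{B',\vb} = \sum_{i=1}^{s'} g_i$ is in fact a maximal $\tau$-semantic partition of $f|_{B'}$. Applying \autoref{cl:cluster-refinement-in-B} with $B'$ in place of $B$ — legitimate, since $\varphi(k)=k^2\ge 1$ and the value $\tau_{\text{min}}$ of \autoref{cl:tau-semantic-restriction} is at least the one used in \autoref{cl:cluster-refinement-in-B}, so that proof goes through verbatim with our (larger) $\tau_{\text{min}}$ and $\tau$ — yields $s'\le s$ together with a partition $\{S_i\}_{i\in[s']}$ of $[s]$ and a realization $D_i = \sum_{j\in S_i} C_j|_{B'}$ with $g_i = [D_i] = \sum_{j\in S_i}(f_j)|_{B'}$. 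For the lower bound I would verify that $\bigl((f_1)|_{B'},\dots,(f_s)|_{B'}\bigr)$ is itself a $(\tau,r)$-semantic partition of $f|_{B'}$: it is realizable via $\bigl(C_1|_{B'},\dots,C_s|_{B'}\bigr)$; each $\ranksem\bigl((f_i)|_{B'}\bigr)\le\ranksem(f_i)\le r$ by \autoref{cl:restriction-rank}; and for $i\neq j$, restricting the polynomial $(f_i)|_{B'}+(f_j)|_{B'}$ one further step to $B$ gives $(f_i)|_B+(f_j)|_B$, so \autoref{cl:restriction-rank} and the previous paragraph give $\ranksem\bigl((f_i)|_{B'}+(f_j)|_{B'}\bigr)\ge\ranksem\bigl((f_i)|_B+(f_j)|_B\bigr)\ge\tau r$. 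Hence $f|_{B'}$ admits a $\tau$-semantic partition into $s$ clusters, and since the output partition is maximal, $s'\ge s$. Combining the bounds, $s'=s$; then $\{S_i\}_{i\in[s]}$ partitions $[s]$ into $s$ nonempty sets, each a singleton $\{\pi(i)\}$, so $g_i = (f_{\pi(i)})|_{B',\vb}$, which is the claim.

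The step I expect to be the crux is the separation estimate in the lower bound: restricting to $B'$ can a priori collapse the semantic distance between two clusters, so one cannot argue about $f|_{B'}$ in isolation. The point — and the reason \autoref{algo:finding-B} grows $B$ incrementally — is that $B\subseteq B'$, so the distances were already certified at the smaller set $B$ by \autoref{cl:beta-clusters}, and semantic rank cannot increase under the further restriction to $B$, which lets the bound be pushed back up to $B'$. The remaining points (compatibility of $\tau_{\text{min}}$ with \autoref{cl:cluster-refinement-in-B}, and that $C$ and its restrictions can be taken minimal so the clustering algorithm's output is well defined and maximal) are routine.
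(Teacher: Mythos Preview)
Your proposal is correct and follows essentially the same approach as the paper's proof: both verify that $\bigl((f_1)|_{B',\vb},\dots,(f_s)|_{B',\vb}\bigr)$ is a $(\tau,r)$-semantic partition by sandwiching the ranks and distances between their values at $B$ (from \autoref{cl:beta-clusters}) and at the full space via \autoref{cl:restriction-rank}, and then invoke \autoref{cl:cluster-refinement-in-B} together with the maximality of the algorithm's output (\autoref{cor:max-clus-min-rank}) to force $s'=s$ and the identification. Your write-up is more explicit about why the algorithm's output is maximal and why the parameters are compatible with \autoref{cl:cluster-refinement-in-B}, but the argument is the same.
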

		
		\begin{proof}
			By \autoref{cl:beta-clusters}, it holds that:
			\begin{enumerate}
				\item for every $i \in [s]$, $\ranksem((f_i)|_{B,\vb}) = \ranksem(f_i)$.
				\item for every $i \neq j \in [s]$, $\ranksem((f_i)|_{B,\vb}, (f_j)|_{B,\vb}) \ge \tau r$.
			\end{enumerate}
			
			Thus, as $B' \supseteq B$,
			\begin{enumerate}
				\item for every $i \in [s]$, $\ranksem(f_i) =  \ranksem((f_i)|_{B,\vb}) \le \ranksem((f_i)|_{B',\vb}) \le \ranksem(f_i)$.
				\item for every $i \neq j \in [s]$, $\ranksem((f_i)|_{B',\vb}, (f_j)|_{B',\vb}) \ge \tau r$.
			\end{enumerate}
			Thus, $(f_i)|_{B',\vb}$ is a $\tau$ partition with $s$ clusters.  \autoref{cl:cluster-refinement-in-B} and \autoref{cor:max-clus-min-rank} show that $s'=s$ and up to permutation of the indices, $g_i = (f_i)|_{B',\vb}$.
		\end{proof}

		\subsection{Cluster Evaluation}
		\label{sec:cluster-evaluation}
		
		In this section we explain how to evaluate the clusters at arbitrary points. Recall that what we have is access to the clusters $f_i|_{B,\va}$ so we'd like to replace $\va$ by an arbitrary point $\vb \in \F^n$.
		We do it in several stages as in \cite{BSV21}. 
		Basically, we replace all uses of their Lemma 6.14 by our \autoref{cl:beta-clusters}. We briefly explain how this is done.
				
		The first step shows how to evaluate the clusters of $f|_{B,\vb'}$ if the Hamming distance of $\vb$ and $\vb'$ is $1$.
		
		\begin{lemma}[Similar to Lemma 6.17 in \cite{BSV21}]
			\label{lem:hamming-distance-one}
			Let $f \in \Sps{k}$ be a  multilinear polynomial and let $C$ be a minimal multilinear $\Sps{k}$ circuit computing $f$. 
			Let $(B,\va)$ be a good output  of \autoref{algo:finding-B} on $f$. Let $\vb$  such that $\Theta_{B,C}(\vb) \neq 0$, where $\Theta_{B,C}$ is as given in \autoref{cl:beta-clusters}, and  $f|_{B,\vb} = \sum_{i=1}^s f_i|_{B,\vb}$  the output of \autoref{algo:semantic-clustering-algorithm} with parameter $\tau$ on $f|_{B,\vb}$.
			
			Let $\vb' \in \F^n$ be of Hamming distance 1 from $\vb$. Then, there exists an algorithm that runs in time $2^{k^2} \cdot \poly(n)$ and outputs $(f_1|_{B,\vb'}, \ldots, f_s|_{B,\vb'})$.
		\end{lemma}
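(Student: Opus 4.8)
The plan is to handle the one coordinate in which $\vb$ and $\vb'$ differ by temporarily re-introducing it, re-running the clustering on this slightly larger restriction, and then specializing the coordinate to the new value. First I would dispose of the degenerate situation: since $\vb'$ and $\vb$ have Hamming distance $1$ they disagree in exactly one coordinate $j$, and if $x_j \in B$ then $f|_{B,\vb'} = f|_{B,\vb}$ and $(f_i)|_{B,\vb'} = (f_i)|_{B,\vb}$ for all $i$, so the algorithm simply echoes its input. Assume henceforth $x_j \notin B$, and set $B' = B \cup \set{x_j}$; by \autoref{cla:size-of-B}, $|B'| \le k^{k^{O(k)}} + 1$, so $B'$ still has size bounded by a function of $k$ only.

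The heart of the algorithm is to reconstruct and re-cluster $f|_{B',\vb}$. Black-box access to $f|_{B',\vb}$ is obtained from that of $f$ by hard-wiring the coordinates outside $B'$ to $\vb$; since $f|_{B',\vb}$ is computed by the multilinear $\Sps{k}$ circuit $C|_{B',\vb}$ on $|B'|$ variables, and hence has semantic rank at most $|B'|$, \autoref{lem:low semantic rank exact} produces a multilinear $\Sps{k}$ circuit for it, a randomized polynomial-identity test confirming correctness. Feeding this circuit with parameter $\tau$ to \autoref{algo:semantic-clustering-algorithm} then lets us invoke \autoref{cl:output-B-with-random-beta}: as $(B,\va)$ is good, $B' \supseteq B$, and $\Theta_{B,C}(\vb) \neq 0$, the output $f|_{B',\vb} = \sum_{i=1}^{s} g_i$ satisfies $\set{g_1,\dots,g_s} = \set{(f_1)|_{B',\vb},\dots,(f_s)|_{B',\vb}}$ up to a permutation, and in particular has exactly $s$ clusters --- the same number as the given partition of $f|_{B,\vb}$. (By \autoref{rem:alg-semantic-unique} the output does not depend on which circuit for $f|_{B',\vb}$ was reconstructed.)

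It remains to match each $g_i$ to the correct input cluster and then specialize $x_j$. The restricted clusters are pairwise distinct: by \autoref{cl:beta-clusters} the polynomials $(f_i)|_{B,\vb}$ form a $(\tau,r)$-partition, so for $i \neq i'$,
\[
\ranksem\bigl((f_i)|_{B,\vb} + (f_{i'})|_{B,\vb}\bigr) \ge \tau r > r \ge \ranksem\bigl((f_i)|_{B,\vb}\bigr),
\]
which, using $\tau \ge 10$, rules out $(f_i)|_{B,\vb} = (f_{i'})|_{B,\vb}$. Hence, for every input index $i$ there is a unique $g_{\sigma(i)}$ with $g_{\sigma(i)}\big|_{x_j = b_j} = f_i|_{B,\vb}$, and $\sigma$ is found by restricting each $g_{i'}$ to $x_j = b_j$ and comparing against the input clusters by randomized PIT. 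Writing $g_{\sigma(i)} = (f_{\pi(\sigma(i))})|_{B',\vb}$ for the permutation $\pi$ of \autoref{cl:output-B-with-random-beta}, distinctness forces $\pi(\sigma(i)) = i$, so $g_{\sigma(i)} = (f_i)|_{B',\vb}$. The algorithm outputs, for each $i$, the polynomial $g_{\sigma(i)}\big|_{x_j = b'_j}$; since $\vb$ and $\vb'$ agree on every coordinate except $j$, this equals $(f_i)|_{B,\vb'}$, as required. The running time is $\poly(n)$ times a function of $k$, its dominant contributions being one call to \autoref{lem:low semantic rank exact} on a polynomial in $O(|B|)$ variables and one call to \autoref{algo:semantic-clustering-algorithm}, which costs $2^{k^2}\cdot\poly(n)$ by \autoref{cl:semantic-clusters}; the PIT checks and substitutions add only polynomial overhead.

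The step I expect to need the most care is the matching: one must be certain that re-introducing $x_j$ and then re-specializing it to $b'_j$ acts on the intended global cluster and cannot split, merge, or permute clusters. This is exactly what \autoref{cl:output-B-with-random-beta} guarantees (the clusters of $f|_{B',\vb}$ are precisely the restrictions of the global clusters $f_i$, with unchanged count), and the $(\tau,r)$-partition distinctness above makes the identification of $\sigma$ unambiguous. A secondary concern is that all randomized subroutines --- the reconstruction of $f|_{B',\vb}$ and the PIT comparisons --- succeed simultaneously, which holds with high probability and is absorbed into the overall failure probability of the reconstruction algorithm.
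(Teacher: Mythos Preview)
Your proposal is correct and follows essentially the same approach as the paper: enlarge $B$ to $B' = B \cup \{x_j\}$, cluster $f|_{B',\vb}$, invoke \autoref{cl:output-B-with-random-beta} to identify the clusters as $(f_i)|_{B',\vb}$, match them to the input via randomized PIT after restricting $x_j = b_j$, and finally specialize $x_j = b'_j$. You supply more detail than the paper's terse proof --- making explicit the call to \autoref{lem:low semantic rank exact} needed before the clustering algorithm can be applied, and justifying that the restricted clusters $(f_i)|_{B,\vb}$ are pairwise distinct so the matching $\sigma$ is well-defined --- but the argument is the same.
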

		
		\begin{proof}
			Let $j$ be the coordinate on which $\vb'$ differs from $\vb$. 
			We may assume $j \not \in B$ as otherwise the statement is immediate.
			Let $B' = B \cup \set{x_j}$ and run \autoref{algo:semantic-clustering-algorithm} on $f|_{B',\vb}$, and denote its output by $\sum_{i=1}^s g_i$.
			By \autoref{cl:output-B-with-random-beta}, $g_i= f_i|_{B',\vb}$ up to permutation of the indices. It is easy to find the permutation by running a randomized PIT algorithm  between $g_i|_{B,\vb}$ and $f_{i'}|_{B,\vb}$ for all $i,i' \in [s]$.
			Thus by fixing the $j$-th coordinate appropriately we get $f_i|_{B,\vb'}$.
		\end{proof}
		
		Note that in the above lemma, as we move from $\vb$ to $\vb'$ we can match the clusters of $f|_{B,\vb}$ to the corresponding clusters of $f|_{B,\vb'}$. In what follows, we'll implicitly use the fact that we can find this matching.
		
		The next step shows how to evaluate the clusters of $f|_{B,\vb}$ for $\vb$ that is arbitrarily far from $\va$ but satisfies a technical condition. For $0 \le i \le n$, let $\gamma_i(\va, \vb)$ denote the $i$-th ``hybrid'' between $\va$ and $\vb$, i.e., a vector whose first $(n-i)$ coordinates are the first $(n-i)$ coordinates of $\va$ and whose last $i$ coordinates are the last $i$ coordinates of $\vb$, so that $\gamma_0(\va,\vb)=\va$, $\gamma_n (\va, \vb) = \vb$ and the Hamming distance between $\gamma_i(\va,\vb)$ and $\gamma_{i+1}(\va,\vb)$ is 1.
		
		\begin{lemma}[Similar to Corollary 6.18 in \cite{BSV21}]
			\label{lem:clusters-line}
			Let $f \in \Sps{k}$ be a multilinear polynomial and let $C$ be a minimal multilinear $\Sps{k}$ circuit computing $f$. 
			Let $(B,\va)$ be a good output of \autoref{algo:finding-B} on $f$. 
			Let $\vb \in \F^n$  such that $\Theta_{B,C}(\gamma_i(\va,\vb)) \neq 0$ for all $0 \le i \le n-1$, and let $f|_{B,\va} = \sum_{i=1}^s f_i|_{B,\va}$ be the output of \autoref{algo:semantic-clustering-algorithm} on $f|_{B,\va}$.

			Then, there exists an algorithm that runs in time $2^{k^2} \cdot \poly(n)$ and outputs $(f_1|_{B,\vb}, \ldots, f_s|_{B,\vb})$.
		\end{lemma}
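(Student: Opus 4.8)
The plan is to march along the hybrid path $\gamma_0(\va,\vb)=\va,\ \gamma_1(\va,\vb),\ \ldots,\ \gamma_n(\va,\vb)=\vb$ and apply \autoref{lem:hamming-distance-one} exactly $n$ times, once for each single-coordinate step. I would set this up as an induction on $i$: the inductive hypothesis is that after $i$ steps the algorithm holds the tuple $(f_1|_{B,\gamma_i(\va,\vb)},\ldots,f_s|_{B,\gamma_i(\va,\vb)})$ together with a labeling identifying it as the output of \autoref{algo:semantic-clustering-algorithm} on $f|_{B,\gamma_i(\va,\vb)}$. The base case $i=0$ is exactly the hypothesis of the lemma (recall $\gamma_0(\va,\vb)=\va$). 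For the inductive step, note that $\gamma_i(\va,\vb)$ and $\gamma_{i+1}(\va,\vb)$ have Hamming distance $1$ and, by assumption, $\Theta_{B,C}(\gamma_i(\va,\vb))\neq 0$; hence \autoref{lem:hamming-distance-one}, applied with base point $\gamma_i(\va,\vb)$ and target $\gamma_{i+1}(\va,\vb)$, outputs $(f_1|_{B,\gamma_{i+1}(\va,\vb)},\ldots,f_s|_{B,\gamma_{i+1}(\va,\vb)})$ in time $2^{k^2}\cdot\poly(n)$.

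A small point to verify is that this tuple really is the output of \autoref{algo:semantic-clustering-algorithm} on $f|_{B,\gamma_{i+1}(\va,\vb)}$ — which is what is needed so the inductive hypothesis is restored for the next step. For $i+1\le n-1$ this follows from \autoref{cl:beta-clusters}, since $\Theta_{B,C}(\gamma_{i+1}(\va,\vb))\neq 0$; for $i+1=n$ we no longer need this (and indeed the hypothesis of the lemma does not assert $\Theta_{B,C}(\vb)\neq 0$), because at the last step we only need to read off $(f_1|_{B,\vb},\ldots,f_s|_{B,\vb})$, which is precisely the output of \autoref{lem:hamming-distance-one}. The labeling is propagated consistently across each single-coordinate move as in the remark following \autoref{lem:hamming-distance-one}: restricting each newly computed cluster back to the previous hybrid recovers a cluster of $f|_{B,\gamma_i(\va,\vb)}$, and since distinct clusters are at semantic distance at least $\tau r>0$ and hence are distinct polynomials, the correspondence can be pinned down by a randomized PIT test.

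For the running time, there are $n$ iterations, each costing $2^{k^2}\cdot\poly(n)$ by \autoref{lem:hamming-distance-one} plus $\poly(n)$ for the PIT-based matching of the (at most $k$) clusters; the total is $n\cdot 2^{k^2}\cdot\poly(n)=2^{k^2}\cdot\poly(n)$, as claimed.

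I do not expect any genuine difficulty here: once \autoref{lem:hamming-distance-one} and \autoref{cl:beta-clusters} are in hand, the argument is a routine telescoping induction along the hybrid path. The only thing that requires a bit of attention is the bookkeeping of cluster labels along the path and checking that having $\Theta_{B,C}$ nonzero on $\gamma_0(\va,\vb),\ldots,\gamma_{n-1}(\va,\vb)$ (rather than also on $\vb$ itself) is enough — which it is, for the reason explained above.
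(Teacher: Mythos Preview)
Your proposal is correct and follows essentially the same approach as the paper: repeatedly apply \autoref{lem:hamming-distance-one} along the hybrid path $\gamma_0(\va,\vb),\ldots,\gamma_n(\va,\vb)$. The paper's proof is a single sentence to this effect; your version is more explicit about maintaining the inductive invariant (via \autoref{cl:beta-clusters}) and about why the hypothesis on $\gamma_{n-1}$ rather than $\gamma_n$ suffices, which is a welcome clarification but not a different argument.
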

		
		\begin{proof}
			We apply \autoref{lem:hamming-distance-one} repeatedly on $\va=\gamma_0 (\va, \vb), \gamma_1(\va,\vb), \ldots, \gamma_n(\va,\vb) = \vb$.
		\end{proof}
		
		Finally, we show how to evaluate the clusters on \emph{arbitrary} $\vb \in \F^n$. To do this, we consider, as in \cite{BSV21}, the line $\ell_{\va,\vb} (t)$ through $\va$ and $\vb$ and show that most points on the line are non-zeros of the polynomial $\Theta_{B,C}$. For each such ``good'' point we can recover the clusters, and then  apply the Berlekamp-Welch algorithm in order to recover the clusters of $f|_{B,\ell_{\va,\vb}(t)}$, for every $t$.

		\begin{lemma}[Similar to Lemma 6.19 in \cite{BSV21}]
			\label{lem:clusters-arbitrary}
			Let $f \in \Sps{k}$ be a multilinear polynomial and let $C$ be a $\Sps{k}$ circuit computing $f$. 
			Let $(B,\va)$ be good outputs of \autoref{algo:finding-B} on $f$. Let $f|_{B,\va} = \sum_{i=1}^s f_i|_{B,\va}$ be the output of \autoref{algo:semantic-clustering-algorithm} on $f|_{B,\va}$.
			
			Then, there exists an algorithm that, given any $\vb \in \F^n$, runs in time $2^{k^2} \cdot \poly(n)$ and outputs $(f_1|_{B,\vb}, \ldots, f_s|_{B,\vb})$.
		\end{lemma}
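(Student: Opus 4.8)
The plan is to reduce to \autoref{lem:clusters-line} by working along the affine line through $\va$ and $\vb$ and using Reed--Solomon decoding to handle the (few) points on this line at which the relevant degeneracy polynomials vanish. First, if $\vb=\va$ we just output the clustering of $f|_{B,\va}$ that we are handed. Otherwise set $\ell(t):=\va+t(\vb-\va)$, so $\ell(0)=\va$ and $\ell(1)=\vb$. For each $i\in\{1,\dots,n-1\}$ consider the univariate polynomial $q_i(t):=\Theta_{B,C}\big(\gamma_i(\va,\ell(t))\big)$, of degree at most $\deg(\Theta_{B,C})\le 2n^7$; since $\gamma_i(\va,\va)=\va$ and $\Theta_{B,C}(\va)\neq 0$ because $(B,\va)$ is good (by \autoref{cl:beta-clusters}), each $q_i$ is not identically zero. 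Hence there are at most $E:=2n^8$ scalars $\alpha\in\F$ for which $\Theta_{B,C}\big(\gamma_i(\va,\ell(\alpha))\big)=0$ for some $i$, and for every other $\alpha$ the point $\ell(\alpha)$ satisfies the hypothesis of \autoref{lem:clusters-line} (applied with target point $\ell(\alpha)$). For each such $\alpha$ we may therefore run the algorithm of \autoref{lem:clusters-line} to obtain, in time $2^{k^2}\cdot\poly(n)$, the tuple $(f_1|_{B,\ell(\alpha)},\dots,f_s|_{B,\ell(\alpha)})$, and crucially the labelling of this tuple is the one induced by the fixed clustering of $f|_{B,\va}$, hence is the same for all good $\alpha$.

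Next, for each $i$ the object $f_i|_{B,\ell(t)}$ is an honest polynomial in $t$ (with coefficients in $\F[\vx_B]$, in fact in $\F$ on the $\vx_B$-side since $f_i=[C_i]$ has coefficients over $\F$) of $t$-degree at most $D:=d$, obtained by substituting the linear-in-$t$ expressions $\ell(t)_j$ for the variables $x_j$, $j\notin B$, into the fixed polynomial $f_i$. Choose $N:=D+2E+1$ distinct scalars $\alpha_1,\dots,\alpha_N\in\F$ (passing to a field extension if $\F$ is too small, which is harmless), run the procedure above at each $\alpha_j$ — treating any $\alpha_j$ at which a sub-step fails as an error — and for each fixed $i$ collect the list $\{(\alpha_j,c^{(j)}_i)\}_{j=1}^N$, where $c^{(j)}_i=f_i|_{B,\ell(\alpha_j)}$ for the at least $N-E=D+E+1$ good indices $j$. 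Since two distinct polynomials of $t$-degree at most $D$ agree on at most $D<D+E+1$ scalars, the Berlekamp--Welch algorithm (run over the field $\F(\vx_B)$, or coefficientwise) recovers from this list the unique $t$-degree-$\le D$ polynomial agreeing with at least $N-E$ of the pairs, namely $p_i(t)=f_i|_{B,\ell(t)}$. Finally we output $(p_1(1),\dots,p_s(1))=(f_1|_{B,\vb},\dots,f_s|_{B,\vb})$; this is valid even though $\vb=\ell(1)$ may itself be a ``bad'' point, because $p_i$ is reconstructed from good points only and then evaluated symbolically. The running time is $N=\poly(n)$ calls to \autoref{lem:clusters-line} plus $\poly(n)$ for decoding, i.e. $2^{k^2}\cdot\poly(n)$ overall.

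The main obstacle here is conceptual rather than computational: one must know that the clusters recovered at the various good points of the line carry a \emph{consistent} labelling, so that they may be interpolated cluster-by-cluster, and that evaluating the recovered univariate polynomial at $t=1$ returns the correct value even when the target $\vb$ is degenerate. Both points are already taken care of above — consistency is inherited from \autoref{lem:clusters-line}, whose output is labelled by the fixed clustering of $f|_{B,\va}$ (compare \autoref{cl:output-B-with-random-beta}), and the second point is simply the fact that $f_i|_{B,\ell(t)}$ is a genuine polynomial identity in $t$. What remains is the routine bookkeeping: verifying the degree bound $D\le d$ and the error bound $E\le 2n^8$ so that $N=D+2E+1$ makes the Berlekamp--Welch step go through, and checking that running \autoref{lem:clusters-line} at a bad $\alpha_j$ can only produce an ``error'' (an output that disagrees with $f_i|_{B,\ell(\alpha_j)}$) and never corrupts the good evaluations, which is immediate since the two executions are independent.
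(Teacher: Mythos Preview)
Your proposal is correct and follows essentially the same approach as the paper: parameterize the line through $\va$ and $\vb$, invoke \autoref{lem:clusters-line} at the good points (those where all hybrid evaluations of $\Theta_{B,C}$ are nonzero, which excludes at most $2n^8$ scalars), and apply Berlekamp--Welch to interpolate and then evaluate at $t=1$. One minor difference worth noting: the paper's proof actually only recovers the \emph{scalar} $f_i(\vb)=f_i|_{B,\vb}(\vb)$ (it first evaluates each recovered restriction at $\vb_u$ before decoding), which is what the downstream application needs, whereas you recover the full restriction polynomial $f_i|_{B,\vb}\in\F[\vx_B]$ by decoding coefficientwise; your version matches the lemma statement more literally and is equally valid.
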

		
		\begin{proof}
			Let $\ell_{\va,\vb} (t)$ be the line through $\va$ and $\vb$ so that $\ell_{\va,\vb}(0)=\va$ and $\ell_{\va,\vb}(1)=\vb$. Let $W \subseteq \F$ be a set of size $10n^9$.
			
			For each $u \in W$, let $\vb_u=\ell_{\va,\vb}(u)$. Use the algorithm in \autoref{lem:clusters-line} to learn  $(f_1|_{B,\vb_u}, \ldots, f_s|_{B,\vb_u})$.
			Note that $f_i(\vb_u) = f_i|_{B,\vb_u} (\vb_u) = f_i (\ell_{\va,\vb}(u))$.
			
			For each $i \in [s]$ we apply the Berlekamp-Welch algorithm on the points $f_i (\ell_{\va,\vb}(u))$, for $u \in W$, to recover the univariate polynomial $f_i (\ell_{\va,\vb}(t))$, and output the value $f_i (\ell_{\va,\vb}(1))$, which equals $f_i|_{B,\vb}(\vb) = f_i(\vb)$.
			
			To prove that this indeed works, we need to show that there are many values of $u$ for which the conditions of \autoref{lem:clusters-line} hold and thus the computation is correct, so in particular the Berlekamp-Welch algorithm returns the correct polynomial.
			
			Let $Q(t) = \prod_{i=1}^{n-1} \Theta_{B,C} (\gamma_i (\va, \ell_{\va,\vb} (t)))$.
			$Q$ is a non-zero polynomial since $Q(0) = \Theta_{B,C}(\va)^{n-1} \neq 0$. Further, $Q$ has degree at most $n \cdot \deg(\Theta_{B,C}) \le 2n^8$ and if $Q(u) \neq 0$, then $\vb_u=\ell_{\va,\vb}(u)$ satisfies the condition of \autoref{lem:clusters-line}.
			The number of roots of $Q$ is thus at most $2n^8$, which bounds the number of errors for the Berlekamp-Welch algorithm.			
			Thus, since the number of evaluations $|W| > 4n^8 + \deg(f_i) + 1$, we can indeed recover the polynomial $f_i (\ell_{\va,\vb}(t))$ correctly.
		\end{proof}

		\subsection{The Reconstruction Algorithm}
		
		We now give our reconstruction algorithm for multilinear $\Sps{k}$ circuits. 
		
		\begin{algorithm}[H]
			\caption{: Reconstruction of $\Sps{k}$ circuits}
			\label{alg:reconstruct-sumprodsumk}
			\begin{algorithmic}[1]
				\Require{Black box access to a degree $d$, $n$-variate multilinear polynomial $f$ computed by a minimal multilinear $\Sps{k}$ circuit, $C$.}	
				\Ensure{A $\Sps{k}$ circuit $\tilde{C}$ such that, with high probability, $\tilde{C}$ computes $f$.}	
				\For{every $R_M(2k) \le \tau_0 \le R_M(2k)^{k^{2k}}$} \label{line:find-tau}
				\State{Run \autoref{algo:finding-B} with parameter $\tau=\tau_0^{k}$ on $f$  to obtain a cluster preserving pair $(B,\va)$ where $\card{B}\leq k^{k^{O(k)}}$}
				\State{Denote by $(f_1|_{B,\va}, \ldots, f_s|_{B,\va})$ the output of \autoref{algo:semantic-clustering-algorithm} on $f|_{B,\va}$ with parameter $\tau$}
				\For{$i\in [s]$}
				\State{Use \autoref{lem:low semantic rank exact}, along with \autoref{lem:clusters-arbitrary} to simulate black-box access to $f_i|_{B,\va}$, to learn a circuit $\tilde{C}_i$.}
				\EndFor
				\State{Let $\tilde{C} =  \tilde{C}_1 +\ldots+ \tilde{C}_{s}$.}
				\If{$[\tilde{C}] \equiv [C]$ (which can be checked using a randomized PIT algorithm)} \label{line:PIT}
				\State{Output $\tilde{C}$.}
				\EndIf
				\EndFor
			\end{algorithmic}
		\end{algorithm}
		
		\begin{theorem}
			\label{thm:learning-multilinear-spsk-ckts}
			Suppose $|\F| > n^{k^{k^{O(k)}}}$.
			Let $f$ be a polynomial computed by a multilinear $\Sps{k}$ circuit. Then, with high probability, 
			\autoref{alg:reconstruct-sumprodsumk} returns a multilinear $\Sps{k}$ circuit $\tilde{C}$ computing $f$ in time $\poly(n) \cdot k^{k^{k^{k^{\poly(k)}}}}$.
		\end{theorem}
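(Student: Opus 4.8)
The proof splits into a correctness argument and a running-time analysis; it essentially consists of assembling the claims of Sections~\ref{sec:cluster-preserving}--\ref{sec:cluster-evaluation} in the right order, since the substantive work has already been done there.

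\textbf{Correctness.} The first point is that the loop in line~\ref{line:find-tau} necessarily reaches the ``correct'' parameter. Apply \autoref{cl:partition-two-taus} to $f$ with $\varphi(k)=k^2$ and $\tau_\text{min}$ as specified in \autoref{cl:tau-semantic-restriction}; this produces a value $R_M(2k)\le\tau_\text{min}\le\tau_0\le R_M(2k)^{(k^2)^k}=R_M(2k)^{k^{2k}}$ and a $(\tau_1,r)$-semantic partition $f=\sum_{i=1}^s f_i$ with $\tau_1=\tau_0^{k^2}$ and $r\le R_M(2k)\cdot 2^{7k}k^{4k}\tau_0^{k-2}$. Thus this $\tau_0$ lies in the iterated range; fix it and set $\tau=\tau_0^k$ as in \autoref{cl:tau-semantic-restriction}. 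For this $\tau$, by \autoref{cla:size-of-B} (and \autoref{rem:running-time-of-finding-B}) \autoref{algo:finding-B} returns a pair $(B,\va)$ with $\card{B}\le k^{k^{O(k)}}$ within the stated time. Since $\va$ is drawn uniformly from $S^n$ with $\card{S}=n^{k^{k^{O(k)}}}$, and \autoref{cl:tau-semantic-restriction} exhibits a nonzero polynomial $\Gamma_C$ of degree at most $n^{k^{k^{O(k)}}}$ whose non-vanishing at $\va$ makes $(B,\va)$ good, the Schwartz--Zippel lemma shows $(B,\va)$ is good with high probability; a union bound further absorbs the failure probabilities of the randomized subroutines (\autoref{lem:low semantic rank exact}, \autoref{lem:compute-semantic-rank} inside \autoref{algo:semantic-clustering-algorithm}, and all PIT calls).

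Conditioned on $(B,\va)$ being good, \autoref{cl:tau-semantic-restriction} guarantees that running \autoref{algo:semantic-clustering-algorithm} with parameter $\tau$ on a minimal multilinear $\Sps{k}$ circuit for $f|_{B,\va}$ (first learned, on $\card{B}$ variables, via \autoref{lem:low semantic rank exact}) outputs exactly $(f_1|_{B,\va},\ldots,f_s|_{B,\va})$, with $\ranksem(f_i|_{B,\va})=\ranksem(f_i)=:r_i$. Then \autoref{lem:clusters-arbitrary} supplies, for every $\vb\in\F^n$ and each $i$, the value $f_i|_{B,\vb}(\vb)=f_i(\vb)$ in time $2^{k^2}\poly(n)$ --- i.e., black-box access to every cluster $f_i$. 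Because $r_i\le r\le k^{k^{O(k)}}$, \autoref{lem:low semantic rank exact} learns a multilinear $\Sps{k_i}$ circuit $\tilde C_i$ computing $f_i$, where $k_i$ is the smallest top fan-in for $f_i$. Since the partition is realizable there is a multilinear $\Sps{k}$ circuit $\sum_{i=1}^s C_i$ computing $f$ with $[C_i]=f_i$, so $k_i$ is at most the number of multiplication gates of $C_i$, whence $\sum_i k_i\le k$. Hence $\tilde C=\sum_i\tilde C_i$ is a multilinear $\Sps{k'}$ circuit with $k'\le k$ computing $\sum_i f_i=f$, and it passes the check in line~\ref{line:PIT}. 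Conversely, for any $\tau_0$ the PIT check in line~\ref{line:PIT} certifies that the output computes $f$, so with high probability \autoref{alg:reconstruct-sumprodsumk} returns a multilinear $\Sps{k}$ circuit for $f$ (a trivial additional sanity check on the fan-in of $\tilde C$ discards any spurious iteration).

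\textbf{Running time.} The loop runs over at most $R_M(2k)^{k^{2k}}\le k^{k^{O(k)}}$ values of $\tau_0$. For each one: \autoref{algo:finding-B} costs $\poly(n)\cdot k^{k^{k^{k^{\poly(k)}}}}$ by \autoref{cla:size-of-B}; learning a circuit for the $\card{B}$-variate, semantic-rank-at-most-$r$ polynomial $f|_{B,\va}$ and running \autoref{algo:semantic-clustering-algorithm} on it costs $\poly(n,c)\cdot(rk)^{(rk)^{\poly(r,k)}}$ by \autoref{lem:low semantic rank exact} and \autoref{cl:semantic-clusters}; and learning the $s\le k$ circuits $\tilde C_i$ via \autoref{lem:low semantic rank exact}, with each black-box query answered in time $2^{k^2}\poly(n)$ through \autoref{lem:clusters-arbitrary}, again costs $\poly(n)\cdot(rk)^{(rk)^{\poly(r,k)}}$. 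Using $\card{B},r\le k^{k^{O(k)}}$ one checks that $(rk)^{(rk)^{\poly(r,k)}}=k^{k^{k^{k^{\poly(k)}}}}$, so each iteration costs $\poly(n)\cdot k^{k^{k^{k^{\poly(k)}}}}$, and multiplying by the $k^{k^{O(k)}}$ iterations preserves this bound.

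I expect the only real obstacle, beyond correctly invoking the already-proved claims, to be the arithmetic bookkeeping: verifying that the $\tau_0$ furnished by \autoref{cl:partition-two-taus} for the chosen $\tau_\text{min}$ and $\varphi$ falls inside the loop range, that the associated cluster-rank bound $r$ and the size $\card{B}$ are both $k^{k^{O(k)}}$, and that the nested-exponential expressions collapse to $k^{k^{k^{k^{\poly(k)}}}}$. The genuinely delicate content --- that $(B,\va)$ preserves the semantic cluster structure of $f$ and that each cluster can be evaluated at arbitrary points --- is precisely what \autoref{cl:tau-semantic-restriction} and \autoref{lem:clusters-arbitrary} already establish.
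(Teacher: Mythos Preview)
Your proposal is correct and follows essentially the same approach as the paper's own proof: iterate over candidate $\tau_0$, invoke \autoref{cl:partition-two-taus} to certify one of them is correct, use \autoref{cl:tau-semantic-restriction} (via Schwartz--Zippel) to ensure $(B,\va)$ is good, then combine \autoref{lem:clusters-arbitrary} with \autoref{lem:low semantic rank exact} to learn each cluster. Your write-up is in fact more careful than the paper's in a few places---you explicitly argue that $\sum_i k_i\le k$ so the output has the right top fan-in, you spell out the Schwartz--Zippel step, and you note that one must first learn a circuit for $f|_{B,\va}$ before running \autoref{algo:semantic-clustering-algorithm}---but these are elaborations of the same argument rather than a different route.
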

		
		\begin{proof}
			Let $f$ be as given in the theorem. Consider a partition of $f$, $f = \sum_{i=1}^s f_i$, as given by \autoref{cl:partition-two-taus} (for $\varphi(k)=k^2$) with parameters $\tau_0, \tau_1$. As the algorithm scans all possible values in Line \ref{line:find-tau}, it will find the ``correct'' value of $\tau_0$. 
			
			Focusing on this $\tau_0$, the assumption on the field size and \autoref{cl:tau-semantic-restriction} guarantee  that with high probability  $(B,\va)$ will be a good output  of \autoref{algo:finding-B}. \autoref{cla:size-of-B} promises that \autoref{algo:finding-B} indeed outputs $B$ of the required size such that the output of \autoref{algo:semantic-clustering-algorithm} on $f|_{B,\va}$ are the clusters  $(f_1|_{B,\va}, \ldots, f_s|_{B,\va})$.
			
			\autoref{lem:clusters-arbitrary} now guarantees that we can evaluate each $f_i$ correctly on each $\vb \in \F^n$ so that we can use the algorithm from  \autoref{lem:low semantic rank exact} to learn circuits computing each $f_i$. Thus, we are able to reconstruct a circuit computing $f$ and pass the check in Line \ref{line:PIT}.
			
			The bound on the running time follows from the bound on the running time of \autoref{algo:finding-B} given in \autoref{cla:size-of-B} and from the running time given in \autoref{lem:low semantic rank exact} (see also \autoref{rem:running-time-of-finding-B} for a remark on the running time of \autoref{algo:finding-B}).
		\end{proof}

		\subsection{Proper Learning of Depth-$3$ Set-Multilinear Circuits}		
		We now explain the changes required in \autoref{alg:reconstruct-sumprodsumk} in order to prove \autoref{thm:intro:set-multilinear}.
		The main change is to replace each application of \autoref{lem:low semantic rank exact} with \autoref{lem:low rank set-ml exact}. In particular, this makes sure that in Step~\ref{step:learn-C}  of \autoref{algo:finding-B} we always find subcircuits of $C$ that are \emph{set-multilinear}, and not merely multilinear (note that by fixing the variables in the original circuit, we know that the restricted polynomial has a small set-multilinear circuit, and thus the algorithm can find it). 
		
		\section*{Acknowledgement}
		We thank the anonymous reviewers for various comments that greatly improved the presentation of the paper.
		
		\bibliographystyle{customurlbst/alphaurlpp}
		\bibliography{references}
		
		\appendix

	\end{document}